\documentclass[11pt]{article}

\usepackage{arxiv}

\usepackage{titlesec}
\titlespacing*{\paragraph}{0pt}{1.5ex plus .5ex minus .2ex}{1ex plus .3ex}
\usepackage{enumitem}
\setlist[itemize]{itemsep=3pt, topsep=4pt, parsep=0pt, partopsep=0pt}

\title{Quantum Property Testing for Bounded-Degree Directed Graphs}
\author{
  Pan Peng\footnote{
	School of Computer Science and Technology, University of Science and Technology of China.  Email: \href{ppeng@ustc.edu.cn}{ppeng@ustc.edu.cn}}\\
  \and
  Jingyu Wu\footnote{
	School of Computer Science and Technology, University of Science and Technology of China.  Email: \href{wujingyu@mail.ustc.edu.cn}{wujingyu@mail.ustc.edu.cn}}\\
}
\date{}

\begin{document}

\maketitle

\makeatletter
\newcommand{\rmnum}[1]{\romannumeral #1} %
\newcommand{\Rmnum}[1]{\expandafter\@slowromancap\romannumeral #1@} %
\makeatother

\newcommand{\1}{\mathds{1}}
\newcommand{\vecone}{\mathbf{1}}
\newcommand{\A}{\mathbb{A}}
\newcommand{\E}{\mathbb{E}}
\newcommand{\I}{\mathbb{I}}
\newcommand{\R}{\mathbb{R}}
\newcommand{\N}{\mathbb{N}}
\newcommand{\Z}{\mathbb{Z}}
\newcommand{\One}{\boldsymbol{1}}
\newcommand{\bA}{\boldsymbol{A}}
\newcommand{\ba}{\boldsymbol{a}}
\newcommand{\bB}{\boldsymbol{B}}
\newcommand{\be}{\boldsymbol{e}}
\newcommand{\bI}{\boldsymbol{I}}
\newcommand{\bL}{\boldsymbol{L}}
\newcommand{\bO}{\boldsymbol{O}}
\newcommand{\br}{\boldsymbol{r}}
\newcommand{\bR}{\boldsymbol{R}}
\newcommand{\bS}{\boldsymbol{S}}
\newcommand{\bT}{\boldsymbol{T}}
\newcommand{\bU}{\boldsymbol{U}}
\newcommand{\bu}{\boldsymbol{u}}
\newcommand{\bV}{\boldsymbol{V}}
\newcommand{\bv}{\boldsymbol{v}}
\newcommand{\bw}{\boldsymbol{w}}
\newcommand{\bW}{\boldsymbol{W}}
\newcommand{\bX}{\boldsymbol{X}}
\newcommand{\bY}{\boldsymbol{Y}}
\newcommand{\bx}{\boldsymbol{x}}
\newcommand{\by}{\boldsymbol{y}}
\newcommand{\bz}{\boldsymbol{z}}
\newcommand{\bZ}{\boldsymbol{Z}}
\newcommand{\bpi}{\boldsymbol{\pi}}
\newcommand{\cA}{\mathcal{A}}
\newcommand{\cB}{\mathcal{B}}
\newcommand{\cC}{\mathcal{C}}
\newcommand{\cD}{\mathcal{D}}
\newcommand{\cE}{\mathcal{E}}
\newcommand{\cF}{\mathcal{F}}
\newcommand{\cG}{\mathcal{G}}
\newcommand{\cH}{\mathcal{H}}
\newcommand{\cI}{\mathcal{I}}
\newcommand{\cJ}{\mathcal{J}}
\newcommand{\cK}{\mathcal{K}}
\newcommand{\cL}{\mathcal{L}}
\newcommand{\cM}{\mathcal{M}}
\newcommand{\cN}{\mathcal{N}}
\newcommand{\cO}{\mathcal{O}}
\newcommand{\cP}{\mathcal{P}}
\newcommand{\cQ}{\mathcal{Q}}
\newcommand{\cR}{\mathcal{R}}
\newcommand{\cS}{\mathcal{S}}
\newcommand{\cT}{\mathcal{T}}
\newcommand{\cU}{\mathcal{U}}
\newcommand{\cV}{\mathcal{V}}
\newcommand{\cW}{\mathcal{W}}
\newcommand{\cX}{\mathcal{X}}
\newcommand{\cY}{\mathcal{Y}}
\newcommand{\ttheta}{\tilde{\theta}}

\newcommand{\bp}{\mathbf{p}}

\newcommand{\TODO}{\textcolor{red}{ TODO }}

\newcommand{\perm}[2]{\genfrac{[}{]}{0pt}{}{#1}{#2}}

\newcommand{\adeg}{\widetilde{\operatorname{deg}}}
\newcommand{\ubdeg}{\widetilde{\operatorname{ubdeg}}}
\newcommand{\dpdeg}{\widetilde{\operatorname{dpdeg}}}

\newcommand{\bs}{\text{bs}}
\newcommand{\acc}{\operatorname{acc}}
\newcommand{\Adv}{\operatorname{Adv}}
\newcommand{\Dom}{\operatorname{Dom}}
\newcommand{\Advpm}{\operatorname{Adv}^{\pm}}
\newcommand{\Var}{\operatorname{Var}}
\newcommand{\Cov}[1]{{\operatorname{Cov}\left[ #1\right]}}
\newcommand{\Supp}[1]{{\operatorname{Supp}\left( #1\right)}}
\newcommand{\norm}[1]{{\lVert #1 \rVert}}
\newcommand{\abs}[1]{{\left\lvert #1\right\rvert}}
\newcommand{\eps}{\varepsilon}

\newcommand{\identity}{\mathbf{I}}

\newcommand{\dout}{d_{\mathrm{out}}}
\newcommand{\din}{d_{\mathrm{in}}}
\newcommand{\out}{{\mathrm{out}}}
\newcommand{\cnt}{{\mathrm{cnt}}}
\newcommand{\dist}{\mathrm{dist}}
\newcommand{\disc}{\mathrm{disc}}
\newcommand{\head}{\mathrm{head}}
\newcommand{\tail}{\mathrm{tail}}
\newcommand{\Flag}{\mathrm{Flag}}
\newcommand{\BFS}{\mathrm{BFS}}
\newcommand{\est}{\mathrm{est}}
\newcommand{\prob}{\mathrm{prob}}
\newcommand{\dis}{\mathrm{dis}}
\newcommand{\eq}{\mathrm{eq}}
\newcommand{\zero}{\mathrm{zero}}
\newcommand{\one}{\mathrm{one}}
\newcommand{\Grover}{\textsc{Grover}}
\newcommand{\Count}{\textsc{Count}}
\newcommand{\poly}{\mathrm{poly}}
\newcommand{\forr}{\mathrm{forr}}
\newcommand{\rt}{\mathrm{rt}}
\newcommand{\type}{\mathrm{Type}}
\newcommand{\pass}{\mathrm{pass}}

\newcommand{\bool}{\{0,1\}}

\newcommand{\sgn}{\operatorname{sgn}}
\newcommand{\bdeg}{\operatorname{bdeg}}
\newcommand{\phd}{\operatorname{phd}}

\newcommand{\AND}{\mathsf{AND}}
\newcommand{\OR}{\mathsf{OR}}
\newcommand{\OCCU}{\mathsf{ OCCU}}
\newcommand{\dOCCU}{\mathsf{ dOCCU}}
\newcommand{\GapOR}{\mathsf{GapOR}}
\newcommand{\THR}{\mathsf{THR}}
\newcommand{\BTHR}{\mathsf{BTHR}}
\newcommand{\COLL}{\mathsf{Collision}}

\newcommand{\bits}{\{-1,1\}} %
\begin{abstract}
We study quantum property testing for directed graphs with maximum in-degree and out-degree bounded by some universal constant $d$.
For a proximity parameter $\varepsilon$,
we show that any property that can be tested with $O_{\varepsilon,d}(1)$ queries in the classical bidirectional model, where both incoming and outgoing edges are accessible, can also be tested in the quantum unidirectional model, where only outgoing edges are accessible,
using $n^{1/2 - \Omega_{\varepsilon,d}(1)}$ queries. This yields an almost quadratic quantum speedup over the best known classical algorithms in the unidirectional model. Moreover, we prove that our transformation is almost tight by giving an explicit property $P_\varepsilon$ that is $\varepsilon$-testable within $O_\varepsilon(1)$ classical queries in the bidirectional model, but requires $\widetilde{\Omega}(n^{1/2-f'(\varepsilon)})$ quantum queries in the unidirectional model, where $f'(\varepsilon)$ is a function that approaches $0$ as $\varepsilon$ approaches $0$.

As a byproduct, we show that in the unidirectional model, the number of occurrences of any constant-size subgraph $H$ can be approximated up to additive error $\delta n$ using $o(\sqrt{n})$ quantum queries.
\end{abstract}

 \thispagestyle{empty}

\newpage
\tableofcontents
\thispagestyle{empty}

\newpage
\pagenumbering{arabic}

\section{Introduction}

Graph property testing is a family of fundamental partial decision problems that seeks to distinguish between graphs that satisfy a given property (e.g., being connected) and those that are \emph{far} from having that property (e.g., having many disconnected components), while examining only a small portion of the input graph. The notion of \emph{distance} is quantified by a proximity parameter \(\eps\), where a graph is said to be \(\eps\)-far from satisfying a property if more than an \(\eps\)-fraction of its representation must be modified to make it satisfy the property. Given such a property, a randomized algorithm that has query access to the graph and distinguishes between satisfying and $\eps$-far instances with probability at least $2/3$ is called an \emph{$\eps$-tester}; the property is then said to be \emph{$\varepsilon$-testable} by the algorithm. This field provides a foundational framework for understanding how much local information is sufficient to make reliable decisions about global graph properties. Beyond its theoretical significance, graph property testing is motivated by practical needs: it enables efficient analysis of massive networks without reading the entire input and can serve as a lightweight preprocessing step before applying more computationally intensive methods such as learning algorithms. Since the seminal works of Goldreich, Goldwasser, and Ron on dense graphs \cite{10.1145/285055.285060} and of Goldreich and Ron on bounded-degree graphs \cite{goldreich1997property}, the area has seen substantial growth over the past nearly three decades. 
For comprehensive surveys, see \cite{goldreich2017introduction, bhattacharyya2022property}.

The rise of quantum computing has naturally led to the field of quantum property testing, which lies at the intersection of quantum algorithms and property testing. It encompasses three main directions: quantum testing of classical properties, classical testing of quantum properties, and quantum testing of quantum properties. The most studied among these is quantum testing of classical properties (see the survey \cite{DBLP:journals/toc/MontanaroW16}). A central question is whether quantum algorithms can outperform classical testers, and if so, by how much -- ranging from exponential to polynomial speedups, or none at all. This is closely linked to understanding what kinds of symmetries a property can exhibit while still enabling (super-polynomial) quantum speedups (see e.g. \cite{aaronson2014need, chailloux2019note, Ben_David_2020}).

In the context of quantum testing of graph properties, Chakraborty, Fischer, Matsliah and de Wolf \cite{Chakraborty2010NewRO} studied the problem of testing graph isomorphism in the \emph{adjacency matrix model for dense graphs}. They demonstrated a polynomial quantum speedup in query complexity over classical testers in the setting where one graph is given in advance, and the other can only be accessed via adjacency matrix queries \cite{fischer2008testing}.

Subsequently, Ambainis, Childs and Liu \cite{Ambainis_2011} introduced the first quantum algorithms for testing bipartiteness and expansion in the \emph{adjacency list model for bounded-degree graphs}. By combining classical and quantum techniques, they achieved\footnote{We use tilde notation to suppress polylogarithmic factors.} $\widetilde{O}(n^{1/3})$ query complexity for both properties, demonstrating a polynomial quantum speedup over the classical lower bound of $\Omega(n^{1/2})$ \cite{goldreich1997property}. They also proved a quantum lower bound of $\Omega(n^{1/4})$ for expansion testing, ruling out exponential speedup in this case. More recently,  \cite{Ben_David_2020} achieved a breakthrough by demonstrating exponential quantum speedups for certain graph property testing problems in the adjacency list model in bounded-degree graphs, showcasing the power of quantum algorithms in this domain. On the other hand, Apers, Magniez, Sen and Szabó \cite{apers2024quantumpropertytestingsparse} showed that testing $3$-colorability in the same model requires a linear number of quantum queries, indicating that not all graph properties admit significant quantum speedups.

Furthermore, \cite{apers2024quantumpropertytestingsparse} initiated the study of quantum property testing for \emph{bounded-out-degree directed graphs in the unidirectional model}, focusing on the $k$-star-freeness property (and, more generally, $k$-source-subgraph-freeness); see \Cref{def:kstarfree} and the discussion below. They established an upper bound of $O(n^{1/2 - 1/(2^{k} - 1)})$ and a lower bound of $\widetilde{\Omega}(n^{1/2 - 1/(2k)})$ on the quantum query complexity of testing $k$-source-subgraph-freeness for directed graphs with bounded maximum out-degree (but possibly  unbounded maximum in-degree). This demonstrates an almost quadratic quantum speedup compared to the classical lower bound of $\Omega(n^{1 - 1/k})$ queries \cite{peng2023optimalseparationpropertytesting}.

In this paper, we focus on \emph{quantum testers for bounded-degree directed graphs (digraphs)}, %
where both in- and out-degree are bounded. In the classical setting, two natural models for accessing such graphs were introduced by Bender and Ron \cite{bender2002testing}: the \emph{bidirectional model}, which allows querying both outgoing and incoming edges of a vertex, and the more restrictive \emph{unidirectional model}, which permits querying only the outgoing edges (see, e.g., \cite{bender2002testing, goldreich2010introduction, hellweg2012property}). The bidirectional model closely resembles the standard adjacency list model for undirected graphs and is at least as powerful as the unidirectional model. However, the unidirectional model is often more natural in real-world applications -- such as web crawling, social networks, or recommendation systems -- where querying incoming neighbors may not be feasible. Algorithmically, the unidirectional model is also more challenging. For instance, strong connectivity can be tested using $\widetilde{O}(1/\varepsilon)$ queries in the bidirectional model, but requires $\Omega(\sqrt{n})$ queries for two-sided error testing\footnote{A tester for a property $P$ has \emph{one-sided error} if it always accepts (di)graphs that satisfy $P$ and may err only when the (di)graph is $\varepsilon$-far from $P$. It has \emph{two-sided error} if it may err in both cases.} \cite{bender2002testing} and $\Omega(n)$ queries for one-sided error testing \cite{goldreich2010introduction, hellweg2012property} in the unidirectional model. 

Czumaj, Peng and Sohler \cite{10.1145/2897518.2897575} advanced the study of property testing in bounded-degree digraphs by providing a generic transformation between the bidirectional and unidirectional models. They showed that any property $\Pi$ that can be tested with a constant number of queries (i.e., $O_{\varepsilon,d}(1)$)\footnote{Throughout the paper, we use $\cO_{\varepsilon,d}(\cdot)$ and $\Omega_{\varepsilon,d}(\cdot)$ to denote bounds that treat $\varepsilon$ and $d$ as constants.} in the bidirectional model can also be tested in the unidirectional model using a sublinear number of queries, specifically $n^{1 - \Omega_{\varepsilon,d}(1)}$. This result was later shown to be essentially tight by Peng and Wang \cite{peng2023optimalseparationpropertytesting}, who constructed a property $P$ that can be tested with $\cO_{\varepsilon,d}(1)$ queries in the bidirectional model, but requires at least $n^{1 - f(\varepsilon,d)}$ queries in the unidirectional model, where $f(\varepsilon,d) \to 0$ as $\varepsilon \to 0$.

In this work, we investigate whether a similar transformation exists in the setting of quantum property testing, and if so, whether the resulting quantum testers in the unidirectional model can achieve an almost quadratic speedup over their classical counterparts. This question also remained open by \cite{apers2024quantumpropertytestingsparse}.

\subsection{Our contributions}
\paragraph{The upper bound} Our main contribution is a generic transformation that converts testers with constant query complexity in the classical bidirectional model into testers with $o(n^{1/2})$ query complexity in the quantum unidirectional model, where only queries to outgoing edges are allowed. A digraph is \emph{$d$-bounded-degree} if both its maximum in-degree and out-degree are at most a constant $d>0$. Formally, we establish the following result. 

\begin{restatable}{theorem}{QTester}\label{thm:main}
    If a graph property is $\eps$-testable with two-sided error and classical query complexity  $O_{\eps,d}(1)$ in the $d$-bounded-degree digraph bidirectional model, then it is also $\eps$-testable with two-sided error and quantum query complexity $n^{1/2-\Omega_{\eps,d}(1)}$ in the $d$-bounded-degree digraph unidirectional model.
\end{restatable}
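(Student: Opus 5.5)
We will follow the Czumaj--Peng--Sohler (CPS) framework and replace its classical reverse-edge discovery by quantum search and counting.

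\textbf{Step 1: reduce to frequency estimation.} A classical tester with $q=O_{\eps,d}(1)$ queries in the bidirectional model can be assumed canonical. It samples $O_{\eps,d}(1)$ vertices, explores their radius-$r$ bidirectional neighborhoods ($r=O_{\eps,d}(1)$), and decides by comparing the isomorphism type of what it sees against a fixed family of types. As in CPS, this means it suffices to estimate, up to additive error $\delta n$ with $\delta=\delta(\eps,d)$, the number of vertices whose rooted radius-$r$ bidirectional neighborhood has type $\tau$, for every one of the $O_{\eps,d}(1)$ types $\tau$. CPS show these counts can be expressed through counts of constant-size rooted subgraphs, in the spirit of the subgraph-counting byproduct. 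We can therefore feed the estimated type distribution into the canonical tester's decision rule. A standard simulation argument gives correctness for two-sided error: the distribution over types it would have seen is close in total variation.

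\textbf{Step 2: quantum estimation of reverse structure.} In the unidirectional model, the only hard part is discovering in-edges. We handle these by counting. Sample a set $S$ of $s$ random vertices and query all their out-neighborhoods to depth $r$, which costs $O(s)$ queries. A vertex's in-neighbors are only reachable through vertices that point into it. Consider a target pattern consisting of $k$ vertices that share out-structure with a given vertex. In CPS, one samples $n^{1-1/k}$-ish vertices so that by birthday-type collisions each pattern is hit, with different exponents for different depths of the reverse BFS. Quantumly, we replace collision finding by quantum element distinctness / $k$-collision search (Ambainis, and Belovs' learning-graph $k$-distinctness), and replace frequency estimation by quantum amplitude estimation. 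Concretely, each level of reverse exploration is a collision problem. We must find $u\ne v$ whose out-neighborhoods contain a common vertex $w$, where $w$ is the sampled root or a vertex found earlier. Amplitude amplification over the choice of the partner $u$ costs $O(\sqrt{n/\text{(number of good }u)})$. The plan is to show that at each of the $O(r)$ levels, the cost is $n^{1/2-c_\ell}$ for some $c_\ell>0$ depending on $\eps,d$. This follows the heavy/light vertex dichotomy of CPS: vertices whose local structure is frequent are found by sampling, while light ones contribute little to the counts. Multiplying by the constant number of levels and types, and applying amplitude estimation with additive precision $\delta$ (which needs $O(1/\delta)$ repetitions), gives total complexity $n^{1/2-\Omega_{\eps,d}(1)}$.

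\textbf{Main obstacle.} We expect the hardest step to be the recursive reverse BFS. Here the quantum subroutines must be nested: an outer amplitude estimation over roots, and inner Grover/collision searches to reconstruct each in-neighborhood. We need to prevent the exponents from degrading to $1/2$ while still yielding only $o(\sqrt n)$ error in the counts. Our first attempt will be to mirror CPS's parameter schedule $n^{1-\alpha_1}>n^{1-\alpha_2}>\dots$, translated to $n^{1/2-\alpha_i/2}$ via the quadratic speedup of search. We will use error reduction with $O(\log)$ overhead for the nested bounded-error subroutines, and show that the error incurred by missing low-frequency reverse neighborhoods is at most $\delta n$ in total.
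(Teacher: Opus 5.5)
Your Step 1 is the same as the paper's reduction: use the CPS canonical tester, estimate every $\cnt_\Gamma$ within $\delta n$, and compare the resulting hypergeometric acceptance probabilities. The gap is in Step 2. It contains no procedure that provably runs in $o(\sqrt n)$ queries, and the one you commit to in the ``main obstacle'' paragraph cannot.

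In the unidirectional model, the in-edges of a fixed vertex are at most $d$ marked items among the $dn$ out-edge slots. Any subroutine that reconstructs the in-neighbourhood of a specific sampled root is therefore unstructured search and costs $\Omega(\sqrt n)$ queries. An outer amplitude estimation over roots that calls such a subroutine is thus $\Omega(\sqrt n)$ per call, and recovering the in-edges of a sample of $s$ roots costs about $s\sqrt{n/s}=\sqrt{sn}$.

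The tools you cite do not fix this. Ambainis' element distinctness ($n^{2/3}$) and Belovs' $k$-distinctness ($n^{3/4-1/(4(2^k-1))}$) are worst-case algorithms that cost more than $\sqrt n$. ``Translating'' the CPS exponents $n^{1-\alpha_i}$ into $n^{1/2-\alpha_i/2}$ is not a consequence of Grover search. For finding $k$-collisions among $\Theta(n)$ of them, the tight quantum bound is $n^{1/2-1/(2(2^k-1))}$, which is larger than the square root of the classical $n^{1-1/k}$. Your observation that searching for a partner costs $O(\sqrt{n/(\text{number of good partners})})$ is the right local fact. It only pays off if the target set is large, and your plan never arranges that.

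The missing idea is to never explore around particular roots. Instead, grow many partial instances simultaneously, one edge at a time, in a fixed edge order for each type.
\begin{itemize}
\item For a type $\Gamma$ with $i$ edges, the paper marks every edge that extends one of the $\Theta(t_{i-1})$ previously collected $\Gamma_{[i-1]}$-instances.
\item There are $\Theta(t_{i-1})$ marked edges, so quantum counting and each Grover sample cost $O(\sqrt{n/t_{i-1}})$.
\item Only $\Theta(t_i)$ samples are kept, with $t_i=n^{(2^{m-i}-1)/(2^m-1)}$ and $m=m_{d,q}$.
\item The total cost is $\sum_i (t_i+1)\sqrt{n/t_{i-1}}=n^{1/2-1/(2(2^m-1))}$.
\end{itemize}

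Because the surviving instances are a heavily biased subsample of roots, your plan also lacks two further ingredients.
\begin{itemize}
\item \emph{Bias correction.} The choice $t_i=o(\sqrt{t_{i-1}})$ makes the collected instances pairwise far apart (birthday paradox). Hence the number $X_\Gamma$ of marked edges is an exact linear function of the collected instances. Counting ordered edge tuples modulo automorphisms of $\Gamma'$ then gives $X_\Gamma\approx (t_{i-1}/n)\sum_{\Gamma'\succeq\Gamma}\mu_{\Gamma,\Gamma'}\cnt_{\Gamma'}$. This is an upper-triangular system, which is inverted at the end.
\item \emph{Rare types.} When some $X_\Gamma$ is small, counting loses relative accuracy and Grover search becomes expensive. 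The paper truncates below a threshold. It draws $\delta$ at random from geometrically spaced values, so that with probability $0.99$ no type falls in the gray zone between the two thresholds.
\end{itemize}
Saying that light neighbourhoods ``contribute little'' addresses neither the bias correction nor this query blow-up.
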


Thus, we answer affirmatively the open question proposed in \cite{apers2024quantumpropertytestingsparse}, showing that our transformation yields an almost quadratic quantum speedup over the best known classical algorithms in the unidirectional model~\cite{10.1145/2897518.2897575}.  
We remark that many natural graph properties can be tested with two-sided error using a constant number of classical queries in the bounded-degree bidirectional model. Examples include:
subgraph-freeness \cite{goldreich1997property},
subdivision-freeness \cite{kawarabayashi2013testing},
any hyperfinite property \cite{newman2011every,kumar2019random, benjamini2008every},
strong connectivity and Eulerianity \cite{bender2002testing, orenstein2011testing},
$k$-(vertex/edge)-connectivity  \cite{forster2020computing}, 
and certain first-order properties \cite{adler2024testability}.
Our transformation in \Cref{thm:main} implies that all of these properties can be tested in the unidirectional model in the quantum setting with query complexity $n^{1/2 - \Omega_{\varepsilon,d}(1)}$.

As a byproduct of \Cref{thm:main}, we obtain a quantum algorithm for approximating the number $\#H$ %
of occurrences of any constant-size subgraph $H$ (referred to as $H$-instances) in a digraph $G$ with $o(\sqrt{n})$ quantum query complexity in the unidirectional model. Specifically, for any $d$-bounded-degree digraph $H$ of radius at most $q$, we can approximate the number $\#H$ within an additive error $\delta n$ using $O_{\delta,d,q}\left(n^{1/2-2^{-\Theta((2d)^{q+1})}}\right)$ quantum queries in the unidirectional model. We refer to \Cref{thm:subgraph} for the formal statement.

Furthermore, we remark that if $q = q(\varepsilon, d)$ is the classical bidirectional query complexity, our transformation from \Cref{thm:main} yields quantum unidirectional complexity $n^{1/2 - f(\varepsilon,d)}$, where $f(\varepsilon,d) = 2^{-\Theta((2d)^{q+1})}$ %
(see \Cref{thm:full-main}). Since any non-trivial property requires $\Omega(1/\varepsilon)$ queries \cite{fischer2024basic}, we have $q \to \infty$ as $\varepsilon \to 0$, and thus $f(\varepsilon,d) \to 0$. 

\paragraph{The lower bound} A natural question is whether our transformation is tight. In particular, can the upper bound of the resulting tester in the quantum unidirectional model be improved to $n^{c-f'(\eps,d)}$ for some universal constant $c<1/2$ (and a function $f'$ satisfying $f'\to 0$ as $\eps\to 0$)? Prior to our work, the best known lower bound in our setting was $\Omega(n^{1/3})$. This bound is implicit in earlier work, and can be obtained by a slight modification of the result of \cite{li2018quantum} applied to testing $2$-star-freeness. (For completeness, we reproduce its proof in \Cref{sec:BKLB}.) %
We also note that the aforementioned lower bound from \cite{apers2024quantumpropertytestingsparse} applies only to graphs with bounded maximum out-degree but potentially unbounded in-degree, and thus does not cover our setting. %

We show that our transformation is indeed almost tight by constructing a property that is constant-query testable in the classical bidirectional model, yet requires $\widetilde{\Omega}_\eps(n^{1/2 - f'(\varepsilon)})$ queries in the quantum unidirectional model, for some function $f'(\varepsilon)$ that tends to $0$ as $\varepsilon \to 0$.

\begin{theorem}
\label{thm:tight}
For any sufficiently small constant $\eps>0$, there exists a constant $d=O(\log\log(1/\eps))$ and a digraph property $P=P_{\eps}$ for $d$-bounded-degree digraphs such that $P$ is $\eps$-testable with $O_{\eps}(1)$ queries in the classical bidirectional model, while any $\eps$-tester for $P$ in the quantum unidirectional model requires $\widetilde{\Omega}_\eps(n^{1/2-f'(\eps)})$ queries, where $f'(\eps)$ is a function satisfying that $f'(\eps)\to 0$ as $\eps\to 0$.
\end{theorem}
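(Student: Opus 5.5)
The plan is to adapt the Peng--Wang classical separation \cite{peng2023optimalseparationpropertytesting} to the quantum setting. Its hard instances encode a hidden combinatorial object, a \emph{$k$-collision}-type structure, in the \emph{incoming} edges. In the unidirectional model this structure can only be detected by finding many vertices that point to a common target. Classically this costs $n^{1-1/k}$ queries; quantumly we expect the analogous $k$-collision problem to cost about $n^{1/2-1/(2k)}$ queries. We choose $k=k(\eps)\to\infty$ as $\eps\to 0$ and set $f'(\eps)\approx 1/(2k)$.

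\textbf{Step 1: the property.} Following Peng--Wang, $P_\eps$ consists of digraphs built from gadgets. Each vertex $v$ has a constant-depth in-tree of depth $\ell=O(\log k)$; the degree bound $d=O(\log\log(1/\eps))$ is what keeps $\ell$ small enough. The gadgets glue the in-trees together so that the multiset of out-neighbourhoods of the tree leaves satisfies a global constraint. In the YES-distribution the roots' in-trees are mutually ``consistent'' (e.g.\ leaves colored identically). In the NO-distribution an $\eps$-fraction of the trees are corrupted, and this forces $\eps$-farness.

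\textbf{Step 2: classical bidirectional testability.} A bidirectional tester samples $O(1/\eps)$ roots and explores each full in-tree with $d^{\ell}=O_\eps(1)$ queries, then checks local consistency. This gives $O_\eps(1)$ queries, and completeness and soundness follow because every violation is witnessed inside a single in-tree. Farness is proved by the usual counting argument: each corrupted tree needs $\Omega(1)$ edge modifications and the trees are edge-disjoint.

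\textbf{Step 3: quantum lower bound (main obstacle).} We reduce from a $k$-wise collision/occurrence problem: distinguish a function $g:[n]\to[n]$ in which every value has preimage size at most $k-1$ from one in which $\Omega(n)$ values have preimage size exactly $k$. Unidirectional queries to the graph are simulated with $O(1)$ queries to $g$: a vertex's outgoing edges are determined by $g$ and fixed randomness, while its in-neighbourhood is the preimage of $g$, which is invisible. This requires designing the gadget so that a YES instance corresponds to ``no $k$-collisions'' and a NO instance to ``many $k$-collisions.'' We then need a quantum lower bound of $\widetilde{\Omega}(n^{1/2-1/(2k)})$ for this gapped $k$-collision problem. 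We intend to obtain it along the lines of the $k$-distinctness / $k$-source-subgraph bounds in \cite{apers2024quantumpropertytestingsparse}, via the dual polynomial (approximate degree) method with symmetrization over the occurrence profile. The key point is that the construction must keep the preimage sizes bounded by $d$. The bound of \cite{apers2024quantumpropertytestingsparse} uses unbounded in-degree, so we must redo the dual-witness construction with preimages capped at $k\le d$. We expect this to be the hardest step: the restricted input domain may lower the approximate degree. To address it, we plan to start from the $k$-distinctness dual witness of Bun--Kothari--Thaler, which already lives on inputs of small multiplicity, and to add a gap amplification to reach the required $\Omega(n)$ promise.

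Finally, we balance the parameters. The tree depth $\ell$ needed to encode $k$-collisions, together with $\eps\approx 2^{-\Theta(k)}$ farness, gives $k=\Theta(\log(1/\eps))$ and hence $d=O(\log k)=O(\log\log(1/\eps))$. The resulting lower bound is $\widetilde{\Omega}_\eps(n^{1/2-f'(\eps)})$ with $f'(\eps)=\Theta(1/\log(1/\eps))\to 0$.
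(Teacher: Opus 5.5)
Your plan has two genuine gaps, and the first is self-inflicted. In Step~3 each element points according to $g$, so a vertex's in-neighbourhood is a preimage of $g$, and you require preimages to be capped at $k\le d$. Yet in the final balancing you take $d=O(\log k)$ with $k=\Theta(\log(1/\eps))$, which is incompatible for large $k$. The in-tree gadget meant to reconcile these is never specified, and it is not clear one exists in your framework. An element's out-edge must be answered from a constant number of values of $g$. Those values say nothing about the other preimages of $g(i)$, so you cannot route an unknown $k$-element preimage into an in-tree of in-degree $d<k$ for every admissible $g$. Doing so would force an extra promise (e.g.\ preimages of each value lying in distinct blocks), which is a different problem from the one whose lower bound you plan to prove. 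So Steps~1--2 cannot be checked, and Peng--Wang's separation in fact rests on $k$-star-freeness, not on such gadgets.

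None of this is needed, because the statement allows $d=k$. Take $P_\eps$ to be $k$-star-freeness with $k=d=\lceil C\log\log(1/\eps)\rceil$. The bidirectional tester samples $O_k(1/\eps)$ vertices and inspects their in-neighbourhoods. The reduction is just $G_f$ with edges $(v_i,v_{N+f(i)})$: out-degree $1$, in-degree $|f^{-1}(r)|\le k$, one $f$-query per graph query, and farness shrinks only by a factor $(c+1)k$. This is the paper's proof, a three-line corollary of its $\widetilde{\Omega}(n^{1/2-1/(2k)})$ bound for $k$-star-freeness. That bound is valid only for $\eps<(40k)^{-k/2}$, so $\eps=2^{-\Theta(k\log k)}$ is needed rather than $2^{-\Theta(k)}$. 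Your claimed rate $f'(\eps)=\Theta(1/\log(1/\eps))$ is therefore unsupported, though only $f'\to 0$ matters.

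The second, central gap is that the quantum lower bound for the doubly promised problem is only announced, and the route you sketch does not work as stated. The problem is: no value has more than $k$ preimages, and either none or at least $\eps N$ values have exactly $k$. The Bun--Kothari--Thaler dual witnesses for $\THR^k_N$ restricted to block weight at most $T$ have pure high degree growing with $T$. Taking $T=k$, which a cap of $k$ preimages forces, leaves only constant pure high degree. The missing idea is to keep BKT's $\psi$ for the full $\THR^k_N$, with $\phd(\psi)=\Omega(\sqrt{N^{1-1/k}/k})$, built from a univariate $\omega$ with $\omega(k)<0$ and $\sum_{t>k}|\omega(t)|\le 1/(48N)$. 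This $\psi$ is \emph{not} supported on weight $\le k$, and you then show that $\phi\star\psi$ is still a good dual witness for $\GapOR^{\eps N}_R\circ\BTHR^k_N$. Besides the gap-promise region handled as in Apers et al., you must control the new region $D_2$ where some block has weight $\ge k+1$. The tail bound on $\omega$ gives that $\psi$ has at most $1/(48N)$ positive mass on weight $\ge k+1$ and at least $1/2-1/(48N)$ negative mass on weight $\le k$. Hence under $\lambda^{\otimes R}$ conditioned on the sign pattern $\vecone^R$ or $-\vecone^R$, a union bound gives $\Pr[x\in D_2]\le R/(24N)$. Zeroing out the mass above total weight $N$, the dummy augmentation, the binary-encoding reduction and the polynomial method then give $\widetilde{\Omega}(N^{1/2-1/(2k)})$. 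Without this argument or a substitute, your Step~3, which carries the whole content of the theorem, remains unproved.
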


The above theorem follows directly from the next result concerning the testing of \emph{$k$-star-freeness} (which was also used to establish the classical lower bound in \cite{peng2023optimalseparationpropertytesting}). A \emph{$k$-star} is a directed graph on $k + 1$ vertices and $k$ edges, consisting of a central vertex and $k$ source vertices, each having an edge directed toward the center.
A digraph $G$ is said to be $k$-star-free if it does not contain a $k$-star as a subgraph. The following theorem extends the quantum lower bound of \cite{apers2024quantumpropertytestingsparse} to digraphs in which both in-degree and out-degree are bounded, strengthening their result that applied only to the more relaxed regime where only the maximum out-degree was bounded\footnote{We remark that the lower bounds in these two regimes can differ significantly. For instance, in the classical setting, \cite{10.1145/2897518.2897575} showed that in the unidirectional model, a certain property (called $\delta$-incoming, i.e. the fraction of vertices with non-zero in-degree is at least $\delta$) requires $n^{1-O(\sqrt{\log\log n/\log n})}$ queries when only the maximum out-degree is bounded, whereas the same property can be tested with $O(n^{1-1/d})$ queries (by estimating the number of $k$-stars for each $k$ such that $d\geq k \geq 1$) when both the maximum out-degree and in-degree are bounded by $d$.}.

\begin{restatable}{theorem}{Kstar}
\label{thm:k_star_lowerbound}
Let constants $ k\geq 2$ and $0<\eps<1/ (40k)^{k/2}$.
Then the quantum query complexity of $\eps$-testing $k$-star-freeness in $k$-bounded-degree digraphs is $\Omega_{k,\eps}(n^{1/2-1/(2k)}/\ln^3 n)$ in the unidirectional model.
\end{restatable}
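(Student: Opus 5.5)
The plan is to reduce from a hard symmetric function problem: a $k$-collision / $k$-occurrence problem with a gap. In the $k$-bounded-degree unidirectional model, the graph can be encoded by a function $f:[N]\to[M]$, where the out-edge of each source vertex $i$ points to the vertex $f(i)$. A $k$-star is then exactly a $k$-collision of $f$: $k$ distinct points with the same image. In-degree boundedness by $k$ forces every preimage to have size at most $k$. So the task becomes distinguishing two cases:
\begin{itemize}
\item the YES case, where $f$ has preimages of size at most $k-1$;
\item the NO case, where $f$ has $\Omega(\eps n)$ preimages of size exactly $k$, so that $\Omega(\eps n)$ edge-disjoint $k$-stars must be destroyed.
\end{itemize}
Because all preimages are bounded by $k$, we cannot use the unbounded-in-degree instances of \cite{apers2024quantumpropertytestingsparse} directly. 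I will therefore follow their strategy (and the classical $k$-collision lower bounds of \cite{li2018quantum}), adapted to distributions supported on functions with bounded multiplicities.

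The first step is to fix the hard distributions. Take the YES distribution to be a uniformly random function on $n$ points in which each image has multiplicity at most $k-1$, with prescribed multiplicity profile: $a_j n$ images of multiplicity $j$. Take the NO distribution to be the same except for an additional $\Theta(\eps) n$ images of multiplicity $k$. The profiles $(a_j)$ will be chosen so that the two distributions agree on the expected number of $j$-collisions for every $j<k$, i.e.\ moment matching. This is where the constraint $\eps<1/(40k)^{k/2}$ should come from, since it keeps all weights $a_j$ nonnegative. Padding vertices ensure out-degree and in-degree at most $k$, and a short counting argument shows the NO instances are $\eps$-far.

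The second step is the quantum lower bound itself, which I would prove with the dual polynomial (or symmetrization) method. The acceptance probability of a $T$-query algorithm is a polynomial of degree $2T$ in the indicator variables of $f$. Averaging over random relabelings of domain and range gives a polynomial in the multiplicity profile $(a_1,\dots,a_k)$, in the style of Aaronson--Shi and of the Ambainis/Kutin collision bounds. We then need to show that a polynomial of degree $o(n^{1/2-1/(2k)})$ cannot separate the profiles. My concrete plan is as follows.
\begin{itemize}
\item Embed both profiles into a one-parameter family, with the $k$-collision density as the parameter, which is valid only at discretized points with spacing $1/n$.
\item Show that any low-degree polynomial bounded on the family cannot jump by a constant between parameter $0$ and parameter $\eps$.
\item Combine Markov-type inequalities with the fact that, for parameter up to $\approx n^{-1+1/k}$, random sampling makes the profiles nearly indistinguishable. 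This gives the degree bound $\sqrt{n\cdot n^{-1/k}}$, which explains the exponent $1/2-1/(2k)$.
\end{itemize}
The $\ln^3 n$ loss comes from concentration and truncation when converting multinomial functions into exact bounded-multiplicity profiles.

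The main obstacle is the symmetrization under the bounded-multiplicity constraint. The natural symmetric distributions (random functions) produce unbounded in-degrees, so one must condition on the multiplicity bound. That conditioning breaks the clean product structure used in \cite{apers2024quantumpropertytestingsparse}. I would first try to show that the conditioned acceptance polynomial is still close, up to $1/\poly(n)$, to a low-degree polynomial in the profile parameters, by working with Poissonized profiles and truncating. If that fails, I would instead use the adversary method with a relation between YES and NO instances differing by merging $k$-tuples of preimages.
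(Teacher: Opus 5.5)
\textbf{Verdict: there is a genuine gap.} Your reduction from $k$-bounded functions to $k$-bounded-degree digraphs matches the paper's (edges $(v_i,v_{N+f(i)})$). The lower bound itself, however, is not established. The step that must produce the exponent $1/2-1/(2k)$ is only a heuristic, and you leave the bounded-multiplicity symmetrization open, with two speculative fallbacks.

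\textbf{Why the Markov step fails.} As you describe it, it yields no $n$-dependence at all. Suppose the symmetrized acceptance polynomial $P(\theta)$ has degree $D\ll\sqrt{n}$ and is bounded on a grid of spacing $1/n$ in an interval of constant length. Then, by Ehlich--Zeller/Coppersmith--Rivlin, it is $O(1)$-bounded on the whole interval. Markov's inequality then only forces $D=\Omega(\eps^{-1/2})$ to get a constant jump between $\theta=0$ and $\theta=\eps$, which is a constant bound.

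\textbf{Why the other ingredients do not help.} The claim that ``for density up to $n^{-1+1/k}$ random sampling makes the profiles nearly indistinguishable'' is a statement about classical samples; it is where the classical $n^{1-1/k}$ bound comes from. It is not a constraint on a degree-$2T$ polynomial, so it cannot be combined with Markov to give $\sqrt{n^{1-1/k}}$. Likewise, matching $j$-collision statistics for $j<k$ is a classical sample-complexity device. It does not control a polynomial of degree $\ge k$, which sees $k$-collisions directly.

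\textbf{The missing idea.} The paper uses a dual-polynomial argument in which the in-degree promise costs almost nothing.
\begin{enumerate}
\item After dummy augmentation, the occurrence function on the one-hot encoding is $\GapOR_R^{\eps N}\circ\BTHR_N^k$. One passes to its restriction to total Hamming weight $\le N$ (Bun--Thaler).
\item Take the dual block composition $\phi\star\psi$. Here $\phi$ is the trivial dual for $\GapOR$, and $\psi(x)=\omega(|x|)/\binom{N}{|x|}$ is the Bun--Kothari--Thaler dual for $\THR_N^k$, with pure high degree $\Omega(\sqrt{N^{1-1/k}/k})$.
\item The new observation (\Cref{prop:psi}) is that $\omega(k)<0$ and $\sum_{t>k}|\omega(t)|\le 1/(48N)$. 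Hence a block drawn from $|\psi|$ conditioned on $\sgn\psi=+1$ has weight $\ge k+1$ with probability at most $1/(24N)$. Conditioned on $\sgn\psi=-1$, it has weight $\le k$ with probability at least $1-1/(24N)$.
\item A union bound over the $R$ blocks shows that $\phi\star\psi$ puts only $O(R/N)$ mass on inputs violating the bounded promise. Violations of the gap promise are handled by the Chernoff argument of Apers et al.; this is where the smallness condition on $\eps$ enters, with $N=\lceil 20(2k)^{k/2}\rceil R$ (see \Cref{lem:corr}). The $\eps$ condition does not come from nonnegativity of profile weights.
\item Zeroing out the mass above weight $N$ loses a $\ln^2 N$ factor in pure high degree, and dummy augmentation loses one more logarithm. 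Together these give the $\ln^3 n$.
\end{enumerate}

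Without an argument of this kind, or a fully worked symmetrization that genuinely yields degree $n^{1/2-1/(2k)}$ under the multiplicity bound, your proposal does not prove the theorem.
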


It is known that $k$-star-freeness is $\eps$-testable in the classical bidirectional model using $O_k(1/\varepsilon)$ queries on $k$-bounded-degree digraphs, by sampling $O_k(1/\varepsilon)$ vertices, querying all their in- and out-neighbors, and rejecting if a $k$-star is detected (see, e.g.  \cite{goldreich1997property}).
We now proceed to prove \Cref{thm:tight}, assuming the validity of  \Cref{thm:k_star_lowerbound}.

\begin{proof}[Proof of \Cref{thm:tight}]
    Given any sufficiently small constant $\eps>0$, define property $P_\eps$ to be the property of being $k$-star free in $d$-bounded-degree digraphs where $k=d=\lceil C\log\log(1/\eps)\rceil$ and $C$ is a universal constant chosen to make $\eps<1/ (40k)^{k/2}$ hold.
    According to \Cref{thm:k_star_lowerbound}, any $\eps$-tester for $P_\eps$ requires at least $\widetilde{\Omega}_\eps(n^{1/2-1/(2k)}) > \widetilde{\Omega}_\eps(n^{1/2-1/(2C\log\log(1/\eps))})$ queries in the quantum unidirectional model.
    Then \Cref{thm:tight} follows by letting $P=P_\eps$ and $f'(\eps)=1/(2C\log\log(1/\eps))$.
\end{proof}

\subsection{Technical overview}
\subsubsection{The upper bound} 
\paragraph{The CPS approach} Our transformation is built upon the framework of Czumaj, Peng and Sohler~\cite{10.1145/2897518.2897575}, referred to as the CPS approach, which shows that if a graph property $\Pi$ can be tested with two-sided error and classical query complexity $q = q(\varepsilon, d)$ (independent of the graph size) in the $d$-bounded-degree digraph bidirectional model, then testing $\Pi$ reduces to approximating the frequency vector $\cF(G) = (\cnt_\Gamma(G))_{\Gamma \in \cD_{d,q}}$, where 
\begin{itemize}
    \item $\cD_{d,q}$ denotes the set of all possible isomorphism types $\Gamma$ of $d$-bounded $q$-discs (that is, rooted subgraphs induced by all vertices within distance at most $q$ from a root vertex; \Cref{def:qdisc}),
    \item and $\cnt_\Gamma(G)$ denotes the number of occurrences of any $q$-disc of isomorphism type $\Gamma$. %
\end{itemize}
Then the vector $\cF(G)$ records the number of occurrences of each such type in $G$. (Since $G$ has maximum degree at most $d$, the number of such isomorphism types is constant, and hence $\cF(G)$ is a constant-dimensional vector.) In particular, it suffices to approximate each $\cnt_\Gamma(G)$ to within an additive error of $\delta(\varepsilon)\cdot n$ for all $\Gamma \in \cD_{d,q}^*$, {where $n$ is the number of vertices in $G$} %
and 
\begin{itemize}
    \item $\cD_{d,q}^*$ is $\cD_{d,q}$ excluding the empty type (the type with a root vertex merely and no edges).
\end{itemize}

To estimate $\cnt_\Gamma(G)$ in the classical unidirectional model, the CPS approach samples $n^{1 - \Omega_{q,d}(1)}$  edges and counts the number of sampled subgraphs isomorphic to $\Gamma$ (referred to as $\Gamma$-instances).
However, the fraction of sampled $\Gamma$-instances cannot be used simply as an estimator for $\cnt_\Gamma(G)$, since this may lead to false positives.
Specifically, a $q$-disc rooted at a vertex $v$ in the sampled graph might appear to be of type $\Gamma$ when, in fact, its full neighborhood in the original graph corresponds to a larger type $\Gamma'$ for some $\Gamma'\succeq\Gamma$.
The binary relation $\succeq$ is defined over $\cD_{d,q}$ such that $\Gamma' \succeq \Gamma$ if and only if $\Gamma'$ contains $\Gamma$ as a subgraph. This relation induces a partial order over all types in $\cD_{d,q}$.

To obtain an unbiased estimator for $\cnt_\Gamma(G)$, we must correct for this overcounting. This requires accounting for the contribution of all types $\Gamma'$ such that $\Gamma'\succeq \Gamma$, i.e. those that can induce a false appearance of $\Gamma$ in the sampled view.

In the CPS approach, this correction is handled systematically by exploiting the fact that for any pair of types $\Gamma$ and $\Gamma'$, the conditional probability $\Pr[\Gamma \mid \Gamma']$ -- the probability that $\Gamma$ appears in the sampled graph given that the true $q$-disc is of type $\Gamma'$ -- depends only on $\Gamma$ and $\Gamma'$. Furthermore, since the CPS algorithm is non-adaptive, each such probability is needed only once in the final estimator for $\cnt_\Gamma(G)$. As a result, the contributions from all relevant $\Gamma'$ types that can cause false positives can be corrected in a \emph{single final step} of the estimation process.

\paragraph{Our approach} 
We now describe our approach for proving the transformation result stated in \Cref{thm:main}. As outlined above following the CPS framework, it suffices to approximate $\cnt_\Gamma(G)$ with small additive error in the quantum unidirectional query model for each $q$-disc type $\Gamma \in \mathcal{D}_{d,q}^*$. To this end, we give an \emph{adaptive} algorithm that iteratively applies \emph{quantum counting} and \emph{Grover search}, rather than relying on non-adaptive sampling and a one-shot correction of false positives. It yields an almost quadratic speedup over the CPS algorithm. %

\paragraph{The high-level idea} %
Our algorithm iteratively collects rooted subgraphs with $i$ edges, for $i = 1$ to $m_{d,q}$, where $m_{d,q}$ is the maximum number of edges among the isomorphism types in $\mathcal{D}_{d,q}$. In each iteration, it samples edges to extend previously collected subgraphs. Specifically, it begins by sampling a set of edges uniformly at random. Suppose that by the end of the $i$-th iteration, we have a set $\mathcal{H}_i$ of subgraphs with $i$ edges. At the $(i+1)$-th iteration, we sample edges incident to subgraphs in $\mathcal{H}_i$ to grow them into a set $\mathcal{H}_{i+1}$ of  subgraphs with $i+1$ edges. 

The key challenge is to balance query efficiency with accuracy. On one hand, we aim to minimize the number of sampled edges to reduce query complexity. On the other hand, for each isomorphism type $\Gamma$, we must ensure that if $\Gamma$ appears frequently enough in the graph, then our sample includes a sufficient number of its occurrences. This is essential for obtaining reliable estimates of $\mathrm{cnt}_\Gamma$, using relationships between different types to filter out false positives. This approach is inspired by the adaptive algorithm for testing $k$-star-freeness in \cite{apers2024quantumpropertytestingsparse}, which in turn builds on techniques from \cite{liu2019findingquantummulticollisions}. However, their algorithm only \emph{detects} the presence of $k$-stars when they are sufficiently abundant, and does not even attempt to estimate their count. Note that $k$-stars are special cases of $1$-disc types, while we aim to \emph{count} general $q$-disc types (for any constant $q$) -- an inherently more challenging task. As previously mentioned, our approach extends to estimating the frequency of any constant-size subgraph.

\paragraph{The edge ordering of isomorphism types} To implement the above algorithmic idea, we carefully make use of the partial ordering of types in $\cD_{d,q}^*$ and define an ordering of edges within each type. 
{These orderings facilitate the sampling of $\Gamma$-instances in which edges are revealed according to a specific sequence, thereby excluding those arising from alternative sequences.}

We define the edge ordering of types in $\cD_{d,q}^*$ iteratively based on their number of edges. For types with one edge ($i = 1$), the ordering is trivially determined. For each $i \geq 2$ and any type $\Gamma$ with $i$ edges, we arbitrarily choose a type $\Gamma'$ with $i - 1$ edges such that $\Gamma \succeq \Gamma'$, meaning $\Gamma$ can be formed by adding one edge to $\Gamma'$. We then define the edge ordering of $\Gamma$ by appending this newly added edge to the existing ordering of $\Gamma'$. For analysis, we define $\Gamma_{[j]}$ to be the isomorphism type consisting of the first $j$ edges in the fixed ordering of $\Gamma$.

\paragraph{The algorithm} We now describe the algorithm. It proceeds in iterations from $1$ to $m_{d,q}$. In the $i$-th iteration, we sequentially process all isomorphism types $\Gamma \in \cD_{d,q}^*$ that contain exactly $i$ edges, in an arbitrary but fixed order.

For a given type $\Gamma$ with $i$ edges, we assume that instances of $\Gamma_{[i-1]}$ (i.e., the prefix of $\Gamma$ consisting of its first $i-1$ edges) have already been collected in previous iterations. We define a Boolean function $f_\Gamma$ over the edge set of the graph such that $f_\Gamma(e) = 1$ if edge $e$ can be used to extend some existing $\Gamma_{[i-1]}$-instance into a full $\Gamma$-instance, and $f_\Gamma(e) = 0$ otherwise. The edges for which $f_\Gamma(e) = 1$ are referred to as marked edges.

We apply quantum counting to estimate the number $X_\Gamma$ of marked edges, obtaining an estimate $\tilde{X}_\Gamma$. We then use Grover search to sample $\ell_\Gamma = t_i\tilde{X}_\Gamma/t_{i-1}$ marked edges, where $t_i=n^{(2^{m_{d,q}-i}-1)/(2^{m_{d,q}}-1)}$ (the choice of this parameter is inspired by the constructions in \cite{liu2019findingquantummulticollisions,apers2024quantumpropertytestingsparse}). %
Each sampled marked edge is combined with the corresponding $\Gamma_{[i-1]}$-instances to form new $\Gamma$-instances. We denote the set of collected $\Gamma$-instances as $\mathcal{H}_\Gamma$.

Importantly, $\mathcal{H}_\Gamma$ only includes $\Gamma$-instances whose edges appear in the prescribed ordering consistent with the predefined edge orderings over all types. Once all iterations are complete, we have obtained estimates $\tilde{X}_\Gamma$ for every type $\Gamma \in \cD_{d,q}$.

Let $\tilde{\boldsymbol{x}}$ denote the $D$-dimensional vector of normalized estimated counts $\frac{n}{t_{i-1}}\cdot \tilde{X}_\Gamma$ for all $\Gamma \in \cD_{d,q}^*$ with $i\in[m_{d,q}]$ edges, where $D=|\cD_{d,q}^*|$. We define our final estimates $\widetilde{\cnt}_\Gamma$ by setting 
\[\boldsymbol{\widetilde{\cnt}} = M^{-1} \tilde{\boldsymbol{x}},\]
where $\boldsymbol{\widetilde{\cnt}}=(\widetilde{\cnt}_\Gamma)_{\Gamma\in\cD_{d,q}^*}$ is the vector consisting of our final estimates and $M \in \mathbb{N}^{D \times D}$ is an upper triangular matrix only dependent on $d$ and $q$.
{The entry of $M$ corresponding to the $\Gamma$-row and $\Gamma'$-column is defined as $\mu_{\Gamma,\Gamma'}$, representing how many $\Gamma$-instances are contributed as false appearance by a $q$-disc of type $\Gamma'$, which will be specified later.}

\vspace{1em}
In the following, we show how the quantities $X_\Gamma$ for $\Gamma \in \cD_{d,q}^*$ relate to the true counts $\cnt_\Gamma$ through a system of linear relations. This allows us to define the matrix $M$ and analyze why the estimates $\widetilde{\cnt}_\Gamma$ obtained via $M^{-1} \tilde{\boldsymbol{x}}$ serve as accurate approximations of the true values $\cnt_\Gamma$.

\paragraph{Warm-up: estimating the number of vertices with in-degree $k$} 
Before addressing the general case, let us first consider a warm-up example: estimating the number of vertices with in-degree exactly $k$ for each $k \leq d$, denoted by $\cnt_k$. This is equivalent to counting the number of occurrences of $1$-disc types corresponding to $k$-stars  in the graph $G$. In this setting, we sequentially collect instances of $1$-, $2$-, $\dots$, $d$-stars. In the $i$-th iteration, the number $X_i$ of marked edges directly corresponds to the number of $i$-star instances.

The $X_i$ can be shown to be proportional to, $\sum_{j\geq i}\binom{j}{i}i! \cdot\cnt_j$, the product of the number of $i$-star instances in the graph and the number of permutations of $i$ edges.
Combining all iterations, we obtain an approximately satisfied system of linear equations
\[
{\boldsymbol{x}} \approx M\boldsymbol{\cnt}, 
\]
where $\boldsymbol{x}$ is the vector consisting of normalized values $\frac{n}{t_{i-1}}X_i$, the matrix $M$ here is defined by $M_{ij} = \binom{j}{i}i!$, representing the number of $i$-star instances contributed by a vertex of in-degree $j$ when considering the realizing edge order.
At the final step, based on estimates $\{\tilde{X}_i\}_{i\in[d]}$, we approximate $\{\cnt_k\}_{k\in[d]}$ by setting $\boldsymbol{\widetilde{\cnt}} = M^{-1} \tilde{\boldsymbol{x}}$, where $\boldsymbol{x}$ is the vector such that $\tilde{x}_i =\frac{n}{t_{i-1}}\cdot \tilde{X}_i$.

\vspace{1em}

Now we consider the general case of estimating the number of all $q$-disc types $\Gamma\in \cD_{d,q}$. We first give a fine-grained characterization of the relationship between different isomorphism types $\Gamma$ (with $i$ edges) and $\Gamma'$ for $\Gamma\preceq\Gamma'$, specifically capturing the cases of observing the first $j$ edges of type $\Gamma_{[j]}$ when the true underlying disc is of type $\Gamma'$, for each $j \leq i$.

\paragraph{Characterization of isomorphism types} According to the previously defined orderings, each type $\Gamma$ can be represented as a tuple consisting of a root vertex and an ordered sequence of edges that realize $\Gamma$. Consider two types $\Gamma$ (with $i$ edges) and $\Gamma'$ such that $\Gamma \preceq \Gamma'$. For each $j\leq i$, 
let $\cW_j$ be the collection of vertex-edge tuples $(v',e_1,\dots,e_j)$ consisting of the root vertex $v'$ and $j$ edges from $\Gamma'$ that sequentially generate $\Gamma_{[j]}$. We can define a tuple tree such that (1) the root is the tuple in $\cW_0$, i.e. the tuple $(v')$ where $v'$ is the root vertex of $\Gamma'$; (2) nodes at level $j$ are tuples in $\cW_j$ for $j=1,\dots,i$, {i.e. each node sequentially generates a subgraph $\Gamma_{[j]}$;} (3) leaves are tuples in $\cW_i$, i.e. the tuple $(v',e_1,\dots,e_i)$ that sequentially generates $\Gamma$; (4) edges connecting tuples in adjacent levels represent whether one tuple can extend to the other by appending (or deleting) one edge at the end.

Intuitively each path from the root to some leaf corresponds to a possible false positive of $\Gamma$-instance caused by a $\Gamma'$-instance, {i.e. these paths encode possible overcounting with respect to the queried subgraph and the larger subgraph containing it.}
Note that the number of such paths is exactly  the size of $\cW_i$, denoted by $\mu_{\Gamma,\Gamma'}$. That is, 
\begin{equation}
    \label{eq:fac}
        \sum_{W_0 \in \cW_0} \dots \sum_{W_i \in \cW_i}
    \prod_{j=1}^{i} \kappa_{W_{j-1},W_{j}} = \mu_{\Gamma,\Gamma'},
    \end{equation}
where $\kappa_{W_{j-1}, W_j}$ is the variable indicating whether $W_{j-1}$ and $W_j$ are connected by an edge or not.

For each false positive in $\cH_\Gamma$ whose actual isomorphism type is $\Gamma'$, there should intuitively be a unique corresponding tuple in $\cW_i$. However, a complication arises: multiple tuples in $\cW_i$ can exhibit identical behavior, making it impossible to determine which specific tuple the false positive corresponds to. 

However, we observe that these tuples can be mapped to one another via \emph{automorphisms} of $\Gamma'$. This justifies grouping them into equivalence classes, {and we can consider automorphisms of the tuples in the leaves and represent the overcounting number with respect to the automorphisms.} Specifically, for a tuple $W = (v, e_1, \dots, e_k)$, we define its \emph{equivalence class} as
\[
    [W] = \{\phi(W):\phi\in\Phi_{\Gamma'}\}, \quad \text{where } \phi(W)=(\phi(v),\phi(e_1)\dots,\phi(e_k)) \text{ for } W=(v,e_1,\dots,e_{k})
\]
and $\Phi_{\Gamma'}$ denotes the set of all automorphisms of $\Gamma'$.

Now, instead of associating each observed $\Gamma$-instance with a unique tuple, we can associate it with a unique equivalence class. A key property of this equivalence-class representation is that all tuples within a class $[W]$ connect to the same number of tuples in another equivalence class $[W']$ at the next level. We denote this common number as $\kappa_{[W],[W']}$. This symmetry allows us to derive an analogous equation to \Cref{eq:fac}, now expressed in terms of equivalence classes:
\begin{equation}
\label{eq:fac2}
\sum_{W_0 \in \cW_0} \dots \sum_{W_i \in \cW_i}
\prod_{j=1}^{i} \frac{\kappa_{[W_{j-1}],[W_j]}}{\abs{[W_j]}} = \mu_{\Gamma,\Gamma'}.
\end{equation}
The proof of this identity is given in \Cref{clm:factor}. 

This means that for each false appearance, we can identify a unique path from the root to some leaf at level $|E(\Gamma)|$ in the tuple tree, after contracting tuples within the same equivalence class. (See \Cref{fig:tree} for a concrete example.)

\begin{figure}
    \centering
    \includegraphics[width=1.0\linewidth]{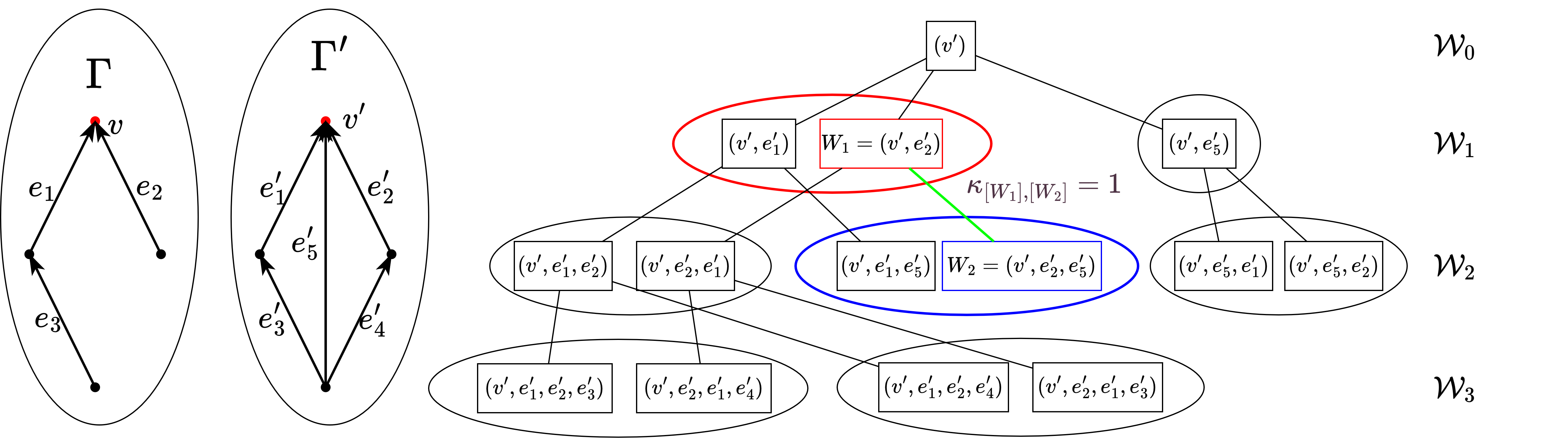}
    \caption{As shown on the left, $\Gamma$ and $\Gamma'$ are rooted directed graphs with $i=3$ and $i'=5$ edges respectively (the red vertices $v$ and $v'$ are marked as roots respectively). On the right, we depict a tree of depth $i$, where each node at depth $j$ corresponds to a tuple in the set $\cW_{\Gamma_{[j]}, \Gamma'}$. 
    An edge between $W_1$ and $W_2$ exists if and only if there exists some edge $e$ such that $W_1\times e = W_2$. Tuples enclosed in the same circle belong to the same equivalence class. For example, let $W_1 = (v', e_2')$ (the red node) and $W_2 = (v', e_2', e_5')$ (the blue node); then the red and blue circles represent the equivalence classes $[W_1]$ and $[W_2]$, respectively. The quantity $\kappa_{[W_1],[W_2]}$ is defined as the number of edges connecting any node in the red circle to a node in the blue circle. In this example, $\kappa_{[W_1],[W_2]} = 1$, as indicated by the green edge from $W_1$ to $W_2$.} 
    \label{fig:tree}
\end{figure}

\paragraph{Relating $X_\Gamma$ to the quantities $S^{[W_{i-1}]}_{\Gamma_{[i-1]},\Gamma'}$}
Back to the algorithm, recall that we get an estimate $\tilde{X}_\Gamma$ through {quantum counting} at each iteration $\Gamma$. It is an estimate of {$X_\Gamma$, the number of edges being mapped to $1$ by $f_\Gamma$.}
Thanks to the fact that $t_i=o(\sqrt{t_{i-1}})$, the $\Gamma_{[i-1]}$-instances collected in $\mathcal{H}_{\Gamma{[i-1]}}$ are far apart from one another with high probability, due to the birthday paradox (see the event $\mathcal{E}^H_\Gamma$ in \Cref{def:_events}).
As a result, $X_\Gamma$ depends only on the values $S^{[W_{i-1}]}_{\Gamma{[i-1]}, \Gamma'}$, which represent the number of $\Gamma_{[i-1]}$-instances in $\mathcal{H}_{\Gamma_{[i-1]}}$ that correspond to the equivalence class $[W_{i-1}]$ (see \Cref{eq:defS} for the formal definition of $S^{[W]}_{\Gamma,\Gamma'}$). Note that for any $\Gamma$ and $\Gamma'$, 
\[
S^{[W_{0}]}_{\Gamma_{[0]},\Gamma'} = \cnt_{\Gamma'}. 
\]

Notice that for each $\Gamma_{[i-1]}$-instance corresponding to equivalence class $[W_{i-1}]$, there are $\kappa_{[W_{i-1}],[W]}$ edges that can be added to extend it to a $\Gamma_{[i-1]}$-instance corresponding to equivalence class $[W]$.
Summing over all equivalence classes $[W_{i-1}]$, $[W]$ and all isomorphism types $\Gamma'$ that bring false appearances, we can show %
\begin{equation}
\label{eq:XGamma}
    X_\Gamma = \sum_{\Gamma'\succeq \Gamma} \sum_{W\in \cW_i}\sum_{W_{i-1}\in \cW_{i-1}}\frac{\kappa_{[W_{i-1}],[W]}}{{\abs{[W]}\cdot\abs{[W_{i-1}]}}}S^{[W_{i-1}]}_{\Gamma_{[i-1]},\Gamma'}.
\end{equation}

\paragraph{Recursive bound of $S^{[W_{i-1}]}_{\Gamma_{[i-1]},\Gamma'}$}
Next we show that $S^{[W_{i-1}]}_{\Gamma_{[i-1]},\Gamma'}$ can be bounded by $S^{[W_{i-2}]}_{\Gamma_{[i-2]},\Gamma'}$, and thereby can be bounded by $S^{[W_0]}_{\Gamma_{[0]},\Gamma'}$ recursively, which is exactly the number of occurrences $\cnt_{\Gamma'}$. 
Recall that in the tuple tree introduced earlier, each tuple in the equivalence class $[W_{i-2}]$ connects to exactly $\kappa_{[W_{i-2}],[W_{i-1}]}$ tuples in the equivalence class $[W_{i-1}]$ at the next level. This means that for each $\Gamma_{[i-2]}$-instance corresponding to $[W_{i-2}]$, there are exactly $\kappa_{[W_{i-2}],[W_{i-1}]}$ edges that can extend it to a $\Gamma_{[i-1]}$-instance corresponding to $[W_{i-1}]$.

Summing over all $\Gamma_{[i-2]}$-instances in $\cH_{\Gamma_{[i-2]}}$, the total number of edges that can extend these instances to $\Gamma_{[i-1]}$-instances corresponding to the equivalence class $[W_{i-1}]$ is
$\sum_{W_{i-2}} \frac{\kappa_{[W_{i-2}], [W_{i-1}]}}{\lvert [W_{i-2}] \rvert} \cdot S^{[W_{i-2}]}_{\Gamma{[i-2]}, \Gamma'}$. Each such edge extends a $\Gamma_{[i-2]}$-instance in $\cH_{\Gamma_{[i-2]}}$ into a $\Gamma_{[i-1]}$-instance associated with the equivalence class $[W_{i-1}]$.

On the other hand, the {Grover search} on $f_{\Gamma_{[i-1]}}$ uniformly samples an edge that extends $\Gamma_{[i-2]}$-instance in $\cH_{\Gamma_{[i-2]}}$ into $\Gamma_{[i-1]}$-instance and there are $X_{\Gamma_{[i-1]}}=\abs{f_{\Gamma_{[i-1]}}^{-1}(1)}$ edges in total.
Since we call the {Grover search} $\ell_{\Gamma_{[i-1]}}$ times, %
then 
\[S^{[W_{i-1}]}_{\Gamma_{[i-1]},\Gamma'}\sim \cB\left(\ell_{\Gamma_{[i-1]}},\left(\sum_{W_{i-2}}\frac{\kappa_{[W_{i-2}],[W_{i-1}]}}{\abs{[W_{i-2}]}} S^{[W_{i-2}]}_{\Gamma_{[i-2]},\Gamma'}\right)/X_{\Gamma_{[i-1]}}\right)
\]
where  $\cB$ is the Binomial distribution.
Thus we can bound it via concentration inequalities up to an additive error $\eta t_i$, i.e. with high probability,
\begin{equation}
\label{eq:_S}
    S^{[W_{i-1}]}_{\Gamma_{[i-1]},\Gamma'} \in \frac{\ell_{\Gamma_{[i-1]}}}{X_{\Gamma_{[i-1]}}} \sum_{W_{i-2}\in\cW_{i-2}}\frac{\kappa_{[W_{i-2}],[W_{i-1}]}}{\abs{[W_{i-2}]}} S^{[W_{i-2}]}_{\Gamma_{[i-2]},\Gamma'} \pm\eta t_i.
\end{equation}

\paragraph{$\tilde{X}_\Gamma$ well approximates $X_\Gamma$} We temporarily assume that, for every isomorphism type $\Gamma$, there are $\Omega(n)$ many $\Gamma$-instances in the graph. The case where $\Gamma$-instances are rare will be addressed later. Under this assumption, when $\Gamma_{[i-1]}$-instances are sufficiently numerous, the estimate $\tilde{X}_{\Gamma_{[i-1]}}$ obtained via quantum counting is accurate up to a multiplicative error of $(1 \pm \varepsilon)$ with high probability (see the event $\cE^{\tilde{X}}_\Gamma$ defined in \Cref{def:_events}). Since we adaptively choose $\ell_{\Gamma_{[i-1]}} = t_{i-1} \tilde{X}_{\Gamma_{[i-1]}} /t_{i-2}$, i.e. with high probability,
\begin{equation}
\label{eq:-X}
    \ell_{\Gamma_{[i-1]}} = t_{i-1} \tilde{X}_{\Gamma_{[i-1]}} /t_{i-2} \in (1\pm\eps) t_{i-1} {X}_{\Gamma_{[i-1]}} /t_{i-2}.
\end{equation}

\paragraph{Relating $X_\Gamma$ to $\{\cnt_{\Gamma'}\}_{\Gamma'\succeq \Gamma}$}
By successively substituting \Cref{eq:-X} into \Cref{eq:S}, and then \Cref{eq:S} into \Cref{eq:XGamma}, we can express $X_\Gamma$ in terms of $S^{[W_{i-2}]}_{\Gamma_{[i-2]},\Gamma'}$ for all $\Gamma' \succeq \Gamma$. Repeating this process recursively allows us to ultimately bound $X_\Gamma$ by $S^{[W_0]}_{\Gamma{[0]},\Gamma'}$, and hence by the collection $\{\cnt_{\Gamma'}(G)\}_{\Gamma' \succeq \Gamma}$. This is formalized by the event $\cE^\eq_\Gamma$ in \Cref{def:_events}, which yields 
\[
X_\Gamma \in (1\pm\eps)^{i-1} \left(\frac{t_{i-1}}{t_0} \sum_{\Gamma'\succeq \Gamma} \sum_{W_0 \in \cW_0} \dots \sum_{W_i \in \cW_i}
    \prod_{j=1}^{i} \frac{\kappa_{[W_{j-1}],[W_{j}]}}{\abs{[W_j]}} \cnt_{\Gamma'} \pm i\mu_{\Gamma} \eta t_{i-1}\right).
\]
Using \Cref{eq:fac2}, the coefficient of $\cnt_{\Gamma'}$ can be simplified by $\mu_{\Gamma,\Gamma'}$, i.e,
\begin{equation}
\label{eq:Xeq}
    X_\Gamma \in (1\pm\eps)^{i-1} \left(\frac{t_{i-1}}{t_0} \sum_{\Gamma'\succeq \Gamma} \mu_{\Gamma,\Gamma'} \cnt_{\Gamma'} \pm i\mu_{\Gamma} \eta t_{i-1}\right).
\end{equation}

\paragraph{Estimating $\{{\cnt}_\Gamma\}$ from estimates $\{\tilde{X}_\Gamma\}$} 
By combining all iterations over $\Gamma$, we construct a system of linear inequalities relating the quantities $\{X_\Gamma\}_{\Gamma \in \cD_{d,q}^*}$ to the target estimates $\{\cnt_\Gamma\}_{\Gamma \in \cD_{d,q}^*}$.

Let $M$ be an upper triangular matrix where the entry in row $\Gamma$ and column $\Gamma'$ is $\mu_{\Gamma,\Gamma'}$, i.e.,
\[
\begin{array}{c@{\hskip 1em}c}
  & 
  M=\begin{bmatrix}
    \mu_{\Gamma_1,\Gamma_1} & \mu_{\Gamma_1,\Gamma_2} & \cdots & \mu_{\Gamma_1,\Gamma_{D}} \\
    0                      & \mu_{\Gamma_2,\Gamma_2} & \cdots & \mu_{\Gamma_2,\Gamma_D} \\
    \vdots                 & \ddots                  & \ddots & \vdots                  \\
    0                      & \cdots                  & 0      & \mu_{\Gamma_D,\Gamma_D}
  \end{bmatrix}, \quad \text{where } D=|\cD^*_{d,q}|.
\end{array}
\]
Let the vector $\boldsymbol{\cnt}$ consists of the true values $\cnt_\Gamma$, while $\boldsymbol{\tilde{x}}$ and $\boldsymbol{x}$ consist of the normalized estimates $t_0\tilde{X}_\Gamma/t_{i-1}$ and the normalized values $t_0X_\Gamma/t_{i-1}$, respectively. 
Originating from \Cref{eq:Xeq}, 
we can set bounds $\boldsymbol{\underline{x}}_\Gamma, \boldsymbol{\overline{x}}_\Gamma$ according to $\tilde{X}_\Gamma$ such that
\begin{equation}
\label{eq:crucial}
        M_\Gamma \cdot  \boldsymbol{\cnt} = \sum_{\Gamma'\succeq \Gamma} \mu_{\Gamma,\Gamma'} \cnt_{\Gamma'} \in \left[\boldsymbol{\underline{x}}_\Gamma, \boldsymbol{\overline{x}}_\Gamma\right] \quad \text{and} \quad \tilde{\boldsymbol{x}}_\Gamma\in \left[\boldsymbol{\underline{x}}_\Gamma, \boldsymbol{\overline{x}}_\Gamma\right].
\end{equation}
Notice that the upper bound $\boldsymbol{\overline{x}}$ and lower bound $\boldsymbol{\underline{x}}$ approach $\boldsymbol{\tilde{x}}$ as the the concentration bound error parameter $\eta$ approaches zero, while $\boldsymbol{\tilde{x}}$ approaches $\boldsymbol{x}$ as the estimate error parameter $\eps$ approaches zero.
{In the extreme case that $\eps=\eta = 0$, i.e. $\boldsymbol{\overline{x}} = \boldsymbol{\underline{x}} = \boldsymbol{\tilde{x}}=\boldsymbol{x}$, the inequality (\ref{eq:crucial})  collapses into an exact equation, meaning that the number of all $\Gamma$-instances (including false positives) in the graph is precisely proportional to the value ${X}_\Gamma$ -- the number of edges being mapped to $1$ by function $f_\Gamma$.}
In the final step, the algorithm solves the linear system $M\boldsymbol{\widetilde{\cnt}}= \boldsymbol{\tilde{x}}$. 
By choosing $\eps$ and $\eta$ properly, we prove that the solution $\boldsymbol{\widetilde{\cnt}}$ %
satisfies that $\norm{\widetilde{\boldsymbol{\cnt}}-\boldsymbol{\cnt}}_1 \leq \delta n$ for any given error parameter $\delta$.

\paragraph{Handling the case when $\Gamma$-instances are rare} One technical challenge arises when the input graph contains only a few $\Gamma$-instances for some $\Gamma$. 
In this case, two main issues emerge. 
First, it becomes impossible to estimate $X_\Gamma$ within a multiplicative error of $(1 \pm \eps)$, since the additive error from quantum counting may exceed the actual value of $X_\Gamma$. 
Second, the query complexity can become prohibitively large. 
For example, consider the extreme case where the graph has only one edge, then even a single invocation of Grover search would require $\Omega(\sqrt{n})$ quantum queries.

To address this, we add a threshold check for the estimate $\tilde{X}_\Gamma$ at each $\Gamma$-iteration. If $\tilde{X}_\Gamma$ is below the threshold, we truncate the current iteration and all subsequent iterations for any $\Gamma' \succeq \Gamma$.
Since $\tilde{X}_\Gamma$ approximates $X_\Gamma$, which in turn is proportional to the number of $\Gamma$-instances in the graph, the algorithm can effectively distinguish between two regimes: (1) when $\Gamma$-instances are rare (denoted by $\boldsymbol{\gamma}_\Gamma = 0$), and (2) when they are numerous (denoted by $\boldsymbol{\gamma}_\Gamma = 1$). 
In the former case, the algorithm truncates the iteration; in the latter, it proceeds as normal.

However, there remains a gray area (denoted by $\boldsymbol{\gamma}_\Gamma = *$), where $\Gamma$-instances are neither clearly rare nor clearly numerous, and in this case the algorithm provides no guarantees on the accuracy of the estimate $\widetilde{\boldsymbol{\cnt}}_\Gamma$.
To overcome this, we employ a parameter filtering technique: we prepare a list of error parameters $\delta_1, \dots, \delta_{100D}$, such that the corresponding gray zones are staggered across different $\delta_i$.
We then sample one $\delta_i$ uniformly at random. Then with high probability, for every isomorphism type $\Gamma$, the number of $\Gamma$-instances is either sufficiently small (rare) or sufficiently large (numerous) with respect to the sampled parameter, thereby enabling reliable estimate.

\paragraph{Query complexity} %
The query complexity of our algorithm arises entirely from the use of Grover search and quantum counting. For each $\Gamma$ consisting of $i$ edges, both quantum counting and a single invocation of Grover search incur a query cost of $O\left(\sqrt{n / t_{i-1}}\right)$ in the $\Gamma$-iteration.
Since quantum counting is invoked once per iteration and Grover search is repeated $\ell_\Gamma = \Theta(t_i)$ times, the total query complexity across all iterations is 
\[
O\left(\sum_{i=1}^{m_{d,q}} (t_{i}+1)\sqrt{n/t_{i-1}}\right) = O\left(n^{1/2-1/2(2^{m_{d,q}}-1)}\right),
\]
where $m_{d,q}$ is the maximum number of edges over the isomorphism type $\Gamma\in\cD_{d,q}$.

\subsubsection{The lower bound}\label{sec:techoverlowerbound}
We now outline the high-level idea behind our quantum lower bound for testing $k$-star-freeness, i.e., the proof of \Cref{thm:k_star_lowerbound}. 

We begin by reviewing the approach of \cite{apers2024quantumpropertytestingsparse}, which builds on the framework of \cite{bun2018polynomial} to establish a lower bound for testing $k$-star-freeness in bounded-out-degree digraphs. 
Their idea is to reduce the problem of testing $k$-subgraph-freeness in $(N+R)$-vertex bounded-out-degree digraphs to the problem of testing $k$-collision-freeness, defined as follows: for some $R=\Theta(N)$, given query access to a function $f:[N]\to[R]$, %
determine whether $f$ is $k$-collision-free -- that is, no value in $[R]$ is mapped by \emph{at least} $k$ distinct elements of $[N]$ -- or whether $f$ is $\varepsilon$-far from satisfying this property. %

They show that this testing problem corresponds to the Boolean function $\COLL^k=\GapOR\circ \THR^k$, where $\GapOR$ and $\THR^k$ serve as key building blocks for applying the dual polynomial method and the block-composition framework. In particular, $\GapOR$ admits a dual witness $\phi$ of the so-called \emph{pure high degree} (see below) $\Omega(1)$, and $\THR^k$ admits a dual witness $\psi$ of pure high degree $\Omega(N^{1/2-1/(2k)})$. %
Their dual block composition $\phi\star\psi$ then yields a dual witness for $\COLL^k$ of pure high degree $\widetilde{\Omega}(N^{1/2-1/(2k)})$ through the zeroing-out trick of \cite{bun2018polynomial}, which in turn implies the quantum lower bound of $\widetilde{\Omega}(n^{1/2-1/(2k)})$ queries for testing $k$-subgraph-freeness, where $n=N+R=\Theta(N)$.

\paragraph{Our extension and new setting}

We extend the  setting of \cite{apers2024quantumpropertytestingsparse}, where only the maximum out-degree is bounded, to the stricter regime in which both the maximum in-degree and out-degree are bounded.
This generalization necessitates several new structural insights and a refined reduction framework.

In our approach, we study $\OCCU^k$, which corresponds to testing $k$-occurrence-freeness -- that is, for some $R=\Theta(N)$, given query access to a function $f:[N]\to[R]$ with the promise that no value in $[R]$ is mapped by more than $k$ distinct elements of $[N]$, the task is to decide whether $f$ is $k$-collision-free or $\varepsilon$-far from satisfying this property.
We interpret $\OCCU^k$ as a bounded variant of $\COLL^k$, where no value in $[R]$ has more than $k$ pre-images, and accordingly replace $\THR^k$ with $\BTHR^k$, which restricts the Hamming weight of the input to at most $k$.

We show that although the dual witness $\psi$ for $\THR_N^k$ used in \cite{bun2018polynomial, apers2024quantumpropertytestingsparse} is not formally a dual witness for $\BTHR_N^k$, it can nonetheless be effectively adapted -- when combined with $\phi$ -- to construct a good dual witness for $\OCCU^k$.
Consequently, we can obtain the quantum lower bound for testing $k$-star-freeness in a bounded-degree digraph via bounding the pure high degree of the dual witness. 
We provide a more detailed explanation below.  %

\paragraph{Reduction to $k$-occurrence freeness with double promises}
We reduce the problem of testing $k$-star freeness in $(N+R)$-vertex bounded-degree digraphs to the problem of testing $k$-occurrence-freeness, a property of functions $f:[N]\to [R]$ where $R=\Theta(N)$. This property mirrors $k$-collision-freeness but imposes an additional constraint, resulting in two promises:
\begin{itemize}
    \item \textit{gap-promise}: either $f$ has no $k$-collisions, or at least $\varepsilon N$ function values must be modified to eliminate all $k$-collisions;
    \item \textit{bounded-promise}: no value in $[R]$ is mapped by more than $k$ elements in $[N]$, i.e., $f$ has no $(k+1)$-collisions.
\end{itemize}
The second promise captures the stricter bounded in-degree condition inherent to our graph setting.

\paragraph{Polynomial method and block composition}
By the polynomial method \cite{beals_poly}, the \emph{approximate degree} of the function $\OCCU_{N,R}^{k,\varepsilon}$ (that distinguishes whether a function is $k$-occurrence free or $\eps$-far from $k$-occurrence freeness) directly yields a quantum query lower bound for the corresponding testing problem (\Cref{lem:poly_method}).
Using standard reductions for non-Boolean functions (\cite{Aaronson02, ambainis2005polynomial}), its dummy-augmented variant $\dOCCU_{N,R}^{k,\varepsilon}$ (which increases approximate degree by at most logarithmic factors; \Cref{lem:dummy_augment}) can be expressed as a \emph{block composition}: $\dOCCU_{N,R}^{k,\eps}=\GapOR_R^{\eps N}\circ \BTHR_N^k$. Here, $\GapOR_R^{\varepsilon N}$ is a gapped $\OR$ function defined on Boolean inputs whose Hamming weight is either $0$ or at least $\eps N$, while $\BTHR_N^k$ is a bounded threshold function defined on Boolean inputs of Hamming weight at most $k$, determining whether the Hamming weight of an input is exactly $k$.

\paragraph{Dual polynomial method and dual block composition}
For a Boolean function $f$ defined on a domain $D_f$, %
\emph{a dual witness} $\psi$ is a real-valued function of unit $\ell_1$-norm that strongly correlates with $f$, i.e., with large $\sum_{x\in D_f}\psi(x)f(x)-\sum_{x\notin D_f}|\psi(x)|$ value (\Cref{def:dual_witness}).
A good dual witness also has large \emph{pure high degree} (\Cref{def:phd}), which lower-bounds both the approximate degree of $f$ and its quantum query complexity (\Cref{prop:dual_dpdeg}).
By the dual block-composition method (\cite{shi2009quantum, chailloux2019note, sherstov2013intersection}), if $\phi$ and $\psi$ are dual witnesses for $f$ and $g$, respectively, then their \emph{dual block composition} $\phi\star\psi$ does not itself constitute a dual witness for $f \circ g$, but it naturally gives rise to one whose pure high degree is at least the product of the pure high degrees of $\phi$ and $\psi$. (\Cref{def:dual_block_comp} and \Cref{prop:dual_block_composition}).

Consequently, to lower-bound the degree of $\dOCCU_{N,R}^{k,\varepsilon}$, it suffices to construct good dual witnesses for $\GapOR_R^{\varepsilon N}$ and $\BTHR_N^k$, and to show that their dual block composition remains a good dual witness for $\dOCCU_{N,R}^{k,\eps}$.

\paragraph{Construction of dual witnesses}
Bun, Kothari and Thaler \cite{bun2018polynomial} constructed dual witnesses $\phi$ and $\psi$ for $\GapOR_R^{\varepsilon N}$ and $\THR_N^k$ with pure high degrees $\Omega(1)$ and $\Omega(N^{1/2-1/(2k)})$, and \cite{apers2024quantumpropertytestingsparse} proved that $\phi\star\psi$ serves as a good dual witness for $\COLL^k$. In our \textit{bounded-promise} setting, however, we need a good dual witness for $\BTHR_N^k$, where inputs have bounded Hamming weight, and the previous analysis no longer suffices.
Although \cite{bun2018polynomial} also provided a general dual polynomial for $\THR_N^k$ defined on inputs of Hamming weight at most $T$, directly restricting it to $\BTHR_N^k$ by setting $T=k$ yields only constant pure high degree, which is insufficient for our purposes.

Our strategy is to directly use the aforementioned $\psi$ to construct a good dual witness for $\dOCCU_{N,R}^{k,\varepsilon}$.
To do so, we need to (1) lower-bound the correlation between $\phi\star\psi$ and $\OR_R\circ\THR_N^k$ over the entire universe, and (2) upper-bound the $\ell_1$-mass of $\phi\star\psi$ outside the domain $\GapOR_R^{\varepsilon N}\circ\BTHR_N^k$.
The first bound was already established before (\Cref{clm:first_term}), but the second one is more challenging: the domain outside $\GapOR_R^{\varepsilon N}\circ\BTHR_N^k$ is strictly larger than that outside $\GapOR_R^{\varepsilon N}\circ\THR_N^k$.

To overcome this, we decompose the outside domain into two parts arising from distinct promises:
let $D_1$ denote the part induced by the {gap-promise} and $D_2$ the additional part induced by the {bounded-promise}, so that the larger domain can be expressed as $D_1\cup D_2$. 
Now for a Boolean function $f:D_f\to \bits$, let $D_+(\psi, f) := \{x \in D_f : \psi(x) > 0, f(x) = -1\}$, $D_-(\psi, f) := \{x \in D_f : \psi(x) < 0, f(x) = +1\}$. We identify a structural property of $\psi$: not only does $\psi$ have small mass on $D_+(\psi,\THR_N^k)$ and $D_-(\psi,\THR_N^k)$, but it also has small mass on $D_+(\psi,\THR_N^{k+1})$ and large mass on $D_-(\psi,\THR_N^{k+1})$ 
(\Cref{prop:psi}).
Intuitively, the {gap-promise} corresponds to the number of $k$-collisions, while the {bounded-promise} corresponds to the number of $(k+1)$-collisions. \cite{apers2024quantumpropertytestingsparse} used the first two bounds on $D_+(\psi,\THR_N^k)$ and $D_-(\psi,\THR_N^k)$ to control the mass of $\phi\star\psi$ on $D_1$.
Here, we use the latter two bounds on $D_+(\psi,\THR_N^{k+1})$ and $D_-(\psi,\THR_N^{k+1})$ to control the mass on $D_2$.

This yields our key technical result: even though $\psi$ is only a dual witness for $\THR_N^k$ (rather than for $\BTHR_N^k$), the composition $\phi\star\psi$ still serves as a good dual witness for $\GapOR_R^{\varepsilon N}\circ\BTHR_N^k$ (\Cref{lem:corr}). Applying the zeroing-out trick then produces a valid dual witness for $\dOCCU_{N,R}^{k,\varepsilon}$, establishing the desired $\widetilde{\Omega}(N^{1/2-1/(2k)})$ quantum lower bound.

\subsection{Other related work}
\label{sec:related}%
The \emph{collision finding} problem is a fundamental question in algorithm and cryptography theory. There are several variants of this problem. The most general version is as follows: given a vector $s$ of length $N$, where each element takes values in the range $[R]$, the goal is to find a tuple of indices $(i_1, \dots, i_k)$ such that $s_{i_1} = \dots = s_{i_k}$, or assert that no such tuple exists. Without any additional promise on the input $s$, this problem is also known as the \emph{$k$-Element Distinctness} problem.
Classically, even for the case $k = 2$, the problem requires $\Omega(N)$ queries. In the quantum setting, although studied for over two decades, tight bounds remain elusive. The current best-known results are an upper bound of $O(N^{3/4 - 1/4(2^k - 1)})$ due to Belovs \cite{6375298}, and a lower bound of $\Omega(N^{3/4 - 1/4k})$ established by Mande, Thaler and Zhu \cite{mande2020improved}.
Another version of the problem considers the input is uniformly at random on a suitable alphabet size, making $\Theta(N)$ $k$-duplicates occur with high probability.
The classical query complexity in this case is $\Theta(N^{1 - 1/k})$ \cite{hellweg2012property, peng2023optimalseparationpropertytesting}, while the quantum query complexity is $\Theta(N^{1/2 - 1/2(2^k - 1)})$ \cite{liu2019findingquantummulticollisions}. This version is more closely related to our property testing problem, as the promise guarantees the existence of $\Omega(N)$ such tuples, implying that the input is $\eps$-far from being collision-free.
Very recently, \cite{apers2024quantumpropertytestingsparse} studied an {easier problem---testing $k$-collision freeness.} The goal is to distinguish the inputs that have $\Theta(N)$ $k$-collisions from those do not have any, instead of finding a specific one. They proved the same upper bound $O(N^{1/2 - 1/2(2^k - 1)})$ but a weaker lower bound $\tilde{\Omega}(N^{1/2 - 1/2k})$.

\emph{Quantum counting}, introduced by Brassard, Høyer, Mosca and Tapp \cite{Brassard_2002}, is a useful technique for estimating the number of marked items defined by a Boolean function. Formally, given a Boolean function $F: [N] \to \{0,1\}$, quantum counting can estimate $M$, the size of $F^{-1}(1)$, with additive error $\delta N$ using only $O(1/\delta)$ quantum queries. This yields a quadratic speedup compared to the $O(1/\sqrt{\delta})$ classical query complexity.
Moreover, it can also output the exact value of $M=|F^{-1}(1)|$ in expectation using $O(\sqrt{(M+1)(N-M+1)})$ quantum queries, which achieves a quantum speedup as long as $M=o(N)$.

\emph{Grover search}, introduced by Grover \cite{grover1996fastquantummechanicalalgorithm}, is one of the most widely used quantum algorithms. Given a Boolean function $F: [N] \to \{0,1\}$, it finds a marked element uniformly at random from $F^{-1}(1)$ using $O(\sqrt{N/M})$ queries in expectation, where $M = |F^{-1}(1)|$. This achieves a quadratic speedup over the classical $O(N/M)$ query complexity.
When the value of $M$ is unknown in advance, two common strategies are employed. One approach is the \emph{exponential search algorithm} \cite{Boyer_1998}, which gradually increases the number of iterations exponentially. The other approach leverages quantum counting \cite{Brassard_2002} to first estimate the number of marked items. Both strategies allow Grover search to be applied effectively without increasing the query complexity by more than a constant factor.

Czumaj, Fichtenberger, Peng, and Sohler \cite{czumaj2019testablepropertiesgeneralgraphs} studied property testing of general graphs in the random neighbor (and edge) model and gave canonical testers for all constant-query testable properties. Their testers operate by sampling and inspecting the so-called \emph{$q$-bounded $q$-discs}, which are random subgraphs obtained by performing a breadth-first search of depth $q$ and width $q$, where at each step, if an explored vertex has degree exceeding $q$, only $q$ randomly chosen neighbors are used for the exploration. This characterization naturally extends to the bidirectional model for digraphs as well.

\subsection{Open problems}
We highlight two open problems arising from this work for future investigation. 
\begin{itemize}
    \item \textbf{Generalization to unbounded-degree digraphs:} It would be interesting to extend our transformation to general digraphs without any restriction on the degrees, or even to digraphs with unbounded maximum in-degree but bounded maximum out-degree. One possible approach is to leverage the canonical testers for all constant-query testable properties in unbounded-degree graphs given in \cite{czumaj2019testablepropertiesgeneralgraphs}, which operate by sampling and inspecting the \emph{$q$-bounded $q$-discs} (see \Cref{sec:related}). However, adapting these testers to the unidirectional model, in both classical and quantum settings, appears to require fundamentally new ideas. In particular, $q$-bounded $q$-discs are random subgraphs that may overlap significantly at high-degree vertices, and it remains unclear how to efficiently estimate their frequencies or exploit them for property testing in the unidirectional setting.

\item \textbf{The role of adaptivity:} As discussed earlier, our framework crucially relies on adaptive strategies, which distinguishes it from the CPS approach. For example, in the warm-up problem of counting in-degree-$k$ vertices, we decompose the sampling procedure into $k$ adaptive rounds. This design enables the use of Grover search and quantum counting, leading to a quantum advantage. A natural and fundamental question is whether such full adaptivity is necessary to achieve these performance gains. More generally, it would be interesting to study the tradeoff between the number of adaptive rounds and the query complexity of quantum property testing algorithms for (di)graphs.

\end{itemize}

\subsection{Organization of the remainder of the paper}
\Cref{sec:pre} covers preliminary definitions, notations and useful subroutines. 
As the beginning, \Cref{sec:star} firstly studies a simple case: estimate the number of vertices of in-degree $k$, which reflects our idea while simplifying the discussion of tedious details. 
In \Cref{sec:disc}, we extend the result to estimate the number of occurrences of any $d$-bounded-degree $q$-disc isomorphism type.
In \Cref{sec:test}, we state the canonical tester and prove \Cref{thm:main}. %
In \Cref{sec:app}, we show the method to approximate the number $\#H$ as an application of \Cref{thm:disc*}.
We prove our lower bound in \Cref{sec:LB}.%
\section{Preliminaries}
\label{sec:pre}

\subsection{Notation and basic definitions}
Here we introduce some notation that will be used throughout the paper.

We denote by $[N]$ the set $\{1,2,\dots, N\}$, and by $[N]_0$ the set $\{0,1,2,\dots, N\}$.
Given a function $f : X \to Y$, we write $f^{-1}(y)$ for the set of pre-images of $y\in Y$, that is, $f^{-1}(y)=\{x\in X,f(x)=y\}$. 
We say $f$ is \emph{$k$-bounded} if, for any $y\in Y$, the pre-image $f^{-1}(y)$ has size at most $k$. 
For a multiset $T$, let $\Supp{T}$ denote the set of distinct elements in $T$.
Given a set $S$ and an integer $n \leq \abs{S}$, we use $\binom{S}{n}$ to denote the collection of $n$-element subsets of $S$, and $\perm{S}{n}$ to denote the set of $n$-tuples of distinct elements from $S$, that is,
\[
\binom{S}{n} = \{S'\subseteq S:\abs{S'} = n\}, \quad \perm{S}{n}=\{(e_1,\dots,e_n)\in S^n: e_1\neq\dots\neq e_n\}.
\]

For a directed graph $G(V,E)$, we denote by $n = |V|$ the number of vertices and by $m = |E|$ the number of edges in $G$.
For a vertex $v \in V$, let $N_\mathrm{in}(v)$ and $N_\out(v)$ denote its sets of in-neighbors and out-neighbors, respectively, and define its in-degree and out-degree as $\deg_\mathrm{in}(v) = |N_\mathrm{in}(v)|$ and $\deg_\out(v) = |N_\out(v)|$.
We say that \( G \) is \emph{\( d \)-bounded-degree} if for every vertex \( v \in V \), the in-degree and out-degree satisfy \( \deg_{\mathrm{in}}(v) \leq d \) and \( \deg_{\mathrm{out}}(v) \leq d \), respectively.
For an edge $e \in E$, we represent $e$ either as a vertex-index pair $(v, i)$, referring to the $i$-th outgoing edge of $v$, or as a vertex pair $(u, v)$, depending on the context.
We also use $\head(e)$ and $\tail(e)$ to denote the head and tail vertices of $e$, respectively.

Next we introduce some definitions and some more related notation.

\begin{definition}[rooted graph]
    A rooted graph $G(V,E,v)$ is a graph with vertex set $V$, edge set $E$, with a vertex $v\in V$ marked as root.  
\end{definition}

For a rooted digraph $G$, we denote by $V(G)$, $E(G)$, and $\rt(G)$ its vertex set, edge set, and root vertex, respectively.
For any $v\in V(G)$ and  $e_1,\dots,e_k\in E(G)$, we write $G[{v, e_1, \dots, e_k}]$ for the subgraph of $G$ induced by the edges $e_1, \dots, e_k$ and rooted at $v$.
In particular, we define $G[{v}]$ as the graph consisting solely of the root vertex $v$, rather than the empty graph.

In any directed graph $G$, the distance $\dist(u, v)$ between vertices $u$ and $v$ is defined as the length of the shortest path between $u$ and $v$ in the underlying undirected graph.

\subsection{\texorpdfstring{$q$}{q}-discs}

\begin{definition}[$q$-disc]\label{def:qdisc}
    Given a parameter $q \geq 1$ and a $d$-bounded-degree digraph $G=(V,E)$, a $q$-disc rooted at a vertex $v \in V$, denoted by $\disc_q(v)$, is the rooted subgraph (with $v$ marked as root) of $G$ induced by the vertices that are at distance at most $q$ from $v$.

\end{definition}

Let $\cD_{d,q}$ be the set of all possible isomorphism types 
$\Gamma$ of $d$-bounded-degree $q$-discs.
That is, each type $\Gamma \in\cD_{d,q}$ is a rooted digraph with exactly one vertex marked as root, maximum in-degree and out-degree at most $d$, and all vertices of $\Gamma$ are within distance at most $q$ to the root.
$\Gamma$ has at most $s_{d,q} = 1+2d+\dots+(2d)^q$ vertices and $m_{d,q} = 2d s_{d,q}$ edges.
We partition $\cD_{d,q}$ by the number of edges. Specifically, let $\cD_{d,q}^i$ be the subset of $\cD_{d,q}$ consisting of types with exactly $i$ edges. In particular, define $\cD_{d,q}^* := \cD_{d,q} \setminus \cD_{d,q}^0$, and let $D_{d,q} := |\cD_{d,q}^*|$.

Define a binary relation $\preceq$ on $\cD_{d,q}$ such that $\Gamma \preceq \Gamma'$ if and only if $\Gamma'$ contains a $\Gamma$ as a subgraph.
The relation $\preceq$ defines a partial order on the elements of $\cD_{d,q}$.
We fix an arbitrary linear extension of this partial order and index the elements of $\cD_{d,q}$ accordingly, so that $\cD_{d,q} = \{\Gamma_0, \Gamma_1, \dots, \Gamma_{D_{d,q}}\}$ with $\Gamma_i \preceq \Gamma_j$ implying $i \leq j$.
In particular, $\Gamma_0$ is the rooted graph consisting of a single vertex (the root) and no edges.

We fix the edge ordering of isomorphism type with $i$ edges as the following iterative procedure for $i$ from $1$ to $m_{d,q}$.
Firstly, for each isomorphism type $\Gamma$ consisting of a single edge (i.e., $i=1$), the edge ordering is uniquely determined. 
For any $i\geq 2$, and a type  $\Gamma$ with $i$ edges, we arbitrarily select an isomorphism type $\Gamma'$ with $i-1$ edges such that $\Gamma \succeq \Gamma'$, meaning that $\Gamma$ can be obtained from $\Gamma'$ by adding a single edge. 
We then define the edge ordering of $\Gamma$ by appending the newly added edge to the edge ordering of $\Gamma'$ and define $\Gamma_{[j]}$ as the isomorphism type formed by the first $j$ edges of $\Gamma$ in our prefixed order.
In particular, $\Gamma_{[0]}=\Gamma_0$ and $\Gamma_{[i]} = \Gamma$.

\begin{definition}[isomorphism]
Two rooted digraphs $H_1,H_2$ are said to be isomorphic (denote as $H_1\equiv H_2$) if and only if there exists a bijection $\phi:V(H_1)\to V(H_2)$ such that (denote as $H_1\equiv_\phi H_2$):
    \begin{itemize}
        \item \emph{(Isomorphism)} $(u,v)\in E(H_1)$ if and only if $(\phi(u),\phi(v))\in E(H_2)$;
        \item \emph{(Root-preservation)} $\phi(\rt(H_1)) = \rt(H_2)$. 
    \end{itemize}
\end{definition}

Let $\Phi_{H_1,H_2}$ denote the set of isomorphisms from $H_1$ to $H_2$, and let $\Phi_H$ denote the automorphism group of $H$.
Although isomorphisms are bijections between vertex sets, we extend the notation $\phi$ to act on edge sets by defining $\phi((u,v)) = (\phi(u),\phi(v))$.

Given a digraph $G$ and an isomorphism type $\Gamma$, let $\cnt_\Gamma(G)$ denote the number of vertices that the $q$-discs rooted at them are of isomorphism type $\Gamma$, i.e.,
\[
\cnt_\Gamma(G) = \abs{\{v\in V(G):\disc_q(v)\equiv \Gamma\}}.
\]
We may write it as $\cnt_\Gamma$ when context is clear.

\subsection{Query models}
In query complexity, we study algorithms that access an unknown input \( x \in \Sigma^I \), where \( \Sigma \) is a finite alphabet and \( I \) is an index set. The algorithm initially has no knowledge of \( x \) and can only interact with it through an oracle. The nature of this oracle differs between the classical and quantum settings. In the classical case, the oracle is a black box \( \mathcal{O}_x \) that, upon receiving a query index \( i \in I \), returns the corresponding value \( x_i \). In the quantum setting, query access is modeled by a unitary operator \( \mathcal{O}_x \) acting on a quantum state. Specifically, for \( i \in I \) and \( z \in \Sigma \), the oracle maps:
$\mathcal{O}_x \ket{i}\ket{z} = \ket{i}\ket{z \oplus x_i}$, 
where \( \oplus \) denotes bitwise XOR under a fixed binary encoding of elements in \( \Sigma \), and the second register (typically initialized to \( \ket{0} \)) serves as an auxiliary workspace.

When the input is a \( d \)-bounded-degree digraph, there are two standard query models for accessing it. In the \emph{bidirectional model}, one has access to both the outgoing and incoming neighbors of each vertex, via the oracles \( \mathcal{O}^{\mathrm{out}}_G \) and \( \mathcal{O}^{\mathrm{in}}_G \), respectively. In contrast, the \emph{unidirectional model} restricts access to only the outgoing neighbors, i.e., one can query \( \mathcal{O}^{\mathrm{out}}_G \) but not \( \mathcal{O}^{\mathrm{in}}_G \).

We formally define the oracles \( \mathcal{O}^{\mathrm{out}}_G \) and \( \mathcal{O}^{\mathrm{in}}_G \) as follows:
\begin{align*}
    \mathcal{O}^{\mathrm{out}}_G(v, i) &=
    \begin{cases}
        w, & \text{if } w \in V \text{ is the } i\text{-th out-neighbor of } v; \\
        \perp, & \text{if } \deg_{\mathrm{out}}(v) < i,
    \end{cases} \\
    \mathcal{O}^{\mathrm{in}}_G(v, i) &=
    \begin{cases}
        w, & \text{if } w \in V \text{ is the } i\text{-th in-neighbor of } v; \\
        \perp, & \text{if } \deg_{\mathrm{in}}(v) < i.
    \end{cases}
\end{align*} 
\subsection{Testing \texorpdfstring{$k$}{k}-star-freeness and \texorpdfstring{$k$}{k}-occurrence-freeness}
We study the problem of testing $k$-star-freeness to establish our lower bound, and give its formal definition below.

\begin{definition}{($k$-star-freeness)}\label{def:kstarfree}
    For a $d$-bounded-degree digraph $G$,
    \begin{itemize}
        \item $G$ is said to be \emph{$k$-star-free} if $k$-star does not appear as a subgraph in $G$.
        \item $G$ is said to be \emph{${\varepsilon}$-far from $k$-star-freeness}, if 
     one needs to modify at least $\varepsilon dn$ edges to make it $k$-star-free.
    \end{itemize}
\end{definition}
The $k$-star-freeness testing problem is defined as follows: given parameters $k \ge 2$, $\varepsilon > 0$, and $d$, together with query access to a $d$-bounded-degree digraph $G$ (in the unidirectional model), the goal is to distinguish whether $G$ is $k$-star-free or $\varepsilon$-far from being $k$-star-free. This problem can be viewed as a special case of testing \emph{$k$-source-subgraph-freeness}, where the $k$-star is replaced by an arbitrary constant-size digraph $H$ with at most $k$ source components.

We will relate the problem of testing $k$-star-freeness in $k$-bounded-degree digraphs to that of testing $k$-occurrence-freeness in functions. We formally define the latter below.
\begin{definition}{(k-occurrence-freeness)}\label{def:koccurrencefree}
    Let $R=cN$ for some constant $c$, and let $f:[N]\to[R]$ be a function promised to be $k$-bounded for $k\geq 2$. We define: 
    \begin{itemize}
        \item $f$ is \emph{$k$-occurrence-free} if  $|f^{-1}(r)|<k$  for every $r\in[R]$.%
        \item $f$ is \emph{${\varepsilon}$-far from $k$-occurrence-freeness} at least $\varepsilon N$ function values must be modified to make it {$k$-occurrence-free}; equivalently, there exist at least $\eps N$ distinct  $r\in[R]$ such that $|f^{-1}(r)|=k$.
    \end{itemize}
\end{definition}

Note that the $k$-occurrence-freeness testing problem is a variant of the $k$-collision-freeness testing, introduced in \cite{apers2024quantumpropertytestingsparse}, obtained by restricting the function to be $k$-bounded.

\subsection{Useful subroutines}
Grover search (\Grover) and quantum counting (\Count) are two fundamental subroutines used in this paper.
We briefly describe their functionality here and treat them as black-box components in our algorithms.

\begin{lemma}[Grover search \cite{grover1996fastquantummechanicalalgorithm,Boyer_1998, Brassard_2002}]\label{lem:grover}
    Given a Boolean function $f:[N]\to \bool$, and let $t=\abs{f^{-1}(1)}$.  
    There is a quantum algorithm $\emph{\Grover}(f)$ that outputs $x^* \in f^{-1}(1)$ uniformly with an expected number of quantum queries to $f$ at most  $O\left(\sqrt{N/t}\right)$ even without knowing $t$ in advance. 
\end{lemma}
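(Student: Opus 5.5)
The plan for proving \Cref{lem:grover} combines the standard amplitude-amplification analysis with the exponential-search trick of \cite{Boyer_1998}. The analysis splits into the case where $t$ is known and the case where it is not.

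\textbf{Known $t$.} Prepare the uniform superposition $\frac{1}{\sqrt N}\sum_{x}\ket{x}$ and write it as $\sin\theta\ket{\mathrm{good}}+\cos\theta\ket{\mathrm{bad}}$. Here $\ket{\mathrm{good}}$ is the uniform superposition over $f^{-1}(1)$, and $\sin^2\theta=t/N$. Each Grover iterate uses one phase-oracle query to $f$, obtained from $\cO_f$ by phase kickback, followed by a reflection about the initial state. On the two-dimensional span of $\ket{\mathrm{good}}$ and $\ket{\mathrm{bad}}$, the iterate acts as a rotation by $2\theta$. So after $j$ iterations, measuring yields a marked element with probability $\sin^2((2j+1)\theta)$.

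The key observation for uniformity is this: the state never leaves the span of $\ket{\mathrm{good}}$ and $\ket{\mathrm{bad}}$, and $\ket{\mathrm{good}}$ is uniform over $f^{-1}(1)$. Hence, conditioned on the measured $x$ being marked, $x$ is uniform on $f^{-1}(1)$. We check $f(x)$ with one more query and repeat on failure. This preserves uniformity of the output, since every successful trial produces a uniform marked element. With $j\approx\pi/(4\theta)=\Theta(\sqrt{N/t})$, the success probability is a constant, so the expected cost is $O(\sqrt{N/t})$.

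\textbf{Unknown $t$.} We follow \cite{Boyer_1998}. Set $m=1$ and a factor $\lambda\in(1,4/3)$, and repeat the following. Pick $j$ uniformly from $\{0,\dots,\lceil m\rceil-1\}$, run $j$ Grover iterates, measure, and check the outcome with one query. If it is marked, output it. Otherwise set $m\leftarrow\min(\lambda m,\sqrt N)$.

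Uniformity again holds conditionally in every round, by the same two-dimensional argument. The expected-cost bound rests on the averaging lemma: for $m\ge 1/\sin(2\theta)$, a uniformly random $j<m$ gives success probability at least $1/4$. This follows from
$$\frac1m\sum_{j<m}\sin^2((2j+1)\theta)=\frac12-\frac{\sin(4m\theta)}{4m\sin(2\theta)}.$$

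\textbf{Main obstacle: the expected-cost bound.} This summation is the step that needs care. Before $m$ reaches $m_0=\Theta(\sqrt{N/t})$, the cost of each round is at most the current $m$. Since $m$ grows geometrically, the total cost up to that point is $O(m_0)$. After that, each round succeeds with probability at least $1/4$, while $m$ grows by a factor $\lambda<4/3$. The expected additional cost is therefore a geometric series $\sum_u (3/4)^u\lambda^u m_0=O(m_0)$, which converges precisely because $\lambda\cdot 3/4<1$.

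\textbf{Edge case.} The case $t=0$ is excluded implicitly, since then no output exists. Alternatively, one caps the total number of rounds, as in \cite{Brassard_2002}, where one can also first estimate $t$ by quantum counting and run the known-$t$ version.

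Combining the conditional uniformity with the geometric cost bound gives the lemma.
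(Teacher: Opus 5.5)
The paper gives no proof of this lemma; it imports it as a black box from \cite{grover1996fastquantummechanicalalgorithm,Boyer_1998,Brassard_2002}. Your plan is essentially the argument of \cite{Boyer_1998}: conditional uniformity from the invariant two-dimensional subspace, the averaging identity for a random number of iterations, and geometric growth of $m$ with $\lambda<4/3$. That part is sound, including the point that the output, a mixture over rounds of uniform conditional outputs, is itself uniform on $f^{-1}(1)$.

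One step is wrong as written. You assert $m_0=1/\sin(2\theta)=\Theta(\sqrt{N/t})$ for all $t$. In fact $m_0=N/(2\sqrt{t(N-t)})=\tfrac12\sqrt{N/t}\cdot\sqrt{N/(N-t)}$, which is $\Theta(\sqrt{N/t})$ only when $t$ is bounded away from $N$. For $t=N-1$ one gets $m_0\approx\sqrt N/2$ while $\sqrt{N/t}\approx 1$, so your two-phase accounting only gives $O(\sqrt N)$ there.

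This is why \cite{Boyer_1998} restrict to $t\le 3N/4$, where $m_0<\sqrt{N/t}\le\sqrt N$ (so the cap at $\sqrt N$ is also harmless), and handle $t>3N/4$ separately by classical sampling. You can close the gap that way, for example by spending one extra query per round on a uniform classical sample.

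Alternatively, argue directly for your algorithm. For $3N/4\le t<N$ (the case $t=N$ is trivial), write $\theta=\pi/2-\phi$ with $0<\phi\le\pi/6$. The same summation gives average success probability $\tfrac12+\sin(4M\phi)/(4M\sin(2\phi))$ for $M=\lceil m\rceil$.
\begin{itemize}
\item If $M<1/\sin(2\phi)$, this is at least $\tfrac12$: using $\sin x\ge \tfrac{3\sqrt3}{2\pi}x$ on $[0,\pi/3]$ gives $0<4M\phi<\pi$, so the sine term is positive.
\item Otherwise it is at least $\tfrac14$.
\end{itemize}
Hence in this regime every round succeeds with probability at least $1/4$ from the very first round. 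Your geometric series then gives $O(1)=O(\sqrt{N/t})$ expected queries, and with this case added the proof is complete.
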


\begin{lemma}[Quantum counting \cite{Brassard_2002} Theorem 13]
\label{lem:count}
    Given a Boolean function $f:[N]\to \bool$, and let $t=\abs{f^{-1}(1)}$.
    There is a quantum algorithm with $M$ queries that outputs an estimate $\tilde{t}$ with probability at least $8/\pi^2$ such that
    \begin{equation*}
        \abs{\tilde{t}-t} \leq 2\pi \frac{\sqrt{t(N-t)}}{M} + \pi^2\frac{N}{M^2}.
    \end{equation*}
\end{lemma}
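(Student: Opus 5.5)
The plan is to derive the lemma from \emph{amplitude estimation}, i.e., phase estimation applied to the Grover iterate. Let $\mathcal{A}$ be the unitary preparing the uniform superposition $\ket{\psi}=\frac{1}{\sqrt N}\sum_{x\in[N]}\ket{x}$, and let $S_f$ be the phase oracle $\ket{x}\mapsto(-1)^{f(x)}\ket{x}$. One application of $S_f$ costs $O(1)$ queries to $f$. Let $S_0$ be the reflection about $\ket{0}$, and define the Grover iterate $Q=-\mathcal{A}S_0\mathcal{A}^{-1}S_f$. Write $a=t/N$ and choose $\theta\in[0,\pi/2]$ with $\sin^2\theta=a$. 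Decompose $\ket{\psi}=\sin\theta\ket{\psi_1}+\cos\theta\ket{\psi_0}$ into its normalized marked and unmarked components.

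\textbf{Step 1 (spectral structure of $Q$).} I will check that $Q$ leaves the plane $\mathrm{span}\{\ket{\psi_0},\ket{\psi_1}\}$ invariant and acts there as a rotation by angle $2\theta$. Hence $\ket{\psi}$ is a combination of two eigenvectors $\ket{\psi_\pm}$ with eigenvalues $e^{\pm 2i\theta}$. The degenerate cases $a\in\{0,1\}$ need a short separate remark: there $\ket{\psi}$ is itself an eigenvector and the estimate is exact.

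\textbf{Step 2 (phase estimation).} Run phase estimation with an $M$-dimensional control register. This uses controlled powers $Q^{j}$ for $j<M$, so $O(M)$ queries in total. By the standard analysis of the quantum Fourier transform (the Dirichlet-kernel computation), measuring returns $y$ such that $\tilde\theta=\pi y/M$ satisfies $|\tilde\theta-\theta|\le\pi/M$ with probability at least $8/\pi^2$. This is the step I expect to be the main technical point. The bound $8/\pi^2$ comes from the probability of landing on one of the two nearest grid points of $\theta$ (or of $-\theta$), each governed by $\frac{\sin^2(M\Delta)}{M^2\sin^2\Delta}$. Summing the two contributions gives at least $8/\pi^2$. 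The branch for $-\theta$ is harmless because $\sin^2$ is invariant under $\theta\mapsto\pi-\theta$.

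\textbf{Step 3 (error propagation).} Output $\tilde t=N\sin^2\tilde\theta$. With $\Delta=\tilde\theta-\theta$ and $|\Delta|\le\pi/M$, use the identity
\[
\sin^2(\theta+\Delta)-\sin^2\theta=\sin(2\theta)\sin\Delta\cos\Delta+\cos(2\theta)\sin^2\Delta .
\]
Bounding $|\sin\Delta|\le|\Delta|$, $|\cos\Delta|\le1$ and $|\cos 2\theta|\le1$ gives
\[
\abs{\sin^2\tilde\theta-a}\le 2\sqrt{a(1-a)}\,\frac{\pi}{M}+\frac{\pi^2}{M^2}.
\]
Multiplying by $N$ and substituting $a=t/N$ yields exactly $2\pi\sqrt{t(N-t)}/M+\pi^2N/M^2$. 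This holds on the success event of Step 2, which has probability at least $8/\pi^2$, and completes the proof plan.
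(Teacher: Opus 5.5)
Your proposal is correct, and it is the amplitude-estimation argument of \cite{Brassard_2002}: phase estimation on the Grover iterate, followed by the $\sin^2$ error-propagation step. The paper does not prove this lemma; it only cites that result. Two small points to tighten: count queries exactly rather than as $O(M)$, since the stated bound has explicit constants in $M$ (the controlled powers $Q^j$, $j<M$, need only $M-1$ controlled applications of $Q$, each costing one query to $f$); and for $a=1$ the output is exact only when $M$ is even, although for odd $M$ the single-eigenvector phase-estimation bound still gives the claim.
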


Using the median trick introduced in  \Cref{lem:mt}, we can amplify the success probability from $8/\pi^2$ to  $1-\eta$ for arbitrary small constant $\eta$, %
hence the following corollary holds immediately.

\begin{corollary}
\label{cor:count}
    Given a Boolean function $f:[N]\to \bool$, $M=M(N)$ and a constant $\eta>0$, let $t=\abs{f^{-1}(1)}$ and $M' = 500M\ln (2/\eta)$. There is a quantum algorithm $\emph{\Count}(f, M')$ with $M'$ quantum queries to $f$ that outputs an estimate $\tilde{t}$ with probability at least $1-\eta$ such that
    \begin{equation*}
        |\tilde{t}-t| \leq 2\pi \frac{\sqrt{t(N-t)}}{M} + \pi^2\frac{N}{M^2}.
    \end{equation*}
\end{corollary}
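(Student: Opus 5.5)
The corollary follows from \Cref{lem:count} by the median trick (\Cref{lem:mt}). Let $r = \lceil c\ln(2/\eta)\rceil$ for a suitable absolute constant $c$. Run the algorithm of \Cref{lem:count} independently $r$ times, each with $M$ queries to $f$, and let $\tilde t_1,\dots,\tilde t_r$ be the outputs. The final output $\tilde t$ is their median.

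Write $\Delta := 2\pi\sqrt{t(N-t)}/M + \pi^2 N/M^2$ for the target error. Call run $j$ good if $|\tilde t_j - t|\le \Delta$. By \Cref{lem:count}, each run is good independently with probability $p\ge 8/\pi^2 > 0.81$.

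Suppose more than half the runs are good. Then more than half the $\tilde t_j$ lie in the interval $[t-\Delta, t+\Delta]$, so the median must also lie there. Otherwise, say the median exceeds $t+\Delta$, and then at least half the values would be above $t+\Delta$. This contradicts the majority lying inside the interval.

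So the median fails only if at most $r/2$ runs are good. Let $Z$ be the number of good runs. By a Chernoff/Hoeffding bound,
\[
\Pr[Z \le r/2] \le \exp\!\bigl(-2r(p-1/2)^2\bigr) \le \exp(-2r\cdot 0.31^2) \le \exp(-0.19 r).
\]
Taking $r \ge 6\ln(2/\eta)$ makes this at most $\eta/2 < \eta$.

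The total number of queries is $rM$. With $c=6$ this is at most $6M\ln(2/\eta)+M$, which is well below $M' = 500M\ln(2/\eta)$ once $\eta$ is small, say $\eta<1$. If the algorithm is required to use exactly $M'$ queries, pad with dummy queries, or increase $r$ up to $500\ln(2/\eta)$; increasing $r$ only lowers the failure probability.

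The only point needing care is the constant in the concentration bound, and it is comfortably absorbed by the factor $500$.
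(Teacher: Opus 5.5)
Your proposal is correct and follows the paper's route: the paper gets this corollary by applying the median trick (\Cref{lem:mt}) to \Cref{lem:count}, i.e., repeating the counting algorithm $500\ln(2/\eta)$ times and outputting the median, which is valid since $8/\pi^2 \geq 0.6$. The only difference is that you reprove the median trick inline with Hoeffding's bound at success probability $8/\pi^2$, which yields a smaller repetition count that you then pad up to $M'$.
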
 
\section{Warm-Up: Counting In-Degree-\texorpdfstring{$k$}{k} Vertices}
\label{sec:star}
As a warm up, we will approximate the number of vertices of in-degree $k$ for $k\in[d]$ in this section.
We denote this number as $\cnt_k$, i.e.,
\[
\cnt_k(G) = |\{v\in V: \din(v) = k\}|.
\]
The main goal of this section is to highlight the key ideas underlying the algorithms for the general case. For simplicity, we make the following assumption:

\begin{assumption}%
\label{asmp:Omega(n)}
Let $G$ be a $d$-bounded-degree digraph and let $\delta \in (0,1)$ be a constant. It holds that $ \cnt_k \geq \delta n$ for each $k\in[d]$.
\end{assumption}

\subsection{Estimating \texorpdfstring{$\cnt_k$}{cntk}}
\label{sec:cntSk}%

\paragraph{The algorithm}
Firstly, we initialize $ R_0=V$.
For each level $i$ (from $1$ to $k$), we define a new Boolean function $f_i$ that marks edges pointing to the previously selected set $R_{i-1}$. Then we use quantum counting (\textsc{Count}) to estimate the number of such edges $X_i=\abs{f_i^{-1}(1)}$, obtaining $\tilde{X}_i$ within relative error $\eps X_i$. %
Next we repeat Grover search (\textsc{Grover}) $\ell_i$ times to sample an edge set $T_i$, where $\ell_i$ is proportional to $\tilde{X}_i$. 
We refine the cluster $R_{i-1}$ and define $R_i$ as the heads of all edges in $T_i$.
At the final step, we solve a system of linear equations about $\{\widetilde{\cnt}_k\}_{k\in[d]}$ based on the estimates $\{\tilde{X}_i\}_{i\in[d]}$ and output them as the result.

\begin{algorithm}[ht]
    \caption{Approx Number of vertices of in-degree $k$}
    \label{alg:ApproxSd}
    \begin{algorithmic}[1] 
        \Procedure{EstVertices}{$G,  n,d, \delta$} 
        \State Let $\delta' = \frac{\delta}{d!2^d d^{d+1}}$, $\eps = \frac{\delta'}{24d}$ and $\eta =\frac{\delta'}{16d}$.
        \State Let $t_i = n^{(2^{d-i}-1)/(2^d-1)}$ for $i\in[d]_0$.
        \State Let $c^M_i = \frac{4\pi \sqrt{d^{i+1}(1+\eps)^{i-1}}}{\sqrt{\delta' \eps (1-\eps)^{i-1} (1-\eta)^{i-1}}}$ for $i \in[d]$, $c^B = 500\ln(200 d)$.
        \State Initialize $ R_0 = V$.
        \For {$i=1, \dots, d$}
            \State Define a Boolean function $f_i:V\times[d]\to \bool$ such that 
            \[ f_i(v,s) =
              \begin{cases}
               1  & \quad \text{if $\cO^{\mathrm{out}}_G(v,s)
                        \in  R_{i-1}$ and $(v,s)\notin \cup_{j=1}^{i-1} T_j$};\\
               0  & \quad \text{otherwise}.
              \end{cases}
            \]
            \State Invoke $\Count(f_i,  c^B c^M_i \sqrt{n/t_{i-1}})$  to get $\tilde{X}_i$ as an estimate of $X_i = |f_i^{-1}(1)|$. 
            \label{line:estXi}
            \State Repeat invoking $\Grover(f_i)$  for $\ell_i = t_i \tilde{X}_i / t_{i-1}$ times. 
            \State Let $T_i$ be the multiset of returned edges. 
            \label{line:Grover}
            \State Let multiset $ R_i = \{\text{$s$-th out-neighbor of $v$}:(v,s)\in T_i\}$.
            \label{line:shrink}
        \EndFor
        \State Let $\widetilde{\cnt}_k = \sum_{i=k}^d (-1)^{i-k}\binom{i}{k}\frac{i!n}{t_{i-1}}\tilde{X}_i$.
        \State \Return $\{\widetilde{\cnt}_k\}_{k\in[d]}$.
        \EndProcedure
    \end{algorithmic}
\end{algorithm}

\begin{theorem}
\label{thm:Star}    
Given a $d$-bounded-degree digraph $G$ and a constant $\delta \in (0,1)$, suppose that $\cnt_k \geq \delta n$ holds for $k\in[d]$. Then, for sufficiently large $n$, \Cref{alg:ApproxSd} outputs a list of estimates $
\{\widetilde{\cnt}_k\}_{k\in[d]}$ satisfying
\[
\sum_{k=1}^d\abs{\widetilde{\cnt}_k - \cnt_k} \leq \delta n,
\]
with probability at least $2/3$, using $O\left(n^{1/2-1/2(2^d-1)}\right)$ quantum queries.
\end{theorem}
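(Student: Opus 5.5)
We prove \Cref{thm:Star} by induction on the level $i$, maintaining a small set of invariants that hold jointly with high probability, and then inverting the triangular linear system at the end.

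\textbf{Key quantity.} For a vertex $w$ with in-degree $j$, let $S_{i}^{(j)}$ be the number of elements of $R_i$ that are copies of in-degree-$j$ vertices. We interpret a chain of sampled edges $e_1,\dots,e_i$ into the same vertex $w$, taken in the order they were sampled, as an ordered $i$-star.

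\textbf{Invariants.} Define three events for each level $i$.
\begin{itemize}
\item $\cE^H_i$ (separation): the multiset $R_{i-1}$ has no repeated elements beyond the intended chains. Equivalently, each vertex of $R_{i-1}$ appears with multiplicity one, with $|R_{i-1}|=\Theta(t_{i-1})$. This follows from the birthday bound $t_i^2\ll t_{i-1}\cdot n/\,$const. Indeed, since $t_i^2 = t_{i-1}\cdot n^{-1/(2^d-1)}$, the expected number of Grover samples from level $i$ that collide on the same pair is $O(t_i^2/t_{i-1})=o(1)$.
\item $\cE^{\tilde X}_i$ (counting accuracy): $\tilde X_i\in(1\pm\eps)X_i$.
\item $\cE^{\eq}_i$ (binomial concentration): $S^{(j)}_i$ lies within $\eta t_i$ of its mean.
\end{itemize}

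\textbf{Deriving the linear relation.} On $\cE^H$, the unused in-edges of a copy in $R_{i-1}$ of an in-degree-$j$ vertex number exactly $j-(i-1)$, because the previously used in-edges are excluded via $\bigcup_{j'} T_{j'}$. Hence
\[
X_i=\sum_{j\ge i}(j-i+1)S^{(j)}_{i-1}.
\]
Each Grover call lands on such an edge with probability $(j-i+1)S^{(j)}_{i-1}/X_i$, so $S^{(j)}_i\sim\cB(\ell_i,(j-i+1)S^{(j)}_{i-1}/X_i)$. Using $\ell_i=t_i\tilde X_i/t_{i-1}$ and $\cE^{\tilde X}_i$, this gives
\[
S^{(j)}_i\in(1\pm\eps)\frac{t_i}{t_{i-1}}(j-i+1)S^{(j)}_{i-1}\pm\eta t_i .
\]
Starting from $S_0^{(j)}=\cnt_j$ and unrolling, we get
\[
X_i\in(1\pm\eps)^{i-1}\Bigl(\frac{t_{i-1}}{n}\sum_{j\ge i}\frac{j!}{(j-i)!}\cnt_j\pm i d\eta t_{i-1}\Bigr),
\]
where $\frac{j!}{(j-i)!}=\binom ji i!$. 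This is the triangular system $\boldsymbol{x}\approx M\boldsymbol{\cnt}$.

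\textbf{Inverting the system.} Binomial inversion, $\sum_{i\ge k}(-1)^{i-k}\binom ik\binom ji=[j=k]$, shows that the returned $\widetilde{\cnt}_k$ equals $(M^{-1}\tilde{\boldsymbol x})_k$ up to the $i!$ normalization. Each $\tilde x_i$ has error at most $O(d\eps)\cdot d^{i}n+O(d\eta n)$ relative to $(M\boldsymbol{\cnt})_i$. Summing with the coefficients $\binom ik i!\le d!2^d$ over $O(d^2)$ terms gives total error at most $d!2^dd^{d+1}\cdot O(\eps+\eta)n\le\delta n$. This is what the choice of $\delta'$, $\eps$, and $\eta$ is designed for.

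\textbf{Role of the assumption and query cost.} \Cref{asmp:Omega(n)} gives $X_i\ge(1-\eps)^{i-1}(1-\eta)^{i-1}\delta' t_{i-1}$, which is up to constants the bound encoded in $c^M_i$. Hence by \Cref{cor:count} with $M=c^M_i\sqrt{n/t_{i-1}}$, the additive counting error $2\pi\sqrt{X_i n}/M+\pi^2n/M^2$ is at most $\eps X_i$. This yields $\cE^{\tilde X}_i$ with failure probability $1/(200d)$. The binomial concentration needs $\eta t_i\gg\sqrt{t_i}$, which holds because $t_i\to\infty$ for $i<d$; at $i=d$ we have $t_d=1$, but $S_d$ is never used. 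A union bound over $3d$ events gives overall success probability at least $2/3$.

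For the query complexity, each Grover call costs $O(\sqrt{dn/X_i})=O(\sqrt{n/t_{i-1}})$ expected queries. By Markov's inequality, the expectation is converted into a worst-case bound at constant loss. Level $i$ therefore costs $O((t_i+1)\sqrt{n/t_{i-1}})$. With $t_i=n^{(2^{d-i}-1)/(2^d-1)}$, every term equals $n^{1/2-1/(2(2^d-1))}$.

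\textbf{Main obstacle.} The hardest part is making the unrolled recursion rigorous. The events are adaptive: $\ell_i$ depends on $\tilde X_i$, so the binomial parameters are random, and the additive $\eta t_i$ errors must be propagated through multiplicative factors without blow-up. My plan is to condition level by level on the previous events and apply Chernoff bounds conditionally. The separation event is the other delicate point, since it is what guarantees the exact counts of $j-i+1$ unused in-edges.
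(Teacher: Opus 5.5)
Your proposal is correct and follows essentially the paper's proof. It uses the same ingredients: a birthday-paradox separation event (the paper's $\cE^T_i$), counting accuracy, binomial concentration conditioned on the history, unrolling to $X_i\approx\frac{t_{i-1}}{n}\sum_{k\ge i}\frac{k!}{(k-i)!}\cnt_k$, inversion of the triangular system with $\|A^{-1}\|_1\le 2^d$ for $A_{ik}=\binom{k}{i}$, and the per-level cost $(t_i+1)\sqrt{n/t_{i-1}}$. The only variation is that you use additive Hoeffding concentration $\pm\eta t_i$, as the paper does in its general $q$-disc analysis, rather than the warm-up's multiplicative Chernoff bound; both work, and neither needs level $d$.

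One caution on the point you hedged as ``up to the $i!$ normalization''. Inverting this relation requires $\widetilde{\cnt}_k=\sum_{i\ge k}(-1)^{i-k}\binom{i}{k}\frac{n}{i!\,t_{i-1}}\tilde X_i$, so the inverse coefficients are $\binom{i}{k}/i!$. The algorithm prints $\binom{i}{k}\,i!$, a typo that the paper's proof shares. With the correct coefficients your error bound only improves.
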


\subsection{Analysis of \texorpdfstring{\Cref{alg:ApproxSd}}{EstVertices}}
\label{sec:correctSd}
At the $i$-th step, we get an edge multiset $T_i$, denote $ R_i^{=k}$ as the multiset of vertices of in-degree $k$ in $ R_i$ and denote $S_{i,k}$ as the support size of $ R_i^{=k}$, i.e,
\[
R_i^{=k} =  \{v\in R_i: \din(v) = k\},\quad 
S_{i,k} = \abs{\Supp{ R_i^{=k}}}.
\]
Notice the degree of vertices in $ R_i$ is at least $i$, thus we only consider $S_{i,k}$ for $k\geq i$.
We abuse notation for $i=0$, that means we denote $S_{0,k} = \cnt_k$.

Let $X_i = |f_i^{-1}(1)|$  denote the number of elements mapped to $1$ by the function $f_i$. Specifically, $X_1 = m$  represents the total number of edges in graph, while for $i \in [k] \setminus \{1\}$, $X_i$  is a random variable that depends on the previously selected multiset $T_1,\dots,T_{i-1}$.

We first define some events and our proof line will be fully dependent on them.
\begin{definition}
    for each $i\in[d]$, we define the following events.
    \begin{itemize}
    \item $\cE^T_i: |\Supp{T_i}| =  |\Supp{ R_i}| = \ell_i$. This event says that in the $i$-th iteration, no two edges with identical tail are both sampled, i.e, each sampled edge corresponds to an unique tail. 
\item $\cE^{\tilde{X}}_i: |\tilde{X}_i - X_i| \leq \eps\underline{c^X_i} t_{i-1}$. This event ensures that the relative error of the estimate $\tilde{X}_i$ is at most $\eps\underline{c^X_i} t_{i-1}$.
\item $\cE^X_i: X_i\in [\underline{c^X_i} t_{i-1},\overline{c^X_i} t_{i-1}]$. This event says that $X_i$ is within a constant factor of $t_{i-1}$. 
\item $\cE^S_i : S_{i,k} \in (1\pm\eta) \ell_i(k-i+1) S_{i-1,k} / X_{i} \quad\forall k=i,\dots,d$. 
This event ensures that in the $i$-th iteration, the number of sampled vertices of in-degree $k$ are bounded.
\item $\cE^\eq_i : X_i \in \sum_{k=i}^d\left(\frac{(1\pm\eta)^{i-1} k!}{(k-i)!}\prod_{j=1}^{i-1}\frac{\ell_j}{X_j} \right) \cnt_k.$
This event gives a bound between $X_1,\dots,X_i$ and the estimate objects $\{\cnt_k\}$.
\end{itemize}
\end{definition}
For convenience, we will abuse notation for $i=0$ and denote events $\cE^T_{0},\cE^X_{0},\cE^{\tilde{X}}_{0},\cE^S_{0}$ and $\cE^\eq_{0}$ as empty events.

The dependencies among these events are illustrated in \Cref{fig:dep1}. We show that all the implications represented by arrows in the figure hold with high constant probability. Sketches of the proofs are provided in \Cref{sec:claims}, and full details appear in \Cref{sec:Appen_claims}.

\begin{figure}[ht]
        \centering
        \includegraphics[width=0.8\linewidth]{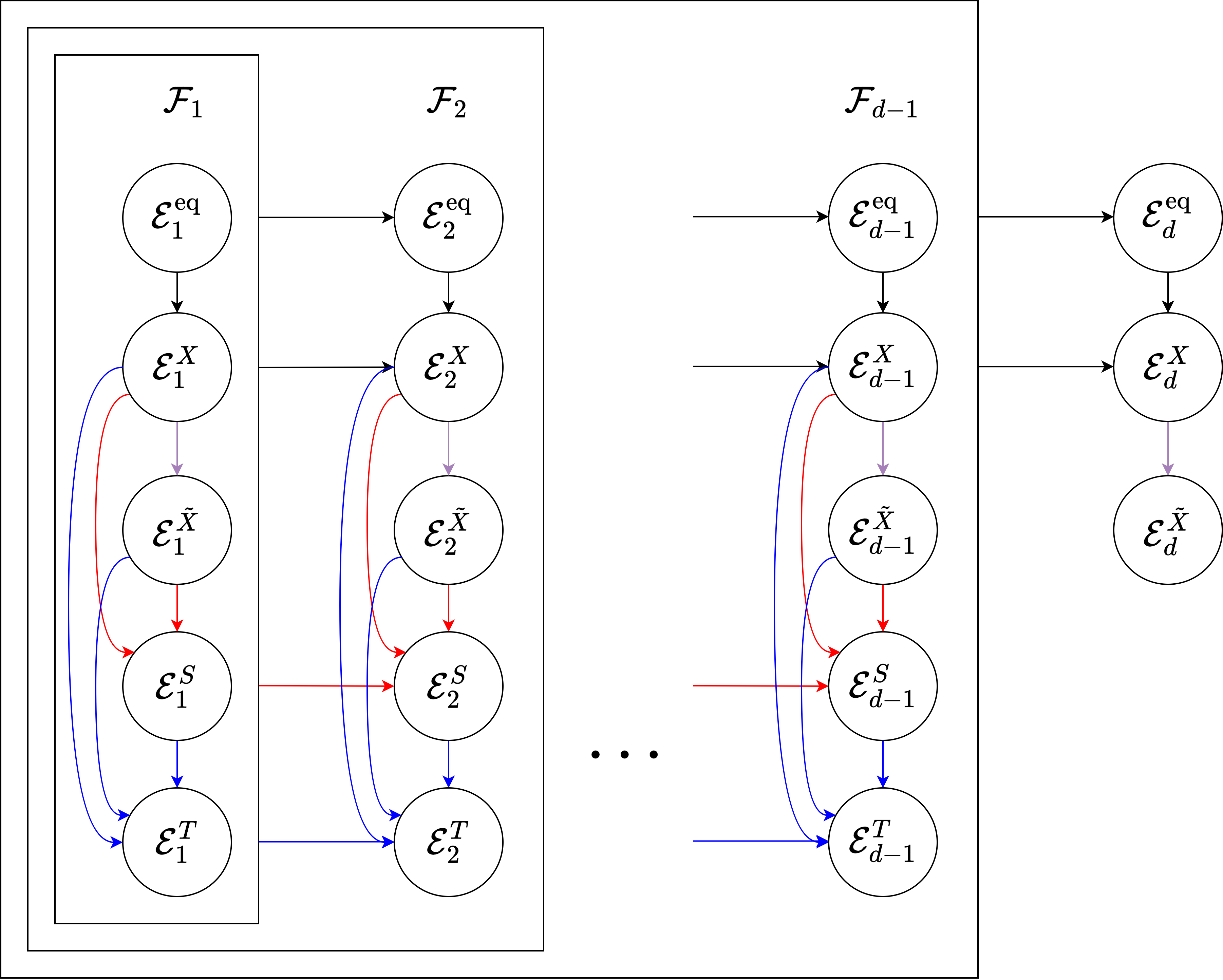}
        \caption{Illustration on how the events defined above depend on each other. The black, purple, blue and red arrows mean event holds with probability at least $1$, $1-1/100d$, $1-c^T_in^{-1/(2^d-1)}$ and $1 - c^S_i n^{-(2^{d-i}-1)/(2^d-1)}$. That means all events shown here will occur with $1-o(1)$ probability.}
        \label{fig:dep1}
    \end{figure}

The correctness of \Cref{alg:ApproxSd} relies primarily on the following two lemmas, and their full proofs are deferred to \Cref{sec:pfLem}.

\begin{restatable}{lemma}{Norm}
\label{lem:norm}
    Conditioned on the event $\cap_{i=1}^d \left(\cE^\eq_i \cap \cE^X_i \cap \cE^{\tilde{X}}_i\right)$, let $\eps = \frac{\delta'}{24d}$ and $\eta =\frac{\delta'}{16d}$ where $\delta' = \frac{\delta}{d!2^d d^{d+1}}$, then
    \[
        \sum_{k=1}^d|\cnt_k- \widetilde{\cnt}_k| \leq \delta n.
    \]
\end{restatable}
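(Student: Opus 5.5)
The plan is to turn the conditioned events into a multiplicative approximation $\tilde{\boldsymbol{x}}\approx M\boldsymbol{\cnt}$. Here $M_{ik}=\binom{k}{i}i!$ for $k\ge i$, and $\tilde{x}_i=\frac{n}{t_{i-1}}\tilde X_i$. Since $M$ is a fixed upper triangular matrix depending only on $d$, I then push the error through $M^{-1}$ explicitly. Note that $t_0=n$, so $\frac{n}{t_{i-1}}=\frac{t_0}{t_{i-1}}$.

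\emph{Step 1 (additive to multiplicative).} On $\cE^X_j\cap\cE^{\tilde X}_j$ we have $X_j\ge \underline{c^X_j}t_{j-1}$ and $|\tilde X_j-X_j|\le \eps\underline{c^X_j}t_{j-1}$. Hence $\tilde X_j\in(1\pm\eps)X_j$ for every $j\in[d]$. Because $\ell_j=t_j\tilde X_j/t_{j-1}$, this gives $\ell_j/X_j\in(1\pm\eps)\,t_j/t_{j-1}$, and the product telescopes:
\[
\prod_{j=1}^{i-1}\frac{\ell_j}{X_j}\in(1\pm\eps)^{i-1}\frac{t_{i-1}}{t_0}.
\]

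\emph{Step 2 (the approximate linear system).} I substitute this product into $\cE^\eq_i$, multiply by $t_0/t_{i-1}$, and apply Step 1 once more to pass from $X_i$ to $\tilde X_i$. This yields
\[
\tilde x_i\in(1\pm\eps)^{i}(1\pm\eta)^{i-1}\sum_{k=i}^d\binom{k}{i}i!\,\cnt_k=(1\pm\eps)^{i}(1\pm\eta)^{i-1}(M\boldsymbol{\cnt})_i .
\]
Since $\eps,\eta\le 1/(16d)$, the quantity $(1+\eps)^d(1+\eta)^d-1$ is at most $2d(\eps+\eta)$, and the matching lower deviation is bounded the same way. Using $(M\boldsymbol{\cnt})_i\le 2^d d!\,n$, I get
\[
|\tilde x_i-(M\boldsymbol{\cnt})_i|\le 2d(\eps+\eta)\,2^d d!\,n\le \tfrac{5}{12}\,\delta'\,2^d d!\,n .
\]

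\emph{Step 3 (inverting $M$).} Write $M=\mathrm{diag}(i!)\,B$ with $B_{ik}=\binom{k}{i}$. The binomial inverse is $B^{-1}_{ki}=(-1)^{i-k}\binom{i}{k}$, so $(M^{-1})_{ki}=(-1)^{i-k}\binom{i}{k}/i!$. I will check that the output formula of \Cref{alg:ApproxSd} computes exactly $M^{-1}\tilde{\boldsymbol{x}}$; the factor $i!$ there should be read as a division, so this normalization must be reconciled with the algorithm. Then $\boldsymbol{\widetilde{\cnt}}-\boldsymbol{\cnt}=M^{-1}(\tilde{\boldsymbol{x}}-M\boldsymbol{\cnt})$. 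Bounding each entry of $M^{-1}$ in absolute value by $2^d$ gives
\[
\sum_k|\widetilde{\cnt}_k-\cnt_k|\le d\cdot d\cdot 2^d\max_i|\tilde x_i-(M\boldsymbol{\cnt})_i|\le d^2 2^{2d}d!\,\delta' n .
\]
With $\delta'=\delta/(d!2^dd^{d+1})$ this is at most $\delta n$, since $d^2 2^{d}\le d^{d+1}$ for $d\ge 2$ (the case $d=1$ is immediate). Any slack needed in the constants is absorbed by the generous choice of $\delta'$.

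The main obstacle is Step 1: converting the additive guarantee of $\cE^{\tilde X}$ into a multiplicative one. This is precisely where $\cE^X$ (together with \Cref{asmp:Omega(n)} behind it) is needed, and the compounding of $(1\pm\eps)$ factors across the adaptive rounds must be tracked so that the total exponent stays at most $d$. The rest is bookkeeping of constants in the explicit inverse of $M$.
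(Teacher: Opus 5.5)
Your proposal is correct and follows essentially the paper's argument (a two-sided multiplicative sandwich of $\frac{n}{t_{i-1}}\tilde X_i$ around $\sum_{k}\frac{k!}{(k-i)!}\cnt_k$ from $\cE^\eq_i$, $\cE^X_i$, $\cE^{\tilde X}_i$, then propagating the error through the explicit binomial inverse, which the paper packages as \Cref{lem:squeeze}). You are also right that the $i!$ must be a division: the paper's algorithm and proof put it in the numerator, which is inconsistent with $\cE^\eq_i$. The only slip is that $d^2 2^d\le d^{d+1}$ fails at $d=2$, but if you keep the factor $\frac{5}{12}$ you dropped, your own chain gives $\frac{5}{12}\cdot 2^d d^{1-d}\,\delta n\le\frac{5}{6}\,\delta n$ for every $d\ge 1$.
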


\begin{restatable}{lemma}{pr}
\label{lem:pr}
    When $n$ is sufficiently large, $\Pr\left[\cap_{i=1}^d (\cE^\eq_i\cap\cE^X_i\cap \cE^{\tilde{X}}_i) \right] \geq 2/3 $.
\end{restatable}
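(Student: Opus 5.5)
The plan is to prove \Cref{lem:pr} by induction on $i=1,\dots,d$, following the dependency structure of \Cref{fig:dep1}. At each level I show that, given the good events at levels $<i$, the level-$i$ events hold except with small probability, and then I take a union bound over all levels. The failure probabilities are of three kinds: $1/(100d)$ for each quantum-counting step, $O(n^{-1/(2^d-1)})$ for each birthday-type collision event, and $n^{-\Omega(1)}$ for each concentration event. Summing over the $d$ levels gives total failure at most $1/100+o(1)\le 1/3$ for large $n$.

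\textbf{Base case.} Here $X_1=m$ and $t_0=n$, and \Cref{asmp:Omega(n)} gives $\delta n\le m\le dn$. So $\cE^X_1$ holds deterministically, provided the constants $\underline{c^X_1},\overline{c^X_1}$ are $\Theta(\delta),\Theta(d)$.

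\textbf{Step $\cE^X_i \Rightarrow \cE^{\tilde{X}}_i$.} I apply \Cref{cor:count} with domain size $N=dn$, query budget $M=c^M_i\sqrt{n/t_{i-1}}$, and failure parameter $\eta=1/(100d)$, which matches $c^B=500\ln(200d)$. Under $X_i\le \overline{c^X_i}t_{i-1}$ the error term becomes
\[
2\pi\sqrt{X_iN}/M+\pi^2N/M^2 = O(t_{i-1}/c^M_i)+O(t_{i-1}/(c^M_i)^2).
\]
The constant $c^M_i$ was chosen exactly so that this is at most $\eps\underline{c^X_i}t_{i-1}$.

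\textbf{Step $\cE^T_i$.} Each $\Grover(f_i)$ call returns a uniform marked edge out of $X_i=\Theta(t_{i-1})$ edges, and each head has at most $d$ incoming marked edges. So a collision among the $\ell_i=\Theta(t_i)$ samples (two sampled edges sharing a head) has probability $O(\ell_i^2 d/X_i)=O(t_i^2/t_{i-1})$. This equals $O(n^{-1/(2^d-1)})$ by the choice of exponents.

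\textbf{Step $\cE^S_i$.} Conditioned on $\cE^T_{i-1}$ (distinct vertices in $R_{i-1}$, each appearing once), a vertex of in-degree $k$ in $R_{i-1}$ has exactly $k-(i-1)$ unused incoming edges. Hence each Grover draw lands on such an edge with probability $(k-i+1)S_{i-1,k}/X_i$. On $\cE^T_i$ the support size $S_{i,k}$ equals this binomial count. By induction $S_{i-1,k}=\Theta(t_{i-1})$ (using $\cnt_k\ge\delta n$), so the mean is $\Theta(t_i)$. Chebyshev or Chernoff then gives relative error $\eta$ with failure probability $O(1/t_i)=n^{-\Omega(1)}$ for $i\le d-1$. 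At level $d$ we have $t_d=1$, so concentration fails there; but $\cE^S_d$ is never used, since $\cE^\eq_i$ only needs $\cE^S_j$ for $j<i$.

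\textbf{Step $\cE^\eq_i$.} On $\cE^T_{i-1}$ we have exactly $X_i=\sum_{k\ge i}(k-i+1)S_{i-1,k}$. Unrolling $\cE^S_{i-1},\dots,\cE^S_1$ down to $S_{0,k}=\cnt_k$ produces the factor $\prod_{j<i}(1\pm\eta)\ell_j/X_j$ and the coefficient $k!/(k-i)!$.

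\textbf{Step $\cE^\eq_i\Rightarrow\cE^X_i$.} Using $\cE^{\tilde{X}}_j$ and $\cE^X_j$ for $j<i$, each ratio satisfies $\ell_j/X_j=t_j\tilde X_j/(t_{j-1}X_j)\in(1\pm\eps)t_j/t_{j-1}$. The product telescopes to $t_{i-1}/n$, and with $\delta n\le\cnt_k\le n$ this places $X_i$ within constant factors of $t_{i-1}$. This closes the induction.

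I expect the main obstacle to be bookkeeping rather than any single idea. The constants $\underline{c^X_i},\overline{c^X_i}$ must be consistent across the counting error, the $\eps$-relative estimate, and the $(1\pm\eta)^{i-1}$ unrolling. The argument also relies on the exclusion of previously used edges in $f_i$ together with $\cE^T$, which is what makes the marked-edge count exactly $k-i+1$ per vertex. I would fix these constants first, as explicit products of $(1\pm\eps)$ and $(1\pm\eta)$ factors, and then verify each implication in turn.
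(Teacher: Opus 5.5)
Your proposal is correct and follows essentially the same argument as the paper. Both proceed level by level: $\cE^\eq_i$ from $\cE^T_j,\cE^S_j$ for $j<i$, then $\cE^X_i$ deterministically, then $\cE^{\tilde X}_i$ via counting, then the birthday event $\cE^T_i$ and the concentration event $\cE^S_i$, with $\cE^S_d$ dropped since $t_d=1$; the failure probabilities $1/(100d)$, $O(n^{-1/(2^d-1)})$ and $O(1/t_i)$ total $1/100+o(1)$. The differences are cosmetic: you combine levels with a union bound where the paper multiplies conditional probabilities and applies Bernoulli's inequality, and your pairwise birthday estimate $d/X_i$ avoids needing a bound on $|R_{i-1}|$.
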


\subsubsection{Some central properties}
\label{sec:claims}
In this section, we present several key properties of the conditional probabilities corresponding to the dependencies illustrated in \Cref{fig:dep1}, along with informal proof sketches. Full proofs are deferred to \Cref{sec:pfClm}.

\begin{restatable}{claim}{EventTildeXi}
    \label{clm:E^-X_i}
    For each  $i \in [d]$, we have
    \[
    \Pr\left[\cE^{\tilde{X}}_i \mid \cE^X_i\right] \geq 1 - \frac{1}{100d},
    \]
    where the probability is taken over the randomness of the subroutine $\emph{\Count}(f_i)$.
\end{restatable}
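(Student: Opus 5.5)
This follows directly from \Cref{cor:count} applied to the call in \Cref{line:estXi}. The plan is to take $N=dn$ (the domain $V\times[d]$ of $f_i$) and $M=c^M_i\sqrt{n/t_{i-1}}$. The actual query budget is $c^B M$ with $c^B=500\ln(200d)$, which matches $M'=500M\ln(2/\eta_0)$ for failure probability $\eta_0=1/(100d)$. Hence, with probability at least $1-1/(100d)$ over the internal randomness of $\Count$,
\[
|\tilde{X}_i-X_i|\le 2\pi\frac{\sqrt{X_i(dn-X_i)}}{M}+\pi^2\frac{dn}{M^2}.
\]
This holds for any fixed realization of $T_1,\dots,T_{i-1}$, because $\Count$ uses fresh randomness. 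It therefore remains to show that, deterministically on $\cE^X_i$, the right-hand side is at most $\eps\underline{c^X_i}t_{i-1}$.

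On $\cE^X_i$ we have $X_i\le\overline{c^X_i}t_{i-1}$. Hence the first term is at most
\[
2\pi\sqrt{\overline{c^X_i}\,t_{i-1}\,dn}\Big/\bigl(c^M_i\sqrt{n/t_{i-1}}\bigr)=\frac{2\pi\sqrt{d\,\overline{c^X_i}}}{c^M_i}\,t_{i-1}.
\]
The second term equals $\pi^2 d\,t_{i-1}/(c^M_i)^2$. Both terms are thus constant multiples of $t_{i-1}$. The claim then reduces to the constant inequality
\[
\frac{2\pi\sqrt{d\,\overline{c^X_i}}}{c^M_i}+\frac{\pi^2 d}{(c^M_i)^2}\le \eps\,\underline{c^X_i}.
\]

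The only real work is verifying this inequality with the specific constants. Here $\underline{c^X_i},\overline{c^X_i}$ are the constants from the definition of $\cE^X_i$, which I expect to be of the form $\delta'(1-\eps)^{i-1}(1-\eta)^{i-1}$ up to factors and $d^{i}(1+\eps)^{i-1}$ up to factors, coming from $\cE^\eq_i$ together with $\delta n\le\cnt_k\le n$. I would substitute $c^M_i=4\pi\sqrt{d^{i+1}(1+\eps)^{i-1}}/\sqrt{\delta'\eps(1-\eps)^{i-1}(1-\eta)^{i-1}}$. Then $2\pi\sqrt{d\overline{c^X_i}}/c^M_i$ becomes about $\tfrac12\sqrt{\eps\,\delta'(1-\eps)^{i-1}(1-\eta)^{i-1}}$ times a ratio that should be $\le\sqrt{\eps\underline{c^X_i}}\cdot\sqrt{\underline{c^X_i}}$ after matching powers. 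This handles the first term, giving at most $\tfrac12\eps\underline{c^X_i}$.

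This matching of constants is the main obstacle. If the exact forms of $\underline{c^X_i},\overline{c^X_i}$ differ from my guess, I would instead use the cruder bound $X_i(dn-X_i)\le X_i\,dn$. Then I would check that $c^M_i$ was chosen as $\Theta\bigl(\sqrt{d\,\overline{c^X_i}}/(\eps\underline{c^X_i})\bigr)$, which is precisely what makes the first term $\le\tfrac12\eps\underline{c^X_i}t_{i-1}$. The second term is of order $1/(c^M_i)^2$, hence of order $(\eps\underline{c^X_i})^2/\overline{c^X_i}$, which is much smaller than $\tfrac12\eps\underline{c^X_i}$ since $\eps<1$ and $\underline{c^X_i}\le\overline{c^X_i}$.
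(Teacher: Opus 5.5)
Your reduction is the paper's proof. You apply the counting corollary to $f_i$ on a domain of size $dn$ with $M=c^M_i\sqrt{n/t_{i-1}}$, where $c^B=500\ln(200d)$ gives failure probability $1/(100d)$ over the fresh randomness of \textsc{Count}. You then use $X_i\le\overline{c^X_i}t_{i-1}$ on $\cE^X_i$ and reduce to a constant inequality in $c^M_i$, $\overline{c^X_i}$ and $\eps\underline{c^X_i}$.

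The one step that fails is your ``matching powers'' check with the literal $c^M_i$ of \textsc{EstVertices}. With that value,
\[
\frac{2\pi\sqrt{d\,\overline{c^X_i}}}{c^M_i}=\frac{1}{2}\sqrt{\eps\,\delta'(1-\eps)^{i-1}(1-\eta)^{i-1}}\,\sqrt{\frac{\overline{c^X_i}}{d^{i}(1+\eps)^{i-1}}}.
\]
Since $\delta'=24d\eps$, this is $\eps$ times a quantity bounded below independently of $\delta$; recall $\overline{c^X_i}=(1+\eta)^{i-1}(1+\eps)^{i-1}\sum_{k=i}^{d}\frac{k!}{(k-i)!}$. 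By contrast, $\eps\underline{c^X_i}=\eps\delta(1-\eta)^{i-1}(1-\eps)^{i-1}\sum_{k=i}^{d}\frac{k!}{(k-i)!}$ carries an extra factor $\delta$. So for small $\delta$ your bound on the first term is not even below $\eps\underline{c^X_i}t_{i-1}$. The stated constant is of order $(\eps\delta')^{-1/2}$, while what is needed is of order $(\eps\delta)^{-1}$.

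Your fallback is the correct finish, and it is exactly what the paper does. Its proof never checks the \textsc{EstVertices} value. It invokes the corollary under the condition $c^M_i\ge\max\{4\pi\sqrt{\overline{c^X_i}d}/(\eps\underline{c^X_i}),\sqrt{2\pi^2 d/(\eps\underline{c^X_i})}\}$, the choice written out explicitly for $c^M_\Gamma$ in \textsc{EstDisc}. That condition makes each term at most $\tfrac12\eps\underline{c^X_i}t_{i-1}$. Your remark that the second entry of the max is redundant, because $\eps\underline{c^X_i}\le\overline{c^X_i}$, is correct. So drop the first verification and state this lower bound on $c^M_i$ explicitly as the hypothesis you use.
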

\begin{claimproof}[Sketch]
The claim follows directly from applying \Cref{cor:count}.
\end{claimproof}

\begin{restatable}{claim}{EventTi}
\label{clm:E^T_i}
For each  $i \in [d]$, we have
\[
\Pr\left[\cE^T_i \mid \cE^X_{i} \cap \cE^{\tilde{X}}_{i} \cap \cE^X_{i-1} \cap \cE^{\tilde{X}}_{i-1} \cap \cE^T_{i-1}\right] \geq 1 - c^T_i n^{-1/(2^d-1)},
\]
where the probability is taken over the randomness in sampling the edge set $T_i$.
The constant  $c^T_i$  depends only on $\eps, \eta, \delta$ and  $d$.
\end{restatable}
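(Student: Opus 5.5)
Conditioned on the listed events, $\cE^T_i$ fails only if two samples in $T_i$ share a tail. By Corollary-type control, $\ell_i = t_i\tilde X_i/t_{i-1}$ is then within a constant factor of $t_i$, so it suffices to show that among $\ell_i = O(t_i)$ independent Grover samples, a tail collision is unlikely. I will do this with a birthday-paradox union bound over pairs of samples.

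\textbf{Step 1: bounding $\ell_i$.} Under $\cE^X_i\cap\cE^{\tilde X}_i$ we have $\tilde X_i \le X_i+\eps\underline{c^X_i}t_{i-1}\le (\overline{c^X_i}+\eps\underline{c^X_i})t_{i-1}$. Hence $\ell_i\le c\, t_i$ for a constant $c$ depending only on $\eps,\eta,\delta,d$. Also $X_i\ge \underline{c^X_i}t_{i-1}$.

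\textbf{Step 2: collision probability of one pair.} Each call $\Grover(f_i)$ returns a uniformly random element of $f_i^{-1}(1)$ by \Cref{lem:grover}, and the calls are independent given $f_i$. A marked edge $(v,s)$ has its tail $w=\cO^{\mathrm{out}}_G(v,s)$ in $R_{i-1}$.

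The key observation is that each tail vertex $w$ has in-degree at most $d$, so at most $d$ marked edges share a given tail. Therefore, for two independent samples, the probability that they share a tail is at most $\sum_w (d/X_i)^2\cdot(\#\text{marked edges into }w)/d$. This simplifies to at most $d/X_i$.

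For the second equality in $\cE^T_i$, namely $|\Supp{R_i}|=\ell_i$, I also need the sampled edges to be distinct. The same bound covers identical edges, since identical edges trivially share a tail.

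\textbf{Step 3: union bound.} Summing over the $\binom{\ell_i}{2}$ pairs gives failure probability at most
\[
\frac{\ell_i^2 d}{2X_i}\le \frac{c^2 d\, t_i^2}{2\underline{c^X_i}\, t_{i-1}}.
\]
Now compute the exponent with $t_j=n^{(2^{d-j}-1)/(2^d-1)}$:
\[
2\,\frac{2^{d-i}-1}{2^d-1}-\frac{2^{d-i+1}-1}{2^d-1}=\frac{-1}{2^d-1}.
\]
So the bound is $c^T_i n^{-1/(2^d-1)}$ with $c^T_i=c^2d/(2\underline{c^X_i})$, which is the claimed form. The conditioning on $\cE^T_{i-1},\cE^X_{i-1},\cE^{\tilde X}_{i-1}$ only ensures $f_i$ and $R_{i-1}$ are well defined with the right sizes; it is not otherwise needed.

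\textbf{Main obstacle.} The delicate point is justifying exact uniformity and independence of the Grover outputs. Lemma~\ref{lem:grover} asserts uniform sampling, and repeated calls use fresh randomness, so I will take this as given.

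A second subtlety is that "no two sampled edges share a tail" is really what $|\Supp{R_i}|=\ell_i$ requires. Step 2 handles this via the in-degree bound $d$, which is exactly where bounded in-degree enters.
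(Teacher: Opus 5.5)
Your proof is correct and has the same skeleton as the paper's: a birthday-paradox union bound over the $\binom{\ell_i}{2}$ pairs of Grover samples, with the same per-vertex estimate $\Pr[\tail(e_s)=w]\le d/X_i$ coming from the in-degree bound.

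The difference is in how you sum over $w$.

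\textbf{The paper's route.} It bounds $\sum_{w\in R_{i-1}}\Pr[\tail(e_s)=w]^2\le |R_{i-1}|\,(d/X_i)^2$, so it must also control $|R_{i-1}|$. For $i=1$ it uses $|R_0|=n$. For $i\ge 2$ it uses $\cE^T_{i-1}\cap\cE^X_{i-1}\cap\cE^{\tilde{X}}_{i-1}$ to get $|R_{i-1}|=\ell_{i-1}\le(1+\eps)\overline{c^X_{i-1}}\,t_{i-1}$. That is exactly why the level-$(i-1)$ events appear in the conditioning.

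\textbf{Your route.} You use $\sum_w p_w^2\le(\max_w p_w)\sum_w p_w\le d/X_i$. Combined with $\ell_i=O(t_i)$ and $X_i\ge\underline{c^X_i}t_{i-1}$ from $\cE^X_i\cap\cE^{\tilde{X}}_i$, this gives the same $n^{-1/(2^d-1)}$ rate. It treats $i=1$ and $i\ge 2$ uniformly and never needs the size of $R_{i-1}$.

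\textbf{Consequence.} In your argument the conditioning on $\cE^T_{i-1}\cap\cE^X_{i-1}\cap\cE^{\tilde{X}}_{i-1}$ is not needed at all. This includes ``well-definedness'': $f_i$ and $R_{i-1}$ are defined regardless. You therefore actually prove the slightly stronger statement conditioned only on $\cE^X_i\cap\cE^{\tilde{X}}_i$, which of course implies the claim as stated.
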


\begin{claimproof}[Sketch]
The claim follows from the well-known birthday paradox and the fact that $t_i = o(\sqrt{t_{i-1}})$, which intuitively implies that no two nearby edges are likely to be sampled in the same iteration.
\end{claimproof}

\begin{restatable}{claim}{EventSi}
\label{clm:E^S_i}
    For each $i \in[d]$, 
    \[\Pr\left[\cE^S_i \mid \cap_{j=1}^{i-1}\left(\cE^T_j\cap \cE^S_j \cap \cE^X_j\cap \cE^{\tilde{X}}_j\right)\cap \cE^{\tilde{X}}_i\cap \cE^X_i \cap \cE^T_i\right] \geq 1 - c^S_i n^{-(2^{d-i}-1)/(2^d-1)},
    \] where the probability is taken over the randomness of sampling the edge set $T_i$. The constant $c^S_i$ is only dependent on $\eps, \eta,\delta$ and $d$.
\end{restatable}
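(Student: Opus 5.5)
The plan is to condition on the earlier iterations and show that, given $T_1,\dots,T_{i-1}$, each $S_{i,k}$ is a sum of independent indicators with the right mean, then apply a Chernoff bound.

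\textbf{Setting up the sampling.} Condition on $\cap_{j<i}(\cE^T_j\cap\cE^S_j\cap\cE^X_j\cap\cE^{\tilde X}_j)$ and on the outcome of $\Count(f_i)$, so that $f_i$, $X_i$ and $\ell_i$ are fixed. Each call to $\Grover(f_i)$ then returns an independent, uniformly random edge of $f_i^{-1}(1)$ by \Cref{lem:grover}.

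\textbf{Computing the mean.} Under $\cE^T_{i-1}$, $R_{i-1}$ is a set (no repeated elements) of $\ell_{i-1}$ vertices. Chaining $\cE^T_1,\dots,\cE^T_{i-1}$, each $v\in R_{i-1}$ has had exactly $i-1$ of its incoming edges already placed in $\cup_{j<i}T_j$. The birthday-type guarantees keep the sampled structures disjoint, and this is what rules out any other consumption of $v$'s in-edges. Consequently a vertex $v\in R_{i-1}$ with $\din(v)=k$ contributes exactly $k-i+1$ marked edges to $f_i^{-1}(1)$. There are $S_{i-1,k}$ such vertices, so a single Grover call lands on the head of an in-degree-$k$ vertex with probability
\[
p_k=\frac{(k-i+1)S_{i-1,k}}{X_i}.
\]
Write $N_k$ for the number of the $\ell_i$ draws that hit an in-degree-$k$ head. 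Then $N_k\sim\cB(\ell_i,p_k)$.

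\textbf{Relating $N_k$ to $S_{i,k}$.} Under $\cE^T_i$, distinct draws have distinct heads, so $S_{i,k}=N_k$.

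\textbf{Applying Chernoff.} The multiplicative Chernoff bound gives
\[
\Pr\bigl[|N_k-\ell_ip_k|>\eta\,\ell_ip_k\bigr]\le 2\exp\bigl(-\eta^2\ell_ip_k/3\bigr).
\]

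\textbf{Lower-bounding the mean (main obstacle).} The whole argument rests on $\ell_ip_k$ being at least $\Omega(t_i)$. By $\cE^{\tilde X}_i\cap\cE^X_i$, we have $\ell_i=t_i\tilde X_i/t_{i-1}=\Theta(t_i)$ and $X_i=\Theta(t_{i-1})$. Hence
\[
\ell_ip_k=\Theta\!\left(\frac{t_i\,S_{i-1,k}}{t_{i-1}}\right).
\]
Unrolling $\cE^S_1,\dots,\cE^S_{i-1}$ gives $S_{i-1,k}\ge c\cdot t_{i-1}\cnt_k/n$, and Assumption~\ref{asmp:Omega(n)} gives $\cnt_k\ge\delta n$. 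Together these yield $S_{i-1,k}=\Omega(t_{i-1})$ and therefore $\ell_ip_k=\Omega(t_i)$. The constants here come from $\delta,\eps,\eta,d$ and the bracketing constants $\underline{c^X_j}$; tracking them so they depend only on these parameters is the delicate bookkeeping step.

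\textbf{Finishing.} With $\ell_ip_k=\Omega(t_i)$, the failure probability per $k$ is $\exp(-\Omega(t_i))$. Taking a union bound over $k=i,\dots,d$, the total failure probability is at most
\[
d\,e^{-\Omega(t_i)}\le c^S_i\,t_i^{-1}=c^S_i\,n^{-(2^{d-i}-1)/(2^d-1)}.
\]
For $i=d$ we have $t_d=1$, and the stated bound is then trivial once $c^S_d\ge1$.
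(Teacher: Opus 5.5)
Your proposal is correct and follows essentially the same route as the paper's proof. Under $\cE^T_i$ you identify $S_{i,k}$ with a $\cB\bigl(\ell_i,(k-i+1)S_{i-1,k}/X_i\bigr)$ count, lower-bound its mean by $\Omega(t_i)$ by unrolling $\cE^S_1,\dots,\cE^S_{i-1}$ with $\ell_j/X_j\ge(1-\eps)t_j/t_{j-1}$ and \Cref{asmp:Omega(n)}, and finish with the multiplicative Chernoff bound, $e^{-x}\le 1/x$, and a union bound over $k$. One subtlety you share with the paper: conditioning on $\cE^T_i$ (an event over the same sampling randomness) technically perturbs the binomial law, but bounding $\Pr[\overline{\cE^S_i}\cap\cE^T_i]$ by the unconditional deviation probability and dividing by $\Pr[\cE^T_i]=1-o(1)$ only changes the constant $c^S_i$.
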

\begin{claimproof}
    Recall that $S_{i,k}$ denotes the number of vertices of in-degree $k$ in $T_i$. 
    Consequently, there are $(k-i+1)S_{i-1,k}$ edges in $f_i^{-1}(1)$ whose tails have in-degree $k$. 
    Denote this set by
    \[
        E_k = \{e \in f_i^{-1}(1) : \din(\tail(e)) = k\}, 
        \qquad \abs{E_k} = (k-i+1)S_{i-1,k}.
    \]

    Let $T_i^{=k}$ be the multiset of edges in $E_k$ that are sampled into $T_i$. 
    (If an edge appears multiple times in $T_i$, it also appears multiple times in $T_i^{=k}$.)  
    Conditioned on $\cE^T_i$, no two edges in $T_i$ share the same tail, so that
    \[
        T_i^{=k} = T_i \cap E_k, 
        \qquad \abs{T_i^{=k}} = S_{i,k}.
    \]

    Since we apply \textsc{Grover} on $f_i$ for $\ell_i$ independent samples, each uniformly distributed in $f_i^{-1}(1)$, the probability that a sample falls into $E_k$ is $\abs{E_k}/X_i$. 
    Hence,
    \[
        S_{i,k} \sim \cB\!\left(\ell_i, \frac{\abs{E_k}}{X_i}\right).
    \]
    Define the event 
    \[
        \cE^S_{i,k}: \quad S_{i,k} \in (1 \pm \eta)\,\frac{\ell_i\abs{E_k}}{X_i}.
    \]
    By \Cref{lem:B}, we have
    \[
        \Pr[\cE^S_{i,k}] \;\geq\; 1 - 2\exp\!\left(-\tfrac{\eta^2 \ell_i (k-i+1) S_{i-1,k}}{3X_i}\right).
    \]

    Conditioned on $\cE^S_1,\dots,\cE^S_{i-1}$ and $\cE^T_1,\dots,\cE^T_{i-1}$, 
    we can recursively bound $S_{i-1,k}$ as
    \begin{align}
    \label{eq:S}
        S_{i-1,k} &\geq (1-\eta)(k-i+2)\frac{\ell_{i-1} }{X_{i-1}} S_{i-2,k} \geq \dots \geq \frac{(1-\eta)^{i-1} k!}{(k-i+1)!}\prod_{j=1}^{i-1}\frac{\ell_j}{X_j} S_{0,k}.
    \end{align}

    Further conditioning on $\cE^{\tilde{X}}_j$ and $\cE^X_j$ for all $j \in [i-1]$, we obtain
    \[
        \ell_j = \frac{t_j \tilde{X}_j}{t_{j-1}} \;\;\geq\;\; \frac{t_j (1-\eps)X_j}{t_{j-1}}.
    \]
    Moreover, by \Cref{asmp:Omega(n)}, we have $S_{0,k} = \cnt_k \geq \delta n$. 
    Substituting into \eqref{eq:S} gives
    \[
        S_{i-1,k} 
        \;\geq\; \frac{(1-\eps)^{i-1}(1-\eta)^{i-1} k!}{(k-i+1)!} \cdot \frac{t_{i-1}}{n} S_{0,k} 
        \;\geq\; \frac{(1-\eps)^{i-1}(1-\eta)^{i-1} k!\,\delta}{(k-i+1)!}\,t_{i-1}.
    \]

    Conditioned further on $\cE^{\tilde{X}}_i$ and $\cE^X_i$, we also have
    \[
        \ell_i = \frac{t_i \tilde{X}_i}{t_{i-1}} \;\;\geq\;\; \frac{t_i (1-\eps)X_i}{t_{i-1}}.
    \]
    Therefore,
    \[
        \Pr[\cE^S_{i,k}] 
        \;\geq\; 1 - 2\exp\!\left(-\frac{(1-\eps)^i (1-\eta)^{i-1} \eta^2 k!\, \delta\, t_i}{3 (k-i)!}\right)
        \;\geq\; 1 - \frac{ 3(k-i)! }{(1-\eps)^{i}\eta^2(1-\eta)^{i-1} k! \delta }\cdot \frac{1}{t_i}.
    \]

    Finally, let $c^S_i = \sum_{k=i}^d \frac{ 3(k-i)! }{(1-\eps)^{i}\eta^2(1-\eta)^{i-1} k! \delta }$. Applying the union bound over all $k \geq i$ completes the proof of the claim.
\end{claimproof}

\begin{restatable}{claim}{EventEqi}
\label{clm:E^eq_i}
    For $i\in[d]$,  when $\cap_{j=1}^{i-1}(\cE^T_j\cap\cE^S_j)$ occurs, $\cE^\eq_i$ always holds.
\end{restatable}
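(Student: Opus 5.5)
The plan is to express $X_i$ exactly in terms of the quantities $S_{i-1,k}$, and then unroll the recursion supplied by the events $\cE^S_j$ down to $S_{0,k}=\cnt_k$. For $i=1$ the claim is immediate: $X_1=m=\sum_{k=1}^d k\,\cnt_k$, which is exactly $\cE^\eq_1$ with an empty product and $k!/(k-1)!=k$.

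For $i\ge 2$, we first determine which edges $f_i$ marks. The function $f_i$ marks an edge $(v,s)$ precisely when its head lies in $R_{i-1}$ and the edge was not sampled in any $T_j$ with $j<i$. We then use $\cE^T_{i-1}$: the sampled heads are distinct, so $R_{i-1}$ is a set and each of its vertices $w$ arises from a unique sampled edge. Iterating $\cE^T_j$ back over $j<i$ shows that each such $w$ has exactly $i-1$ of its incoming edges already used, one per round. This uses the fact that $R_j\subseteq R_{j-1}$ as sets, because a marked edge must point into $R_{j-1}$. Hence a vertex $w\in R_{i-1}$ with $\din(w)=k$ contributes exactly $k-i+1$ marked edges, and
\[
X_i=\sum_{k=i}^d (k-i+1)\,S_{i-1,k}.
\]
This is consistent with the counting $\abs{E_k}=(k-i+1)S_{i-1,k}$ used in the proof of \Cref{clm:E^S_i}.

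Next, we unroll the recursion given by $\cE^S_j$ for $j\le i-1$. Each step gives $S_{j,k}\in(1\pm\eta)\,(k-j+1)\,\frac{\ell_j}{X_j}\,S_{j-1,k}$. Chaining these from $j=i-1$ down to $j=1$ yields
\[
S_{i-1,k}\in (1\pm\eta)^{i-1}\,\frac{k!}{(k-i+1)!}\prod_{j=1}^{i-1}\frac{\ell_j}{X_j}\,\cnt_k,
\]
where the interval notation means the product of the upper and lower factors. Multiplying by $(k-i+1)$ turns $k!/(k-i+1)!$ into $k!/(k-i)!$. Summing over $k\ge i$ then gives $\cE^\eq_i$, since all terms are nonnegative and the interval bounds can be added termwise. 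This is the same chain as \eqref{eq:S}, now run in both directions.

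The main obstacle is the exact combinatorial identity for $X_i$. It requires carefully justifying that, under $\cap_{j<i}\cE^T_j$, every vertex of $R_{i-1}$ has exactly $i-1$ excluded incoming edges. In other words, the removed edges $\cup_{j<i}T_j$ that point into $w$ are exactly one per earlier round. I would prove this by induction on $i$, using distinctness of heads in each round together with the nesting $R_j\subseteq R_{j-1}$.
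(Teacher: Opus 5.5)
Your proposal is correct and is essentially the paper's proof. The paper likewise writes $X_i=\sum_{k\ge i}(k-i+1)S_{i-1,k}$, which is the count $\abs{E_k}$ from the proof of \Cref{clm:E^S_i}. Its justification under $\cap_{j<i}\cE^T_j$ is the one-incoming-edge-per-round fact in \Cref{clm:unique}, which your explicit treatment of $i=1$ and of the nesting $R_j\subseteq R_{j-1}$ spells out. The paper then unrolls the $\cE^S_j$ bounds down to $S_{0,k}=\cnt_k$ in both directions, exactly as you do.
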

\begin{claimproof}[Sketch]
    Define $E_k$ same as before, then
    
    \[X_i = |f_i^{-1}(1)| = \sum_{k\geq i}  \abs{E_k} = \sum_{k\geq i}(k-i+1)S_{i-1,k}.
    \]

    Use the same bound of \Cref{eq:S} on $S_{i,k}$ and the claim holds.
\end{claimproof}

\begin{restatable}{claim}{EventXi}
\label{clm:E^X_i}
    For each $i\in[d]$, when $\cap_{j=1}^{i-1}(\cE^{\tilde{X}}_j\cap\cE^X_j)$ and $\cE^\eq_i$ occur, then $\cE^X_i$ always holds.
\end{restatable}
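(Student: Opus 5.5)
The plan is a direct computation. I unfold the expression for $X_i$ given by $\cE^\eq_i$ and use the events $\cE^{\tilde{X}}_j,\cE^X_j$ for $j<i$ to replace each ratio $\ell_j/X_j$ by $t_j/t_{j-1}$ up to a factor $(1\pm\eps)$. The product then telescopes, and \Cref{asmp:Omega(n)} sandwiches the remaining sum between constant multiples of $t_{i-1}$. The thresholds $\underline{c^X_i},\overline{c^X_i}$ are not spelled out in the text above, so I will take them to be exactly the constants this computation produces. They should be compatible with the form of $c^M_i$ in \Cref{alg:ApproxSd}:
\[
\underline{c^X_i}=(1-\eps)^{i-1}(1-\eta)^{i-1}\,\delta,\qquad
\overline{c^X_i}=(1+\eps)^{i-1}(1+\eta)^{i-1}\,d^{i+1}.
\]

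\textbf{Step 1 (ratios).} Fix $j\in[i-1]$. By $\cE^X_j$ we have $X_j\ge \underline{c^X_j}t_{j-1}$. By $\cE^{\tilde{X}}_j$ we have $|\tilde{X}_j-X_j|\le \eps\underline{c^X_j}t_{j-1}$, and combining the two gives $|\tilde{X}_j-X_j|\le \eps X_j$. Since $\ell_j=t_j\tilde{X}_j/t_{j-1}$, it follows that $\ell_j/X_j\in(1\pm\eps)\,t_j/t_{j-1}$. Multiplying over $j=1,\dots,i-1$ telescopes to
\[
\prod_{j=1}^{i-1}\frac{\ell_j}{X_j}\in(1\pm\eps)^{i-1}\frac{t_{i-1}}{t_0}=(1\pm\eps)^{i-1}\frac{t_{i-1}}{n},
\]
using $t_0=n$. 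For $i=1$ the product is empty and equals $1=t_0/n$, so the base case is covered by the same formula.

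\textbf{Step 2 (plug into $\cE^\eq_i$).} Substituting Step 1 into $\cE^\eq_i$ gives
\[
X_i\in \sum_{k=i}^d (1\pm\eta)^{i-1}(1\pm\eps)^{i-1}\frac{k!}{(k-i)!}\cdot\frac{t_{i-1}}{n}\,\cnt_k .
\]
For the lower bound, all terms are nonnegative, so I keep only the $k=i$ term (or the whole sum) and use $\cnt_k\ge\delta n$ from \Cref{asmp:Omega(n)} together with $k!/(k-i)!\ge 1$. This yields $X_i\ge(1-\eps)^{i-1}(1-\eta)^{i-1}\delta\,t_{i-1}$. For the upper bound I use $\cnt_k\le n$, $k!/(k-i)!\le d^i$, and the fact that there are at most $d$ terms. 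This yields $X_i\le(1+\eps)^{i-1}(1+\eta)^{i-1}d^{i+1}t_{i-1}$. Together these are exactly $\cE^X_i$ with the constants above.

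\textbf{Main obstacle.} The analysis is elementary, so the only delicate point is consistency of the constants. The lower threshold $\underline{c^X_j}$ that appears in $\cE^{\tilde{X}}_j$ must be the same constant that lower-bounds $X_j$ in $\cE^X_j$; otherwise the multiplicative $(1\pm\eps)$ conversion in Step 1 fails. The resulting $\underline{c^X_i},\overline{c^X_i}$ must also match those used when choosing the counting budget $c^M_i$ in \Cref{clm:E^-X_i}. I would fix the constants by the displayed formulas and check that the induction closes: $\cE^X_j$ for $j<i$ is assumed, and it yields $\cE^X_i$ with the same form of constants.
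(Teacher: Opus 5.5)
Your proposal is correct and follows the same route as the paper's proof. Both arguments bound each $\ell_j/X_j$ within $(1\pm\eps)t_j/t_{j-1}$ via $\cE^{\tilde{X}}_j\cap\cE^X_j$, telescope the product to $(1\pm\eps)^{i-1}t_{i-1}/n$, substitute into $\cE^\eq_i$, and use $\delta n\le \cnt_k\le n$. The only difference is cosmetic: the paper sets the thresholds to the full sums $\underline{c^X_i}=\sum_{k=i}^d\frac{(1-\eta)^{i-1}k!}{(k-i)!}(1-\eps)^{i-1}\delta$ and $\overline{c^X_i}=\sum_{k=i}^d\frac{(1+\eta)^{i-1}k!}{(k-i)!}(1+\eps)^{i-1}$, whereas your cruder closed forms are equally valid for this claim.
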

\begin{claimproof}[Sketch]
    When $\cE^{\tilde{X}}_j\cap\cE^X_j$ holds for each $j\in[i-1]$, $\frac{\ell_j}{X_j} = \frac{\tilde{X}_jt_j}{X_j t_{j-1}} \geq (1-\eps) \frac{t_j}{t_{j-1}}.$

    By \Cref{asmp:Omega(n)}, $S_{0,k} = \cnt_k \geq \delta n$, thus conditioned on event $\cE^\eq_i$, we have
    \[
    X_i \geq \sum_{k=i}^d\frac{(1-\eta)^{i-1} k!}{(k-i)!}(1-\eps)^{i-1}\frac{t_{i-1}}{t_0} S_{0,k} \geq \sum_{k=i}^d\frac{(1-\eta)^{i-1} k!}{(k-i)!}(1-\eps)^{i-1}\delta {t_{i-1}}.
    \]

    The other side holds similarly by the fact $S_{0,k} = \cnt_k \leq n$.
    Setting $\underline{c^X_i} = \sum_{k=i}^d\frac{(1-\eta)^{i-1} k!}{(k-i)!}(1-\eps)^{i-1}\delta$ and $\overline{c^X_i} = \sum_{k=i}^d\frac{(1+\eta)^{i-1} k!}{(k-i)!}(1+\eps)^{i-1}$ completes the proof.
\end{claimproof}

\subsubsection{Query Complexity of \texorpdfstring{\Cref{alg:ApproxSd}}{EstVertices}}
\label{sec:QCSd}
The query complexity of our algorithm arises entirely from the use of Grover search and quantum counting.

For function $f_i$, since $X_i=|f_i^{-1}(1)|=\Theta(t_{i-1})$ as \Cref{clm:E^X_i} stated, Grover search (in Line \ref{line:Grover}) costs $O(\ell_{i}\sqrt{dn/X_i}) = O(t_{i}\sqrt{n/t_{i-1}})$ quantum queries and quantum counting (in Line \ref{line:estXi}) costs $O(\sqrt{n/t_{i-1}})$ quantum queries.

The total quantum query complexity is $O\left(\sum_{i=1}^{d} (t_{i}+1)\sqrt{n/t_{i-1}}\right) = O\left(n^{1/2-1/2(2^d-1)}\right)$.

\section{Estimating Counts of \texorpdfstring{$q$}{q}-Disc Isomorphism Types}
\label{sec:disc}
In this section, we shift our focus from estimating the number of vertices with in-degree $k$ to estimating the number of occurrences of any $d$-bounded-degree $q$-disc isomorphism type, without relying on strong assumptions such as those in \Cref{asmp:Omega(n)}.

\subsection{Estimating \texorpdfstring{$\cnt_\Gamma$}{cntGamma}}%

We now present our algorithm for estimating the counts of all isomorphism types $\{\cnt_\Gamma\}_{\Gamma \in \cD_{d,q}^*}$. The core procedure is described in \Cref{alg:ApproxP}, denoted as \textsc{EstDisc}.
However, for technical reasons, we can only provide guarantees when the input graph is assumed to be \emph{$\delta$-explicit} (see \Cref{def:explicit}).
We further employ a parameter filtering technique, ensuring that $G$ is $\delta$-explicit for some $\delta$ with high probability. We present the final version of the algorithm in \Cref{alg:ApproxP*}, denoted as \textsc{EstDisc*}.

\paragraph*{The key subroutine EstDisc} 
The basic algorithm \Cref{alg:ApproxP} iterates from $1$ to $m_{d,q}$.
At the $i$-th iteration, we deal with all types $\Gamma$ with $i$ edges.
Based on the $\Gamma_{[i-1]}$-instances collected previously, we create a Boolean function $f_\Gamma$ marking the edges that can extend anyone of them and complete a $\Gamma$-instance. 
Then we apply quantum counting on it to get an estimate $\tilde{X}_\Gamma$ of $X_\Gamma$, the number of marked edges, and we repeat applying Grover search to sample $\ell_\Gamma$ many of them randomly where $\ell_\Gamma$ is proportional to the estimate $\tilde{X}_\Gamma$.
Then we collect many $\Gamma$-instances into $\cH_\Gamma$ prepared for the following iterations.
Notice $\cH_\Gamma$ only collects the $\Gamma$-instances whose edges are revealed with specific order.

After all iterations are finished, we solve a system of linear equations about $\{\widetilde{\cnt}_\Gamma\}_{\Gamma\in\cD_{d,q}^*}$ based on the estimates $\{\tilde{X}_\Gamma\}_{\Gamma\in\cD_{d,q}^*}$ and output them as the result.

\begin{algorithm}[ht]
    \caption{Approx Number of $q$-discs when $G$ is $\delta$-explicit}
    \label{alg:ApproxP}
    \begin{algorithmic}[1] 
        \Procedure{EstDisc}{$G, n, d, q ,  \delta$} 
        \State Let $\eps = {\eps}_{d,q}\delta$ and $\eta = {\eta}_{d,q}\delta$, where ${\eps}_{d,q}$ and ${\eta}_{d,q}$ are some constants only dependent on $d$ and $q$ defined in \Cref{eq:eps&eta}.
        \State Let $t_i = n^{(2^{m_{d,q}-i}-1)/(2^{m_{d,q}}-1)}$ for each $i\in[m_{d,q}]_0$.%
        \State Let $\overline{c^X_\Gamma} = d$ for $\Gamma\in\cD_{d,q}^1$ and $\overline{c^X_\Gamma} = m_{d,q}\left(\overline{c^X_{{\Gamma}_{[i-1]}}}+\eps \underline{c^X_{{\Gamma}_{[i-1]}}}\right)$ for each $\Gamma\in \cup_{i> 1}\cD_{d,q}^i$.
        \State Let $\underline{c^X_\Gamma} = \frac{\overline{c^X_{\Gamma}}4i\eps(1+\eps)}{(1+2\eps)^i m_{d,q}!}$ for each $i\in[m_{d,q}]$ and $\Gamma\in\cD_{d,q}^i$.
        \State Let $c^M_\Gamma = \max\left\{\frac{4\pi \sqrt{\overline{c^X_\Gamma} d}}{\eps \underline{c^X_\Gamma}}, \sqrt{\frac{2\pi^2 d}{\eps\underline{c^X_\Gamma}}}\right\}$ for each $\Gamma\in \cD_{d,q}^*$, %
        and $c^B = 500\ln(200m_{d,q}D_{d,q})$.
        \State Let $\tilde{X}_\Gamma=-1$ for each $\Gamma\in\cD_{d,q}^*$.
        \For {$i=1 \dots m_{d,q}$}
            \For {each $\Gamma \in \cD_{d,q}^i$ such that $\tilde{X}_\Gamma\neq 0$}
            \State Define a Boolean function $f_{\Gamma}: V\times[d]\to \bool$ such that
            \[ f_{\Gamma}(v,s) =
              \begin{cases}
               1  & \quad \text{if there exists $H\in \cH_{\Gamma_{[i-1]}}$ such that $H +(v,s)\equiv \Gamma$; }\\
               0  & \quad \text{otherwise}.
              \end{cases}
            \]
            \State Invoke $\Count(f_{\Gamma},c^M_i c^B \sqrt{n/t_{i-1}})$ to get $\tilde{X}_{\Gamma}$ as an estimate of $X_{\Gamma}=\abs{f_{\Gamma}^{-1}(1)}$.
            \If {$\tilde{X}_\Gamma \geq (1-\eps)\underline{c^X_\Gamma }t_{i-1}$}  \label{line:0}
            \State Invoke $\Grover(f_{\Gamma})$ for $\ell_{\Gamma} = t_i \tilde{X}_{\Gamma} / t_{i-1}$ times to get edge multiset $T_{\Gamma}$. \label{line:_Grover}
            \State Let $\cH_{\Gamma}$ be the multiset of $\Gamma$-instances obtained by combing edges in $T_{\Gamma}$ and $\Gamma_{[i-1]}$-instances in $\cH_{\Gamma_{[i-1]}}$. \label{line:_H} 
            \Else
            \State Set $\tilde{X}_{\Gamma'} = 0$ for $\Gamma'\succeq \Gamma$. 
            \EndIf 
            \EndFor
        \EndFor
        \State Let $M\in \N^{D_{d,q}\times D_{d,q}}$ be the upper triangular matrix with $[M]_{\Gamma,\Gamma'} = \mu_{\Gamma,\Gamma'}$ for $\Gamma,\Gamma'\in \cD_{d,q}^*$.
        \State \Return $\widetilde{\cnt} = M^{-1} \tilde{X}$.
        \label{line:returnP}
        \EndProcedure
    \end{algorithmic}
\end{algorithm}

\paragraph{The final algorithm EstDisc*} The final algorithm \Cref{alg:ApproxP*} first chooses a list of error parameters (all smaller than the given one) and sample one of them uniformly at random. Then it invokes \Cref{alg:ApproxP} as a subroutine with the error parameter set as the sampled one.

\begin{algorithm}[ht]
    \caption{Approx Number of $q$-discs}
    \label{alg:ApproxP*}
    \begin{algorithmic}[1] 
        \Procedure{EstDisc*}{$G, n, d, q ,  \delta$} 
            \State Choose a series of error parameters $\delta_i = (\beta/\alpha)^{i-1} \delta$ for $1\leq i\leq 100D_{d,q}$, where $\alpha$ and $\beta$ are constants only dependent on $d$ and $q$.
            \State Choose an index $i$ from $1$ to $100D_{d,q}$ uniformly at random.
            \State \Return \textsc{EstDisc}$(G,n,d,q,\delta_i)$.
        \EndProcedure
    \end{algorithmic}
\end{algorithm}

We will prove the following performance guarantee of \Cref{alg:ApproxP*}.

\begin{restatable}{theorem}{algorithmfull}
\label{thm:disc*}    
Given an $n$-vertex $d$-bounded-degree digraph $G$, a radius $q$ and an error parameter $\delta \in (0,1)$, for sufficiently large $n$, \Cref{alg:ApproxP*} outputs a list of estimates $
\{\widetilde{\cnt}_\Gamma\}_{\Gamma\in \cD_{d,q}^*}$ satisfying
\[
\sum_{\Gamma\in \cD_{d,q}^*}\abs{\widetilde{\cnt}_{\Gamma} - \cnt_{\Gamma}} \leq \delta n,
\]
with probability at least $5/9$, using $O_{\delta,d,q}\left(n^{1/2-1/2(2^{m_{d,q}}-1)}\right)$ quantum queries.   
\end{restatable}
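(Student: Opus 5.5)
The proof splits into two parts. First, a guarantee for \textsc{EstDisc} (\Cref{alg:ApproxP}) on $\delta$-explicit graphs. Second, a parameter-filtering argument showing that the randomly chosen $\delta_i$ makes $G$ explicit with good probability.

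\textbf{The subroutine on explicit graphs.} I would first formalize $\delta$-explicitness. For each $\Gamma\in\cD_{d,q}^*$, the quantity $Y_\Gamma:=\sum_{\Gamma'\succeq\Gamma}\mu_{\Gamma,\Gamma'}\cnt_{\Gamma'}$ should be either at most $\alpha\delta n$ (``rare'', $\boldsymbol{\gamma}_\Gamma=0$) or at least $\beta\delta n$ (``numerous'', $\boldsymbol{\gamma}_\Gamma=1$), where $\alpha<\beta$ are constants depending only on $d,q$. I would then mirror the warm-up analysis of \Cref{sec:correctSd}, defining per-type events $\cE^H_\Gamma$, $\cE^{\tilde X}_\Gamma$, $\cE^X_\Gamma$, $\cE^S_\Gamma$ and $\cE^\eq_\Gamma$, processed in the linear-extension order.
\begin{itemize}
\item $\cE^H_\Gamma$ is the birthday-paradox event that sampled instances are pairwise far apart. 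It is proved as in \Cref{clm:E^T_i}, using $t_i=o(\sqrt{t_{i-1}})$ and bounded degree, so each instance has $O_{d,q}(1)$ nearby edges.
\item Conditioned on $\cE^H$, I would verify \Cref{eq:XGamma} exactly. The main ingredient is the automorphism-class identity \Cref{eq:fac2}. I would prove it by showing that every tuple in a class $[W]$ has the same number $\kappa_{[W],[W']}$ of children in $[W']$ (automorphisms act transitively within a class and commute with appending edges), and then double counting root-to-leaf paths.
\item Applying the binomial concentration \Cref{lem:B} to $S^{[W]}$, as in \Cref{clm:E^S_i}, gives \Cref{eq:_S}. Unrolling it yields $\cE^\eq_\Gamma$, namely \Cref{eq:Xeq}.
\end{itemize}

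\textbf{Truncation, the linear solve, and query cost.} For a numerous type, \Cref{eq:Xeq} gives $X_\Gamma=\Theta(t_{i-1})$. Then \Cref{cor:count} with $c^M_\Gamma\sqrt{n/t_{i-1}}$ queries yields $(1\pm\eps)$ accuracy, so the threshold test in line~\ref{line:0} passes. For a rare type, the additive counting error $O(\eps\underline{c^X_\Gamma}t_{i-1})$ keeps $\tilde X_\Gamma$ below the threshold, so truncation occurs. Truncating all $\Gamma'\succeq\Gamma$ is harmless, since each such $Y_{\Gamma'}\le Y_\Gamma$ up to constants, so those types are also rare and have $\cnt_{\Gamma'}=O(\alpha\delta n)$.

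With this in hand, each row of $M\,\boldsymbol{\cnt}$ is within $O((\eps+\eta)\mu n+\alpha\delta n)$ of $\tilde{\boldsymbol x}_\Gamma$. Because $M$ is a fixed upper-triangular integer matrix with positive diagonal, $\|M^{-1}\|_{1}$ is a constant $C_{d,q}$. Choosing $\eps_{d,q},\eta_{d,q},\alpha$ small relative to $C_{d,q}$ then makes the total $\ell_1$ error at most $\delta n$. A union bound over the $O(m_{d,q}D_{d,q})$ events, using the choice of $c^B$, shows that \textsc{EstDisc} succeeds with probability at least $2/3$ for large $n$.

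The query cost follows as in \Cref{sec:QCSd}, except that truncation now guarantees every executed Grover call has $X_\Gamma=\Omega(t_{i-1})$. This gives $\sum_i (t_i+1)\sqrt{n/t_{i-1}}=O(n^{1/2-1/2(2^{m_{d,q}}-1)})$. To obtain a worst-case bound, I would stop any run that exceeds a constant multiple of its expected cost and charge this to the failure probability via Markov's inequality.

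\textbf{Parameter filtering.} With $\delta_i=(\beta/\alpha)^{i-1}\delta$, the gray zones $(\alpha\delta_i n,\beta\delta_i n)$ are pairwise disjoint. Each of the $D_{d,q}$ values $Y_\Gamma$ therefore falls in the gray zone of at most one index. Among the $100D_{d,q}$ indices, at most $D_{d,q}$ are bad, so a uniform $i$ is good with probability at least $0.99$. Since every $\delta_i\le\delta$, the accuracy guarantee for $\delta_i$ implies accuracy $\delta$. The total success probability is then at least $0.99\cdot 2/3\ge 5/9$, and the query bound holds with the constant depending on the smallest $\delta_i$, which is $O_{\delta,d,q}(1)$.

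The main obstacle is establishing \Cref{eq:XGamma} and the recursion \Cref{eq:_S} rigorously. This requires showing that, on $\cE^H$, a marked edge extends exactly one collected instance in exactly the counted number of automorphism-inequivalent ways, so that no double counting occurs across different $H\in\cH_{\Gamma_{[i-1]}}$ or across equivalent tuples. I would handle it with the orbit-counting argument: the number of distinct extensions of an instance equals $\kappa_{[W_{i-1}],[W]}\,/\,(|[W]|\cdot|[W_{i-1}]|)$ times the orbit sizes.
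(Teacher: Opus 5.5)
Your proposal is correct and follows the paper's route essentially step for step: you analyze \textsc{EstDisc} on explicit graphs through the events $\cE^H,\cE^{\tilde X},\cE^X,\cE^S,\cE^\eq$ and the orbit identity \Cref{eq:fac2}, turn the row-wise bounds on $M\boldsymbol{\cnt}$ into an $\ell_1$ bound via $\|M^{-1}\|_1$, and then use geometrically spaced $\delta_i$ with disjoint gray zones to get success probability $0.99\cdot 2/3\ge 5/9$ (your Markov-style cap on the query count is a small extra that the paper leaves implicit). One slip to fix: with your labeling the rare threshold $\alpha\delta n$ lies below the numerous threshold $\beta\delta n$, so $\alpha<\beta$ and the sequence $(\beta/\alpha)^{i-1}\delta$ increases; to have every $\delta_i\le\delta$ you must take $\delta_i=(\alpha/\beta)^{i-1}\delta$ --- the paper's $\alpha,\beta$ play the opposite roles, which is why \Cref{alg:ApproxP*} writes $(\beta/\alpha)^{i-1}$.
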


\paragraph{Outline of the remaining subsections}
To prove \Cref{thm:disc*}, we begin by establishing the performance guarantee of \Cref{alg:ApproxP}, namely \Cref{thm:disc}.
In \Cref{sec:explicit}, we define the notion of explicitness for input graphs with respect to a given error parameter.
\Cref{sec:pfP} presents the proof of \Cref{thm:disc}, assuming \Cref{clm:E^X} and \Cref{clm:Eintersection}.
We then prove \Cref{thm:disc*} in \Cref{sec:pfP*}.
Finally, in \Cref{sssec:pfproperty}, we provide full proofs for the claims used in \Cref{sec:pfP}.

\subsection{Explicitness of input graph with respect to error parameter}
\label{sec:explicit}
Since we no longer assume conditions like \Cref{asmp:Omega(n)}, we must separately handle two cases for each $\Gamma \in \cD_{d,q}^*$: when $\Gamma$-instances are numerous (in which case we say $G$ is $\Gamma$-good), and when they are rare (we say $G$ is $\Gamma$-bad).
Due to estimate errors introduced by sampling (\textsc{Grover}) and counting (\textsc{Count}), it is not feasible to classify inputs based on a single explicit threshold.
Instead, we use two thresholds for each $\Gamma \in \cD_{d,q}^*$, which creates a gray area where $G$ is neither $\Gamma$-good nor $\Gamma$-bad.

\begin{definition}
\label{def:type}
    Given a $n$-vertex $d$-bounded-degree digraph $G$ and parameters $q,\delta$, define %
    $\boldsymbol{\gamma}\in\{1,0,*\}^{D_{d,q}}$ as its type vector where the $\Gamma$-entry is
    \[
       \boldsymbol{\gamma}_\Gamma = 
       \begin{cases}
           1, & \sum_{\Gamma'\succeq \Gamma}\mu_{\Gamma,\Gamma'}\cnt_{\Gamma'}\geq \alpha_\Gamma n;\\
           0, & \sum_{\Gamma'\succeq \Gamma}\mu_{\Gamma,\Gamma'}\cnt_{\Gamma'}\leq \beta_\Gamma n; \\
           *, &\text{otherwise}.
       \end{cases}
    \]
    The threshold values $\alpha_\Gamma,\beta_\Gamma$ are only dependent on $d,q$ and $\delta$, defined as
    \[
    \alpha_\Gamma = \frac{\underline{c^X_\Gamma}}{(1-\eps)^{i-1}} + i\mu_\Gamma \eta, \quad \beta_\Gamma = \frac{(1-2\eps)\underline{c^X_\Gamma}}{(1+\eps)^{i-1}} - i\mu_\Gamma \eta.
    \]
\end{definition}

\begin{definition}[$\delta$-explicit]
\label{def:explicit}
    Given a $n$-vertex $d$-bounded-degree digraph $G$ and parameters $q,\delta$, we say $G$ is $\delta$-explicit if $\gamma \in\{1,0\}^{D_{d,q}}$, and inexplicit if there exists $\Gamma\in\cD_{d,q}^*$ such that $\gamma_\Gamma = *$.
\end{definition}

\subsection{Analysis of \texorpdfstring{\Cref{alg:ApproxP}}{EstDisc}}
\label{sec:pfP}
In this section, we analyze \Cref{alg:ApproxP}, and prove the following performance guarantee. 

\begin{theorem}
\label{thm:disc}    
Given an $n$-vertex $d$-bounded-degree digraph $G$, a radius $q$ and an error parameter $\delta \in (0,1)$, if $\boldsymbol{\gamma}\in\{0,1\}^{D_{d,q}}$, then for sufficiently large $n$, \Cref{alg:ApproxP} outputs a list of estimates $
\{\widetilde{\cnt}_\Gamma\}_{\Gamma\in \cD_{d,q}^*}$ satisfying
\[
\sum_{\Gamma\in \cD_{d,q}^*}\abs{\widetilde{\cnt}_{\Gamma} - \cnt_{\Gamma}} \leq \delta n,
\]
with probability at least $2/3$, using $O_{\delta,d,q}\left(n^{1/2-1/2(2^{m_{d,q}}-1)}\right)$ quantum queries.
\end{theorem}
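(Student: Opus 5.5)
We follow the blueprint of the warm-up analysis of \Cref{alg:ApproxSd}, replacing the in-degree bookkeeping by the equivalence-class quantities $S^{[W]}_{\Gamma_{[j]},\Gamma'}$. For every $\Gamma\in\cD_{d,q}^*$ with $i$ edges we define events analogous to those of \Cref{sec:correctSd}:
\begin{itemize}
\item $\cE^H_\Gamma$: the $\Gamma$-instances in $\cH_\Gamma$ are pairwise far apart, at distance greater than $2q+2$;
\item $\cE^{\tilde X}_\Gamma$: $|\tilde X_\Gamma-X_\Gamma|\le \eps\underline{c^X_\Gamma}t_{i-1}$;
\item $\cE^X_\Gamma$: either $X_\Gamma\in[\underline{c^X_\Gamma}t_{i-1},\overline{c^X_\Gamma}t_{i-1}]$ when $\boldsymbol\gamma_\Gamma=1$, or $X_\Gamma$ lies below the threshold of Line~\ref{line:0} when $\boldsymbol\gamma_\Gamma=0$;
\item $\cE^S_\Gamma$: the concentration statement \eqref{eq:_S} for every $\Gamma'\succeq\Gamma$ and every class $[W]$;
\item $\cE^{\eq}_\Gamma$: the relation \eqref{eq:Xeq}.
\end{itemize}
We then prove the same chain of implications as in \Cref{fig:dep1}, processing types in the fixed linear-extension order.

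\textbf{Implications.} Each implication is proved as follows.
\begin{itemize}
\item $\cE^{\tilde X}_\Gamma$ given $\cE^X_\Gamma$ follows from \Cref{cor:count} with $M=c^M_\Gamma\sqrt{n/t_{i-1}}$. The choice of $c^M_\Gamma$ makes both error terms at most $\eps\underline{c^X_\Gamma}t_{i-1}/2$, since $\sqrt{X_\Gamma\cdot dn}/M\lesssim \eps\underline{c^X}t_{i-1}$ and $dn/M^2\lesssim\eps\underline{c^X}t_{i-1}$. The failure probability is $1/(200m_{d,q}D_{d,q})$ per type, by the choice of $c^B$.
\item $\cE^H_\Gamma$ is a birthday-paradox argument. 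We sample $\ell_\Gamma=\Theta(t_i)$ edges from $X_\Gamma=\Theta(t_{i-1})$ marked edges. Each marked edge has $O_{d,q}(1)$ marked edges within distance $O(q)$ of it, so the expected number of close pairs is $O(t_i^2/t_{i-1})=O(n^{-1/(2^{m}-1)})$.
\item Given $\cE^H$ on the prefixes, every marked edge extends exactly one collected instance. The count of edges extending an instance of class $[W_{i-1}]$ into class $[W]$ is then exactly $\kappa_{[W_{i-1}],[W]}$, which gives \eqref{eq:XGamma} deterministically. This step uses the automorphism invariance of $\kappa$ established in \Cref{clm:factor}.
\item $\cE^S_\Gamma$ follows because $S^{[W]}$ is binomial with mean $\Theta(t_i)$ or smaller. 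We use additive Chernoff with error $\eta t_i$, failing with probability $e^{-\Omega(\eta^2t_i)}$.
\item Unrolling \eqref{eq:_S} with \eqref{eq:-X} and collapsing the products via \eqref{eq:fac2} yields $\cE^{\eq}_\Gamma$.
\item Finally, $\cE^{\eq}_\Gamma$ combined with the definition of $\boldsymbol\gamma_\Gamma$ via $\alpha_\Gamma,\beta_\Gamma$ gives $\cE^X_\Gamma$, and shows that Line~\ref{line:0} takes the correct branch.
\end{itemize}

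\textbf{Rare types.} The $\boldsymbol\gamma_\Gamma=0$ case needs extra care, and I expect it to be the main obstacle. When $\tilde X_\Gamma$ falls below threshold, the algorithm sets $\tilde X_{\Gamma'}=0$ for all $\Gamma'\succeq\Gamma$. We must check that this is consistent, namely that $\boldsymbol\gamma_{\Gamma'}=0$ as well, or at least that $\sum\mu_{\Gamma,\Gamma'}\cnt_{\Gamma'}$ is small. This holds because $\mu_{\Gamma,\Gamma'}\ge1$ for $\Gamma\preceq\Gamma'$, so $\cnt_{\Gamma'}\le\beta_\Gamma n$. We also need the additive bounds to remain valid when counts are small. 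I would handle this by proving all statements with additive error $\eta t_{i-1}$ relative to the scale $t_{i-1}$, rather than multiplicative error. The threshold constants $\underline{c^X_\Gamma}$ are chosen proportional to $\eps$, so zeroed coordinates contribute only $O(\eps)n$ error.

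\textbf{Correctness.} After a union bound over $O_{d,q}(1)$ types, all events hold with probability at least $2/3$ for large $n$. On this event, \eqref{eq:crucial} gives $\|M\boldsymbol{\cnt}-\tilde{\boldsymbol x}\|_\infty\le O_{d,q}(\eps+\eta)n$. $M$ is upper triangular with positive integer diagonal, namely $\mu_{\Gamma,\Gamma}=|\Phi_\Gamma|$ up to ordering. Hence $\|M^{-1}\|_1$ is bounded by a constant $C_{d,q}$, and
\[
\|\widetilde{\boldsymbol\cnt}-\boldsymbol\cnt\|_1\le C_{d,q}D_{d,q}(\eps+\eta)n\le\delta n
\]
for suitable $\eps_{d,q},\eta_{d,q}$.

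\textbf{Query complexity.} Under $\cE^X_\Gamma$, every executed Grover call has $X_\Gamma=\Omega(t_{i-1})$ and so costs $O(\sqrt{dn/t_{i-1}})$. Truncation ensures that no Grover call is made on tiny marked sets. The total is therefore $\sum_i (t_i+1)\sqrt{n/t_{i-1}}=O(n^{1/2-1/2(2^{m_{d,q}}-1)})$, as in \Cref{sec:QCSd}. Since Grover's bound holds in expectation, a Markov cutoff at a constant multiple converts it to a worst-case bound at a small cost in success probability.
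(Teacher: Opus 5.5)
Your proposal is correct and follows essentially the same route as the paper. It uses the same event chain ($\cE^H,\cE^{\tilde X},\cE^X,\cE^S,\cE^{\eq}$), birthday-paradox and Hoeffding bounds, and unrolling via the tuple-tree identity of \Cref{clm:factor}; it handles truncated types through $1\le\mu_{\Gamma,\Gamma'}\le m_{d,q}!$ together with $\underline{c^X_\Gamma}\propto\eps$, and finishes by inverting the upper-triangular $M$ with $\|M^{-1}\|_1=O_{d,q}(1)$.

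One small correction: $\mu_{\Gamma,\Gamma}$ need not equal $|\Phi_\Gamma|$, because it counts every edge ordering whose prefixes realize the $\Gamma_{[j]}$, and there can be more of these than the automorphism orbit of the canonical ordering. This does not affect your argument, since all you use is $\mu_{\Gamma,\Gamma}\ge 1$, which holds because the canonical ordered tuple lies in $\cW_{\Gamma,\Gamma}$.
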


In the following, we begin in \Cref{sec:isotype} by introducing several new properties of isomorphism types.
In \Cref{sssec:def}, we define key notations and events used throughout the analysis.
Next, in \Cref{sssec:property}, we establish how these events relate to the behavior of the algorithm under different types of input graphs, with full proofs deferred to \Cref{sssec:pfproperty}.
Finally, in \Cref{sssec:pfDisc}, we use these events to prove the correctness and analyze the query complexity of \Cref{alg:ApproxP}.

\subsubsection{Properties of isomorphism types}
\label{sec:isotype}
Given two isomorphism types $\Gamma \preceq \Gamma'$ with $i$ and $i'$ edges respectively, %
{we define the set of \emph{vertex-and-$i$-edges tuples}
$\cW_{\Gamma,\Gamma'}$ as the collection of all sequences consisting of a vertex and $i$ edges that generate $\Gamma_{[j]}$ sequentially for each $j \leq i$, i.e., }
\[
\cW_{\Gamma,\Gamma'} := \{(v,e_1,\dots,e_i)\in \{\rt(\Gamma')\}\times\perm{E(\Gamma')}{i}:\forall j\in[i], \Gamma'[\{{v}, e_1,\dots,e_j\}] \equiv \Gamma_{[j]}\}.
\]

Then we further define
\[
\mu_{\Gamma,\Gamma'} := \abs{\cW_{\Gamma,\Gamma'}}.
\]

We denote $W\times e$ as the vertex-and-$(i+1)$-edges tuple that appends $e$ at the end of $W$, i.e., if $W=(v,e_1,\dots,e_i)$, then $W\times e = (v,e_1,\dots,e_i,e)$.
Place tuples in $\cW_{\Gamma_{[j]},\Gamma'}$ (abbreviated as $\cW_j$ when context is clear) at level $j$.
For any tuples $W_1,W_2$ located at adjacent levels, connect them if there exists an edge $e$ such that $W_1\times e =W_2$.
Observe that there is exactly one tuple at level $0$, and that for any tuple at level $j>0$, there exists exactly one tuple at level $j-1$ connected to it.
Thus, under these rules, the tuples form a tree structure. (See \Cref{fig:tree} as an example.) 

Then we partition $\cW_{\Gamma,\Gamma'}$ into some equivalence classes, for $W\in \cW_{\Gamma,\Gamma'}$, it belongs to
\[
[W] = \{\phi(W):\phi\in\Phi_{\Gamma'}\}, \quad \text{where } \phi(W)=(\phi(v),\phi(e_1)\dots,\phi(e_i)) \text{ for } W=(v,e_1,\dots,e_i).
\]
{That is, $[W]$ contains all the vertex-and-$i$-edges tuples that can be obtained from $W$ via some automorphism $\phi$ in $\Phi_{\Gamma'}$.}
{
Equivalence classes are extremely useful, since each $\Gamma$‑instance can be assigned to a unique equivalence class, thereby enabling us to compute exactly how much over‑counting is incurred.
}

Define $\kappa_{[W_1],[W_2]}$ to be the number of edges that can extend a $\Gamma_1$-instance in the equivalence class \([W_1]\) to some $\Gamma_2$-instance in \([W_2]\), i.e., %
\[
\kappa_{[W_1],[W_2]} = \abs{\{e\in E(\Gamma'): W_1\times e\in [W_2]\}}.
\]
Note that $\kappa_{[W_1],[W_2]}>0$ only when $W_2$ contains exactly one more edge than $W_1$, as otherwise, there is no edge $e$ with $W_1\times e\in[W_2]$. And $\kappa_{[W_1],[W_2]}$ is well-defined; {that is, it depends only on the equivalence classes of $[W_1],[W_2]$, and not on the specific representatives $W_1$ and $W_2$ themselves.} %
This holds because for any $[W_1]=[W_1']$, there exists a bijection $\phi\in\Phi_{\Gamma'}$ such that $\phi(W_1)=W_1'$. Then, for each edge $e$ such that $W_1\times e\in[W_2]$, there is a unique corresponding edge $\phi(e)$ such that $W_1'\times \phi(e)=\phi(W_1\times e)\in [W_2]$.  This implies that $\kappa_{[W_1],[W_2]} \leq \kappa_{[W_1'],[W_2]}$.
By a symmetric argument, we also have $\kappa_{[W_1],[W_2]} \geq \kappa_{[W_1'],[W_2]}$, and hence $\kappa_{[W_1],[W_2]} = \kappa_{[W_1'],[W_2]}$.

Note that in a tree, the number of paths from root to leaves equals the number of leaves, using which we derive the following claim. 

\begin{claim}
\label{clm:factor}
    For each $\Gamma\in\cD_{d,q}^i$ and $\Gamma'\succeq \Gamma$, 
    \begin{equation}
    \label{eq:factor}
        \sum_{W_0 \in \cW_0} \dots \sum_{W_i \in \cW_i}
    \prod_{j=1}^{i} \frac{\kappa_{[W_{j-1}],[W_{j}]}}{\abs{[W_j]}} = \mu_{\Gamma,\Gamma'}.
    \end{equation}
\end{claim}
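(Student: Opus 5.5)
The plan is to show that, for every leaf $W_i\in\cW_i$, the factors along its root-to-leaf path multiply to exactly $1$. Summing these contributions over leaves then gives $|\cW_i| = \mu_{\Gamma,\Gamma'}$.

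First I will interpret the nested sum correctly. The factor $\kappa_{[W_{j-1}],[W_j]}$ is nonzero only when some edge $e$ satisfies $W_{j-1}\times e\in[W_j]$. So the sum ranges over chains of classes rather than tree paths, and it should be rewritten as a sum over leaves $W_i$ with $W_{j-1}$ determined by the tree. Concretely, I will show that for fixed $W_j$ the inner sum over $W_{j-1}\in\cW_{j-1}$ contributes exactly one factor. The first step toward this is a counting identity: for any $W_{j-1}$ and any class $[W_j]$, the tree edges between $[W_{j-1}]$ and $[W_j]$ are counted from both sides,
\[
\abs{[W_{j-1}]}\cdot \kappa_{[W_{j-1}],[W_j]} \;=\; \#\{(U,U')\in [W_{j-1}]\times[W_j] : U' = U\times e \text{ for some } e\} \;=\; \abs{[W_j]}\cdot \#\{U'\text{'s parent in }[W_{j-1}]\}.
\]
Every $U'\in[W_j]$ has a unique parent (delete its last edge). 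Automorphisms commute with truncation, so all parents of elements of $[W_j]$ lie in a single class. Hence the right-hand count equals $\abs{[W_j]}$ when $[W_{j-1}]$ is that parent class, and $0$ otherwise. This gives $\kappa_{[W_{j-1}],[W_j]}\,|[W_{j-1}]| = |[W_j]|$ for the parent class.

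Second, I will reorganize the sum. Summing $\kappa_{[W_{j-1}],[W_j]}/|[W_j]|$ over all $W_{j-1}\in\cW_{j-1}$ (the earlier levels being held fixed) picks out only the members of the parent class, each contributing $\kappa/|[W_j]| = 1/|[W_{j-1}]|$. Summed over that class, this gives exactly $1$. I plan to evaluate the nested sum from the top level downward, or equivalently by telescoping the product along a chain: $\prod_j \kappa_{[W_{j-1}],[W_j]}/|[W_j]| = \prod_j 1/|[W_{j-1}]|\cdot(\ldots)$. The goal is that each chain of classes $[W_0]\to\cdots\to[W_i]$ with every consecutive pair compatible contributes $|[W_i]|^{-1}\prod_{j=0}^{i-1}|[W_j]|^{-1}\cdot\prod_j \kappa$, times the multiplicity $\prod_j |[W_j]|$ of representative choices, which is $1$ after simplification per leaf class member. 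Summing over leaves then yields $|\cW_i|$.

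The main obstacle is this bookkeeping. I need to check whether the formula intends $W_{j-1}$ and $W_j$ to be independent representatives or tree-adjacent ones, and make the telescoping exact under that reading. I would handle it by induction on $i$: define $F_j(W_j)$ as the partial sum up to level $j$, prove $F_j(W_j)=|[W_j]|^{-1}\cdot|[W_j]| = 1$ per element using the identity above, and conclude that the total equals $\sum_{W_i}1 = \mu_{\Gamma,\Gamma'}$.
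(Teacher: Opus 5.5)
Your argument is correct, and it reaches the identity by a somewhat different route from the paper.

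\textbf{How the paper does it.} The paper never evaluates $\kappa_{[W_{j-1}],[W_j]}$. It linearizes it as $\sum_{W_j'\in[W_j]}\kappa_{W_{j-1},W_j'}$ using tuple-level indicators. It then doubles every summation variable into a primed and an unprimed copy, and uses uniqueness of the tuple-level parent $\mathrm{trunc}(W_{j+1}')$ to eliminate the unprimed copies. Finally it collapses $\sum_{W_j'}\I\bigl[[\mathrm{trunc}(W_{j+1}')]=[W_j']\bigr]/\abs{[W_j']}=1$ level by level.

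\textbf{How you do it.} You compute $\kappa$ in closed form by double counting tree edges between two orbits, which gives $\kappa_{[W_{j-1}],[W_j]}/\abs{[W_j]}=\I\bigl[[W_{j-1}]=[\mathrm{trunc}(W_j)]\bigr]/\abs{[W_{j-1}]}$. After that, your induction $F_j\equiv 1$ performs the same level-by-level collapse as the paper.

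\textbf{Comparison.} Your route needs two extra facts, both easy. Automorphisms commute with truncation, so all parents of a class lie in one class. And $\kappa$ is well defined across representatives, which the paper establishes; the paper's own expansion only uses the definition of $\kappa$ at the very representative being summed. In exchange you get things the paper does not make explicit. First, $\kappa$ is a ratio of orbit sizes and vanishes off the parent class. Second, there is a clean chain picture: each leaf class has a unique compatible chain of classes, and its $\kappa$'s telescope to $\abs{[W_i]}$. Third, you avoid the primed/unprimed bookkeeping altogether.

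\textbf{Two things to fix in the write-up.}

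First, stop hedging on the reading. The sums in the claim are independent sums over all of $\cW_j$, and this is the only reading under which the identity holds. If you restricted to tree paths ($W_{j-1}=\mathrm{trunc}(W_j)$), your own identity shows each leaf would contribute $\prod_{l=0}^{i-1}\abs{[W_l]}^{-1}$ rather than $1$. For example, take $\Gamma=\Gamma'$ to be a $2$-star into the root. The tree-path reading gives $1$, whereas $\mu_{\Gamma,\Gamma'}=2$. So drop the sentence saying the sum ``should be rewritten as a sum over leaves $W_i$ with $W_{j-1}$ determined by the tree''.

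Second, replace ``the earlier levels being held fixed'' with ``the earlier levels already summed out''. The level-$(j-1)$ factor also depends on $W_{j-1}$. What removes it is the recursion $F_j(W_j)=\sum_{W_{j-1}}F_{j-1}(W_{j-1})\,\kappa_{[W_{j-1}],[W_j]}/\abs{[W_j]}$ together with the induction hypothesis $F_{j-1}\equiv 1$.
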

\begin{claimproof}
    For $j\in[i]$, let $\kappa_{W_{j-1},W_{j}}$ be the indicator of event:  $\exists e, W_{j-1}\times e = W_{j}$. 
    Notice that
    \[
        \kappa_{[W_{j-1}],[W_{j}]} = \sum_{W_{j}'\in[W_{j}]} \kappa_{W_{j-1},W_{j}'}.
    \]

    Then, the left hand side of \Cref{eq:factor} equals
    \begin{align}
    \nonumber&\quad \sum_{W_0 \in \cW_0} \dots \sum_{W_i \in \cW_i}
    \prod_{j=1}^{i} \frac{\kappa_{[W_{j-1}],[W_{j}]}}{\abs{[W_j]}} \\
    \nonumber&= \sum_{W_0 \in \cW_0} \dots \sum_{W_i \in \cW_i}
    \prod_{j=1}^{i} \left(\frac{1}{\abs{[W_j]}} \sum_{W_j'\in[W_j]} \kappa_{W_{j-1},W_j'} \right) \\
    \nonumber&= \sum_{W_0 \in \cW_0} \dots \sum_{W_i \in \cW_i}\sum_{W_1'\in[W_1]}\dots \sum_{W_i'\in[W_i]}\kappa_{W_0,W_1'}\dots\kappa_{W_{i-1},W_{i}'} \prod_{j=1}^{i}\frac{1}{\abs{[W_j]}} \\
    \nonumber&= \sum_{W_0\in \cW_0} \sum_{W_1,W_1' \in \cW_1} \dots \sum_{W_i,W_i' \in \cW_i} \kappa_{W_0,W_1'}\dots\kappa_{W_{i-1},W_{i}'} \prod_{j=1}^{i}\frac{\I\left[[W_j]=[W_j']\right]}{\abs{[W_j']}} \\
    &= \sum_{W_0\in \cW_0} \sum_{W_1,W_1' \in \cW_1} \dots \sum_{W_i' \in \cW_i} \kappa_{W_0,W_1'}\dots\kappa_{W_{i-1},W_{i}'} \prod_{j=1}^{i-1}\frac{\I\left[[W_j]=[W_j']\right]}{\abs{[W_j']}}. \label{eq:factor2}
    \end{align}

    Notice that for any $j\in[i]$ and fixed $W_j=(v,e_1,\dots,e_j)\in\cW_j$, there exists a unique element $W_{j-1}=(v,e_1,\dots,e_{j-1})\in\cW_{j-1}$ such that $\kappa_{W_{j-1},W_j} = 1$, we denote it as $\mathrm{trunc}(W_j)$.
    Then \Cref{eq:factor2} equals
    \begin{align*}
        &\quad \sum_{W_1' \in \cW_1} \dots \sum_{W_i' \in \cW_i} \prod_{j=1}^{i-1}\frac{\I\left[[\mathrm{trunc}(W_{j+1}')]=[W_j']\right]}{\abs{[W_j']}} \\
        &= \sum_{W_2' \in \cW_2} \dots \sum_{W_i' \in \cW_i} \prod_{j=2}^{i-1}\frac{\I\left[[\mathrm{trunc}(W_{j+1}')]=[W_j']\right]}{\abs{[W_j']}} \left(\sum_{W_1' \in \cW_1}\frac{\I\left[[\mathrm{trunc}(W_{2}')]=[W_1']\right]}{\abs{[W_1']}}\right) \\
        &= \sum_{W_2' \in \cW_2} \dots \sum_{W_i' \in \cW_i} \prod_{j=2}^{i-1}\frac{\I\left[[\mathrm{trunc}(W_{j+1}')]=[W_j']\right]}{\abs{[W_j']}} \\
        & = \dots \\
        & = \sum_{W_i' \in \cW_i}1 = \abs{\cW_i} = \mu_{\Gamma,\Gamma'}.
    \end{align*}

    That completes the proof.
\end{claimproof}

\subsubsection{Useful definitions}
\label{sssec:def}
In Line \ref{line:_H}, we get a subgraph multiset $\cH_{\Gamma}$ %
consisting of several $\Gamma$-instances, now we partition it according to their actual type, i.e., the isomorphism types of $q$-discs rooted at their roots.
Formally, for each $\Gamma' \succeq\Gamma$ we denote
\[
\cH_{\Gamma,\Gamma'} = \{H\in\cH_\Gamma: \disc_q(\rt(H)) \equiv \Gamma'\},
\quad S_{\Gamma,\Gamma'}=\abs{\Supp{\cH_{\Gamma,\Gamma'}}}.
\]
Moreover, we partition $\cH_{\Gamma,\Gamma'}$ according to the equivalence classes of $\cW_{\Gamma,\Gamma'}$.
Formally, for each $W\in \cW_{\Gamma,\Gamma'}$ we denote
\begin{equation}
\label{eq:defS}
    \cH^{[W]}_{\Gamma,\Gamma'} = \{H\in\cH_{\Gamma,\Gamma'}:\exists \phi, \disc_q(\rt(H)) \equiv_\phi \Gamma' \text{ and } H\equiv_{\phi}\Gamma'[W]\},
\quad S^{[W]}_{\Gamma,\Gamma'}=\abs{\Supp{\cH^{[W]}_{\Gamma,\Gamma'}}}.
\end{equation}

Notice that if there exists $H\in \cH^{[W_1]}_{\Gamma,\Gamma'} \cap \cH^{[W_2]}_{\Gamma,\Gamma'}$, it is straightforward that there exists an automorphism $\phi \in \Phi_{\Gamma'}$ such that $\phi(W_1)=W_2$, which implies $[W_1]= [W_2]$.
Therefore, the definition of $\cH^{[W]}_{\Gamma,\Gamma'}$ is well-defined.

For each $H\in\cH_\Gamma$, since $H\equiv \Gamma$, the $q$-disc rooted at $\rt(H)$ satisfies $\Gamma\preceq\disc_q(\rt(H))$, thus we only consider $S_{\Gamma,\Gamma'}$ for $\Gamma \preceq \Gamma'$.
We abuse notation for $\Gamma=\Gamma_0$, that means we denote $S_{\Gamma_0,\Gamma'} = S_{\Gamma_0,\Gamma'}^{[W_0]} = \cnt_{\Gamma'}$.

\begin{definition}
\label{def:_events}
For each $1\leq i\leq m_{d,q}$ and $\Gamma\in\cD_{d,q}^i$, 
we define the following events. 

\begin{itemize}
\item $\cE^H_\Gamma: \dist(\rt(H_1),\rt(H_2))\geq 2q$ for $H_1\neq H_2\in \cH_\Gamma$.
This event ensures that $\Gamma$-instances in $\cH_\Gamma$ %
are far apart from one another.
\item $\cE^{\tilde{X}}_\Gamma: |\tilde{X}_\Gamma - X_\Gamma| \leq \eps \underline{c^X_\Gamma} t_{i-1}$. 
This event ensures that the relative error of the estimate $\tilde{X}_\Gamma$ is at most $\eps \underline{c^X_\Gamma} t_{i-1}$.
\item $\cE^{X<}_\Gamma: X_\Gamma \leq \overline{c^X_\Gamma} t_{i-1}$. This event says that $X_\Gamma$ is upper bounded by a constant factor of $t_{i-1}$. 
\item $\cE^{X>}_\Gamma: X_\Gamma \geq \underline{c^X_\Gamma} t_{i-1}$. This event says that $X_\Gamma$ is lower bounded by a constant factor of $t_{i-1}$. 
\item $\cE^S_\Gamma : S^{[W]}_{\Gamma,\Gamma'} \in (1\pm\eps)\frac{t_i}{t_{i-1}} \sum_{W_{i-1}}\frac{\kappa_{[W_{i-1}],[W]}}{\abs{[W_{i-1}]}} S^{[W_{i-1}]}_{\Gamma_{[i-1]},\Gamma'} \pm\eta t_i \quad\forall \Gamma'\succeq \Gamma \text{ and } W\in \cW_{\Gamma,\Gamma'}$. 
This event says that $S^{[W]}_{\Gamma,\Gamma'}$ can be bounded via $S^{[W_{i-1}]}_{\Gamma_{[i-1]},\Gamma'}$.
\item $\cE^\eq_\Gamma : X_\Gamma \in (1\pm\eps)^{i-1} \left(\frac{t_{i-1}}{t_0} \sum_{\Gamma'\succeq \Gamma} \mu_{\Gamma,\Gamma'} \cnt_{\Gamma'} \pm i\mu_{\Gamma} \eta t_{i-1}\right)$.
This event gives a bound between $X_\Gamma$ and the estimate objects $\{\cnt_{\Gamma'}\}_{\Gamma'\succeq\Gamma}$.
\item $\cE^\pass_\Gamma: \tilde{X}_\Gamma \geq (1-\eps)\underline{c^X_\Gamma }t_{i-1}$. This event states that the algorithm will pass the conditional judgment at Line \ref{line:0}.
\end{itemize}    
\end{definition}

For convenience, we will denote $\cE^X_\Gamma = \cE^{X<}_\Gamma \cap \cE^{X>}_\Gamma$, and we will call it: $X_\Gamma$ is $t_{i-1}$-bounded.
And we will also denote $\cE_\Gamma:=\cE^H_\Gamma \cap \cE^{\tilde{X}}_\Gamma \cap \cE^X_\Gamma \cap \cE^S_\Gamma \cap \cE^\eq_\Gamma$ and $\cF_\Gamma := \cap_{j=1}^i \cE_{\Gamma_{[i]}}$.
Specially, we will denote  $\cE_{\Gamma_{[0]}}$ and $\cF_{\Gamma_{[0]}}$ as the empty event.

\begin{figure}
    \centering
    \includegraphics[width=1.0\linewidth]{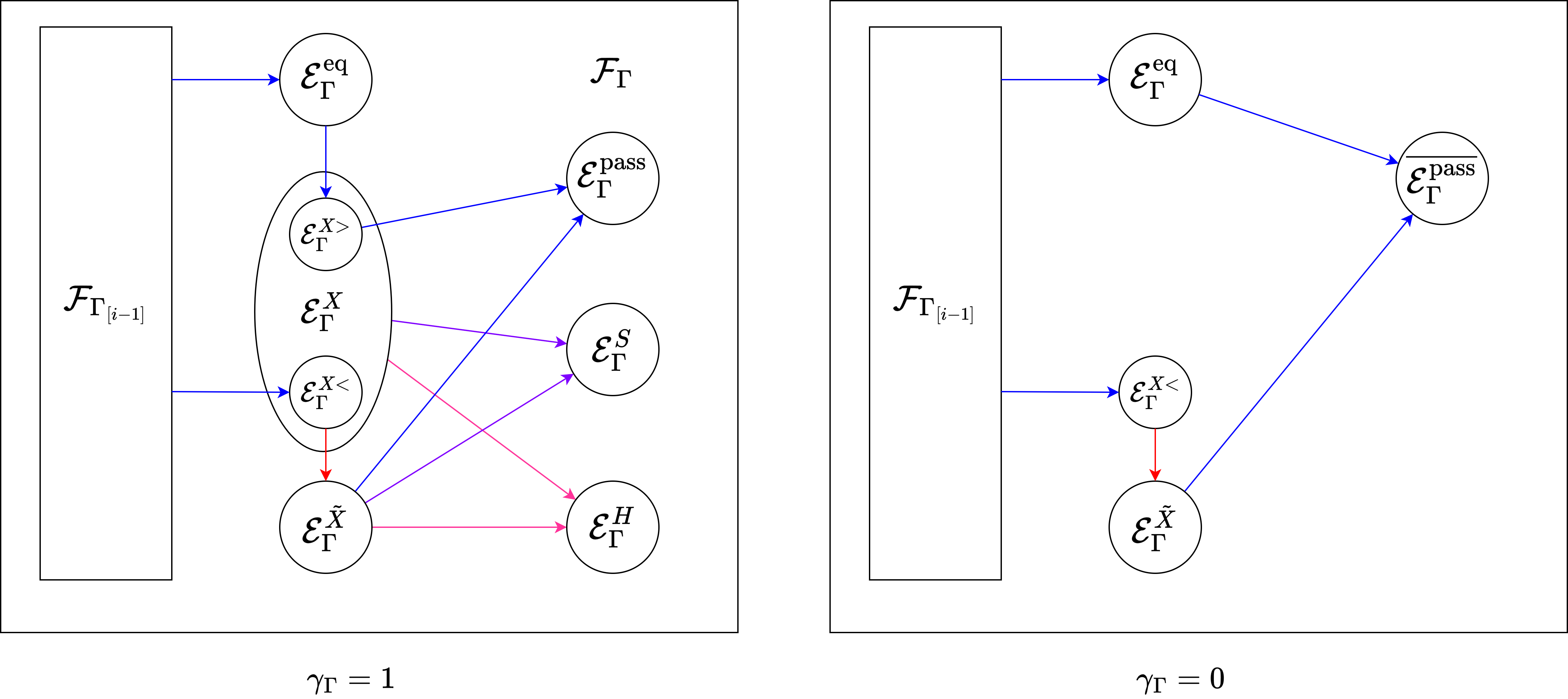}
    
    \caption{Illustration on how events depend on each other defined in \Cref{def:_events} when $\boldsymbol{\gamma}_\Gamma=1$ and $\boldsymbol{\gamma}_\Gamma=0$. The blue, red, purple and pink arrows mean event holds with probability at least $1$, $1-1/100m_{d,q}D_{d,q}$, $1 - c^S_\Gamma n^{-(2^{m_{d,q}-i}-1)/(2^{m_{d,q}}-1)}$ and $1 - c^H_\Gamma n^{-1/(2^{m_{d,q}}-1)}$ respectively.}
    \label{fig:dep2}
\end{figure}

\subsubsection{Some central properties}
\label{sssec:property}
We prove that if $G$ is $\Gamma$-good, then the algorithm proceeds normally at the $\Gamma$-iteration and all events defined in \Cref{def:_events} occur with high probability. If $G$ is $\Gamma$-bad, the $\Gamma$-iteration is truncated and only parts of events occur. %
See \Cref{fig:dep2} as an illustration for \Cref{clm:E^X}.

\begin{restatable}{claim}{EventEX}
\label{clm:E^X}
    For each  $i \in [m_{d,q}]$ and $\Gamma\in\cD_{d,q}^i$, conditioned on event $\cF_{\Gamma_{[i-1]}}$, we have
    \begin{itemize}
        \item $\cE^\eq_\Gamma$ and $\cE^{X<}_\Gamma$ always hold;
        \item $\cE^{\tilde{X}}_\Gamma$ holds with probability at least $1-1/100m_{d,q}D_{d,q}$;
        \item if $\boldsymbol{\gamma}_{\Gamma} = 1$, then
        \begin{itemize}
            \item $\cE^{X>}_\Gamma$ always holds;
            \item if $\cE^{\tilde{X}}_\Gamma$ also holds, then $\cE^\pass_\Gamma$ will always hold;
            \item if $\cE^{\tilde{X}}_\Gamma$ also holds, then $\cE^S_\Gamma$ will hold with probability at least $1 - c^S_\Gamma n^{-(2^{m_{d,q}-i}-1)/(2^{m_{d,q}}-1)}$;
            \item if $\cE^{\tilde{X}}_\Gamma$ also holds, then $\cE^H_\Gamma$ will hold with probability at least $1- c^H_\Gamma n^{-1/(2^{m_{d,q}}-1)}$.
        \end{itemize}
        \item if $\boldsymbol{\gamma}_{\Gamma} = 0$, then 
        \begin{itemize}
            \item if $\cE^{\tilde{X}}_\Gamma$ also holds, then $\cE^\pass_\Gamma$ will never hold.
        \end{itemize}
    \end{itemize}
\end{restatable}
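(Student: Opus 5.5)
I will prove each bullet of \Cref{clm:E^X} by conditioning on $\cF_{\Gamma_{[i-1]}}$. This event gives, for every prefix $\Gamma_{[j]}$ with $j<i$, the events $\cE^H,\cE^{\tilde X},\cE^X,\cE^S,\cE^\eq$.

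\textbf{Structural bullets ($\cE^\eq_\Gamma$ and $\cE^{X<}_\Gamma$).} I first express $X_\Gamma$ exactly as in \Cref{eq:XGamma}. By $\cE^H_{\Gamma_{[i-1]}}$ the collected $\Gamma_{[i-1]}$-instances have roots at pairwise distance at least $2q$. Hence an edge $e$ can complete at most one instance, and the marked edges split as a disjoint union over instances. Take an instance $H\in\cH^{[W_{i-1}]}_{\Gamma_{[i-1]},\Gamma'}$, and fix the isomorphism $\phi$ from its true $q$-disc to $\Gamma'$. The edges completing $H$ into a $\Gamma$-instance correspond bijectively to edges $e$ of $\Gamma'$ with $W_{i-1}\times e\in\cW_{\Gamma,\Gamma'}$. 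Grouping these by the class $[W]$ gives $\kappa_{[W_{i-1}],[W]}$ edges per class. Well-definedness of $\kappa$ was shown above.

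Summing over instances and converting sums over classes into sums over tuples, with the $1/|[W]|$ normalizations, yields \Cref{eq:XGamma}. I then substitute the bound from $\cE^S_{\Gamma_{[i-1]}}$ for $S^{[W_{i-1}]}_{\Gamma_{[i-1]},\Gamma'}$ and recurse down to $S^{[W_0]}=\cnt_{\Gamma'}$. Each level contributes a factor $(1\pm\eps)t_j/t_{j-1}$ plus an additive $\pm\eta t_j$. The product of ratios telescopes to $t_{i-1}/t_0$. The accumulated additive errors are bounded by $i\mu_\Gamma\eta t_{i-1}$, using $t_j\le t_{i-1}$ after rescaling and a crude bound on the $\kappa/|[W]|$ coefficients by $\mu_\Gamma$. \Cref{clm:factor} then collapses the nested coefficient to $\mu_{\Gamma,\Gamma'}$, which gives $\cE^\eq_\Gamma$.

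For $\cE^{X<}_\Gamma$ I argue directly rather than through $\cE^\eq$. Every marked edge is attached to some instance in $\cH_{\Gamma_{[i-1]}}$, and each instance admits at most $m_{d,q}$ extending edges. The number of instances is $\ell_{\Gamma_{[i-1]}}=t_{i-1}\tilde X_{\Gamma_{[i-1]}}/t_{i-2}$. Combining $\cE^{\tilde X}$ and $\cE^{X<}$ at level $i-1$ gives $\tilde X_{\Gamma_{[i-1]}}\le(\overline{c^X_{\Gamma_{[i-1]}}}+\eps\underline{c^X_{\Gamma_{[i-1]}}})t_{i-2}$. This matches the recursive definition of $\overline{c^X_\Gamma}$. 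The base case $i=1$ is $X\le dn=d t_0$.

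\textbf{Counting bullet ($\cE^{\tilde X}_\Gamma$).} This follows from \Cref{cor:count} with $N=dn$, $t=X_\Gamma\le\overline{c^X_\Gamma}t_{i-1}$, and $M=c^M_\Gamma\sqrt{n/t_{i-1}}$. The choice of $c^M_\Gamma$ makes both error terms at most $\tfrac12\eps\underline{c^X_\Gamma}t_{i-1}$. The factor $c^B$ boosts the success probability to $1-1/(100m_{d,q}D_{d,q})$.

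\textbf{Good case ($\boldsymbol\gamma_\Gamma=1$).} $\cE^\eq_\Gamma$ together with $\sum\mu\cnt\ge\alpha_\Gamma n$ gives $X_\Gamma\ge(1-\eps)^{i-1}(\alpha_\Gamma-i\mu_\Gamma\eta)t_{i-1}=\underline{c^X_\Gamma}t_{i-1}$, using $t_0=n$. This is exactly why $\alpha_\Gamma$ was chosen as it was. Then $\cE^{\tilde X}_\Gamma$ gives $\tilde X_\Gamma\ge(1-\eps)\underline{c^X_\Gamma}t_{i-1}$, which is $\cE^\pass_\Gamma$.

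For $\cE^S_\Gamma$, I argue as in \Cref{clm:E^S_i}. The $\ell_\Gamma$ Grover samples are i.i.d.\ uniform on the $X_\Gamma$ marked edges, and $\cE^H_{\Gamma_{[i-1]}}$ makes the extensions land in distinct classes correctly. Thus, on $\cE^H_\Gamma$, $S^{[W]}_{\Gamma,\Gamma'}$ is binomial with mean $\ell_\Gamma\cdot(\text{extending edges})/X_\Gamma$, and $\ell_\Gamma/X_\Gamma\in(1\pm\eps)t_i/t_{i-1}$. I will use a Chernoff bound with additive deviation $\eta t_i$, so that small means are handled. This gives failure probability $\exp(-\Omega(\eta^2 t_i))\le c/t_i$, and a union bound over the constantly many $(\Gamma',W)$ finishes.

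For $\cE^H_\Gamma$, I use the birthday bound. The roots of $\cH_\Gamma$ are determined by the roots of $\cH_{\Gamma_{[i-1]}}$, which are already $2q$-separated for $i\ge2$; only collisions among the new samples matter. Two samples pick edges near the same root, or within distance $2q$ of one another, with probability $O(1/X_\Gamma)=O(1/t_{i-1})$, since each root has $O_{d,q}(1)$ edges within distance $O(q)$ and $X_\Gamma=\Theta(t_{i-1})$. Over $\ell_\Gamma^2=O(t_i^2)$ pairs the total is $O(t_i^2/t_{i-1})=O(n^{-1/(2^{m}-1)})$, by the exponent arithmetic $2(2^{m-i}-1)-(2^{m-i+1}-1)=-1$. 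For $i=1$ the same computation applies with $X=\Theta(n)$.

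\textbf{Bad case ($\boldsymbol\gamma_\Gamma=0$).} $\cE^\eq_\Gamma$ gives $X_\Gamma\le(1+\eps)^{i-1}(\beta_\Gamma+i\mu_\Gamma\eta)t_{i-1}=(1-2\eps)\underline{c^X_\Gamma}t_{i-1}$. Adding the counting error from $\cE^{\tilde X}_\Gamma$ gives $\tilde X_\Gamma\le(1-\eps)\underline{c^X_\Gamma}t_{i-1}$. I need this inequality to be strict, so I will adjust the threshold comparison, or else treat the boundary case as measure zero.

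I expect the main obstacle to be $\cE^\eq_\Gamma$, specifically the careful bookkeeping. One must justify the exact decomposition \Cref{eq:XGamma} via the separation event. One must also track how the multiplicative $(1\pm\eps)$ and additive $\eta t_j$ errors propagate through the nested sums so that they match the stated form.
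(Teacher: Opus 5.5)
Your proposal is correct and follows essentially the same route as the paper. The paper proves this claim by invoking the sub-claims you reconstruct: the exact decomposition of $X_\Gamma$ with the $\cE^S$ recursion and \Cref{clm:factor} for $\cE^\eq_\Gamma$, the $m_{d,q}\ell_{\Gamma_{[i-1]}}$ bound of \Cref{clm:XGammai}, \Cref{cor:count}, Hoeffding with additive slack $\eta t_i$, and the birthday bound. It then does the same $\alpha_\Gamma$/$\beta_\Gamma$ threshold computations.

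On the boundary case you flag, the paper's proof simply reads $\boldsymbol{\gamma}_\Gamma=0$ as the strict inequality $\sum_{\Gamma'\succeq\Gamma}\mu_{\Gamma,\Gamma'}\cnt_{\Gamma'}<\beta_\Gamma n$. Neither of your suggested fixes works as stated. Making the algorithm's comparison strict breaks the good case at equality. A measure-zero argument is also unavailable, because these quantities are discrete.
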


{
For each input graph $G$, we do not need to consider the $\Gamma'$-iteration if $G$ is $\Gamma$-bad for some $\Gamma\preceq\Gamma'$ because of the truncation operation in Line \ref{line:0}.
That means it is sufficient to handle the cases that $G$ is $\Gamma$-good or $\Gamma$-bad when $G$ is $\Gamma_{[j]}$-good for each $j\leq i-1$.
}

\begin{restatable}{claim}{EventIntersection}\label{clm:Eintersection}
    For each $\Gamma\in\cD_{d,q}^i$, when $n$ is sufficiently large and $\boldsymbol{\gamma}_{\Gamma_{[1]}} = \dots = \boldsymbol{\gamma}_{\Gamma_{[i-1]}} = 1$, we have
    \begin{itemize}
        \item if $\boldsymbol{\gamma}_{\Gamma} = 1$ 
        then
        \[
        \Pr\left[\cap_{j=1}^i \left(\cE^\eq_{\Gamma_{[j]}} \cap \cE^X_{\Gamma_{[j]}} \cap \cE^{\tilde{X}}_{\Gamma_{[j]}}\right)\right] \geq 1-1/10D_{d,q}.\]
        \item if $\boldsymbol{\gamma}_{\Gamma} = 0$, then
        \[
        \Pr\left[\cap_{j=1}^{i-1} \left(\cE^\eq_{\Gamma_{[j]}} \cap \cE^X_{\Gamma_{[j]}} \cap \cE^{\tilde{X}}_{\Gamma_{[j]}}\right) \cap \cE^\eq_{\Gamma_{[i]}}\right] \geq 1-1/10D_{d,q}.\]
    \end{itemize}
\end{restatable}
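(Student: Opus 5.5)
The plan is an induction on $i$ along the prefix chain $\Gamma_{[1]},\dots,\Gamma_{[i]}$, using \Cref{clm:E^X} as the one-step transition and a union bound over the failure probabilities it gives. The prefixes $\Gamma_{[1]},\dots,\Gamma_{[i-1]}$ are all good by hypothesis. The first step is to show that, for every $j\le i-1$,
\[
\Pr\left[\overline{\cF_{\Gamma_{[j]}}}\right]\le \sum_{j'=1}^{j}\left(\frac{1}{100m_{d,q}D_{d,q}}+c^S_{\Gamma_{[j']}}n^{-(2^{m_{d,q}-j'}-1)/(2^{m_{d,q}}-1)}+c^H_{\Gamma_{[j']}}n^{-1/(2^{m_{d,q}}-1)}\right).
\]
The inductive step runs as follows. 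Conditioned on $\cF_{\Gamma_{[j-1]}}$, \Cref{clm:E^X} guarantees that $\cE^\eq_{\Gamma_{[j]}}$ and $\cE^{X<}_{\Gamma_{[j]}}$ hold deterministically. Since $\boldsymbol{\gamma}_{\Gamma_{[j]}}=1$, the event $\cE^{X>}_{\Gamma_{[j]}}$ also holds. The event $\cE^{\tilde X}_{\Gamma_{[j]}}$ fails with probability at most $1/(100m_{d,q}D_{d,q})$. Given $\cE^{\tilde X}_{\Gamma_{[j]}}$, the event $\cE^\pass_{\Gamma_{[j]}}$ holds, so the iteration is not truncated and Grover sampling actually takes place. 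The events $\cE^S_{\Gamma_{[j]}}$ and $\cE^H_{\Gamma_{[j]}}$ then each fail with the stated polynomially small probabilities. Chaining the conditionals gives $\cE_{\Gamma_{[j]}}$, and hence $\cF_{\Gamma_{[j]}}$, with the failure probability accumulating additively as displayed.

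Next comes the last level $j=i$, conditioned on $\cF_{\Gamma_{[i-1]}}$. If $\boldsymbol{\gamma}_\Gamma=1$, \Cref{clm:E^X} again yields $\cE^\eq_\Gamma$, $\cE^X_\Gamma$ deterministically and $\cE^{\tilde X}_\Gamma$ except with probability $1/(100m_{d,q}D_{d,q})$. The events $\cE^S_\Gamma$ and $\cE^H_\Gamma$ are not needed for the target intersection, so they cost nothing. If $\boldsymbol{\gamma}_\Gamma=0$, only $\cE^\eq_\Gamma$ is required, and it holds deterministically given $\cF_{\Gamma_{[i-1]}}$. In both cases the target intersection contains $\cF_{\Gamma_{[i-1]}}$ together with the level-$i$ events, because $\cF$ includes $\cE^\eq$, $\cE^X$ and $\cE^{\tilde X}$ at every earlier level.

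Summing, the total failure probability is at most $i/(100m_{d,q}D_{d,q})+o(1)$. Since $i\le m_{d,q}$, the first term is at most $1/(100D_{d,q})$. For the $o(1)$ terms, note that each $c^S$ and $c^H$ depends only on $d,q,\delta$. The exponent $(2^{m_{d,q}-j'}-1)/(2^{m_{d,q}}-1)$ is positive for $j'\le i-1<m_{d,q}$, and the $c^H$ exponent is always positive. Hence these terms vanish as $n\to\infty$, and for $n$ large enough they sum to below $9/(100D_{d,q})$. The total is then at most $1/(10D_{d,q})$, as claimed.

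The main obstacle is the conditioning bookkeeping rather than any single estimate. \Cref{clm:E^X} bounds $\cE^S_{\Gamma_{[j]}}$ and $\cE^H_{\Gamma_{[j]}}$ only conditionally on $\cF_{\Gamma_{[j-1]}}\cap\cE^{\tilde X}_{\Gamma_{[j]}}$, so I would be careful to write $\Pr[\cF_{\Gamma_{[j]}}]$ as a product of conditional probabilities and bound $1-\prod(1-p_k)\le\sum p_k$. A second point to check is the exponent at level $i$: it is $0$ when $i=m_{d,q}$, so the $c^S$ bound is useless there. This is harmless, because $\cE^S_\Gamma$ at the final level $i$ is not part of the target event and is never invoked.
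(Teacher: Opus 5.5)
Your proposal is correct and takes essentially the same approach as the paper's proof. Both chain the one-step guarantees of \Cref{clm:E^X} along the good prefixes $\Gamma_{[1]},\dots,\Gamma_{[i-1]}$ to lower-bound $\Pr[\cF_{\Gamma_{[i-1]}}]$, then use only $\cE^{\tilde{X}}_\Gamma$ at the last level when $\boldsymbol{\gamma}_\Gamma=1$, and only the deterministic $\cE^\eq_\Gamma$ when $\boldsymbol{\gamma}_\Gamma=0$. The paper writes this as a product of conditional probabilities instead of your additive union bound, which is equivalent; your remarks on positivity of the exponents for $j'\le i-1$, and on $\cE^S_\Gamma$ being unneeded at level $i$, make explicit details the paper leaves implicit.
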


\subsubsection{Proof of \texorpdfstring{\Cref{thm:disc}}{Thm4.4}}
\label{sssec:pfDisc}
\paragraph{Correctness}%
Let $D$ be a shorthand of $D_{d,q}:=|\cD_{d,q}^*|$, the number of all non-empty isomorphism types of $d$-bounded-degree $q$-discs, for notation simplicity.
Let $M\in\R^{D\times D}$ be the upper triangular matrix with $[M]_{\Gamma,\Gamma'} = \mu_{\Gamma,\Gamma'}$, i.e.,
\[
\begin{array}{c@{\hskip 1em}c}
  & 
  M=\begin{bmatrix}
    \mu_{\Gamma_1,\Gamma_1} & \mu_{\Gamma_1,\Gamma_2} & \cdots & \mu_{\Gamma_1,\Gamma_{D}} \\
    0                      & \mu_{\Gamma_2,\Gamma_2} & \cdots & \mu_{\Gamma_2,\Gamma_D} \\
    \vdots                 & \ddots                  & \ddots & \vdots                  \\
    0                      & \cdots                  & 0      & \mu_{\Gamma_D,\Gamma_D}
  \end{bmatrix}.
\end{array}
\]

    Let $\boldsymbol{\cnt}=(\cnt_\Gamma)_{\Gamma\in\cD_{d,q}^*}$ be the vector to be estimated, we set vectors $\boldsymbol{x},\boldsymbol{\overline{x}}, \boldsymbol{\underline{x}}\in\R^{D}$ such that $\boldsymbol{\overline{x}}\geq \boldsymbol{x}\geq \boldsymbol{\underline{x}}$ according to the following two cases:
    \begin{itemize}
        \item  If $\boldsymbol{\gamma}_{\Gamma_{[1]}} = \dots = \boldsymbol{\gamma}_{\Gamma_{[i]}} = 1$, then define
        \begin{equation}
        \label{eq:case1x}
            \boldsymbol{\overline{x}}_\Gamma = \frac{1}{(1-\eps)^i} \frac{t_0}{t_{i-1}}\tilde{X}_\Gamma + i{\mu_{\Gamma} \eta t_0} ,\quad \boldsymbol{\underline{x}}_\Gamma = \frac{1}{(1+\eps)^i} \frac{t_0}{t_{i-1}}\tilde{X}_\Gamma - i{\mu_{\Gamma} \eta t_0}, \quad
        \boldsymbol{x}_\Gamma = \frac{t_0}{t_{i-1}}\tilde{X}_\Gamma.
        \end{equation}
        \item If $\boldsymbol{\gamma}_{\Gamma_{[1]}} = \dots = \boldsymbol{\gamma}_{\Gamma_{[i-1]}} = 1$ and $\boldsymbol{\gamma}_{\Gamma}=0$, then for each $\Gamma'\succeq \Gamma$, define
        \begin{equation}
        \label{eq:case2x}
            \boldsymbol{\overline{x}}_{\Gamma'} = m_{d,q}!\left(\frac{1+2\eps}{(1-\eps)^{i-1}}\underline{c^X_\Gamma} t_0 + i \mu_{\Gamma} \eta t_0\right) ,\quad  \boldsymbol{\underline{x}}_{\Gamma'} = 0, \quad \boldsymbol{x}_{\Gamma'} = 0.
        \end{equation}
    \end{itemize}

Then we have the following lemma.
\begin{lemma}
\label{lem:_norm}
    When $\boldsymbol{\gamma}\in\{0,1\}^{D_{d,q}}$, then with probability at least $2/3$,
    \[
        \boldsymbol{\overline{x}}\geq M\boldsymbol{\cnt}\geq \boldsymbol{\underline{x}}, \quad \norm{\boldsymbol{\overline{x}}-\boldsymbol{\underline{x}}}_1\leq \left(4d D_{d,q}  (2m_{d,q})^{m_{d,q}} \eps + (D_{d,q} m_{d,q}!)^2 m_{d,q}  \eta\right) t_0.
    \]
\end{lemma}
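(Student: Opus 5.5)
We condition on the good events supplied by \Cref{clm:Eintersection}. For each maximal chain we need, for each $\Gamma$ with $\boldsymbol{\gamma}_{\Gamma_{[1]}}=\dots=\boldsymbol{\gamma}_{\Gamma_{[i-1]}}=1$, the events of the appropriate case (the full intersection if $\boldsymbol{\gamma}_\Gamma=1$, otherwise the prefix intersection together with $\cE^\eq_\Gamma$). Each such intersection fails with probability at most $1/10D_{d,q}$. A union bound over the at most $D_{d,q}$ types $\Gamma\in\cD_{d,q}^*$ gives that all of them hold simultaneously with probability at least $9/10\ge 2/3$. Every $\Gamma$ falls into exactly one of the two cases \eqref{eq:case1x} or \eqref{eq:case2x}. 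This uses $\boldsymbol{\gamma}\in\{0,1\}^{D_{d,q}}$ and the truncation rule: if some prefix $\Gamma_{[j]}$ is bad, then $\Gamma\succeq\Gamma_{[j]}$ is covered by case 2 for that $\Gamma_{[j]}$. If $\Gamma$ succeeds several bad prefixes, we take the first bad one, so the definition is unambiguous.

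\emph{Case 1 (all prefixes good).} Write $Y_\Gamma:=M_\Gamma\cdot\boldsymbol{\cnt}=\sum_{\Gamma'\succeq\Gamma}\mu_{\Gamma,\Gamma'}\cnt_{\Gamma'}$. The event $\cE^\eq_\Gamma$ gives $X_\Gamma\in(1\pm\eps)^{i-1}\bigl(\tfrac{t_{i-1}}{t_0}Y_\Gamma\pm i\mu_\Gamma\eta t_{i-1}\bigr)$. The event $\cE^{\tilde X}_\Gamma$ combined with $\cE^{X>}_\Gamma$ gives $|\tilde X_\Gamma-X_\Gamma|\le\eps\underline{c^X_\Gamma}t_{i-1}\le\eps X_\Gamma$, so $\tilde X_\Gamma\in(1\pm\eps)X_\Gamma$. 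Combining these and solving for $Y_\Gamma$ yields $\boldsymbol{\underline{x}}_\Gamma\le Y_\Gamma\le\boldsymbol{\overline{x}}_\Gamma$ with the definitions in \eqref{eq:case1x}; only monotonicity of the expressions in $\tilde X_\Gamma$ is needed. For the width,
\[
\boldsymbol{\overline{x}}_\Gamma-\boldsymbol{\underline{x}}_\Gamma=\Bigl((1-\eps)^{-i}-(1+\eps)^{-i}\Bigr)\tfrac{t_0}{t_{i-1}}\tilde X_\Gamma+2i\mu_\Gamma\eta t_0 .
\]
The first coefficient is at most $4i\eps$ for small $\eps$. Since $\tilde X_\Gamma\le(1+\eps)X_\Gamma\le(1+\eps)\overline{c^X_\Gamma}t_{i-1}$ by $\cE^{X<}_\Gamma$, this term is $O(i\eps\,\overline{c^X_\Gamma})t_0$. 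The recursive definition gives $\overline{c^X_\Gamma}\le d(2m_{d,q})^{i-1}$, so the term is bounded by $4d(2m_{d,q})^{m_{d,q}}\eps t_0$. Finally, $\mu_\Gamma\le m_{d,q}!$, where we read $\mu_\Gamma$ as the largest $\mu_{\Gamma,\Gamma'}$, bounded by the number of ordered edge tuples.

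\emph{Case 2 (first bad type $\Gamma$ in the chain).} We need $0\le Y_{\Gamma'}\le\boldsymbol{\overline{x}}_{\Gamma'}$ for every $\Gamma'\succeq\Gamma$. The lower bound is trivial. For $\Gamma'=\Gamma$ the argument is direct: $\boldsymbol{\gamma}_\Gamma=0$ gives $Y_\Gamma\le\beta_\Gamma n\le\underline{c^X_\Gamma}t_0$, well within $\boldsymbol{\overline{x}}_\Gamma$, since $t_0=n$. (Alternatively one can use $\cE^\eq_\Gamma$.) The main obstacle is $\Gamma'\succ\Gamma$, where we must bound $Y_{\Gamma'}$ via $Y_\Gamma$. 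Every $\Gamma''\succeq\Gamma'$ also satisfies $\Gamma''\succeq\Gamma$ with $\mu_{\Gamma,\Gamma''}\ge1$. Also $\mu_{\Gamma',\Gamma''}\le m_{d,q}!$, since it counts ordered edge sequences of $\Gamma''$. Hence
\[
Y_{\Gamma'}=\sum_{\Gamma''\succeq\Gamma'}\mu_{\Gamma',\Gamma''}\cnt_{\Gamma''}\le m_{d,q}!\sum_{\Gamma''\succeq\Gamma}\mu_{\Gamma,\Gamma''}\cnt_{\Gamma''}=m_{d,q}!\,Y_\Gamma .
\]
It remains to bound $Y_\Gamma$ by $\tfrac{1+2\eps}{(1-\eps)^{i-1}}\underline{c^X_\Gamma}t_0+i\mu_\Gamma\eta t_0$. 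Here we use $\cE^\eq_\Gamma$ together with a bound $X_\Gamma\le(1+2\eps)\underline{c^X_\Gamma}t_{i-1}$. That bound comes either from $\boldsymbol{\gamma}_\Gamma=0$ directly (the $\beta_\Gamma$ definition), or from the failed pass test $\tilde X_\Gamma<(1-\eps)\underline{c^X_\Gamma}t_{i-1}$ combined with $\cE^{\tilde X}_\Gamma$. The former is cleaner, since it avoids conditioning on $\cE^{\tilde X}_\Gamma$ in this case. The width in this case is $\boldsymbol{\overline{x}}_{\Gamma'}$ itself. The part $m_{d,q}!\,i\mu_\Gamma\eta t_0$ contributes to the $\eta$-term. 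The part $m_{d,q}!(1+2\eps)(1-\eps)^{1-i}\underline{c^X_\Gamma}t_0$ is $O(\eps)$, because $\underline{c^X_\Gamma}$ carries an explicit factor $4i\eps\,\overline{c^X_\Gamma}/((1+2\eps)^i m_{d,q}!)$, which cancels the $m_{d,q}!$ and leaves $O(i\eps\,\overline{c^X_\Gamma})t_0$.

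Summing both cases over the at most $D_{d,q}$ coordinates gives the total $\eps$-contribution $4dD_{d,q}(2m_{d,q})^{m_{d,q}}\eps t_0$. The total $\eta$-contribution is at most $D_{d,q}\cdot m_{d,q}!\cdot m_{d,q}\cdot m_{d,q}!\,\eta t_0\le(D_{d,q}m_{d,q}!)^2m_{d,q}\eta t_0$. This establishes both claims of \Cref{lem:_norm}.
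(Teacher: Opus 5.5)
Your proof follows the paper's argument step for step. Case~1 combines $\cE^{\eq}_\Gamma$ with $\cE^{\tilde{X}}_\Gamma\cap\cE^{X}_\Gamma$, solves for $M_\Gamma\cdot\boldsymbol{\cnt}$, and bounds the width via $(1-\eps)^{-i}-(1+\eps)^{-i}\le 4i\eps$ and $\tilde{X}_\Gamma\le(1+\eps)\overline{c^X_\Gamma}t_{i-1}$. Case~2 uses $M_{\Gamma'}\cdot\boldsymbol{\cnt}\le m_{d,q}!\,M_\Gamma\cdot\boldsymbol{\cnt}$ together with the factor $1/m_{d,q}!$ built into $\underline{c^X_\Gamma}$, and \Cref{clm:para} finishes the sum; your direct bound $M_\Gamma\cdot\boldsymbol{\cnt}\le\beta_\Gamma n$ from $\boldsymbol{\gamma}_\Gamma=0$ is a valid shortcut around the paper's detour through $\cE^{\eq}_\Gamma$.

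One correction is needed. The paper defines $\mu_\Gamma=\sum_{\Gamma'\succeq\Gamma}\mu_{\Gamma,\Gamma'}\le D_{d,q}\,m_{d,q}!$, not a maximum, so your intermediate claim $\mu_\Gamma\le m_{d,q}!$ is false under the paper's definition. It does no harm: the extra factor $D_{d,q}$ is exactly the slack you discarded in your last step, so the $\eta$-contribution still totals $(D_{d,q}m_{d,q}!)^2m_{d,q}\eta t_0$.
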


Assume that \Cref{lem:_norm} holds. Then we know that with probability at least $2/3$, the vector $\boldsymbol{\cnt}$ is a feasible solution of the system of linear inequalities: $\overline{\boldsymbol{x}}\geq M\boldsymbol{y}\geq \underline{\boldsymbol{x}}$.
Let %
$\boldsymbol{\widetilde{\cnt}} = (\widetilde{\cnt}_\Gamma)_{\Gamma\in\cD_{d,q}^*}$ be the vector such that $M \boldsymbol{\widetilde{\cnt}} = \boldsymbol{x}$, where $\boldsymbol{x}$ is set in \Cref{eq:case1x} and \Cref{eq:case2x}. %
    Notice that $\boldsymbol{\overline{x}} \geq M\boldsymbol{\widetilde{\cnt}} = \boldsymbol{x} \geq \boldsymbol{\underline{x}}$, thus $\boldsymbol{\widetilde{\cnt}}$ is also a feasible solution.
    
    Choose $\eps=\eps_{d,q}\delta$ and $\eta=\eta_{d,q}\delta$ where
    \begin{equation}
    \label{eq:eps&eta}
        \eps_{d,q} = \delta / 8d\norm{M^{-1}}_1 D_{d,q}  (2m_{d,q})^{m_{d,q}}, \quad 
    \eta_{d,q} = \delta / 2\norm{M^{-1}}_1 (D_{d,q} m_{d,q}!)^2 m_{d,q}, 
    \end{equation}
    then by \Cref{lem:_norm}, $\norm{\boldsymbol{\overline{x}}-\boldsymbol{\underline{x}}}_1 \leq \delta n / \norm{M^{-1}}_1$, %
    hence by \Cref{lem:squeeze},%
    \[
    \norm{\boldsymbol{\widetilde{\cnt}}-\boldsymbol{\cnt}}_1 \leq \norm{M^{-1}}_1 \norm{\boldsymbol{\overline{x}}-\boldsymbol{\underline{x}}}_1 \leq \delta n.
    \]

That will then complete the proof of \Cref{thm:disc}.
Now we give the proof of \Cref{lem:_norm}.

\begin{claimproof}[Proof of \Cref{lem:_norm}]
    It is sufficient to prove:  $M_\Gamma \cdot \boldsymbol{\cnt} \in \left[\boldsymbol{\underline{x}}_\Gamma, \boldsymbol{\overline{x}}_\Gamma\right]$ holds with probability at least $1-1/10D_{d,q}$ and $\abs{\boldsymbol{\overline{x}}_\Gamma-\boldsymbol{\underline{x}}_\Gamma}$ is upper-bounded for each $\Gamma\in\cD_{d,q}^*$, where $M_\Gamma$ denote the $\Gamma$-row of matrix $M$.
    
    \textbf{Case 1}:
    If $\boldsymbol{\gamma}_{\Gamma_{[1]}} = \dots = \boldsymbol{\gamma}_{\Gamma_{[i]}} = 1$, then by \Cref{clm:E^X}, 
    $\cE^\eq_\Gamma \cap \cE^X_\Gamma \cap \cE^{\tilde{X}}_\Gamma$ holds with probability at least $1-1/10D_{d,q}$, then 
    \[
    (1+\eps)^{i-1} \left(\frac{t_{i-1}}{t_0} \sum_{\Gamma'\succeq \Gamma} \mu_{\Gamma,\Gamma'} \cnt_{\Gamma'} + i\mu_{\Gamma} \eta t_{i-1}\right) \geq X_\Gamma \geq \frac{1}{1+\eps} \tilde{X}_\Gamma.
    \]
    Thus,
    \[
    \sum_{\Gamma'\succeq \Gamma} \mu_{\Gamma,\Gamma'} \cnt_{\Gamma'} \geq  \frac{1}{(1+\eps)^i} \frac{t_0}{t_{i-1}}\tilde{X}_\Gamma - i{\mu_{\Gamma} \eta t_0} = \boldsymbol{\underline{x}}_\Gamma.
    \]
    
    The other side holds similarly, thus %
    \begin{equation}
    \label{eq:case1}
        M_\Gamma \cdot  \boldsymbol{\cnt} = \sum_{\Gamma'\succeq \Gamma} \mu_{\Gamma,\Gamma'} \cnt_{\Gamma'} \in \left[\boldsymbol{\underline{x}}_\Gamma, \boldsymbol{\overline{x}}_\Gamma\right].%
    \end{equation}
    
    Using the fact that $\frac{1}{(1-\eps)^i}\leq 1+2i\eps$ and $\frac{1}{(1+\eps)^i}\geq 1-2i\eps$ hold for $\eps\in(0,1/2i)$, we obtain that
    \begin{align}
        \label{eq:case1deltax}
        \nonumber\abs{\boldsymbol{\overline{x}}_\Gamma - \boldsymbol{\underline{x}}_\Gamma} &= \left(\left(\frac{1}{(1-\eps)^i} - \frac{1}{(1+\eps)^i} \right) \frac{t_0}{t_{i-1}}\tilde{X}_\Gamma + 2i{\mu_{\Gamma} \eta t_0}\right) \\
        \nonumber&\leq \left(\left(\frac{1}{(1-\eps)^i} - \frac{1}{(1+\eps)^i} \right) (1+\eps) \overline{c^X_\Gamma} + 2i{\mu_{\Gamma} \eta }\right) t_0 \\
        &\leq \left(4i\eps (1+\eps) \overline{c^X_\Gamma} + 2i{\mu_{\Gamma} \eta }\right) t_0.
    \end{align}

    \textbf{Case 2}: If $\boldsymbol{\gamma}_{\Gamma_{[1]}} = \dots = \boldsymbol{\gamma}_{\Gamma_{[i-1]}} = 1$ and $\boldsymbol{\gamma}_{\Gamma} = 0$, then by \Cref{clm:Eintersection}, event $\cE^\eq_\Gamma$ holds with probability at least $1-1/10D_{d,q}$.
    Combined with \Cref{clm:E^X}, we have
    \[
    (1-\eps)^{i-1} \left( \frac{t_{i-1}}{t_0} \sum_{\Gamma'\succeq \Gamma} \mu_{\Gamma,\Gamma'} \cnt_{\Gamma'} - i \mu_{\Gamma} \eta t_{i-1} \right) \leq X_\Gamma \leq (1+2\eps)\underline{c^X_\Gamma} t_{i-1}.
    \]

    Thus
    \[
    \sum_{\Gamma'\succeq \Gamma} \mu_{\Gamma,\Gamma'} \cnt_{\Gamma'} \leq \frac{1+2\eps}{(1-\eps)^{i-1}}\underline{c^X_\Gamma} + i \mu_{\Gamma} \eta t_0.
    \]

    Then for each $\Gamma''\succeq \Gamma'\succeq \Gamma$, we have 
    \begin{align*}
        \sum_{\Gamma''\succeq \Gamma'} \mu_{\Gamma',\Gamma''} \cnt_{\Gamma''} &\leq m_{d,q}!\sum_{\Gamma''\succeq \Gamma'} \cnt_{\Gamma''}\leq m_{d,q}!\sum_{\Gamma''\succeq \Gamma} \cnt_{\Gamma''} \leq m_{d,q}! \sum_{\Gamma''\succeq \Gamma} \mu_{\Gamma,\Gamma''} \cnt_{\Gamma''} \\
        &\leq  m_{d,q}!\left(\frac{1+2\eps}{(1-\eps)^{i-1}}\underline{c^X_\Gamma} t_0 + i \mu_{\Gamma} \eta t_0\right).
    \end{align*}
    The last inequality holds because of $\mu_{\Gamma,\Gamma'}\leq m_{d,q}!$ (c.f. \Cref{clm:para}).

    Then for each $\Gamma'\succeq\Gamma$, we have
    \begin{equation}
    \label{eq:case2}
        M_{\Gamma'} \cdot  \boldsymbol{\cnt} = \sum_{\Gamma''\succeq \Gamma'} \mu_{\Gamma',\Gamma''} \cnt_{\Gamma''} \in \left[\boldsymbol{\underline{x}}_{\Gamma'}, \boldsymbol{\overline{x}}_{\Gamma'}\right].%
    \end{equation}
    Recall that we define $\underline{c^X_\Gamma} = \frac{\overline{c^X_{\Gamma}}4i\eps(1+\eps)}{(1+2\eps)^i m_{d,q}!}$, then
    \begin{equation}
    \label{eq:case2deltax}
        \abs{\boldsymbol{\overline{x}}_{\Gamma'} - \boldsymbol{\underline{x}}_{\Gamma'}}  = m_{d,q}!\left(\frac{1+2\eps}{(1-\eps)^{i-1}}\underline{c^X_\Gamma} + i \mu_{\Gamma} \eta \right)t_0 \leq \left(4i\eps(1+\eps) \overline{c^X_{\Gamma}} + m_{d,q}! i \mu_{\Gamma} \eta\right) t_0.
    \end{equation}

    Combining \Cref{eq:case1} and \Cref{eq:case2} and using the union bound for all $\Gamma\in\cD_{d,q}^*$, we claim that with probability at least $2/3$, we have
    \[
    \boldsymbol{\overline{x}}\geq M\boldsymbol{\cnt}\geq \boldsymbol{\underline{x}}.
    \]
    Next we claim the remaining part: $\norm{\boldsymbol{\underline{x}}-\boldsymbol{\overline{x}}}_1$ is upper-bounded.
    Combine \Cref{eq:case1deltax} and \cref{eq:case2deltax}, we have
    \begin{align*}
        \norm{\boldsymbol{\overline{x}}-\boldsymbol{\underline{x}}}_1 &= 
        \sum_{\Gamma} \abs{\boldsymbol{\overline{x}}_\Gamma-\boldsymbol{\underline{x}}_\Gamma} \leq \sum_{\Gamma} \abs{\boldsymbol{\overline{x}}_\Gamma-\boldsymbol{\underline{x}}_\Gamma} \leq\sum_{\Gamma}\left(4i\eps(1+\eps) \overline{c^X_{\Gamma}} + m_{d,q}! i \mu_{\Gamma} \eta\right) t_0.
    \end{align*}

    By \Cref{clm:para}, we have
    \begin{align*}
        \norm{\boldsymbol{\overline{x}}-\boldsymbol{\underline{x}}}_1 &\leq \sum_{\Gamma}\left(4i\eps(1+\eps)^i  m_{d,q}^{i-1}d + D_{d,q} (m_{d,q}!)^2 i  \eta\right) t_0 \\
        &\leq \sum_{\Gamma}\left(4d  (2m_{d,q})^{m_{d,q}} \eps + D_{d,q} (m_{d,q}!)^2 m_{d,q}  \eta\right) t_0 \\
        &= \left(4d D_{d,q}  (2m_{d,q})^{m_{d,q}} \eps + (D_{d,q} m_{d,q}!)^2 m_{d,q}  \eta\right) t_0.
    \end{align*}
    
    That completes the proof.
\end{claimproof}

\paragraph{Query Complexity of \texorpdfstring{\Cref{alg:ApproxP}}{EstDisc}}
\label{sec:QCP}

The query complexity of our algorithm arises entirely from the use of Grover search and quantum counting.

At each $\Gamma$-iteration, there are two cases. On one hand, if $\tilde{X}_\Gamma$ fails the threshold check at Line \ref{line:0}, then the algorithm will not proceed to any $\Gamma'$-iteration for $\Gamma'\succeq \Gamma$, and hence the query cost for those iterations is zero.
On the other hand, if $\tilde{X}_\Gamma$ passes the threshold check, then $X_\Gamma$ is $t_{i-1}$-bounded. In this case, the query cost of Grover search at the $\Gamma$-iteration is $O_{\delta,d,q}(\ell_\Gamma \sqrt{nd/X_\Gamma})=O_{\delta,d,q}(t_i\sqrt{n/t_{i-1}})$ and the query cost of quantum counting is $O_{\delta,d,q}(\sqrt{n/t_{i-1}})$.

Summarizing over all iterations, the total query complexity is therefore bounded by \[O_{\delta,d,q}\left(\sum_{i=1}^{m_{d,q}} (t_{i}+1)\sqrt{n/t_{i-1}}\right) = O_{\delta,d,q}\left(n^{1/2-1/2(2^{m_{d,q}}-1)}\right).\]

\subsection{Analysis of \texorpdfstring{\Cref{alg:ApproxP*}}{EstDisc*}}
Now we prove the following performance guarantee of \Cref{alg:ApproxP*}.
\label{sec:pfP*}

\algorithmfull*
\begin{proof}
    By \Cref{clm:alpha&beta}, we first choose two constants $\alpha,\beta$ only dependent on $d$ and $q$, but not dependent on $\delta$, such that for each $\Gamma\in\cD_{d,q}^*$,
    \[
    \alpha\delta\geq \alpha_\Gamma\geq \beta_\Gamma\geq \beta\delta.
    \]

    Then we choose $\delta_i = (\beta/\alpha)^{i-1} \delta$.
    Note that $\alpha_\Gamma, \beta_\Gamma$ are dependent on $\delta$ (c.f. \Cref{def:type}).
    To distinguish between different $\delta_i$, we denote $\alpha_\Gamma$ as $\alpha_\Gamma^\delta$ and $\beta_\Gamma$ as $\beta_\Gamma^\delta$ respectively.
    Now we prove that for different $\delta_i,\delta_j$ ($i>j$), the intervals $[\beta_\Gamma^{\delta_i},\alpha_\Gamma^{\delta_i}]$ and $[\beta_\Gamma^{\delta_j},\alpha_\Gamma^{\delta_j}]$ are disjoint.
    \[
        [\beta_\Gamma^{\delta_i},\alpha_\Gamma^{\delta_i}] \subset [\beta \delta_i, \alpha\delta_i] = [(\beta/\alpha)^{i-1} \beta\delta, (\beta/\alpha)^{i-1} \alpha\delta].
    \]
    \[
        [\beta_\Gamma^{\delta_j},\alpha_\Gamma^{\delta_j}] \subset [\beta \delta_j, \alpha\delta_j] = [(\beta/\alpha)^{j-1} \beta\delta, (\beta/\alpha)^{j-1} \alpha\delta].
    \]
    \[
        (\beta/\alpha)^{i-1} \alpha\delta \leq (\beta/\alpha)^j \alpha\delta = (\beta/\alpha)^{j-1} \beta \delta.
    \]
    Therefore for each input graph $G$ and $\Gamma\in\cD_{d,q}^*$, there is at most one error parameter $\delta_i$ such that $\sum_{\Gamma'\succeq \Gamma}\mu_{\Gamma,\Gamma'}\cnt_{\Gamma'}\in[\beta_\Gamma n, \alpha_\Gamma n]$, i.e. $\boldsymbol{\gamma}_\Gamma=*$. 
    That means there are at most $D_{d,q}$ error parameters such that there exists some $\Gamma$ such that $\boldsymbol{\gamma}_\Gamma=*$. 
    In other words, if we choose an index $i$ from $1$ to $100D_{d,q}$ uniformly at random, then with probability at least $0.99$ we have $\boldsymbol{\gamma}\in \{0,1\}^{D_{d,q}}$.
    See \Cref{fig:table} as an illustration for explanation.

    By \Cref{thm:disc}, the total successful probability is at least $2/3\times 0.99\geq 5/9$, and the query complexity is $O_{\delta,d,q}\left(n^{1/2-1/2(2^{m_{d,q}}-1)}\right)$. 
\end{proof}

    \begin{figure}[ht]
        \centering
        \includegraphics[width=0.8\linewidth]{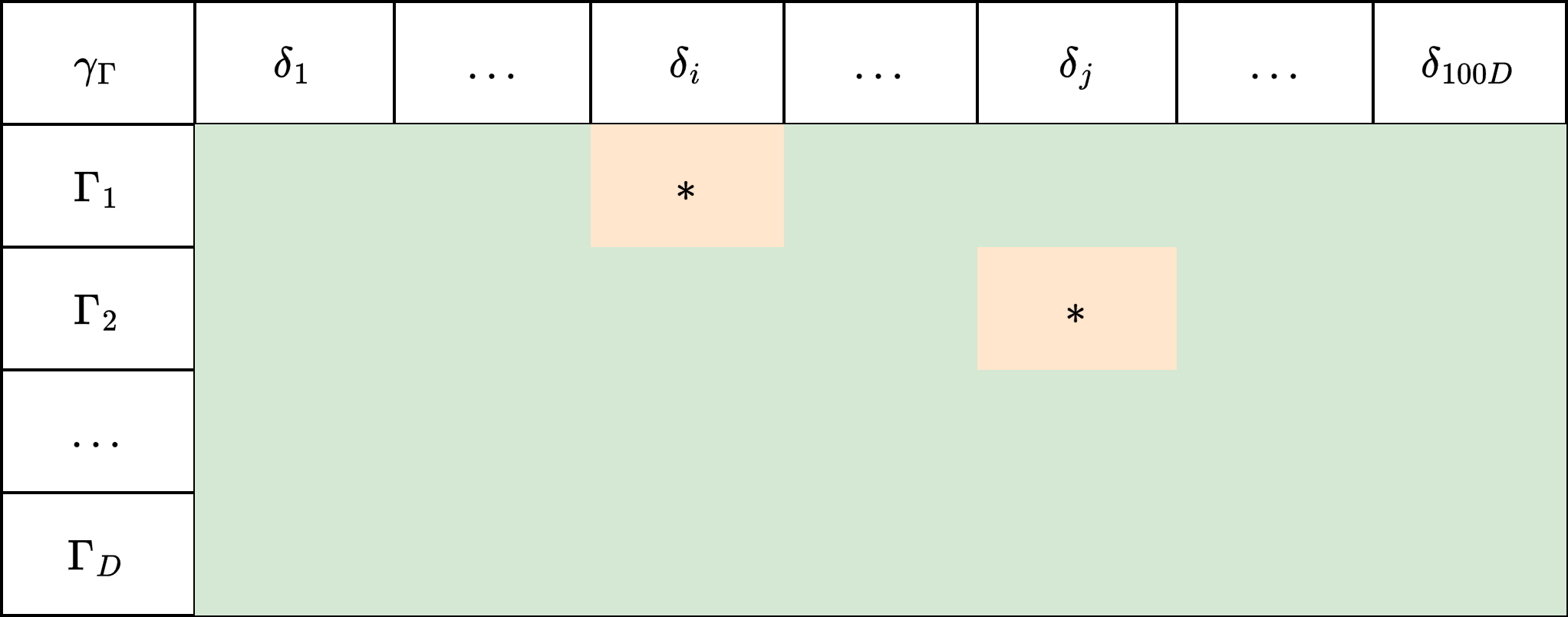}
        \caption{We choose $\delta_i$ properly, such that all intervals $[\beta_\Gamma^{\delta_i},\alpha_\Gamma^{\delta_i}]$ are disjoint. Therefore, in each row, at most one element equals $*$ (marked as pink), and the other elements equal $0$ or $1$ (marked as green). That means that there are at most $D_{d,q}$ (abbreviated as $D$ in the figure) non-all-green columns, which is only $1\%$ of all columns. }
        \label{fig:table}
    \end{figure}

\subsection{Proof of central properties}
\label{sssec:pfproperty}

To prove \Cref{clm:E^X} and \Cref{clm:Eintersection},
we need the help of the following claims.

\subsubsection{Some useful claims}
{
For each  $i \in [d]$ and $\Gamma\in\cD_{d,q}^i$, if there are $\Theta(t_{i-1})$ edges that are able to complete some $\Gamma$-instance and 
we only sample $\Theta(t_i)=o(\sqrt{t_{i-1}})$ of them, then these sampled edges should be far away from each other with high probability by birthday paradox.
That implies the obtained $\Gamma$-instances should also be far away from each other, that is, $H_\Gamma$ is dispersive.
}

\begin{restatable}{claim}{EventEHi}\label{clm:_E^H_i}
For each  $i \in [d]$ and $\Gamma\in\cD_{d,q}^i$, we have
\[
\Pr\left[\cE^H_\Gamma \mid \cE^X_\Gamma \cap \cE^{\tilde{X}}_\Gamma \cap \cE^X_{\Gamma_{[i-1]}} \cap \cE^{\tilde{X}}_{\Gamma_{[i-1]}} \cap \cE^H_{\Gamma_{[i-1]}}\right] \geq 1 - c^H_\Gamma n^{-1/(2^{m_{d,q}}-1)},
\]
where the probability is taken over the randomness in sampling the edge set $T_\Gamma$.

The constant  $c^H_\Gamma$  depends only on $\eps, \eta, \delta, d$ and  $q$.
\end{restatable}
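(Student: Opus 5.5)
We will argue by a birthday-paradox estimate over the $\ell_\Gamma$ independent Grover samples making up $T_\Gamma$. Conditioned on $\cE^X_\Gamma\cap\cE^{\tilde{X}}_\Gamma$, we have $X_\Gamma\geq \underline{c^X_\Gamma}t_{i-1}$, and by \Cref{lem:grover} each sample is uniform over $f_\Gamma^{-1}(1)$, independently of the others. The estimate $\tilde{X}_\Gamma\leq X_\Gamma+\eps\underline{c^X_\Gamma}t_{i-1}\leq (\overline{c^X_\Gamma}+\eps\underline{c^X_\Gamma})t_{i-1}$ gives $\ell_\Gamma=t_i\tilde{X}_\Gamma/t_{i-1}\leq (\overline{c^X_\Gamma}+\eps\underline{c^X_\Gamma})t_i$. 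So we draw $O(t_i)$ uniform samples from a set of size $\Omega(t_{i-1})$.

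Next we identify the bad pairs. Every $H\in\cH_\Gamma$ is formed by attaching a sampled edge $e$ to some $H'\in\cH_{\Gamma_{[i-1]}}$ with $H'+e\equiv\Gamma$. Two cases can violate $\cE^H_\Gamma$.

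\textbf{Case (a).} Two distinct samples $e_1,e_2$ extend instances $H_1',H_2'$ whose roots are at distance less than $2q$. We first show that $e$ lies within distance $q$ of $\rt(H')$, since all of $\Gamma$ lies in the $q$-disc of its root. Under $\cE^H_{\Gamma_{[i-1]}}$, distinct elements of $\cH_{\Gamma_{[i-1]}}$ have roots at distance at least $2q$. Hence $e$ determines $H'$ uniquely, and $\rt(H)=\rt(H')$. Thus case (a) cannot occur when $H_1'\neq H_2'$; there, $\rt(H_1)$ and $\rt(H_2)$ are already at distance at least $2q$.

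\textbf{Case (b).} Two samples extend the same $H'$, or the same edge is sampled twice (so $H_1,H_2$ coincide as elements of the multiset). We need to show this is rare. For a fixed sample $e_1$, the number of marked edges $e_2$ that extend the same $H'$ is at most the number of edges in a $q$-disc, i.e., at most $m_{d,q}$, a constant; this count includes $e_1$ itself. The probability that a given ordered pair of samples collides is therefore at most $m_{d,q}/X_\Gamma\leq m_{d,q}/(\underline{c^X_\Gamma}t_{i-1})$. A union bound over at most $\ell_\Gamma^2$ pairs gives failure probability
\[
\frac{m_{d,q}(\overline{c^X_\Gamma}+\eps\underline{c^X_\Gamma})^2\,t_i^2}{\underline{c^X_\Gamma}\,t_{i-1}}.
\]
With $t_i=n^{(2^{m-i}-1)/(2^m-1)}$ and $m=m_{d,q}$, the exponent of $t_i^2/t_{i-1}$ is $(2^{m-i+1}-2-2^{m-i+1}+1)/(2^m-1)=-1/(2^m-1)$. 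This yields the claimed bound $c^H_\Gamma n^{-1/(2^{m_{d,q}}-1)}$, with $c^H_\Gamma$ depending only on $\eps,\eta,\delta,d,q$ through the constants $\underline{c^X_\Gamma}$ and $\overline{c^X_\Gamma}$.

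The main obstacle is interpreting ``$H_1\neq H_2\in\cH_\Gamma$'' in the multiset. The clean reading is that the event requires all $\ell_\Gamma$ samples to produce roots pairwise at distance at least $2q$, including no repeated samples and no two samples attaching to the same $H'$; case (b) covers exactly this. For $i=1$, the base case needs separate care, since $\cH_{\Gamma_{[0]}}$ is the whole vertex set. There each marked edge is its own instance, and the bad pairs are samples whose tails or heads are within distance $2q$. Each edge has at most $2d(2d)^{2q}$ such neighbours, again a constant, so the same union bound applies.
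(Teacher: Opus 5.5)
Your proposal is correct and follows the paper's proof. The base case is a separate birthday-paradox argument using the constant-size $2q$-neighbourhood of the root endpoint; for $i>1$ you use $\cE^H_{\Gamma_{[i-1]}}$ to reduce to ``no two samples extend the same $\Gamma_{[i-1]}$-instance'' and then take a union bound over pairs, giving $O(t_i^2/t_{i-1})=O(n^{-1/(2^{m_{d,q}}-1)})$.

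The one minor difference is how you bound a pair's collision probability. You get $m_{d,q}/X_\Gamma$ by conditioning on one sample, whereas the paper sums $\Pr[H+e_s\equiv\Gamma]\Pr[H+e_t\equiv\Gamma]$ over $H\in\cH_{\Gamma_{[i-1]}}$. The paper therefore needs $|\cH_{\Gamma_{[i-1]}}|=\ell_{\Gamma_{[i-1]}}=O(t_{i-1})$ from the level-$(i-1)$ events, which your version does not use. Note that your uniqueness of $H'$ strictly needs root distances $>2q$, which the paper's proof also tacitly assumes. A constant number of candidate $H'$ per edge would suffice for your bound anyway.
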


{
Since the function $f_\Gamma$ is dependent on the edges sampled in previous iterations, if the previous estimates are good, then $X_\Gamma=|f_\Gamma^{-1}(1)|$ is $t_{i-1}$-upper-bounded. 
}

\begin{restatable}{claim}{EventXGammai}\label{clm:XGammai}
    For each  $i \in [m_{d,q}]$ and $\Gamma\in\cD_{d,q}^i$, conditioned on event $\cap_{j=1}^{i-1} \cE^{\tilde{X}}_{\Gamma_{[j]}}$,
    \[
    X_\Gamma \leq \overline{c^X_\Gamma} t_{i-1}
    \]
    holds for some constant $\overline{c^X_\Gamma}$ only dependent on $\eps,d,q$.
\end{restatable}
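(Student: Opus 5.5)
We argue by induction on $i$, proving the slightly stronger bound $X_\Gamma \le \overline{c^X_\Gamma}\, t_{i-1}$ with the recursively defined constants $\overline{c^X_\Gamma}=d$ for $\Gamma\in\cD_{d,q}^1$ and $\overline{c^X_\Gamma}=m_{d,q}\bigl(\overline{c^X_{\Gamma_{[i-1]}}}+\eps\,\underline{c^X_{\Gamma_{[i-1]}}}\bigr)$ for $i\ge 2$, as set in \Cref{alg:ApproxP}.

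\textbf{Base case $i=1$.} Here $t_0=n$, and $f_\Gamma$ is defined on $V\times[d]$, so trivially $X_\Gamma\le dn=\overline{c^X_\Gamma}\,t_0$.

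\textbf{Inductive step $i\ge 2$.} Fix $\Gamma\in\cD_{d,q}^i$. If the $\Gamma_{[i-1]}$-iteration was truncated, then $\cH_{\Gamma_{[i-1]}}$ is empty and $X_\Gamma=0$. Otherwise $\cH_{\Gamma_{[i-1]}}$ is the multiset built from $T_{\Gamma_{[i-1]}}$, which contains $\ell_{\Gamma_{[i-1]}}=t_{i-1}\tilde X_{\Gamma_{[i-1]}}/t_{i-2}$ elements. The argument has two steps.

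\emph{Step 1: each collected instance contributes few marked edges.} An edge $(v,s)$ is marked only if $H+(v,s)\equiv\Gamma$ for some $H\in\cH_{\Gamma_{[i-1]}}$. In that case the added edge has an endpoint in $V(H)$ and stays within the $q$-disc of $\rt(H)$ (at least, it lies among the edges incident to the at most $s_{d,q}$ vertices of $H$). Since each vertex has in- and out-degree at most $d$, each $H$ can be completed by at most $2d s_{d,q}=m_{d,q}$ edges. Hence
\[
X_\Gamma \le m_{d,q}\,\bigl|\Supp{\cH_{\Gamma_{[i-1]}}}\bigr| \le m_{d,q}\,\ell_{\Gamma_{[i-1]}}.
\]

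\emph{Step 2: bound $\ell_{\Gamma_{[i-1]}}$ using the conditioning and the induction hypothesis.} Conditioned on $\cE^{\tilde X}_{\Gamma_{[i-1]}}$, we have $\tilde X_{\Gamma_{[i-1]}}\le X_{\Gamma_{[i-1]}}+\eps\,\underline{c^X_{\Gamma_{[i-1]}}}\,t_{i-2}$. By the induction hypothesis, which applies because the conditioning includes $\cap_{j\le i-2}\cE^{\tilde X}_{\Gamma_{[j]}}$, we also have $X_{\Gamma_{[i-1]}}\le \overline{c^X_{\Gamma_{[i-1]}}}\,t_{i-2}$. Combining these,
\[
\ell_{\Gamma_{[i-1]}} \le \frac{t_{i-1}}{t_{i-2}}\bigl(\overline{c^X_{\Gamma_{[i-1]}}}+\eps\,\underline{c^X_{\Gamma_{[i-1]}}}\bigr)t_{i-2}.
\]
Substituting this into Step 1 gives exactly $X_\Gamma\le\overline{c^X_\Gamma}\,t_{i-1}$, which completes the induction.

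\textbf{Main obstacle.} The delicate point is Step 1: we need a constant bound on the number of completing edges per collected instance. This requires checking that an extension edge must be incident to the current subgraph; this holds because the type $\Gamma$ is connected through the root, so the edge ordering keeps every prefix $\Gamma_{[j]}$ connected. We also need that the multiset/support distinction only helps, since the support is at most the multiset size. The event $\cE^H$ is not needed; overlaps between instances can only reduce the count.

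Finally, the constants $\overline{c^X_\Gamma}$ depend only on $\eps,d,q$. The $\underline{c^X}$ terms entering the recursion are themselves proportional to $\overline{c^X}$ via their definition, so the recursion is well-founded in $i$.
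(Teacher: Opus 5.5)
Your proof is correct and follows the paper's argument step for step. It uses the same recursive constants $\overline{c^X_\Gamma}$, the same base case $X_\Gamma\le dn$, and the same bound $X_\Gamma\le m_{d,q}\,\ell_{\Gamma_{[i-1]}}$, then bounds $\ell_{\Gamma_{[i-1]}}$ via $\cE^{\tilde X}_{\Gamma_{[i-1]}}$ and the induction hypothesis. You also spell out details the paper leaves implicit: why each instance has at most $2ds_{d,q}=m_{d,q}$ completing edges, the truncated case, and support versus multiset.
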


{
Given a Boolean function $f_\Gamma$, by \Cref{lem:count}, quantum counting can return a good estimate $\tilde{X}_\Gamma$ of $X_\Gamma=|f_\Gamma^{-1}(1)|$ with high probability as long as an upper bound of $X_\Gamma$ is known.
}

\begin{restatable}{claim}{EventEXii}
\label{clm:_E^-X_i}
For each  $i \in [m_{d,q}]$ and $\Gamma\in\cD_{d,q}^i$,
\[
\Pr\left[\cE^{\tilde{X}}_\Gamma \mid \cE^{X<}_\Gamma \right]    \geq 1 - \frac{1}{100 m_{d,q} D_{d,q}},
\]
where the probability is taken over the randomness of the subroutine $\emph{\Count}(f_\Gamma)$.
\end{restatable}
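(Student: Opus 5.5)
This follows from \Cref{cor:count} applied to $f_\Gamma$, with $N = nd$ (the domain $V\times[d]$) and the number of queries invoked in the $\Gamma$-iteration, namely $c^M_\Gamma c^B\sqrt{n/t_{i-1}}$. Write $M = c^M_\Gamma\sqrt{n/t_{i-1}}$ and $M' = c^B M = 500 M\ln(200 m_{d,q}D_{d,q})$. This matches the form $M' = 500M\ln(2/\eta')$ with failure probability $\eta' = 1/(100 m_{d,q}D_{d,q})$. So, with probability at least $1-1/(100m_{d,q}D_{d,q})$ over the randomness of $\Count$, we get
\[
|\tilde{X}_\Gamma - X_\Gamma| \leq 2\pi\frac{\sqrt{X_\Gamma(nd-X_\Gamma)}}{M} + \pi^2\frac{nd}{M^2}.
\]
The function $f_\Gamma$ is fixed once the earlier iterations' samples are fixed, so we condition on all previous randomness. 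Only the internal randomness of $\Count$ remains, and the corollary applies verbatim.

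It then remains to show that, on $\cE^{X<}_\Gamma$, the right-hand side is at most $\eps\underline{c^X_\Gamma}t_{i-1}$. We bound each term by half of this.

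For the first term, use $X_\Gamma\le \overline{c^X_\Gamma}t_{i-1}$ and $nd-X_\Gamma\le nd$. This gives
\[
2\pi\frac{\sqrt{\overline{c^X_\Gamma}\,t_{i-1}\,nd}}{c^M_\Gamma\sqrt{n/t_{i-1}}} = \frac{2\pi\sqrt{\overline{c^X_\Gamma}d}}{c^M_\Gamma}\,t_{i-1}.
\]
Since $c^M_\Gamma\ge 4\pi\sqrt{\overline{c^X_\Gamma}d}/(\eps\underline{c^X_\Gamma})$, this is at most $\tfrac12\eps\underline{c^X_\Gamma}t_{i-1}$.

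For the second term, $\pi^2 nd/M^2 = \pi^2 d\,t_{i-1}/(c^M_\Gamma)^2$. Since $c^M_\Gamma\ge\sqrt{2\pi^2d/(\eps\underline{c^X_\Gamma})}$, this is also at most $\tfrac12\eps\underline{c^X_\Gamma}t_{i-1}$.

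Summing the two bounds gives the event $\cE^{\tilde X}_\Gamma$.

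The only delicate point is bookkeeping. The query argument in the algorithm is written as $c^M_i$ but should read $c^M_\Gamma$. Also, the domain size is $nd$ rather than $n$, which is why the factor $d$ appears inside $c^M_\Gamma$.

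Finally, conditioning on $\cE^{X<}_\Gamma$ only constrains the prior randomness that determines $f_\Gamma$. The probability bound holds pointwise for every such $f_\Gamma$, so it also holds conditionally.
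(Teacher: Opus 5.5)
Your proposal is correct and is essentially the paper's proof. You apply \Cref{cor:count} to $f_\Gamma$ with $N=nd$, $M=c^M_\Gamma\sqrt{n/t_{i-1}}$ and $c^B=500\ln(200m_{d,q}D_{d,q})$, then use the two lower bounds built into $c^M_\Gamma$ to make each error term at most $\tfrac12\eps\underline{c^X_\Gamma}t_{i-1}$ on $\cE^{X<}_\Gamma$. Your remarks on conditioning (fixing $f_\Gamma$ so that only the internal randomness of \textsc{Count} remains) and on the $c^M_i$ versus $c^M_\Gamma$ typo simply spell out details the paper leaves implicit.
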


{
Recall that we collect some $\Gamma$-instances after performing Grover search.
Because of the uniformity of Grover search, we can bound the number of collected $\Gamma$-instances in each equivalence class via concentration inequalities.
}

\begin{restatable}{claim}{EventSii}
\label{clm:_E^S_i}
    For each  $i \in [m_{d,q}]$ and $\Gamma\in\cD_{d,q}^i$,
    \[\Pr\left[\cE^S_\Gamma \mid \cE^H_{\Gamma_{[i-1]}}\cap  \cE^{\tilde{X}}_\Gamma \cap \cE^{X}_\Gamma \right] \geq 1 - c^S_\Gamma n^{-(2^{m_{d,q}-i}-1)/(2^{m_{d,q}}-1)},
    \] where the probability is taken over the randomness of sampling the edge set $T_\Gamma$. The constant $c^S_\Gamma$ is only dependent on $\eps, \eta,\delta,d$ and $q$.
\end{restatable}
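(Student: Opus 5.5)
The plan is to copy the warm-up argument of \Cref{clm:E^S_i}, now carried out per equivalence class. We fix $\Gamma'\succeq\Gamma$ and $W\in\cW_{\Gamma,\Gamma'}$, and let $E^{[W]}\subseteq f_\Gamma^{-1}(1)$ be the set of marked edges $e$ such that $H+e\in\cH^{[W]}_{\Gamma,\Gamma'}$ for some $H\in\cH_{\Gamma_{[i-1]}}$.

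\textbf{Step 1: count the marked edges exactly.} Conditioned on $\cE^H_{\Gamma_{[i-1]}}$, the roots of distinct instances in $\cH_{\Gamma_{[i-1]}}$ are at distance at least $2q$. Hence every marked edge lies in the $q$-disc of a unique collected instance, and extends only that instance. Now take an instance $H\in\cH^{[W_{i-1}]}_{\Gamma_{[i-1]},\Gamma'}$. By definition of $\kappa$ and its well-definedness (shown above via automorphisms), the number of edges extending $H$ into the class $[W]$ equals $\kappa_{[W_{i-1}],[W]}$, up to the normalisation by $|[W_{i-1}]|$ used in the paper. This normalisation accounts for the $|[W_{i-1}]|$ tuples representing the same instance. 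Summing over $W_{i-1}\in\cW_{i-1}$ gives
\[
|E^{[W]}| = \sum_{W_{i-1}\in\cW_{i-1}}\frac{\kappa_{[W_{i-1}],[W]}}{\abs{[W_{i-1}]}}S^{[W_{i-1}]}_{\Gamma_{[i-1]},\Gamma'} =: A_W .
\]

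\textbf{Step 2: reduce to a binomial.} Each call to $\Grover(f_\Gamma)$ returns a uniformly random element of $f_\Gamma^{-1}(1)$, independently across calls, by \Cref{lem:grover}. So the number of the $\ell_\Gamma$ samples that land in $E^{[W]}$ is distributed as $\cB(\ell_\Gamma, A_W/X_\Gamma)$. Strictly, $S^{[W]}_{\Gamma,\Gamma'}$ counts the support, so repeated samples could cause undercounting. We handle this by noting that $S^{[W]}$ is sandwiched between the binomial count minus the number of collisions and the binomial count itself. Since $\ell_\Gamma=\Theta(t_i)=o(\sqrt{t_{i-1}})$ while $X_\Gamma=\Theta(t_{i-1})$, the expected number of collisions is $O(t_i^2/t_{i-1})=o(1)$, and this enters the failure probability. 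Alternatively, we absorb it by conditioning on the no-collision part of the birthday argument in \Cref{clm:_E^H_i}.

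\textbf{Step 3: apply concentration.} Under $\cE^{\tilde X}_\Gamma\cap\cE^X_\Gamma$ we have $\tilde X_\Gamma\in X_\Gamma\pm\eps\underline{c^X_\Gamma}t_{i-1}\subseteq(1\pm\eps)X_\Gamma$. Therefore $\ell_\Gamma/X_\Gamma=\frac{t_i}{t_{i-1}}\cdot\frac{\tilde X_\Gamma}{X_\Gamma}\in(1\pm\eps)\frac{t_i}{t_{i-1}}$, and the mean $\ell_\Gamma A_W/X_\Gamma$ lies in $(1\pm\eps)\frac{t_i}{t_{i-1}}A_W$. For the deviation we use an additive Chernoff/Hoeffding bound (\Cref{lem:B} in additive form). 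It gives
\[
\Pr\left[\bigl|\cB(\ell,p)-\ell p\bigr|>\eta t_i\right]\le 2\exp\bigl(-\Omega(\eta^2t_i^2/\ell)\bigr)=2\exp(-\Omega(\eta^2 t_i)),
\]
using $\ell_\Gamma\le(1+\eps)\overline{c^X_\Gamma}t_i$. This is at most $O(1/(\eta^2 t_i))$. We choose the additive form deliberately: it does not require $A_W$ to be large, which matters because some classes may be rare. That is exactly why $\cE^S_\Gamma$ contains the $\pm\eta t_i$ term.

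\textbf{Step 4: union bound.} There are at most $D_{d,q}\cdot m_{d,q}!$ pairs $(\Gamma',W)$, by $\mu_{\Gamma,\Gamma'}\le m_{d,q}!$. A union bound over them gives failure probability $c^S_\Gamma/t_i = c^S_\Gamma n^{-(2^{m_{d,q}-i}-1)/(2^{m_{d,q}}-1)}$. For $i=m_{d,q}$ we have $t_i=1$ and the bound is trivial if $c^S_\Gamma\ge1$, which is allowed.

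\textbf{Main obstacle.} The main difficulty is Step 1: making rigorous that the count of extending edges per class is exactly $A_W$ with the $1/|[W_{i-1}]|$ normalisation. This requires checking carefully that the dispersion event prevents one edge from extending two instances, and that the class of $H+e$ is determined by the class of $H$ and the edge $e$.
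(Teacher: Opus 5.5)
Your proposal is correct and follows the paper's proof essentially step for step. It uses the same count $\abs{E^{[W]}_{\Gamma,\Gamma'}}=\sum_{W_{i-1}}\frac{\kappa_{[W_{i-1}],[W]}}{\abs{[W_{i-1}]}}S^{[W_{i-1}]}_{\Gamma_{[i-1]},\Gamma'}$ under $\cE^H_{\Gamma_{[i-1]}}$, the binomial law from independent uniform Grover samples, the additive Hoeffding bound with deviation $\eta t_i$ (using $\ell_\Gamma\le(1+\eps)\overline{c^X_\Gamma}t_i$ and $\ell_\Gamma/X_\Gamma\in(1\pm\eps)t_i/t_{i-1}$), and a union bound over the pairs $(\Gamma',W)$.

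Your treatment of repeated samples goes beyond the paper, which simply asserts $\abs{T^{[W]}_{\Gamma,\Gamma'}}=S^{[W]}_{\Gamma,\Gamma'}$ from $\cE^H_{\Gamma_{[i-1]}}$, an event that does not exclude repeats. To keep the stated rate $c^S_\Gamma/t_i$, bound the number $C$ of repeated pairs by Markov against half of the $\eta t_i$ slack. This gives failure probability $O(t_i/(\eta t_{i-1}))\le O(1/(\eta t_i))$, since $t_i^2\le t_{i-1}$. Do not instead add $\Pr[C\ge 1]$, which is of order $t_i^2/t_{i-1}=n^{-1/(2^{m_{d,q}}-1)}$ and too large when $i\le m_{d,q}-2$. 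Nor should you condition on $C=0$: that is not a hypothesis of the claim, and it destroys the independence behind the binomial law.
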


{
Based on the dispersity of $H_\Gamma$ (stated in \Cref{clm:_E^H_i}) and the concentration bound for each equivalence class (stated in \Cref{clm:_E^S_i}), we can construct a linear inequality between $X_\Gamma$ and estimate objects $\{\cnt_{\Gamma'}\}_{\Gamma'\succeq \Gamma}$
}.

\begin{restatable}{claim}{EventEeqi}
\label{clm:_E^eq_i}
     For each  $i \in [m_{d,q}]$ and $\Gamma\in\cD_{d,q}^i$,  when $\cap_{j=1}^{i-1}(\cE^H_{\Gamma_{[j]}} \cap\cE^S_{\Gamma_{[j]}})$ occurs, $\cE^\eq_\Gamma$ always holds.
\end{restatable}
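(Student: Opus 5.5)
We prove \Cref{clm:_E^eq_i} by induction on $i$, mirroring \Cref{clm:E^eq_i} but now tracking the equivalence classes $[W]$. Fix $\Gamma\in\cD_{d,q}^i$ and a type $\Gamma'\succeq\Gamma$. The first step is an exact decomposition of $X_\Gamma$ that uses the dispersion event $\cE^H_{\Gamma_{[i-1]}}$. We want to show that every marked edge $e\in f_\Gamma^{-1}(1)$ extends exactly one instance $H\in\Supp{\cH_{\Gamma_{[i-1]}}}$. Any edge $e$ with $H+e\equiv\Gamma$ must lie within distance $q$ of $\rt(H)$. Since distinct roots are at distance at least $2q$, no edge can serve two different instances, apart from a boundary case at exactly distance $2q$. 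We handle that case by noting the extending edge must touch $V(H)$, so it stays within distance $q-1$ or $q$ on the correct side; if necessary we sharpen the constant in $\cE^H$, which does not affect the probability bound. Hence
\[
X_\Gamma=\sum_{H\in\Supp{\cH_{\Gamma_{[i-1]}}}}\abs{\{e: H+e\equiv\Gamma\}}.
\]

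Next we group the instances $H$ by actual type $\Gamma'$ and by class $[W_{i-1}]$, using the sets $\cH^{[W_{i-1}]}_{\Gamma_{[i-1]},\Gamma'}$ from \Cref{eq:defS}. For $H$ in such a class, fix an isomorphism $\phi$ with $\disc_q(\rt(H))\equiv_\phi\Gamma'$ and $H\equiv_\phi\Gamma'[W_{i-1}]$. The extending edges $e$ correspond bijectively to edges $e'\in E(\Gamma')$ with $W_{i-1}\times e'\in\cW_{\Gamma,\Gamma'}$. Their count is therefore $\sum_{[W]}\kappa_{[W_{i-1}],[W]}$, where the sum runs over classes at level $i$. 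This count does not depend on $\phi$ or on the representative, by the well-definedness of $\kappa$ shown above. Converting sums over classes into sums over tuples (each class $[W]$ counted $\abs{[W]}$ times, and likewise for $[W_{i-1}]$) gives exactly \Cref{eq:XGamma}:
\[
X_\Gamma=\sum_{\Gamma'\succeq\Gamma}\sum_{W\in\cW_i}\sum_{W_{i-1}\in\cW_{i-1}}\frac{\kappa_{[W_{i-1}],[W]}}{\abs{[W]}\,\abs{[W_{i-1}]}}S^{[W_{i-1}]}_{\Gamma_{[i-1]},\Gamma'}.
\]

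We then unroll the recursion using $\cE^S_{\Gamma_{[j]}}$ for $j=i-1,\dots,1$. Each such event replaces $S^{[W_j]}_{\Gamma_{[j]},\Gamma'}$ by
\[
(1\pm\eps)\frac{t_j}{t_{j-1}}\sum_{W_{j-1}}\frac{\kappa_{[W_{j-1}],[W_j]}}{\abs{[W_{j-1}]}}S^{[W_{j-1}]}_{\Gamma_{[j-1]},\Gamma'}\pm\eta t_j .
\]
The factors $1/\abs{[W_j]}$ pair with these terms, so after $i-1$ steps the main term is
\[
(1\pm\eps)^{i-1}\frac{t_{i-1}}{t_0}\sum_{W_0,\dots,W_i}\prod_{j=1}^{i}\frac{\kappa_{[W_{j-1}],[W_j]}}{\abs{[W_j]}}\cnt_{\Gamma'},
\]
using $S^{[W_0]}_{\Gamma_0,\Gamma'}=\cnt_{\Gamma'}$. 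By \Cref{clm:factor} this equals $(1\pm\eps)^{i-1}\frac{t_{i-1}}{t_0}\mu_{\Gamma,\Gamma'}\cnt_{\Gamma'}$.

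The main obstacle is bookkeeping the additive errors. The error $\pm\eta t_j$ introduced at level $j$ is multiplied by the downstream products of $\kappa/\abs{[\cdot]}$ and by $t_{j'}/t_{j'-1}$ for $j'>j$. Since $t$ telescopes, this gives $\eta t_{i-1}$ times a path-count coefficient bounded by a tree count of the type in \Cref{clm:factor}. Summed over $\Gamma'$ and the $i-1$ levels, the total is at most $i\mu_\Gamma\eta t_{i-1}$, where $\mu_\Gamma$ is taken to be the constant bounding $\sum_{\Gamma'}$ of these path counts, as in \Cref{clm:para}. Pulling out the $(1\pm\eps)^{i-1}$ factor (monotone, since all quantities are nonnegative) yields $\cE^\eq_\Gamma$ deterministically.
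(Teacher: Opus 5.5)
Your proposal is correct and follows the paper's proof step for step. You decompose $X_\Gamma$ into class-indexed sums via dispersion to get the identity expressing $X_\Gamma$ through the $S^{[W_{i-1}]}_{\Gamma_{[i-1]},\Gamma'}$, unroll the recursion through the events $\cE^S_{\Gamma_{[j]}}$, collapse the main coefficient with \Cref{clm:factor}, and absorb the propagated $\pm\eta t_j$ terms into $i\mu_\Gamma\eta t_{i-1}$. Your subtree leaf-count bound on those error coefficients, and your remark that the dispersion threshold really needs to be $2q+1$ rather than $2q$ (an edge joining two vertices at distance exactly $q$ from both roots could otherwise extend two instances), are more careful than the paper's write-up.

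One small addition: at $i=1$ the base instances are all single vertices and are not dispersed. There, uniqueness of the extended instance comes instead from the fact that in a one-edge type the root is forced to be the head or the tail of $e$.
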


\subsubsection{Proofs of these useful claims}
From here, we give the detailed proofs of all claims stated above.

\EventEHi*
\begin{claimproof}
We prove it by induction.

\textbf{Basis step}: for each $\Gamma\in\cD_{d,q}^1$, $\Pr\left[\cE^H_\Gamma \mid \cE^X_{\Gamma}\cap \cE^{\tilde{X}}_{\Gamma} \right] \geq 1 - c^H_\Gamma n^{-1/(2^{m_{d,q}}-1)}$.

Notice $\cD_{d,q}^1$ only contains two isomorphism types.
One is $\Gamma_1(\{v_0,v_1\},\{(v_0,v_1)\},v_0)$, the $2$-vertex rooted digraph with one edge pointing from the other vertex to the root, and the other one is $\Gamma_2(\{v_1,v_0\},\{(v_0,v_1)\},v_0)$ which reverses the direction of the edge.
Without loss of generality, we only prove the case $\Gamma=\Gamma_1$, and the other case holds similarly.

    Let $e_s$ be the $s$-th element in $T_{\Gamma}$, then $\cE^H_{\Gamma}$ is equivalent to the event: 
    \[
    \forall s<t\in[\ell_{\Gamma}], \dist(\tail(e_s),\tail(e_t))\geq 2q.
    \]
    
    Since the out-degree is bounded by $d$, thus
    \begin{align*}
        \Pr\left[ \overline{\cE^H_{\Gamma}} \right] &= \Pr[\exists s<t\in [\ell_{\Gamma}], \dist(\tail(e_s),\tail(e_t))<2q] \\
        &\leq \sum_{s<t\in [\ell_{\Gamma}]}\Pr[\dist(\tail(e_s),\tail(e_t))<2q] \\
        &= \sum_{s<t\in [\ell_{\Gamma}]}\Pr[e_t\in E(\disc_{2q}(\tail(e_s)))].
    \end{align*}

    Notice that the number of edges in a $2q$-disc is no more than $m_{d,2q}$, and $e_t$ is uniformly chosen from $E(G)$, thus,
    \begin{align*}
        \Pr\left[ \overline{\cE^H_{\Gamma}} \right] &\leq \sum_{s<t\in [\ell_{\Gamma}]} \frac{m_{d,2q}}{\abs{E(G)}} \leq \frac{\ell_{\Gamma}^2 m_{d,2q}}{\abs{E(G)}} = m_{d,2q}\frac{t_1^2}{t_0^2}\frac{\tilde{X}_{\Gamma}^2}{X_{\Gamma}}.
    \end{align*}

    Conditioned on event $\cE^X_{\Gamma}\cap \cE^{\tilde{X}}_{\Gamma}$, we know $\tilde{X}_{\Gamma} \leq (1+\eps) X_{\Gamma} \leq (1+\eps)\overline{c^X_{\Gamma}} t_0$, thus
    \[
    \Pr\left[ \overline{\cE^H_{\Gamma}} \mid \cE^X_{\Gamma}\cap \cE^{\tilde{X}}_{\Gamma} \right] \leq (1+\eps)^2 m_{d,2q} \overline{c^X_{\Gamma}} \frac{t_1^2}{t_0} = (1+\eps)^2 m_{d,2q} \overline{c^X_{\Gamma}} \cdot n^{-1/(2^{m_{d,q}}-1)}.
    \]

\textbf{Inductive step}: for each $\Gamma\in\cup_{i> 1}\cD_{d,q}^i$, \[
\Pr\left[\cE^H_\Gamma \mid \cE^X_\Gamma \cap \cE^{\tilde{X}}_\Gamma \cap \cE^X_{\Gamma_{[i-1]}} \cap \cE^{\tilde{X}}_{\Gamma_{[i-1]}} \cap \cE^H_{\Gamma_{[i-1]}}\right] \geq 1 - c^H_\Gamma n^{-1/(2^{m_{d,q}}-1)}.\]

Conditioned on event $\cE^H_{\Gamma_{[i-1]}}$, we already know the distance between $\Gamma_{[i-1]}$-instances in $\cH_{\Gamma_{[i-1]}}$ are more than $2q$, thus it is sufficient to prove that no two edges expanding the same $\Gamma_{[i-1]}$-instance are both sampled, i.e., 
\(\forall s<t\in[\ell_\Gamma], \neg \exists H\in\cH_{\Gamma_{[i-1]}}\) such that \(H+ e_s\equiv \Gamma\) and \(H+ e_t\equiv \Gamma\).

\begin{align*}
\Pr\left[\overline{\cE^H_\Gamma}\mid \cE^H_{\Gamma_{[i-1]}}\right] &= \Pr[\exists s<t\in[\ell_\Gamma], \exists H\in\cH_{\Gamma_{[i-1]}}, (H+ e_s\equiv \Gamma) \wedge (H+ e_t\equiv \Gamma)] \\
&\leq\sum_{s<t\in[\ell_\Gamma]}\sum_{H\in \cH_{\Gamma_{[i-1]}}}\Pr[H+ e_s\equiv \Gamma]\Pr[H+ e_t\equiv \Gamma]. \\
\end{align*} 

Notice that $e_s,e_t$ are uniformly chosen from $f_\Gamma^{-1}(1)$, and there are at most $m_{d,q}$ edges expanding $H$, thus $\Pr[H+ e_s\equiv \Gamma] \leq m_{d,q}/X_\Gamma$,
\[
\Pr\left[\overline{\cE^H_\Gamma}\mid \cE^H_{\Gamma_{[i-1]}}\right] \leq \ell_\Gamma^2 \abs{\cH_{\Gamma_{[i-1]}}} \left(\frac{m_{d,q}}{X_\Gamma}\right)^2 = \ell_\Gamma^2 \ell_{\Gamma_{[i-1]}} \left(\frac{m_{d,q}}{X_\Gamma}\right)^2.
\]

Conditioned on event $\cE^X_{\Gamma}$ and $\cE^{\tilde{X}}_{\Gamma}$, $\ell_\Gamma = \tilde{X}_\Gamma t_i/t_{i-1} \leq (1+\eps)X_\Gamma t_i/t_{i-1}$.
Conditioned on event $\cE^X_{\Gamma_{[i-1]}}$ and $\cE^{\tilde{X}}_{\Gamma_{[i-1]}}$, $\ell_{\Gamma_{[i-1]}} = \tilde{X}_{\Gamma_{[i-1]}} t_{i-1}/t_{i-2} \leq (1+\eps) {X}_{\Gamma_{[i-1]}} t_{i-1}/t_{i-2} \leq (1+\eps)\overline{c^X_{\Gamma_{[i-1]}}} t_{i-1}$, then
\[
\Pr\left[\overline{\cE^H_\Gamma}\mid \cE^H_{\Gamma_{[i-1]}}\right] \leq (1+\eps)^3 m_{d,q}^2 \overline{c^X_{\Gamma_{[i-1]}}} \cdot n^{-1/(2^{m_{d,q}}-1)}.
\]

Now set constants 
\[
c^H_\Gamma=
\begin{cases}
    (1+\eps)^2 m_{d,2q} \overline{c^X_{\Gamma}}, \quad &\Gamma\in\cD_{d,q}^1; \\
    (1+\eps)^3 m_{d,q}^2 \overline{c^X_{\Gamma_{[i-1]}}}, \quad &\Gamma\in \cup_{i>1}\cD_{d,q}^{i}.
\end{cases}
\]
That completes the proof.

\end{claimproof}

\EventXGammai*
\begin{claimproof}
    Define constants $\overline{c^X_\Gamma}$ as 
    \[
    \overline{c^X_{\Gamma}} = 
    \begin{cases}
        d, &\Gamma\in\cD_{d,q}^1; \\
        m_{d,q}\left(\overline{c^X_{{\Gamma}_{[i-1]}}}+\eps \underline{c^X_{{\Gamma}_{[i-1]}}}\right), &\Gamma\in \cup_{i> 1}\cD_{d,q}^i.
    \end{cases}
    \]
    
    \textbf{Basis step}: when $\Gamma\in\cD_{d,q}^1$, since the maximum degree of the graph is bounded by $d$, $X_\Gamma=\abs{E(G)} \leq dn$.
    
    \textbf{Inductive step:}
    notice $\cH_{\Gamma_{[i-1]}}$ contains $\ell_{\Gamma_{[i-1]}}$ many $\Gamma_{[i-1]}$, each one contributes at most $m_{d,q}$ edges to $f_\Gamma^{-1}(1)$, thus
    \[
    X_\Gamma=\abs{f_\Gamma^{-1}(1)} \leq m_{d,q} \ell_{\Gamma_{[i-1]}} = (m_{d,q} t_{i-1}/t_{i-2}) \tilde{X}_{\Gamma_{[i-1]}}.
    \]
    
    Conditioned on event $\cE^{\tilde{X}}_{\Gamma_{[i-1]}}$ and assume the claim $X_\Gamma \leq \overline{c^X_\Gamma} t_{i-1}$ holds for $\Gamma\in \cap_{j=1}^{i-1} \cD_{d,q}^j$, then
    \[
    X_\Gamma \leq (m_{d,q} t_{i-1}/t_{i-2})({X}_{\Gamma_{[i-1]}}+ \eps \underline{c^X_{\Gamma_{[i-1]}}} t_{i-2}) \leq  m_{d,q} t_{i-1}(\overline{c^X_{\Gamma_{[i-1]}}} + \eps \underline{c^X_{\Gamma_{[i-1]}}} ) = \overline{c^X_\Gamma} t_{i-1}.
    \]
    
    That completes the proof.
\end{claimproof}

\EventEXii*
\begin{claimproof}
By \Cref{cor:count}, when
\[
M_\Gamma = c^M_\Gamma\sqrt{\frac{n}{t_{i-1}}} \geq \max\left\{\frac{4\pi \sqrt{\overline{c^X_\Gamma} d}}{\eps \underline{c^X_\Gamma}}\sqrt{\frac{n}{t_{i-1}}}, \sqrt{\frac{2\pi^2 d}{\eps\underline{c^X_\Gamma}}}\sqrt{\frac{n}{t_{i-1}}}\right\},
\]
it follows that, with probability at least  $1 - 1/100m_{d,q}D_{d,q}$, 
\[
\abs{\tilde{X}_{\Gamma} - X_{\Gamma}} \leq 2\pi \frac{\sqrt{ \overline{c^X_\Gamma} t_{i-1} \cdot d n}}{M_{i}}+\pi^2\frac{dn}{M_{i}^2} \leq \eps\underline{c^X_{\Gamma}} t_{i-1}.
\]

Thus, the claim follows.
\end{claimproof}

\EventSii*
\begin{claimproof}
    Conditioned on event $\cE^H_\Gamma$, there is no edge expanding two different $\Gamma$-instances.
    Denote 
    \[
    E^{[W]}_{\Gamma,\Gamma'} = \{e\in E(G): \exists H \in\cH^{[W_{i-1}]}_{\Gamma_{[i-1]},\Gamma'}, \disc_q(\rt(H))\equiv_\phi\Gamma', H+ e\equiv_\phi \Gamma'[W] \},
    \]
    
    For each $W_{i-1}\in \cW_{\Gamma_{[i-1]},\Gamma'}$ and  $H\in\cH^{[W_{i-1}]}_{\Gamma_{[i-1]},\Gamma'}$, conditioned on event $ \cE^H_{\Gamma_{[i-1]}}$, $H$ contributes exactly $\kappa_{[W_{i-1}],[W]}$ edges to $E^{[W]}_{\Gamma,\Gamma'}$, thus
    \[
    \abs{E^{[W]}_{\Gamma,\Gamma'}} = \sum_{W_{i-1}\in \cW_{\Gamma_{[i-1]},\Gamma'}}\frac{1}{\abs{[W_{i-1}]}} \kappa_{[W_{i-1}],[W]}S^{[W_{i-1}]}_{\Gamma_{[i-1]},\Gamma'}.
    \]

    Use $T^{[W]}_{\Gamma,\Gamma'}$ to denote edges in $E^{[W]}_{\Gamma,\Gamma'}$ that are sampled in $T_\Gamma$. (multiple edges in $T_\Gamma$ will occur multiple times in $T^{[W]}_{\Gamma,\Gamma'}$). 
    Conditioned on event $\cE^H_{\Gamma_{[i-1]}}$, edges in $T^{[W]}_{\Gamma,\Gamma'}$ correspond one-to-one with the $\Gamma$-instances in $\cH^{[W]}_{\Gamma,\Gamma'}$, i.e.,
    \[
    T^{[W]}_{\Gamma,\Gamma'} = T_\Gamma\cap E^{[W]}_{\Gamma,\Gamma'},
    \quad \abs{T^{[W]}_{\Gamma,\Gamma'}}=S^{[W]}_{\Gamma,\Gamma'}.
    \]
    
    Since we apply \textsc{Grover} on $f_\Gamma$ for $\ell_\Gamma$ independent samples, each uniformly distributed in  $\in f_\Gamma^{-1}(1)$ uniformly, the probability that a sample falls into $T^{[W]}_{\Gamma,\Gamma'}$ is $\abs{E^{[W]}_{\Gamma,\Gamma'}}/ X_\Gamma$.
    That means \[
    S^{[W]}_{\Gamma,\Gamma'}\sim \cB(\ell_\Gamma, \abs{E^{[W]}_{\Gamma,\Gamma'}}/ X_\Gamma).\]

    Now, we can bound
    \begin{align*}
        p^{[W]}_{\Gamma,\Gamma'} &:= \Pr\left[S^{[W]}_{\Gamma,\Gamma'} \in (1\pm\eps) \frac{t_i}{t_{i-1}} \abs{E^{[W]}_{\Gamma,\Gamma'}}\pm\eta t_i \right] \geq \Pr\left[S^{[W]}_{\Gamma,\Gamma'} \in \ell_\Gamma \abs{E^{[W]}_{\Gamma,\Gamma'}} / X_\Gamma \pm \eta t_i\right] \\
        &\geq 1- 2\exp{\left(-2\eta^2 t_i^2 / \ell_\Gamma\right)} \geq 1-2\exp{\left(-2\eta^2 t_i / \overline{c^X_\Gamma}\right)}
        \geq 1-\frac{\overline{c^X_\Gamma}}{\eta^2 t_i}.
    \end{align*}
    The first inequality holds because of event $\cE^{X}_\Gamma \cap \cE^{\tilde{X}}_\Gamma$, the second inequality holds because of  \Cref{lem:B}, the third inequality holds because of $\cE^X_\Gamma$ again, and the last inequality holds because of the fact $e^{-x}\leq 1/x$.

    Using the union bound for $\Gamma'\succeq \Gamma$ and $W\in \cW_{\Gamma,\Gamma'}$, we have,
    \[
        \Pr\left[\cE^S_\Gamma\right] \geq 1 - 2 \sum_{\Gamma'\succeq \Gamma}\sum_{W\in\cW_{\Gamma,\Gamma'}}\frac{\overline{c^X_\Gamma}}{\eta^2 t_i} \geq 1 - D_{d,q}\mu_{\Gamma,\Gamma'}\frac{\overline{c^X_\Gamma}}{\eta^2 t_i}.
    \]

    Set $c^S_\Gamma = D_{d,q}\mu_{\Gamma,\Gamma'}\overline{c^X_\Gamma} /\eta^2$ and the claim holds.
\end{claimproof}

\EventEeqi*
\begin{claimproof}
    Define $E^{[W]}_{\Gamma,\Gamma'}$ same as in the proof of \Cref{clm:_E^S_i}, then
    \[X_\Gamma = |f_\Gamma^{-1}(1)| = \sum_{\Gamma'\succeq \Gamma}\sum_{W\in \cW_{\Gamma,\Gamma'}} \frac{1}{\abs{[W]}} \abs{E^{[W]}_{\Gamma,\Gamma'}} = \sum_{\Gamma'\succeq \Gamma} \sum_{W_i}\sum_{W_{i-1}}\frac{\kappa_{[W_{i-1}],[W_i]}}{{\abs{[W_i]}\cdot\abs{[W_{i-1}]}}}S^{[W_{i-1}]}_{\Gamma_{[i-1]},\Gamma'}.
    \]

    Now we bound $S^{[W_{i-1}]}_{\Gamma_{[i-1]},\Gamma'}$, conditioned on event $\cap_{j=1}^{i-1}\cE^S_{\Gamma_{[j]}}$, we have

    \begin{equation*}
        \begin{aligned}
            S^{[W_{i-1}]}_{\Gamma_{[i-1]},\Gamma'} & \leq (1+\eps)\frac{t_{i-1}}{t_{i-2}}\sum_{W_{i-2}}\frac{\kappa_{[W_{i-2}],[W_{i-1}]}}{\abs{[W_{i-2}]}} S^{[W_{i-2}]}_{\Gamma_{[i-2]},\Gamma'} + \eta t_{i-1} \leq \dots\\
        & \leq  (1+\eps)^{i-1}\frac{t_{i-1}}{t_0} \sum_{W_{i-2}}\dots \sum_{W_0} \prod_{j=1}^{i-1}\frac{\kappa_{[W_{j-1}],[W_j]}}{\abs{[W_{j-1}]}} S^{[W_0]}_{\Gamma_{[0]},\Gamma'} \\
        &+ \eta t_{i-1}+ (1+\eps)\sum_{W_{i-2}}\frac{\kappa_{[W_{i-2}],[W_{i-1}]}}{\abs{[W_{i-2}]}} \eta t_{i-1} +\dots + (1+\eps)^{i-1} \sum_{W_{i-2}}\dots \sum_{W_0} \prod_{j=1}^{i-1}\frac{\kappa_{[W_{j-1}],[W_j]}}{\abs{[W_{j-1}]}} \eta t_{i-1} \\
        &\leq  (1+\eps)^{i-1}\frac{t_{i-1}}{t_0} \sum_{W_{i-2}}\dots \sum_{W_0} \prod_{j=1}^{i-1}\frac{\kappa_{[W_{j-1}],[W_j]}}{\abs{[W_{j-1}]}} S^{[W_0]}_{\Gamma_{[0]},\Gamma'} + i (1+\eps)^{i-1} \sum_{W_{i-2}}\dots \sum_{W_0} \prod_{j=1}^{i-1}\frac{\kappa_{[W_{j-1}],[W_j]}}{\abs{[W_{j-1}]}} \eta t_{i-1}.
        \end{aligned}
    \end{equation*}
    
    Thus, there exists a constant $\mu_{\Gamma} = \sum_{\Gamma'\succeq \Gamma} \mu_{\Gamma,\Gamma'}$ such that 
    \begin{align*}
        X_\Gamma &\leq (1+\eps)^{i-1} \frac{t_{i-1}}{t_0}  \sum_{\Gamma'\succeq \Gamma}\sum_{W_i}\dots \sum_{W_0} \prod_{j=1}^{i}\frac{\kappa_{[W_{j-1}],[W_j]}}{\abs{[W_{j-1}]}} S^{[W_0]}_{\Gamma_{[0]},\Gamma'} + i(1+\eps)^{i-1} \mu_{\Gamma} \eta t_{i-1} \\
        &= (1+\eps)^{i-1} \left( \frac{t_{i-1}}{t_0} \sum_{\Gamma'\succeq \Gamma} \mu_{\Gamma,\Gamma'} \cnt_{\Gamma'} + i \mu_{\Gamma} \eta t_{i-1} \right).\quad\text{(c.f. \Cref{clm:factor})}
    \end{align*}
    
    The other side holds similarly, which completes the proof.
\end{claimproof}

\subsubsection{Proof of \texorpdfstring{\Cref{clm:E^X}}{Clm4.7}}
\EventEX*
\begin{claimproof}
    Conditioned on event $\cF_{\Gamma_{[i-1]}}$,  $\cE^\eq_\Gamma$ and $\cE^{X<}_\Gamma$ always hold by \Cref{clm:_E^eq_i} and \Cref{clm:XGammai}.
    
    \textbf{Case 1:} When $\boldsymbol{\gamma}_\Gamma=1$, i.e.  $\sum_{\Gamma'\succeq \Gamma}\mu_{\Gamma,\Gamma'}\cnt_{\Gamma'}\geq \alpha_\Gamma n$, then event $\cE^{X>}_\Gamma$ always holds because
    \[
    X_\Gamma\geq (1-\eps)^{i-1}(\alpha_\Gamma - i\mu_\Gamma \eta) t_{i-1}= \underline{c^X_\Gamma} t_{i-1}.
    \]

    The last equation holds because of the \Cref{def:type}.
    Then by \Cref{clm:_E^-X_i}, with probability at least $1-1/100mD_{d,q}$,
    \[
    \tilde{X}_\Gamma \geq X_\Gamma - \eps \underline{c^X_\Gamma} t_{i-1} \geq (1-\eps) \underline{c^X_\Gamma} t_{i-1}.
    \]
    
    Thus, $\tilde{X}_\Gamma$ will pass the threshold check at Line \ref{line:0}.
    Moreover, by \Cref{clm:_E^S_i} and \Cref{clm:_E^H_i}, the first case of the claim holds.

    \textbf{Case 2:} When $\boldsymbol{\gamma}_\Gamma=0$, i.e. $\sum_{\Gamma'\succeq \Gamma}\mu_{\Gamma,\Gamma'}\cnt_{\Gamma'}< \beta_\Gamma n$, then 
    \[
    X_\Gamma< (1+\eps)^{i-1}(\beta_\Gamma + i\mu_\Gamma \eta) t_{i-1} \leq (1-2\eps) \underline{c^X_\Gamma} t_{i-1}.
    \]
    
    The last equation holds because of the \Cref{def:type}. Then by \Cref{clm:_E^-X_i}, with probability at least $1-1/100mD_{d,q}$,
    \[
    \tilde{X}_\Gamma \leq X_\Gamma + \eps \underline{c^X_\Gamma} t_{i-1} < (1-\eps) \underline{c^X_\Gamma} t_{i-1}.
    \]
    
    Thus, $\tilde{X}_\Gamma$ will fail the threshold check at Line \ref{line:0}.
\end{claimproof}

\subsubsection{Proof of \texorpdfstring{\Cref{clm:Eintersection}}{Clm4.8}}
\begin{proof}
    Since $\boldsymbol{\gamma}_{\Gamma_{[1]}} = \dots = \boldsymbol{\gamma}_{\Gamma_{[i-1]}} = 1$, by \Cref{clm:E^X}, then
    \begin{align*}
    \Pr\left[\cF_{\Gamma_{[i-1]}}\right] &= \prod_{j=1}^{i-1} \Pr\left[\cE_{\Gamma_{[j]}}\mid \cF_{\Gamma_{[j-1]}}\right] \\
    &\geq \prod_{j=1}^{i-1}(1-1/100m_{d,q}D_{d,q})(1 - c^S_\Gamma n^{-(2^{m_{d,q}-j}-1)/(2^{m_{d,q}}-1)})(1- c^H_\Gamma n^{-1/(2^{m_{d,q}}-1)}) \\
    &\geq (1-1/10m_{d,q}D_{d,q})^{i-1}.
    \end{align*}

    \textbf{Case 1:} if $\boldsymbol{\gamma}_{\Gamma} = 1$, by \Cref{clm:E^X} again, 
    \[
    \Pr\left[\cE^\eq_\Gamma\cap \cE^X_\Gamma \cap \cE^{\tilde{X}}_\Gamma \mid \cF_{\Gamma_{[i-1]}} \right] \geq 1-1/100m_{d,q}D_{d,q} .
    \]

    Thus,
    \[
    \Pr\left[\cap_{j=1}^i \left(\cE^\eq_{\Gamma_{[j]}} \cap \cE^X_{\Gamma_{[j]}} \cap \cE^{\tilde{X}}_{\Gamma_{[j]}}\right)\right] \geq (1-1/10m_{d,q}D_{d,q})^{i-1}(1-1/100m_{d,q}D_{d,q}) \geq 1-1/10D_{d,q}.
    \]

    \textbf{Case 2:} if $\boldsymbol{\gamma}_{\Gamma} = 0$, by \Cref{clm:E^X} again,

    \[
    \Pr\left[\cap_{j=1}^{i-1} \left(\cE^\eq_{\Gamma_{[j]}} \cap \cE^X_{\Gamma_{[j]}} \cap \cE^{\tilde{X}}_{\Gamma_{[j]}}\right) \cap \cE^\eq_{\Gamma_{[i]}}\right] \geq \Pr\left[\cF_{\Gamma_{[i-1]}}\right] \geq 1-1/10D_{d,q}.
    \]
\end{proof}
\section{Testing Properties in the Quantum  Unidirectional Model}
\label{sec:test}
In this section, we complete the proof of our main theorem, stated below. We note that \Cref{thm:main} follows directly from this result.

\begin{theorem}[full version of \Cref{thm:main}]
\label{thm:full-main}
   If a digraph property is $\eps$-testable with two-sided error and classical query complexity $q=O_{\eps,d}(1)$ in the $d$-bounded-degree digraph bidirectional model, then it is also $\eps$-testable with two sided error and quantum query complexity $O(n^{1/2-1/2(2^{m_{d,q}}-1)})$ in the $d$-bounded-degree digraph unidirectional model, where $m_{d,q} = 2d((2d)^{q+1}-1)/(2d-1)$.
\end{theorem}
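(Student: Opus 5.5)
The plan is to follow the CPS reduction of \cite{10.1145/2897518.2897575} and use \Cref{thm:disc*} as the frequency-estimation engine.

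\textbf{Step 1: canonical tester.} Starting from a classical bidirectional $\eps$-tester with query complexity $q$, I will pass to a canonical tester. This tester acts on $G$ only through the normalized frequency vector $\cF(G)/n=(\cnt_\Gamma(G)/n)_{\Gamma\in\cD_{d,q}}$. Concretely, a $q$-query bidirectional tester can be simulated by sampling $O_{q}(1)$ uniform vertices and exploring their $q$-discs. Its acceptance probability is then, up to $o(1)$, a function $p(\cF(G)/n)$ that depends only on the empirical distribution of $q$-disc types. The dependence is continuous: if $\norm{\cF(G)/n-\cF(G')/n}_1\le\delta$, the acceptance probabilities differ by at most $O_q(\delta)$, because the sampled vertices are distinct w.h.p. and their discs are distributed according to the type frequencies. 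Accordingly, I will fix $\delta=\delta(\eps,d)$ small enough that a $\delta$-perturbation of the frequency vector moves the acceptance probability by less than $1/12$.

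\textbf{Step 2: unidirectional quantum tester.} I will run \textsc{EstDisc*} (\Cref{alg:ApproxP*}) with parameter $\delta$, amplified to success probability $0.99$ by $O(1)$ repetitions and a coordinate-wise median. This produces $\widetilde{\cnt}_\Gamma$ for $\Gamma\in\cD^*_{d,q}$, and I will set $\widetilde{\cnt}_{\Gamma_0}=n-\sum_{\Gamma\neq\Gamma_0}\widetilde{\cnt}_\Gamma$, which has error at most $\delta n$. Next I will classically compute $p(\widetilde{\cF}/n)$, possibly after projecting onto the simplex, and accept iff $p\ge 1/2$. The original tester accepts with probability $\ge 2/3$ on yes-instances and $\le 1/3$ on $\eps$-far instances. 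By Step 1, the computed value therefore lies on the correct side of $1/2$ whenever the estimation succeeded. The query complexity comes entirely from \Cref{thm:disc*}, namely $O_{\delta,d,q}(n^{1/2-1/2(2^{m_{d,q}}-1)})$. The value $m_{d,q}=2d\,s_{d,q}=2d((2d)^{q+1}-1)/(2d-1)$ follows from summing the geometric series for $s_{d,q}$.

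\textbf{Main obstacle.} The delicate point is Step 1: that acceptance depends (approximately and continuously) only on the frequency vector. This holds because a tester using $q$ adaptive queries in the bidirectional model sees only the union of $q$-discs around its starting vertices, up to the tester's randomness and the port labeling. Two issues need care. The first is that the tester may depend on the port numbering and not only on the isomorphism type. I will handle this, as in CPS, by first replacing the tester with one that randomly permutes the ports, which preserves correctness. The second is that queries may collide across different sampled vertices; this happens with probability $O(q^2 s_{d,q}/n)=o(1)$. I expect to cite the CPS canonical-tester lemma for this rather than reprove it, noting that its statement is purely about the bidirectional tester and is therefore model-independent. The remaining work is bookkeeping of constants.
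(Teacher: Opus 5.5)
Your proposal is correct and follows the same route as the paper: the CPS canonical tester reduces testing to approximating the disc-type frequency vector, \textsc{EstDisc*} (\Cref{thm:disc*}) supplies that vector, and continuity of the acceptance probability in the frequencies finishes the argument. The paper gets continuity through the rejecting family $\cF_n$ of \Cref{lem:tester} and the hypergeometric quantity $\est(F)$, where you use a Lipschitz acceptance polynomial.

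Two of your details are tighter than the paper's sketch. First, the amplification is genuinely needed, because \Cref{thm:disc*} only guarantees success probability $5/9<2/3$. Note that a coordinatewise median only bounds each coordinate's error by $\delta n$, so you should run it with $\delta/(2D_{d,q})$. Second, the stated exponent $m_{d,q}$ requires your direct $q$-root, $q$-disc argument. \Cref{lem:tester} as stated works with $cq$-discs, so citing it, as you propose and as the paper does with $k=cq$, only yields the exponent with $m_{d,cq}$.
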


To prove the above, we need the following result from \cite{10.1145/2897518.2897575} on the 
canonical testing algorithm for constant query testable properties in the classical bidirectional model.

\begin{lemma}[\cite{10.1145/2897518.2897575} Lemma 3.2]
\label{lem:tester}
    Let $\Pi=(\Pi_n)_{n\in\N}$ be a property that is testable with query complexity $q = q(\eps,d)$ in the bidirectional model. Then there exists a universal constant $c > 0$ such that for every $\eps$ and $d$, there is an integer $n_0$, and an infinite sequence $\cF = (\cF_n)_{n\geq n_0}$, such that for any $n\geq n_0$, $\cF_n$ is a set of digraphs, each being a union of $cq$ disjoint ($cq$)-discs, and for any $n$-vertex digraph $G$,
    \begin{itemize}
        \item if $G$ satisfies $\Pi_n$, then with probability at most $5/12$ the disjoint union of ($c \cdot q$)-discs rooted at $c \cdot q$ uniformly sampled (without replacement) vertices span a digraph isomorphic to one of the members in $\cF_n$;
        \item  if $G$ is $\eps$-far from satisfying $\Pi_n$, then with probability at least $7/12$ the disjoint union of ($c \cdot q$)-discs rooted at $c \cdot q$ uniformly sampled (without replacement) vertices span a digraph isomorphic to one of the members in $\cF_n$.
    \end{itemize}
\end{lemma}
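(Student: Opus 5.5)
The statement is Lemma 3.2 of \cite{10.1145/2897518.2897575}, which we only invoke. If one wanted to re-derive it, I would follow the Goldreich--Ron canonical-tester argument, adapted to the bidirectional digraph model. The argument has four steps: amplify, simulate by discs, symmetrize, and round.

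\textbf{Amplification.} Start from an $\eps$-tester $\cA$ with $q$ queries and error $1/3$. Run it $O(1)$ times and take the majority vote. This gives a tester $\cA'$ with $O(q)$ queries and error at most $1/12$. This constant blow-up is one source of the universal constant $c$.

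\textbf{Simulation by discs.} Every property here is closed under isomorphism, so $\cA'$ has the same acceptance probability on $\pi(G)$ as on $G$, for a uniformly random relabeling $\pi$ of $V$. On $\pi(G)$, each query of $\cA'$ falls into one of two cases.
\begin{itemize}
\item It concerns a vertex already discovered through earlier answers. In the bidirectional model, both $\cO^\out_G$ and $\cO^{\mathrm{in}}_G$ are available, so the answer lies within distance $O(q)$ of previously seen vertices.
\item It concerns a fresh label. Then it behaves like a uniformly random vertex that has not been seen yet.
\end{itemize}
The simulator $\cA''$ therefore samples $cq$ uniform vertices without replacement. It explores the full $(cq)$-disc around each of them using in- and out-queries, which costs $O_{\eps,d}(1)$ queries since degrees are bounded. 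It then answers every query of $\cA'$ from this view. Queries to fresh labels are served by the next sampled root, and neighbor queries by the disc that contains the current vertex. Because $\cA'$ makes at most $O(q)$ queries, it never walks out of the explored discs. The labels $\cA'$ sees are random under $\pi$, so the only information it receives is the rooted isomorphism type of the explored subgraph, together with the order of the roots and its own coins. Hence the output of $\cA''$ has the same distribution as that of $\cA'$ on $\pi(G)$.

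\textbf{Disjointness and rounding.} Two of $cq$ random roots lie within distance $2cq$ of each other with probability $O_{\eps,d}(1/n)$, by a birthday-type bound. So with probability $1-o(1)$ the explored graph is a disjoint union of $cq$ discs. For each such union type $T$, let $p_n(T)$ be the acceptance probability of $\cA'$ given view $T$, averaged over root orderings and internal coins; $T$ need not be realizable in $G$. Define $\cF_n=\{T : p_n(T)<1/2\}$, i.e., the views leading to rejection; the dependence on $n$ enters because $\cA'$ may use $n$. The rounding is by Markov's inequality on $1-p$ and on $p$.
\begin{itemize}
\item If $G\in\Pi_n$, then $\E[1-p_n(T)]\le 1/12$. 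Hence $\Pr[T\in\cF_n]\le 1/6$, plus $o(1)$ for overlapping discs, which is at most $5/12$ for $n\ge n_0$.
\item If $G$ is $\eps$-far, then $\E[p_n(T)]\le 1/12$. Hence $\Pr[T\in\cF_n]\ge 1-1/6-o(1)\ge 7/12$.
\end{itemize}

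\textbf{Main obstacle.} The delicate step is the simulation. One must argue rigorously that an adaptive tester on a randomly relabeled graph is distributionally equivalent to a procedure that only inspects the isomorphism type of a disjoint union of discs. This means handling the tester's guesses of labels that happen to hit already-seen vertices, which occur only with probability $o(1)$. It also means checking that sampling roots without replacement matches how fresh labels behave under $\pi$. I would handle both by a hybrid argument that charges all discrepancies to an $O(1/n)$ collision event, which the slack between $1/6$ and $5/12$ absorbs.
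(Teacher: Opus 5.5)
Your four steps (amplify, simulate by discs, birthday bound for disjointness, Markov rounding at threshold $1/2$) follow the standard Goldreich--Ron canonical-tester argument that underlies Lemma~3.2 of \cite{10.1145/2897518.2897575}. The paper itself gives no proof and only imports the lemma. Several parts of your sketch are fine as written: the amplification, the birthday bound, averaging $p_n(T)$ over root orderings, and the rounding arithmetic ($\Pr[T\in\cF_n]\le 1/6+o(1)$, respectively $\ge 5/6-o(1)$).

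There is one concrete, though easily repaired, gap in the simulation step. The oracles $\cO^{\mathrm{out}}_G(v,i)$ and $\cO^{\mathrm{in}}_G(v,i)$ return the $i$-th neighbor, so the representation of $G$ includes an ordering of every adjacency list, and $\cA'$ may exploit it. A random relabeling $\pi$ of $V$ leaves these orderings untouched. Hence your claim that ``the only information it receives is the rooted isomorphism type of the explored subgraph'' is false: the view also records which neighbor sits in which position. The disc types here are rooted digraphs up to vertex bijections, with no port numbering. So your $p_n(T)$ is not a function of $T$ alone, and $\cF_n$ would come out as a family of port-labeled views rather than digraphs, as the statement requires.

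The fix is to let the simulator additionally apply independent uniformly random permutations to the actual entries of each vertex's in- and out-lists. Equivalently, run $\cA'$ on a uniformly random representation of a digraph isomorphic to $G$. This is legitimate because membership in $\Pi_n$ and distance to $\Pi_n$ depend only on the digraph, so $\cA'$ keeps error at most $1/12$ on every such representation. That error bound, not equality of acceptance probabilities on $G$ and $\pi(G)$ as you wrote, is what closure under isomorphism gives you.

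Every vertex that $\cA'$ actually queries lies at distance less than $cq$ from its root, so its full adjacency lies inside the explored disc. After this randomization, the conditional distribution of the view given the disjoint-union type $T$ therefore depends only on $T$ and $n$, and the rest of your argument goes through unchanged.
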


Given our \Cref{thm:disc*} and the above lemma, we are now ready to complete the proof of \Cref{thm:full-main}. We note that the remainder of the argument closely follows the proof in \cite{10.1145/2897518.2897575}, with the only difference being that we replace their classical algorithm for estimating disc-type counts with our quantum algorithm from \Cref{thm:disc*}. For completeness, we include a sketch of the proof below.%

\paragraph{Proof (Sketch) of \Cref{thm:full-main}}
Let $c$ and $n_0 = n_0(\eps, d)$ be the constants from Lemma~\ref{lem:tester}, and define $k = c \cdot q$ and $n_1 = (1 + k) \cdot 48 \cdot (2k m_{d,k})k$.
Since the property $\Pi_n$ can be tested trivially with a constant number of queries for all $n < \max\{n_0, n_1\}$, we assume throughout that $n \geq \max\{n_0, n_1\}$.

Let $\cF_n$ denote the set of subgraphs (each formed by the union of $k$ disjoint $k$-discs) that guarantee the acceptance of any graph satisfying $\Pi_n$, as ensured by Lemma~\ref{lem:tester}.
Without loss of generality, we may represent each $F \in \cF_n$ as a multiset of $k$-discs:
\[
F = \{\Delta_1,\dots,\Delta_k\}
\]
where each $\Delta_i$ is a $k$-disc.

Then, \Cref{alg:test} is shown to correctly test the property $\Pi_n$ under the model that permits only outgoing neighbor queries.

\begin{algorithm}[ht]
    \caption{Test property $P$}
    \label{alg:test}
    \begin{algorithmic}[1] %
        \Procedure{TestP}{$G, n, d, q, \delta$} 
        \State Let $k=cq$ and $\delta = \frac{1}{48\cdot(2km_{d,k})^k}$.
        \State $\{X_1,\dots,X_{m_{d,k}}\} = \textsc{EstDisc*}(G,n,d,k,\delta)$.
        \For {$F = \{\Delta_1,\dots,\Delta_k\}\in\cF_n$}
            \For {$i=1\dots m_{d,k}$}
                \State Let $x_j$ be the number of copies among $\{\Delta_j\}_{j=1}^k$ that are of the same isomorphic type as $\Gamma_i$
            \EndFor
            \State Let $\est(F)\ ={\Pi_{i=1}^{m_{d,k}}\binom{X_i}{x_i}}/{\binom{n}{k}}$.
        \EndFor
        \If {$\sum_{F\in\cF_n}\est(F) < 1/2$}
            \State \Return \textbf{Accept}
        \Else \State \Return \textbf{Reject}
        \EndIf
        \EndProcedure
    \end{algorithmic}
\end{algorithm}

    Algorithm \textsc{EstDisc*} 
    ensures that each returned value $X_i$ will be very close to $\cnt_{\Gamma_i}$ (c.f. \Cref{thm:disc}).
    Therefore for every $F = \{\Delta_1,\dots,\Delta_k\}\in\cF_n$ and the relevant $x_1,\dots,x_{m_{d,k}}$, we will study $\prob(\Delta_1, . . . , \Delta_k) :=  {\Pi_{i=1}^{m_{d,k}}\binom{\cnt_{\Gamma_i}}{x_i}}/{\binom{n}{k}} $, from which we will obtain the required bounds  for $\sum_{F\in\cF_n}\est(F)$.

    Observe that for any multiset $\{\Delta_1,\dots,\Delta_k\}$, the probability that the $k$-discs of $k$ vertices sampled uniformly at random without replacement span a subgraph isomorphic to the subgraph corresponding to $\{\Delta_1,\dots,\Delta_k\}$ has the \emph{multivariate hypergeometric distribution} with parameters $n, \cnt_{\Gamma_1}, \dots , \cnt_{\Gamma_{m_{d,k}}}, k$. That is, if for every $i\in[m_{d,k}]$, there are exactly $x_i$ copies in the multiset $\{\Delta_1,\dots,\Delta_k\}$ that are of the same isomorphic type as $\Gamma_i$ (note that $x_1 + \dots + x_m = k$ for any $i\in[m_{d,k}]$), then the probability that the subgraph spanned by $k$-discs of $k$ uniformly sampled vertices is isomorphic to $\{\Delta_1,\dots,\Delta_k\}$ is equal to $\prob(\Delta_1, . . . , \Delta_k) =  {\Pi_{i=1}^{m_{d,k}} \binom{\cnt_{\Gamma_i}}{x_i}}/{\binom{n}{k}} $.

    Then we prove $\est(F)$ and $\prob(F)$ are very close via the following claims.

    \begin{claim}
    \label{clm:ps_1}
        For any $i\in[m_{d,k}]$, if $\abs{X_i-\cnt_{\Gamma_i}} \leq \delta n$, it holds that $\abs{\binom{X_i}{x_i}-\binom{\cnt_{\Gamma_1}}{x_i}}\leq 4\delta n ^{x_i}$.
    \end{claim}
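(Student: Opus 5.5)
The plan is to treat $\binom{y}{x_i}$ as a polynomial of degree $x_i$ in $y$ and bound the difference of its values at $X_i$ and $\cnt_{\Gamma_i}$ with a telescoping, mean-value-style argument. Here $x_i\le k$ is a constant and both arguments lie in $[0,(1+\delta)n]$. (The statement writes $\cnt_{\Gamma_1}$, but we read it as $\cnt_{\Gamma_i}$, as the surrounding text intends.) Write $a=X_i$, $b=\cnt_{\Gamma_i}$, $x=x_i$, so that $|a-b|\le\delta n$. If $x=0$, both binomials equal $1$ and there is nothing to prove. For $x\ge 1$ we use the identity
\[
\binom{y}{x}=\frac{1}{x!}\prod_{j=0}^{x-1}(y-j).
\]
We extend this to real $y$ in the natural way, since $X_i$ need not be an integer.

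The main step is the telescoping identity
\[
\prod_{j=0}^{x-1}(a-j)-\prod_{j=0}^{x-1}(b-j)=\sum_{s=0}^{x-1}\Big(\prod_{j<s}(a-j)\Big)(a-b)\Big(\prod_{j>s}(b-j)\Big).
\]
We then bound each factor in absolute value. We have $b\le n$ and $|a|\le (1+\delta)n\le 2n$, since $\delta<1$ (and $a\ge -\delta n$ if the estimate can be negative). Hence $|a-j|\le 2n$ and $|b-j|\le n$ for $j\le x-1\le n$. Each summand is then at most $(2n)^{s}\,\delta n\, n^{x-1-s}$, and summing over $s$ gives at most $\delta n^{x}\sum_{s<x}2^{s}<2^{x}\delta n^{x}$. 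Dividing by $x!$ yields $\bigl|\binom{a}{x}-\binom{b}{x}\bigr|\le \frac{2^x}{x!}\,\delta n^x$.

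Finally we check the constant: $2^x/x!\le 2$ for all $x\ge1$ (its values are $2,2,4/3,\dots$), so the bound is at most $2\delta n^{x}\le 4\delta n^{x}$, as claimed. The only care needed, and the main obstacle, is to make sure the crude bounds $|a-j|\le 2n$ and $|b-j|\le n$ are valid. This requires $\delta\le 1$ and that $X_i$ lies in $[-\delta n,(1+\delta)n]$, which follows from $|X_i-\cnt_{\Gamma_i}|\le\delta n$ together with $0\le\cnt_{\Gamma_i}\le n$. If instead the generalized binomial is interpreted as $0$ for $a<x$, the same estimate holds by applying the argument with $a$ replaced by its clamp, since clamping does not increase the distance to $b$. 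The slack between $2$ and $4$ absorbs any such convention.
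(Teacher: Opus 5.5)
Your proof is correct, and you are right that $\cnt_{\Gamma_1}$ in the statement should read $\cnt_{\Gamma_i}$. The paper does not prove this claim itself; it only defers to Claim~5.1 of the Czumaj--Peng--Sohler paper. So there is no in-paper argument to compare against, and your telescoping bound on the falling factorial,
\[
\Bigl|\prod_{j<x}(a-j)-\prod_{j<x}(b-j)\Bigr|\le |a-b|\sum_{s<x}(2n)^{s}n^{x-1-s},
\]
stands as a self-contained proof.

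It has two advantages over a route that sandwiches $\binom{y}{x}$ between $(y-x)^x/x!$ and $y^x/x!$ and uses monotonicity in $y$:
\begin{itemize}
\item It applies verbatim to the real-valued, possibly slightly negative, $X_i$ that \textsc{EstDisc*} actually outputs as entries of $M^{-1}\tilde{\boldsymbol{x}}$. A monotonicity argument would break there, since $y\mapsto\binom{y}{x_i}$ is monotone only on $[x_i-1,\infty)$.
\item It needs only $x_i-1\le n$, rather than $n$ large compared with $k/\delta$ to absorb the offsets $j\le x_i-1$ into $\delta n$. It also gives the better constant $(2^{x_i}-1)/x_i!\le 3/2$.
\end{itemize}

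The one imprecise spot is your side remark on the convention $\binom{a}{x}=0$ for $a<x$. For non-integer $a\in(x-1,x)$, the replacement $a\mapsto x-1$ can increase $|a-b|$ by almost $1$, so clamping is not distance-non-increasing there. In that case the same computation still gives at most $\frac{2^{x}}{x!}(\delta n+1)n^{x-1}\le 4\delta n^{x}$ once $\delta n\ge 1$, so, as you say, the slack absorbs it.
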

    \begin{claim}
    \label{clm:ps_2}
        Let $\cI=\{i\in[m_{d,k}]:x_i>0\}$. For any $i\in[m_{d,k}]$, if $\abs{X_i-\cnt_{\Gamma_i}} \leq \delta n$, then the following inequalities hold:
        \begin{align*}
            &\prod_{i\in\cI}\left(\binom{\cnt_{\Gamma_i}}{x_i}+4\delta n^{x_i}\right) < \prod_{i\in\cI} \binom{\cnt_{\Gamma_i}}{x_i} +4\delta 2^k n^k, \\
            &\prod_{i\in\cI}\left(\binom{\cnt_{\Gamma_i}}{x_i}-4\delta n^{x_i}\right) > \prod_{i\in\cI} \binom{\cnt_{\Gamma_i}}{x_i} -4\delta 2^k n^k.
        \end{align*}
    \end{claim}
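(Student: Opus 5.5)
The statement just above is \Cref{clm:ps_2}. I will prove it by expanding the product directly, taking the hypothesis $\delta\le 1/48(2km_{d,k})^k$ from \Cref{alg:test}, so in particular $4\delta\le 1$.

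Write $a_i=\binom{\cnt_{\Gamma_i}}{x_i}$ and $b_i=4\delta n^{x_i}$ for $i\in\cI$. The two facts I need are $0\le a_i\le n^{x_i}$, since $\cnt_{\Gamma_i}\le n$, and $\sum_{i\in\cI}x_i\le k$.

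For the upper bound, expand $\prod_{i\in\cI}(a_i+b_i)$ as a sum over subsets $J\subseteq\cI$. The $J=\emptyset$ term is $\prod_{i\in\cI}a_i$. For each nonempty $J$, the term is $\prod_{i\notin J}a_i\prod_{i\in J}b_i$. Using $a_i\le n^{x_i}$, this is at most
\[
(4\delta)^{|J|}\, n^{\sum_{i\in\cI}x_i}\le 4\delta\, n^k ,
\]
where the last step uses $4\delta\le1$, $|J|\ge 1$ and $n\ge 1$. There are at most $2^{|\cI|}-1<2^k$ nonempty subsets, because $|\cI|\le\sum_{i\in\cI} x_i\le k$. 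Summing gives $\prod_{i\in\cI}(a_i+b_i)<\prod_{i\in\cI}a_i+4\delta 2^k n^k$.

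For the lower bound, expand $\prod_{i\in\cI}(a_i-b_i)$ the same way; the term for $J$ carries the sign $(-1)^{|J|}$. Each nonempty-$J$ term has absolute value at most $4\delta n^k$ by the bound above, and there are fewer than $2^k$ of them. Hence
\[
\prod_{i\in\cI}(a_i-b_i)\ \ge\ \prod_{i\in\cI}a_i-(2^k-1)\,4\delta n^k\ >\ \prod_{i\in\cI}a_i-4\delta 2^k n^k .
\]
This argument does not need $a_i\ge b_i$, since it bounds absolute values.

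The only delicate point is strictness of the inequalities. It follows from counting $2^{|\cI|}-1<2^k$ nonempty subsets. In the degenerate case $\cI=\emptyset$ both products equal $1$, and strictness holds provided $\delta n^k>0$, which is true.
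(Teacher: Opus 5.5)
Your proof is correct and is the standard argument for this claim. The paper omits the proof and defers to Claim~5.2 of Czumaj, Peng and Sohler, and your argument is that standard one: expand over subsets $J\subseteq\cI$, bound each nonempty term by $(4\delta)^{|J|}n^{\sum_{i\in\cI}x_i}\le 4\delta n^k$ using $\binom{\cnt_{\Gamma_i}}{x_i}\le n^{x_i}$, and count the $2^{|\cI|}-1<2^k$ such terms via $|\cI|\le\sum_{i\in\cI}x_i\le k$. You were also right to import $4\delta\le 1$ from the choice of $\delta$ in \Cref{alg:test}; the hypothesis on $X_i$ plays no role, and some smallness of $\delta$ is genuinely needed, since the inequality fails for large $\delta$.
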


    Combine \Cref{clm:ps_1} and \Cref{clm:ps_2}, we can prove the following claim.
    \begin{claim}
    \label{clm:ps_3}
        For any $i\in[m_{d,k}]$, if $\abs{X_i-\cnt_{\Gamma_i}} \leq \delta n$, it holds that $\abs{\est(F)-\prob(F)}\leq 4\delta (2k)^k$ for every $F\in\cF_n$.
    \end{claim}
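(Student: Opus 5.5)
The plan is to bound the error of $\est(F)$ factor by factor, using \Cref{clm:ps_1} and \Cref{clm:ps_2}, and then divide by $\binom{n}{k}$. Fix $F=\{\Delta_1,\dots,\Delta_k\}\in\cF_n$ and let $\cI=\{i:x_i>0\}$. Indices with $x_i=0$ contribute the factor $\binom{X_i}{0}=\binom{\cnt_{\Gamma_i}}{0}=1$ to both $\est(F)$ and $\prob(F)$, so they can be dropped. Hence
\[
\est(F)=\frac{\prod_{i\in\cI}\binom{X_i}{x_i}}{\binom{n}{k}},\qquad \prob(F)=\frac{\prod_{i\in\cI}\binom{\cnt_{\Gamma_i}}{x_i}}{\binom{n}{k}}.
\]

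\textbf{Step 1: squeeze the numerator.} By \Cref{clm:ps_1}, each factor satisfies $\binom{\cnt_{\Gamma_i}}{x_i}-4\delta n^{x_i}\le \binom{X_i}{x_i}\le \binom{\cnt_{\Gamma_i}}{x_i}+4\delta n^{x_i}$.

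\textbf{Step 2: multiply the bounds.} The upper bounds are all nonnegative, so multiplying them directly gives an upper bound on $\prod_{i\in\cI}\binom{X_i}{x_i}$. For the lower bound, note that if some lower factor is negative we can replace it by $0$, since every $\binom{X_i}{x_i}\ge 0$ (this uses that $X_i\ge 0$, or else that the binomial is interpreted as a nonnegative generalized binomial). Multiplying and applying \Cref{clm:ps_2}, where $\sum_{i\in\cI} x_i=k$, gives
\[
\Bigl|\prod_{i\in\cI}\binom{X_i}{x_i}-\prod_{i\in\cI}\binom{\cnt_{\Gamma_i}}{x_i}\Bigr|\le 4\delta 2^k n^k .
\]
The one delicate point is the lower-bound product when some factors are negative. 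If $\binom{\cnt_{\Gamma_i}}{x_i}<4\delta n^{x_i}$, the product of the truncated factors is $0$, and it suffices that $0\ge \prod_{i\in\cI}\binom{\cnt_{\Gamma_i}}{x_i}-4\delta 2^kn^k$. This holds because that product is bounded by $\binom{\cnt_{\Gamma_i}}{x_i}\prod_{j\ne i}n^{x_j}<4\delta n^k$. I expect this sign bookkeeping to be the main obstacle, though it is routine.

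\textbf{Step 3: divide by $\binom{n}{k}$.} Use $\binom{n}{k}\ge (n/k)^k$, which holds since $n\ge k$ (guaranteed by $n\ge n_1$). Then
\[
\abs{\est(F)-\prob(F)}\le \frac{4\delta 2^k n^k}{(n/k)^k}=4\delta(2k)^k,
\]
which is exactly the claimed bound. This holds for every $F\in\cF_n$, since nothing in the argument depended on the choice of $F$ beyond $\sum_i x_i=k$.
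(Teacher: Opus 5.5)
Your proposal is correct and follows the same route the paper indicates (it defers the details to CPS). You combine \Cref{clm:ps_1} and \Cref{clm:ps_2} to get $\bigl|\prod_{i\in\cI}\binom{X_i}{x_i}-\prod_{i\in\cI}\binom{\cnt_{\Gamma_i}}{x_i}\bigr|\le 4\delta 2^k n^k$, and then divide by $\binom{n}{k}\ge (n/k)^k$.

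Your sign bookkeeping goes beyond what the paper writes, with one caveat. $X_i\ge 0$ alone does not make a generalized binomial nonnegative (e.g.\ $\binom{1/2}{2}<0$), and the estimates returned by \textsc{EstDisc*} are real-valued. The fix is the telescoping identity $\prod_i a_i-\prod_i b_i=\sum_j\bigl(\prod_{i<j}a_i\bigr)(a_j-b_j)\bigl(\prod_{i>j}b_i\bigr)$, with $|a_i|\le(1+4\delta)n^{x_i}$ and $0\le b_i\le n^{x_i}$, which gives the same $4\delta 2^k n^k$ bound with no sign assumption.
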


    The proof of these claims are omitted here, see full details in \cite{10.1145/2897518.2897575} Claim 5.1, Claim 5.2 and Claim 5.3.

    Now it is sufficient to claim:
    \begin{itemize}
        \item if $G$ satisfies $\Pi_n$, then $\sum_{F\in\cF_n}\est(F)<1/2$, and
        \item if $G$ is $\eps$-far from $\Pi_n$, then $\sum_{F\in\cF_n}\est(F)\geq 1/2$.
    \end{itemize}

    Let $G$ be a $d$-bounded-degree digraph satisfying $\Pi$. 
    Then, by \Cref{lem:tester}, with probability at most $5/12$ , the subgraph spanned by the $k$-discs of $k$ vertices that are sampled uniformly at random without replacement is isomorphic to some member in $\cF_n$, this is, $\sum_{F\in\cF_n}\prob(F)\leq 5/12$.
    Therefore, by \Cref{clm:ps_3},
    \[
    \sum_{F\in\cF_n}\est(F)\leq \sum_{F\in\cF_n}\prob(F) + \sum_{F\in\cF_n}4\delta (2k)^k \leq \sum_{F\in\cF_n}\prob(F) + m_{d,k}^k\cdot 4\delta (2k)^k \leq \frac{5}{12}+ \frac{1}{12}=\frac{1}{2}.
    \]

    Similarly, by \Cref{lem:tester}, if $G$ is $\eps$-far from $\Pi$, then with probability at least $7/12$ , the $k$-discs rooted at $k$ vertices that are sampled uniformly at random span a subgraph in $\cF_n$.
    Hence, \Cref{clm:ps_3} gives,
    \[
    \sum_{F\in\cF_n}\est(F)\geq \sum_{F\in\cF_n}\prob(F) - \sum_{F\in\cF_n}4\delta (2k)^k \geq \sum_{F\in\cF_n}\prob(F) - m_{d,k}^k\cdot 4\delta (2k)^k \geq \frac{7}{12} - \frac{1}{12}=\frac{1}{2}.
    \]
    
    That completes the proof.

\section{An Application: Approximating the Number of Subgraphs}
\label{sec:app}
In this section, we show that \Cref{alg:ApproxP*}  implies the algorithm that approximates the number of $H$-instances
in a digraph $G$ in the quantum unidirectional model, where $H$ is a $d$-bounded-degree digraph of radius at most $q$ and an $H$-instance is any (not necessarily induced) subgraph isomorphic to $H$.

\paragraph{The algorithm {EstSubgraph}} \Cref{alg:ApproxSubgraph} firstly invokes \Cref{alg:ApproxP*} as a subroutine with smaller error parameter and obtains estimates of all isomorphism types $\{\widetilde{\cnt}_\Gamma\}_{\Gamma\in\cD_{d,q}^*}$.
Then it chooses a valid root for $H$ and finds the corresponding isomorphism type $\Gamma_H$.
At the final step, it outputs estimate $\widetilde{\#H}$ based on the estimate $\cnt_{\Gamma'}$ for those $\Gamma'\succeq {\Gamma_H}$.

\begin{algorithm}[ht]
        \caption{Approx Number of Subgraphs}
        \label{alg:ApproxSubgraph}
        \begin{algorithmic}[1] 
            \Procedure{EstSubgraph}{$G, n, d, q ,  \delta, H$} 
                \State $\{\widetilde{\cnt}_\Gamma\}_{\Gamma\in\cD_{d,q}^*} = $ \textsc{EstDisc*}$(G,n,d,k,\delta/m_{d,q}!)$.
                \State Choose a vertex $v\in V(H)$ of distance at most $q$ to all the other vertices in $H$. \label{line:chooseroot}
                \State Let $H_v$ be the rooted graph same as $H$ but with $v$ marked as root. 
                \State Let $\Gamma_H$ be the isomorphism type such that $H_v\equiv \Gamma_H$.
                \State Let $c_H = \abs{\{\phi(v):\phi\in\Phi_{\Gamma_H}\}}$. \label{line:cH}
                \State \Return $\widetilde{\#H} = \frac{1}{c_H}\sum_{\Gamma'\succeq \Gamma_H} \frac{\mu_{\Gamma_H,\Gamma'}}{\mu_{\Gamma_H,\Gamma_H}}{\widetilde{\cnt}_{\Gamma'}}$.
            \EndProcedure
        \end{algorithmic}
    \end{algorithm}
    
\begin{restatable}{theorem}{subgraph}\label{thm:subgraph}
    Given an $n$-vertex $d$-bounded-degree digraph $G$, a $d$-bounded-degree digraph $H$ of radius at most $q$, and a constant $\delta \in (0,1)$, then for sufficiently large $n$, there is a quantum algorithm that outputs an estimate $\widetilde{\#H}$ for the number $\#H$ of $H$-instances in $G$  satisfying
    \[
    \abs{\widetilde{\#H} - \#H} \leq \delta n,
    \]
    with probability at least $5/9$, using $O_{\delta,d,q}\left(n^{1/2-1/2(2^{m_{d,q}}-1)}\right)$ quantum queries in the unidirectional model where $m_{d,q} = 2d((2d)^{q+1}-1)/(2d-1)$.
\end{restatable}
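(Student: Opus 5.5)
We will show that \Cref{alg:ApproxSubgraph} meets the guarantee. The plan has three parts. First, prove an exact counting identity expressing $\#H$ as a fixed linear combination of the disc counts $\cnt_{\Gamma'}$. Second, plug in the estimates from \Cref{thm:disc*}. Third, control the error propagation. The query complexity and success probability are inherited directly from the single call to \textsc{EstDisc*}. That call costs $O_{\delta,d,q}(n^{1/2-1/2(2^{m_{d,q}}-1)})$ queries and succeeds with probability at least $5/9$, since shrinking the error parameter to $\delta/m_{d,q}!$ only affects the constants. All later steps are classical post-processing and make no queries.

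\textbf{Step 1: a valid root.} Since $H$ has radius at most $q$, Line~\ref{line:chooseroot} finds a vertex $v$ within distance $q$ of every vertex. Hence every $H$-instance in $G$ lies entirely inside $\disc_q(u)$, where $u$ is the image of $v$, and $H_v$ is (isomorphic to) a type $\Gamma_H\in\cD_{d,q}$. We may assume $H$ has at least one edge, so $\Gamma_H\in\cD_{d,q}^*$.

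\textbf{Step 2: the identity.} Double count pairs $(u,W)$, where $u\in V(G)$ and $W\in\cW_{\Gamma_H,\disc_q(u)}$, i.e.\ ordered edge sequences in $\disc_q(u)$ generating the prefixes of $\Gamma_H$ from root $u$. Grouping by the type of $\disc_q(u)$ gives $\sum_{\Gamma'\succeq\Gamma_H}\mu_{\Gamma_H,\Gamma'}\cnt_{\Gamma'}$.

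Now group by the underlying subgraph instead. Each edge set $F$ with $(V(F),F)\cong H$, together with a root $u$ such that $(F,u)\equiv\Gamma_H$, is hit by exactly $\mu_{\Gamma_H,\Gamma_H}$ orderings. This is because $\cW_{\Gamma_H,\Gamma}$ depends only on the rooted graph $(F,u)\cong\Gamma_H$ itself. For a fixed $H$-instance $F$, the number of admissible roots $u$ equals the size of the orbit of $v$ under $\operatorname{Aut}(H)$ that is carried to $F$, which is $c_H$. Here we need a small check that the orbit of $v$ under unrooted automorphisms of $H$ is exactly what Line~\ref{line:cH} intends. I would define $c_H$ via the unrooted automorphism group $\operatorname{Aut}(H)$, and verify that the quantity computed in Line~\ref{line:cH} agrees with it. Together these give
\[
\#H=\frac{1}{c_H\,\mu_{\Gamma_H,\Gamma_H}}\sum_{\Gamma'\succeq\Gamma_H}\mu_{\Gamma_H,\Gamma'}\cnt_{\Gamma'}.
\]

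\textbf{Step 3: error.} By \Cref{thm:disc*} with error parameter $\delta/m_{d,q}!$, we have $\sum_{\Gamma'}|\widetilde{\cnt}_{\Gamma'}-\cnt_{\Gamma'}|\le\delta n/m_{d,q}!$ with probability at least $5/9$. Using $c_H\ge1$, $\mu_{\Gamma_H,\Gamma_H}\ge1$, and $\mu_{\Gamma_H,\Gamma'}\le m_{d,q}!$ (the number of orderings of at most $m_{d,q}$ edges, as in \Cref{clm:para}), we get $|\widetilde{\#H}-\#H|\le\delta n$. The main obstacle is the correctness of the multiplicity bookkeeping in Step~2, namely that each instance is counted exactly $c_H\mu_{\Gamma_H,\Gamma_H}$ times. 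The plan is to set up an explicit bijection between pairs (root image, ordering) and orbits under automorphisms.
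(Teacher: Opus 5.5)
Your proposal is correct and follows the paper's proof. Both use one call to \textsc{EstDisc*} with error $\delta/m_{d,q}!$, the identity $\#H=\frac{1}{c_H}\sum_{\Gamma'\succeq\Gamma_H}\frac{\mu_{\Gamma_H,\Gamma'}}{\mu_{\Gamma_H,\Gamma_H}}\cnt_{\Gamma'}$, and the bounds $c_H\ge 1$, $\mu_{\Gamma_H,\Gamma_H}\ge 1$, $\mu_{\Gamma_H,\Gamma'}\le m_{d,q}!$; your double count over pairs $(u,W)$ makes rigorous what the paper only sketches.

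Your concern about $c_H$ is justified, but Line~\ref{line:cH} needs to be replaced rather than verified. Since $\Phi_{\Gamma_H}$ consists of root-preserving automorphisms, $\{\phi(v):\phi\in\Phi_{\Gamma_H}\}=\{v\}$, so as written $c_H=1$. The identity needs $c_H$ to be the size of the orbit of $v$ under the unrooted automorphism group of $H$ (e.g.\ $c_H=3$ for a directed triangle), which is exactly what your counting produces.
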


\begin{proof}
    For any given $H$, since the radius of $H$ is at most $q$, there must exist a vertex $v$ can be chosen as root in Line \ref{line:chooseroot}. 
    Then there exists a unique isomorphism type $\Gamma_H\in\cD_{d,q}^*$ such that $H_v\equiv \Gamma_H$.
    The parameter $c_H$ defined in Line \ref{line:cH} represents the number of vertices in $V(H)$ that are equivalent to $v$, while each of them can be marked as root equivalently.
    
    Then we have the following key observation:
    \begin{equation}
    \label{eq:obs}
        \#H = \frac{1}{c_H}\sum_{\Gamma'\succeq \Gamma_H} \frac{\mu_{\Gamma_H,\Gamma'}}{\mu_{\Gamma_H,\Gamma_H}}\cnt_{\Gamma'}.
    \end{equation}

    For each $\Gamma'\succeq \Gamma_H$, there are $\mu_{\Gamma_H,\Gamma'}$ different ordered tuples, i.e. there are $\frac{\mu_{\Gamma_H,\Gamma'}}{\mu_{\Gamma_H,\Gamma_H}}$ different disordered tuples, while each of them corresponds to an $H$-instance.

    For each $H$-instance $H^*$, there are $c_H$ vertices that can be marked as root equivalently, i.e., $H^*$ can be counted by $c_H$ different $q$-discs because of the different choices of the root.
    That means we count it $c_H$ times, thus we need to multiply by $1/c_H$.

    Based on observation (\ref{eq:obs}),
    we can bound the error between $\widetilde{\#H}$ and $\#H$ as
    \begin{align*}
        \abs{\widetilde{\#H}-\#H} &\leq \frac{1}{c_H}\sum_{\Gamma'\succeq \Gamma_H} \frac{\mu_{\Gamma_H,\Gamma'}}{\mu_{\Gamma_H,\Gamma_H}}\abs{{\widetilde{\cnt}_{\Gamma'}}-\cnt_{\Gamma'}} \leq \frac{1}{c_H} \left(\max_{\Gamma'\succeq \Gamma_H} \frac{\mu_{\Gamma_H,\Gamma'}}{\mu_{\Gamma_H,\Gamma_H}} \right) \sum_{\Gamma'\succeq \Gamma_H} \abs{{\widetilde{\cnt}_{\Gamma'}}-\cnt_{\Gamma'}} \\
        &\leq \frac{1}{c_H} \left(\max_{\Gamma'\succeq \Gamma_H} \frac{\mu_{\Gamma_H,\Gamma'}}{\mu_{\Gamma_H,\Gamma_H}} \right) \sum_{\Gamma'\in\cD_{d,q}^*} \abs{{\widetilde{\cnt}_{\Gamma'}}-\cnt_{\Gamma'}} \leq \frac{1}{c_H} \left(\max_{\Gamma'\succeq \Gamma_H} \frac{\mu_{\Gamma_H,\Gamma'}}{\mu_{\Gamma_H,\Gamma_H}} \right)  \frac{\delta}{m_{d,q}!} n \\
        &\leq \delta n.
    \end{align*}
    The second last inequality holds because of \Cref{thm:disc*}. The last inequality holds since $c_H\geq 1$, $\mu_{\Gamma_H,\Gamma_H}\geq 1$ and $\mu_{\Gamma_H,\Gamma'}\leq m_{d,q}!$ (see \Cref{clm:para}).

    Notice that \Cref{alg:ApproxSubgraph} invokes \Cref{alg:ApproxP*} only once with the error parameter shrinking by a constant multiple,
    thus the success probability and the query complexity are totally identical to \Cref{alg:ApproxP*}.
\end{proof}

\section{Lower Bound}
\label{sec:LB}
In this section, we prove our lower bound for testing $k$-star-freeness (\Cref{def:kstarfree}) in digraphs with bounded maximum in- and out-degree, i.e., \Cref{thm:k_star_lowerbound}, which we restate below.

\Kstar*

We firstly consider a related and slightly harder problem called $k$-occurrence-freeness, which can be reduced to $k$-star-freeness.
We then establish its quantum lower bound using the well-known (dual) polynomial method.

Let us start by introducing some new notation %
and definitions.

\subsection{Notation and definitions}
For a bit-string $x\in\bits^n$, we will use $|x|$ to denote the Hamming weight of vector $x\in\bits^n$, i.e.
    \[|x|=|\{i\in[n]:x_i=-1\}|.\]
Let $H^n_{\leq k}$, $H^n_{\geq k}$ and $H^n_{I}$ denote the sets of length-$n$ bit-strings whose Hamming weight are at most $k$, at least $k$ and within interval $I$, respectively. That is,
    \[ 
    \begin{aligned}
        &H^n_{\leq k}=\{x\in\bits^n:|x|\leq k\}, \quad H^n_{\ge k}=\{x\in\bits^n:|x|\ge k\},\\
        & \text{ and } \quad H^n_I=\{x\in\bits^n:|x|\in I\}.
    \end{aligned}\]

We denote by $\vecone^n$ the length-$n$ bit-string consisting entirely of $1$s, and by $-\vecone^n$ its negation. We define some fundamental Boolean functions that will be used later.
\begin{itemize}
    \item The \emph{threshold function} $\THR_N^k:\bits^N\to\bits$ outputs $-1$ if the Hamming weight of the input bit-string is at least $k$, and $1$ otherwise;
    \item The \emph{promised threshold function}  $\BTHR_N^k:H^N_{\leq k}\to\bits$ agrees with $\THR_N^k$ on inputs in $H^N_{\leq k}$, and is undefined otherwise;
    \item The \emph{OR function} $\OR_R:\bits^R\to\bits$ outputs $1$ if the input equals $\vecone^R$, %
    and $-1$ otherwise.
    \item The \emph{gap OR function} $\GapOR_R^{\eps N}: H^R_{\geq \eps N}\cup \{\vecone^R\}\to\bits$ agrees with $\OR_R$ on inputs in $H_{\geq\eps N}\cup\{\vecone^R\}$, and is undefined otherwise.
\end{itemize}

\subsection{Relating \texorpdfstring{$k$}{k}-star-freeness to \texorpdfstring{$k$}{k}-occurrence-freeness}
We will relate the problem of testing $k$-star-freeness in $k$-bounded-degree digraphs to that of testing $k$-occurrence-freeness in functions \Cref{def:koccurrencefree}. \cite{peng2023optimalseparationpropertytesting} (see also \cite{hellweg2012property}) provides a local reduction from  testing $k$-occurrence-freeness of a sequence to the problem of testing $k$-star-freeness in the unidirectional model. %
\begin{lemma}[\cite{peng2023optimalseparationpropertytesting,hellweg2012property}]
\label{obs:reduction}
Given an instance $f:[N]\to[R]$ of the $k$-occurrence-freeness testing problem where $R=cN$, we define the directed graph $G_f=(V,E)$ with $V=\{v_1,\dots,v_{N+R}\}$ and $E=\{(v_i,v_{N+j}) : f(i)=j\}$. Then the following properties hold:  
    \begin{itemize}
        \item If $f$ is $k$-occurrence-free, then $G_f$ is $k$-star-free;
        \item If $f$ is $\varepsilon$-far from $k$-occurrence-free, then $G_f$ is $\frac{\eps}{(c+1)k}$-far from $k$-star-free;
        \item If $f$ is $k$-bounded, then $G_f$ is $k$-bounded-degree;
        \item Each (outgoing neighbor) query in $G_f$ can be simulated by at most one query to $f$.
    \end{itemize}
\end{lemma}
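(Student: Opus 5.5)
We verify the four items of \Cref{obs:reduction} one at a time, directly from the construction $E=\{(v_i,v_{N+j}):f(i)=j\}$. Each vertex $v_i$ with $i\in[N]$ has exactly one out-edge, pointing to $v_{N+f(i)}$. Each vertex $v_{N+j}$ has out-degree $0$ and in-degree $|f^{-1}(j)|$. Every edge goes from the ``left'' part $\{v_1,\dots,v_N\}$ to the ``right'' part $\{v_{N+1},\dots,v_{N+R}\}$.

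\textbf{Degree bound and query simulation.} If $f$ is $k$-bounded, then every in-degree $|f^{-1}(j)|$ is at most $k$, and every out-degree is at most $1\le k$, so $G_f$ is $k$-bounded-degree. For queries, take $\cO^{\out}_{G_f}(v,s)$. If $v=v_{N+j}$, or if $s\ge 2$, the answer is $\perp$ and costs no query to $f$. If $v=v_i$ with $i\le N$ and $s=1$, the answer is $v_{N+f(i)}$, obtained from one query to $f$. In the quantum setting the same classical map applied in superposition gives a unitary simulation that uses one call to the oracle of $f$, plus a second call to uncompute garbage if needed. Since only constant factors matter, I would state it as ``at most one query'' in the standard XOR-oracle sense.

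\textbf{Completeness.} A $k$-star needs a center with in-degree at least $k$. Only right vertices have positive in-degree, and they have in-degree $|f^{-1}(j)|<k$ when $f$ is $k$-occurrence-free. Hence $G_f$ is $k$-star-free.

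\textbf{Soundness (the main step).} Suppose $f$ is $\eps$-far, so that by \Cref{def:koccurrencefree} there are at least $\eps N$ distinct values $j$ with $|f^{-1}(j)|=k$. Each corresponding $v_{N+j}$ is the center of a $k$-star. Any modification of $G_f$ to a $k$-star-free graph must reduce the in-degree of every such center below $k$, which requires deleting or redirecting at least one edge incident to it. A single edge modification changes the in-degree of at most two vertices, but only the head's in-degree drops when an edge is removed. So each modified edge helps at most one center, and at least $\eps N$ modifications are needed. The graph has $n=N+R=(c+1)N$ vertices and degree bound $d=k$, so the distance is at least $\eps N/(kn)=\eps/((c+1)k)$ under \Cref{def:kstarfree}.

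The only subtlety I expect is accounting for ``modifications'' that add edges or reroute edges between centers, which could in principle lower several centers at once. I would handle this by noting that lowering a center's in-degree requires removing an edge whose head is that center. Each removed edge has a unique head, so the count of $\eps N$ removals stands regardless of what else is added.
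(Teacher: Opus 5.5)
Your proposal is correct and follows the paper's proof item by item: the degree bound and completeness are immediate, soundness counts one forced edge removal per in-degree-$k$ center (and you are more careful than the paper in noting that additions or reroutings cannot lower several centers' in-degrees at once), and the query simulation is the same oracle $\cO^{\out}_{G_f}(v_s,1)=v_{N+\cO_f(s)}$. Your hedge about a second, uncomputing call is avoidable: if each right vertex $v_{N+j}$ is labelled so that $j$ sits in a fixed field of its label, a single (controlled) XOR query to $f$ writes the answer directly into the target register with no garbage, so exactly one query suffices, as the statement claims.
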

\begin{proof}
    The first item is immediate. For the second item: 
    since $f$ is $\eps$-far from $k$-occurrence-freeness, $G_f$ must contain at least $\eps N$ distinct $k$-stars.
    To make $G_f$ $k$-star free, at least one edge must be removed from each $k$-star. Therefore, the number of edges that must be removed is at least $\eps N =\frac{\eps N}{k(R+N)}\cdot k\cdot |V| \geq \frac{\eps}{(c+1)k}\cdot k\cdot |V|$, which establishes the claimed distance. 

    For the third item: since $f$ is $k$-bounded, the maximum in-degree of $G_f$ is at most $k$, and the maximum out-degree is at most $1$.
    Hence $G_f$ is $k$-bounded-degree.

    For the last item: let $\cO_f$ be the standard query oracle to function $f$, i.e.,
    \[
    \cO_f(s)=f(s)\quad \forall s\in[N].
    \]
 Then a unidirectional outgoing-neighbor oracle $\cO^\out_{G_f}$ for $G_f$ can be constructed as follows: 
    \[
    \cO_{G_f}^\out(v_s,i)=
    \begin{cases}
        v_{N+\cO_f(s)}, & s\leq N \text{ and } i=1; \\
        \perp, & \text{otherwise}.
    \end{cases}
    \]
\end{proof}

The above lemma implies that an $\eps$-tester for $k$-star-freeness can be used to construct a $(c+1)k\eps$-tester for $k$-occurrence-freeness. Consequently, a lower bound for the $k$-occurrence-freeness testing problem also gives a lower bound of $k$-star-freeness testing problem in $k$-bounded-degree digraphs. 

We now state the main technical result of this section: a lower bound for the $k$-occurrence-freeness testing problem.
\begin{theorem}
\label{thm:lowerbound}
Let $ k\geq 2$ and $0<\eps<1/(4^{k-1}\lceil 20 (2k)^{k/2}\rceil)$ be constants and let $N = \lceil 20 (2k)^{k/2}\rceil R$ 
be a large enough positive integer.
Then the quantum query complexity of $\eps$-testing $k$-occurrence-freeness 
for a $k$-bounded function $f:[N]\to[R+N]$ is $\Omega_{k,\eps}(N^{1/2 - 1/(2k)}/\ln^3 N)$.
\end{theorem}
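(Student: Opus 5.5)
We follow the dual-polynomial route sketched in \Cref{sec:techoverlowerbound}. First, via the polynomial method (\Cref{lem:poly_method}), it suffices to lower-bound the approximate degree of the partial Boolean function $\OCCU^{k,\eps}_{N,R}$ that encodes the testing task. Next we pass to its dummy-augmented variant $\dOCCU^{k,\eps}_{N,R}$ using the standard non-Boolean reductions (\Cref{lem:dummy_augment}); this costs at most polylogarithmic factors, which account for part of the $\ln^3 N$ loss. We then write $\dOCCU^{k,\eps}_{N,R}=\GapOR_R^{\eps N}\circ\BTHR_N^k$, where the $k$-bounded promise on $f$ becomes the promise that each inner block has Hamming weight at most $k$. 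Finally, by \Cref{prop:dual_dpdeg}, we only need a dual witness for this composition whose pure high degree is $\widetilde\Omega(N^{1/2-1/(2k)})$.

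For the ingredients we take the Bun--Kothari--Thaler witnesses. Let $\phi$ be a dual witness for $\GapOR_R^{\eps N}$ of constant pure high degree, supported on inputs of weight $0$ or $\geq \eps N$ up to small error. Let $\psi$ be the dual witness for $\THR_N^k$ of pure high degree $\Omega(N^{1/2-1/(2k)})$. We form $\phi\star\psi$; by \Cref{prop:dual_block_composition} its pure high degree is at least the product of the two, so it remains to bound the correlation and the out-of-domain mass. The correlation of $\phi\star\psi$ with $\OR_R\circ\THR_N^k$ over all inputs is taken from \Cref{clm:first_term}. We split the complement of the domain of $\GapOR\circ\BTHR$ into two parts. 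The first is $D_1$, the inputs where the number of blocks with $\THR^k$ value $-1$ is in $(0,\eps N)$; this is the gap-promise violation. The second is $D_2$, the inputs where some block has weight $\geq k+1$; this is the bounded-promise violation.

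The mass on $D_1$ is handled exactly as in \cite{apers2024quantumpropertytestingsparse}, using the smallness of $\psi$ on $D_\pm(\psi,\THR^k_N)$. The new step, and the main obstacle, is $D_2$. We plan to use \Cref{prop:psi}: $\psi$ puts small mass on $D_+(\psi,\THR^{k+1}_N)$. The block-composition mass on inputs where a given block has weight $\geq k+1$ is governed by how often $\psi$ "errs" in the direction that feeds $\phi$'s negative-output side. We union bound over the $R$ blocks and compare with the per-block mass bound. The weight $\geq k+1$ region forces $\psi$'s sign behavior on the $D_\mp(\psi,\THR^{k+1})$ sets, so the expansion of $\phi\star\psi$ restricted to $D_2$ is dominated by terms containing a factor $\|\psi|_{D_+(\psi,\THR^{k+1}_N)}\|_1$. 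That factor is $\ll 1/R$ once $N=\lceil 20(2k)^{k/2}\rceil R$, which is where the constant ratio and the bound on $\eps$ enter. The delicate part is exploiting the large mass on $D_-(\psi,\THR_N^{k+1})$ so that it contributes to the $+1$ side of $\OR\circ\THR^k$ and does not count against us. Together these give \Cref{lem:corr}: $\phi\star\psi$ has correlation bounded away from $0$ and out-of-domain mass below that correlation.

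To finish, we apply the zeroing-out trick of \cite{bun2018polynomial}, which moves the residual mass off inputs with more than $\mathrm{polylog}$ large blocks. This yields a genuine dual witness for $\dOCCU^{k,\eps}_{N,R}$ of pure high degree $\Omega(N^{1/2-1/(2k)}/\mathrm{polylog} N)$. Combining with the dummy-augmentation loss gives the stated $\Omega_{k,\eps}(N^{1/2-1/(2k)}/\ln^3N)$ quantum query bound.
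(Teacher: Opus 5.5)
Your outline matches the paper's proof: polynomial method, dummy augmentation, $\GapOR_R^{\eps N}\circ\BTHR_N^k$, the Bun--Kothari--Thaler witnesses, the split of the complement into $D_1\cup D_2$, and zeroing out. However, the one genuinely new step, the bound on the $D_2$ mass, is not carried out, and the mechanism you sketch for it is wrong.

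Since $\phi$ is supported on $\pm\vecone^R$, \Cref{pro:like_5.5_5.6} splits the $D_2$ mass into two branches. In one, every block is drawn from $\lambda$ conditioned on $\psi>0$; in the other, conditioned on $\psi<0$. Only the positive branch is governed by $\sum_{x\in D_+(\psi,\THR^{k+1}_N)}|\psi(x)|$. In the negative branch the relevant quantity is the negative mass of $\psi$ on weight $\ge k+1$. Using $\sum_x\psi(x)=0$, this equals $1/2-\sum_{x\in D_-(\psi,\THR^{k+1}_N)}|\psi(x)|$. So your claim that the whole $D_2$ contribution is dominated by terms with a $D_+(\psi,\THR^{k+1}_N)$ factor is false.

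Nor can the large mass on $D_-(\psi,\THR^{k+1}_N)$ be made to ``contribute to the $+1$ side of $\OR\circ\THR^k$.'' Mass outside $D_{f\circ g}$ is always penalized; after the rearrangement against $\OR_R\circ\THR_N^k$ it carries coefficient $2$, and nothing offsets it. Its actual role is term (IV) in \Cref{clm:second_term}. The bound $\sum_{x\in D_-(\psi,\THR^{k+1}_N)}|\psi(x)|\ge 1/2-1/(48N)$ says that a block conditioned on $\sgn\psi(x_i)=-1$ has weight $\ge k+1$ with probability at most $1/(24N)$. A union bound over the $R$ blocks then gives $\Pr[x\in D_2\mid \text{all signs } -1]\le R/(24N)$, exactly parallel to term (II) for the positive branch. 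Both bounds are really just the fact $\sum_{t>k}|\omega(t)|\le 1/(48N)$.

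Adding the $D_1$ terms, the out-of-domain mass is below $R/(8N)$ plus a Chernoff term. Here only the ratio $N/R=\lceil 20(2k)^{k/2}\rceil$ matters. The hypothesis on $\eps$ is used elsewhere, in $D_1$ under the all-negative pattern (term (III)): each block has $\THR_N^k=-1$ with probability at least $4^{1-k}$, and $\eps N<R/4^{k-1}$.

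There are three further points.
\begin{itemize}
\item Your chain omits \Cref{lem:dp-encode}. The identity $\dOCCU^{k,\eps}_{N,R}=\GapOR\circ\BTHR$ holds for the one-hot encoding $(x_{ij})$. The approximate degree in \Cref{lem:poly_method} and \Cref{lem:dummy_augment}, however, is for the $\log R$-bit encoding. Bun--Thaler's theorem bridges the two, but only via a dual witness for $F^{\le N}$, which must vanish outside $H^{NR}_{\le N}$.
\item That requirement is what \Cref{lem:zeroing} handles. It removes the mass on inputs of total Hamming weight exceeding $N$, at a cost of $\ln^2 N$ in pure high degree. It does not remove mass from ``inputs with more than polylog large blocks.''
\item ``Correlation bounded away from $0$'' is too weak. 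The zeroed-out $\zeta$ is only $2/9$-close to $\phi\star\psi$ in $\ell_1$. You need the quantitative margin of \Cref{lem:corr} (at least $9/10$) to keep the correlation above $2/3$ after zeroing.
\end{itemize}
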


Once \Cref{thm:lowerbound} is established, we can prove the main lower bound stated in \Cref{thm:k_star_lowerbound} using the reduction \Cref{obs:reduction} as follows.

\begin{proof}[Proof of \Cref{thm:k_star_lowerbound}]
    Let $f:[N]\to [R+N]$ be a $k$-bounded function where $N = \lceil 20 (2k)^{k/2}\rceil R$, and let $\cA'$ be an algorithm for testing $k$-star-freeness in $n$-vertex $k$-bounded-degree digraphs with $n=2N+R$ and proximity parameter $\eps'$ satisfying 
    \[
    0<\eps'<1/(40k)^{k/2}<1/((1+2\lceil 20 (2k)^{k/2}\rceil)\cdot k\cdot 4^{k-1}).
    \]
    Suppose that $\cA'$ has quantum query complexity $o_{k,\eps'}(n^{1/2-1/(2k)}/\ln^3 n)$. 
    We invoke $\cA'$ on the graph $G_f$ constructed as in \Cref{obs:reduction}. As shown earlier, each unidirectional query in $G_f$ can be simulated by at most one query to the function $f$.

    Furthermore, if $f$ is $k$-occurrence-free, then $G_f$ is $k$-star-free; and if $f$ is $\eps$-far from $k$-occurrence-freeness, then $G_f$ is $\eps'$-far from $k$-star-freeness, where $\eps' = \frac{\lceil 20 (2k)^{k/2}\rceil}{(1+2\lceil 20 (2k)^{k/2}\rceil)k}\cdot \eps$.
    Thus, the algorithm $\cA'$, together with the construction, yields an algorithm for testing $k$-occurrence-freeness with proximity parameter $0<\eps<1/(4^{k-1}\lceil 20 (2k)^{k/2}\rceil)$ using only $o_{k,\eps}(n^{1/2-1/(2k)}/\ln^3 n)$ quantum queries, which contradicts \Cref{thm:lowerbound}.
\end{proof}

\subsection{Proof of \texorpdfstring{\Cref{thm:lowerbound}}{lower bound theorem}}
In the following, we prove \Cref{thm:lowerbound}, using the standard polynomial method.
We begin in \Cref{sec:Prop2Func}, where we express the property of $k$-occurrence-freeness as a Boolean function, denoted by $\OCCU^k$.
In \Cref{sec:QLBfromABD}, we show that the approximate degree of $\OCCU^k$ implies a quantum lower bound for testing $k$-occurrence-freeness.
Next, in \Cref{sec:dummy-augment}, we introduce the dummy-augmented function $\dOCCU^k$, which we prove to be nearly as hard as $\OCCU^k$.
In \Cref{sec:BC}, we show that $\dOCCU^k$ can be expressed as the block composition $\GapOR \circ \BTHR^k$ over binary-encoded inputs.
In \Cref{sec:dual-poly}, we review the dual polynomial method, which relates the pure high degree of a dual witness to the approximate degree.
In \Cref{sec:dual-BC}, we construct dual witnesses for $\GapOR$ and $\THR^k$, and show that their composition yields a dual witness for $\GapOR \circ \BTHR^k$, even without explicitly constructing one for $\BTHR^k$.
Then, in \Cref{sec:zero-out}, we restrict the domain of the dual witness for $\GapOR \circ \BTHR^k$ to binary-encoded inputs corresponding to $\dOCCU^k$, and prove that this restriction almost preserves the pure high degree via the zeroing-out technique.
Finally, in \Cref{sec:pfLB}, we complete the proof of \Cref{thm:lowerbound} by combining the results established throughout this section.

\subsubsection{Expressing the property as a function}
\label{sec:Prop2Func}
Although the $k$-occurrence-freeness testing problem is defined on the function $f:[N]\to [R]$, it is often more convenient to view the input as an integer string $s\in[R]^N$ obtained from the truth table of $f$, where $s_i = f(i)$ for each $i \in [N]$.
We will use a function $\OCCU_{N,R}^{k,\eps}$ to indicate whether the input is $k$-occurrence-free or $\eps$-far from $k$-occurrence-freeness. 
Before that, we need to first clarify its promised domain, i.e. the set of all valid inputs. The inputs are restricted by two conditions: 
\begin{enumerate}
    \item \emph{Bounded-promise:} each value in $R$ occurs at most $k$ times in the string, i.e. for any $r\in[R]$,
    \begin{equation}
    \label{eq:rule_bounded} 
        \#_r(s) := |\{i\in[N]:s_i=r\}|\leq k.
    \end{equation}
    \item \emph{Gap-promise:} either no value in $R$ occurs exactly $k$ times, or at least $\eps N$ distinct values occur exactly $k$ times, i.e.
    \begin{equation}
    \label{eq:rule_gap} 
        |\{r\in[R]:\#_r(s)=k\}| \in \{0\} \cup[\eps N,\infty).
    \end{equation}
\end{enumerate}
Hence, the promised domain is given by 
\[
D_{\OCCU_{N,R}^{k,\eps}}=\{s\in[R]^N: s \text{ satisfies \Cref{eq:rule_bounded} and \Cref{eq:rule_gap}}\}.
\]

We now define the behavior of the function.
\begin{definition}
\label{def:occu}
The \emph{occurrence function} $\OCCU_{N,R}^{k,\eps }:D_{\OCCU_{N,R}^{k,\eps}}\to\{-1,1\}$ is defined as follows: it takes value $-1$ if no element occurs exactly $k$ times in $s$ (i.e., all elements occur fewer than $k$ times), and value $1$ if more than $\eps N$ distinct elements occur exactly $k$ times in $s$. It is undefined otherwise.
\end{definition}

\subsubsection{Query lower bound from the approximate bound degree}
\label{sec:QLBfromABD}
It is sometimes convenient to view $\OCCU_{N,R}^{k,\eps}$ as a Boolean function $\bits^n \to \bits$, rather than as a function $[R]^N \to \bits$, by encoding each element of $[R]$ using $\log R$ bits, where $n=N\log R$ and $R$ is assumed to be a power of $2$.

One can then use the minimum-degree polynomial that approximates $\OCCU_{N,R}^{k,\eps}$ to derive a query lower bound for any quantum algorithm that computes this function, which is the central idea of the well-known polynomial method. We need the following definition. 

\begin{definition}[Approximate bounded degree]
\label{def:adeg}
Let  $F:D_F\subseteq \bits^n \to\bits$ and $\delta>0$. 
The \emph{$\delta$-approximate bounded degree $\adeg_\delta(F)$} is the least degree of a real polynomial $p:\{-1,1\}^n\to\mathbb{R}$ that $\delta$-approximates $F$, i.e.,
 $$\forall x\in D_F:\ |F(x)-p(x)|<\delta \quad \text{ and }\quad\forall x\in\{-1,1\}^n\setminus D_F:\ |p(x)| < 1+\delta.$$
\end{definition}

We say that a quantum query algorithm $Q$ computes a Boolean function $F$ with error $\delta$ if 
\[\Pr[Q(x)\neq F(x)]\leq \delta\] for all valid inputs $x$.
The \emph{quantum query complexity} $Q_\delta(F)$ is the minimum number $T$ of queries required by any quantum algorithm that computes $F$ with error at most $\delta$. 
By the result of \cite{beals_poly}, the acceptance probability of any $T$-query quantum algorithm can be expressed as a multivariate polynomial $\acc(x)$ of degree at most $2T$ in the input variables $x$.
Consequently, if a quantum query algorithm $Q$ computes a Boolean function $F$ with error $\delta$, then the polynomial 
\[p(x)=1-2\acc(x)\] (1) approximates $F$ within additive error $2\delta$ for all $x\in D_F$, (2) remains bounded in the interval $[-1,1]$ for all $x\notin D_F$ since $\acc(x)$ is a probability, and (3) has degree at most $2T$. 
Then the following lemma yields immediately by letting  $F=\OCCU_{N,R}^{k,\eps}$. 

\begin{lemma}[\cite{beals_poly}]
\label{lem:poly_method}
It holds that 
    $Q_\delta(\OCCU_{N,R}^{k,\eps})\geq \adeg_{2\delta}(\OCCU_{N,R}^{k,\eps})/2$.
\end{lemma}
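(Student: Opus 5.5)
The plan is the standard polynomial-method argument of \cite{beals_poly}, specialized to the promise domain $D_{\OCCU_{N,R}^{k,\eps}}$. Fix any $T$-query quantum algorithm $Q$ that computes $\OCCU_{N,R}^{k,\eps}$ with error at most $\delta$. We view its input as the binary string $x\in\bits^n$, $n=N\log R$, obtained by encoding each $s_i\in[R]$ with $\log R$ bits.

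First I need a query oracle on the bits. A query to $s_i$ is realized as the unitary $\ket{i}\ket{z}\mapsto\ket{i}\ket{z\oplus s_i}$. Up to a constant-factor blowup in the number of queries, this is equivalent to $\log R$ bit-queries $\ket{j}\ket{b}\mapsto\ket{j}\ket{b\oplus x_j}$. Alternatively, one observes directly that each application of the string oracle has matrix entries that are polynomials of degree at most $1$ in the bits of the queried block, which suffices for the degree bound.

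By \cite{beals_poly}, after $T$ queries the final state has amplitudes that are polynomials of degree at most $T$ in $x$. The acceptance probability $\acc(x)$ is a sum of squared moduli of such amplitudes, so it is a real multilinear polynomial of degree at most $2T$ on all of $\bits^n$. This holds on every string, including strings encoding sequences outside the promise, and strings whose blocks encode values outside $[R]$ when $R$ is not exactly the block range. For the latter, the oracle can be defined on all blocks.

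Set $p(x)=1-2\acc(x)$, so $\deg p\le 2T$. For $x\in D_F$ with $F(x)=1$, correctness says the algorithm outputs $1$ (rejects, under the sign convention $F=1-2\cdot\mathbb{1}[\text{accept}]$) with probability at least $1-\delta$. Hence $\acc(x)\le\delta$ and $|p(x)-1|\le 2\delta$. Symmetrically, if $F(x)=-1$ then $\acc(x)\ge 1-\delta$ and $|p(x)+1|\le 2\delta$. Off the domain, $\acc(x)\in[0,1]$ gives $|p(x)|\le 1<1+2\delta$.

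These are exactly the two conditions of \Cref{def:adeg}. The strict inequality $|F(x)-p(x)|<2\delta$ is handled either by the slack in $\delta$ or by reading the bound non-strictly. Therefore $\adeg_{2\delta}(F)\le 2T$, and minimizing over algorithms gives $Q_\delta(F)\ge\adeg_{2\delta}(F)/2$.

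The only mildly delicate step is justifying the degree bound for the non-Boolean oracle. Concretely, one must check that each entry of the oracle unitary, viewed as a function of the encoded bits, is multilinear of degree at most $1$ per queried block. I would verify this by writing $\mathbb{1}[s_i=r]$ as a product over the $\log R$ bits. That product has high degree, so instead I would decompose each string query into $\log R$ XOR bit-queries. This changes $T$ only by a $\log R$ factor, which is harmless for the polylogarithmic slack tolerated in \Cref{thm:lowerbound}. To get exactly the constant $1/2$, I would instead use the per-bit XOR structure $z\oplus s_i=\bigoplus_b(z_b\oplus x_{i,b})$. This structure makes the oracle a tensor product of bit-oracles applied in parallel, so each amplitude's degree grows by at most $1$ per query.
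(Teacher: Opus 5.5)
Your core argument is exactly the paper's proof. Namely, $p=1-2\acc$ has degree at most $2T$ by Beals et al., lies within $2\delta$ of $F$ on $D_F$, and satisfies $|p|\le 1$ off $D_F$. This is complete once $\OCCU_{N,R}^{k,\eps}$ is regarded, as the paper does, as a Boolean function on $\{-1,1\}^{N\log R}$ accessed through the standard bit oracle.

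The step you single out as delicate, however, is argued incorrectly and should be dropped rather than patched.

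\begin{itemize}
\item For the string oracle $|i\rangle|z\rangle\mapsto|i\rangle|z\oplus s_i\rangle$, the matrix entry $\langle i,z'|O|i,z\rangle$ is the indicator of $s_i=z\oplus z'$. In the encoded $\pm1$ bits this is a product $\prod_b\frac{1}{2}(1\pm x_{i,b})$ of degree $\log R$, not $1$. It is linear only in the one-hot variables $x_{ij}$, which is a different encoding from the one the lemma uses.
\item Viewing the oracle as a tensor product of $\log R$ bit-oracles applied in parallel changes nothing. The entries of a tensor product are products of the factors' entries, each linear in a different variable, so an amplitude's degree can still grow by $\log R$ per query.
\item Simulating one string query by $\log R$ bit queries is a $\log R$-factor blowup, not a constant-factor one.
\end{itemize}

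In fact, no argument can give the constant $1/2$ for string queries in general. Take $N=1$ and $F(s_1)$ the parity of the bits of $s_1$. One string query computes $F$ exactly, while $\adeg_{2\delta}(F)=\log R$ for $2\delta<1$.

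So the lemma must be read with $Q_\delta$ counting bit queries, and then your first paragraphs already suffice. With string queries, your decomposition gives only $Q_\delta\ge\adeg_{2\delta}/(2\log R)$. That extra logarithm is not absorbed by existing slack: it would turn the $\ln^3 N$ in \Cref{thm:lowerbound} into $\ln^4 N$.
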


\subsubsection{Dummy augmentation}
\label{sec:dummy-augment}
For technical reasons, %
it is more convenient to study the dummy-augmented version of $\OCCU_{N,R}^{k,\varepsilon} $, where the input sequence may contain the integer $0$, but those zeros are simply ignored whenever they occur. In other words, while any of the items $ s_1, s_2, \dots, s_N $ may take the dummy value $0$, the number of zeros is unrestricted, and the input is not considered non-$k$-occurrence-free even if more than $k$ zeros occur. This gives rise to a dummy-augmented version of the occurrence function with a correspondingly dummy-augmented domain. 
\[
D_{\dOCCU_{N,R}^{k,\eps}}=\{s\in[R]_0^N: s \text{ satisfies \Cref{eq:rule_bounded} and \Cref{eq:rule_gap}}\}.
\]
\begin{definition}
\label{def:dOCCU}
    Define the function $\dOCCU_{N,R}^{k,\eps}:D_{\dOCCU_{N,R}^{k,\eps}}\to\{-1,1\}$ as follows: it outputs $-1$ if no integer (other than $0$) occurs exactly $k$ times in $s$ (i.e. every integer in $s$ except $0$ occurs fewer than $k$ times); it outputs $1$ if there are more than $\eps N$ distinct integers (excluding $0$) that each occurs exactly $k$ times in $s$; it is undefined otherwise.
\end{definition}

The following lemma (originally proposed for the element-distinctness problem in {\cite[Proposition 2.13]{bun2018polynomial}} and later revisited for collision-freeness problem in {\cite[Proposition 4.11]{apers2024quantumpropertytestingsparse}}) states that it suffices to bound the approximate bounded degree of $\dOCCU_{N,R}^{k,\eps}$. 
\begin{lemma}
\label{lem:dummy_augment}
    Let $k\geq 2$, $\delta > 0$ and $N,R\in\mathbb{N}$. Then
\[ \adeg_\delta(\dOCCU_{N, R}^{k,\eps}) \le \adeg_\delta(\OCCU_{N , R + N}^{k,\eps}) \cdot \log(R+1).
\]
\end{lemma}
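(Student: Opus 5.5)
The plan is to build, from any polynomial $p$ of degree $D$ that $\delta$-approximates $\OCCU_{N,R+N}^{k,\eps}$ (in the sense of \Cref{def:adeg}, so bounded by $1+\delta$ everywhere), a polynomial $p'$ of degree at most $D\log(R+1)$ that $\delta$-approximates $\dOCCU_{N,R}^{k,\eps}$. I would follow the reduction of \cite[Proposition 2.13]{bun2018polynomial} and \cite[Proposition 4.11]{apers2024quantumpropertytestingsparse}: replace every dummy $0$ by a fresh value that cannot collide with anything. Concretely, I map an input $s\in[R]_0^N$ to $s'\in[R+N]^N$ by setting $s'_i=s_i$ if $s_i\neq 0$, and $s'_i=R+i$ if $s_i=0$.

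The first step is to check that this map sends promise inputs to promise inputs and preserves the output. The values $R+i$ are pairwise distinct, disjoint from $[R]$, and each occurs exactly once. So every nonzero value keeps its multiplicity, and each new value has multiplicity $1<k$ because $k\ge2$. Consequently, the bounded-promise \Cref{eq:rule_bounded} and the gap-promise \Cref{eq:rule_gap} hold for $s'$ exactly when they hold for $s$ with zeros ignored, and the two functions agree: $\dOCCU_{N,R}^{k,\eps}(s)=\OCCU_{N,R+N}^{k,\eps}(s')$. Non-promise inputs are also mapped into $[R+N]^N$, where $|p|<1+\delta$ already holds. Hence $p'(s):=p(s')$ satisfies both conditions of \Cref{def:adeg}.

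The remaining step is the degree bound, which I expect to be the only real work. The input $s$ is given by $N$ blocks of $\log(R+1)$ bits, and $s'$ by $N$ blocks of $\log(R+N)$ bits. Block $i$ of $s'$ depends only on block $i$ of $s$, so each output bit of block $i$ is a Boolean function of $\log(R+1)$ input bits. Such a function is exactly a multilinear polynomial of degree at most $\log(R+1)$ in those bits.

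Substituting these polynomials into $p$, each monomial of $p$ has degree at most $D$ and involves output bits from at most $D$ blocks. Several output bits from the same block multiply to another Boolean function of that same block's $\log(R+1)$ input bits, so their product still has degree at most $\log(R+1)$ after multilinearization. Each monomial therefore becomes a polynomial of degree at most $D\log(R+1)$, and so does $p'$. Taking $p$ of optimal degree $D=\adeg_\delta(\OCCU_{N,R+N}^{k,\eps})$ gives the lemma.
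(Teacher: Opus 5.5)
Your proposal is correct and follows essentially the same route as the paper, which also replaces each dummy $0$ in position $i$ by the fresh value $R+i$ via maps $T_i$, observes that $\dOCCU_{N,R}^{k,\eps}(s)=\OCCU_{N,R+N}^{k,\eps}(T_1(s_1),\dots,T_N(s_N))$, and composes a degree-$D$ approximant with the blockwise degree-$\log(R+1)$ realization of $T$. Your extra checks of promise preservation (using $k\ge 2$) and of the blockwise degree bound simply spell out details the paper leaves implicit.
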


\begin{proof}
Let $p : \bits^{N \cdot \log(R+N)} \to \bits$ be a polynomial of degree $d$ that $\delta$-approximates $\OCCU_{N, R+N}^{k,\eps}$. We will use $p$ to construct a polynomial of degree $d$ that $\delta$-approximates %
$\dOCCU_{N, R}^{k,\eps}$. Recall that an input to $\dOCCU_{N, R}^{k,\eps}$ takes the form $(s_1, \dots, s_N)$ where each $s_i$ is the binary representation of a number in $[R]_0$. Define a  family of transformations $T_i : [R]_0 \to [R + i]$ by
\[T_i(s) = \begin{cases}
R + i & \text{ if } s = 0;\\
s & \text{ otherwise}.
\end{cases}\]
As a mapping between binary representations, the function $T=(T_1,\dots,T_N)$ be realized by a vector of polynomials of degree at most $\log(R+1)$. For every $(s_1, \dots, s_N) \in [R]_0^N$, observe that
\[\dOCCU_{N, R}^{k,\eps}(s_1, \dots, s_N) = \OCCU_{N, R+N}^{k,\eps}(T_1(s_1), \dots, T_N(s_N)).\]
Hence, the polynomial
\[p(T_1(s_1), \dots, T_N(s_N))\]
is a polynomial of degree $d \cdot \log(R+1)$ that $\delta$-approximates $\dOCCU_{N, R}^{k,\eps}$.
\end{proof}

\subsubsection{Block composition}
\label{sec:BC}

In this section we firstly introduce the block composition of two general Boolean functions. 

\begin{definition}[Block composition]
For Boolean functions $f:D_f\subseteq\bits^R\to\bits$ and $g:D_g\subseteq\bits^N\to\bits$, we define the \emph{block composition} $f \circ g : D_{f\circ g} \to \bits$ by \[
(f\circ g)(x_1, \dots, x_R) = f(g(x_1), \dots, g(x_R))\]
where \[
D_{f\circ g}=\{x\in(D_g)^R:(g(x_1), \dots, g(x_R))\in D_f\}.
\]
\end{definition}

Consider a function $F:D_F\subseteq [R]^N\to\bits$ and encode its input $s=(s_1,\dots,s_N)$ into binary variables $x=(x_{ij})_{i\in[N],j\in[R]}\in\bits^{NR}$, where $x_{ij}=-1$ if $s_i=j$ and $x_{ij}=1$ otherwise.
Under this encoding, suppose that $F$ is the block composition of some $f$ and $g$ acting on the encoded inputs. Specifically, 
\begin{equation}
\label{eq:bc}
    F(s_1,\dots,s_N) = (f\circ g)(x_1,\dots,x_R), \text{ for all } s\in D_F,
\end{equation}
where $x$ denotes the binary encoding of $s$. 
Note that the domain of the right-side in \Cref{eq:bc} corresponds to all valid encodings of inputs in $D_F$, then we slightly relax it to $H^{NR}_{\leq N} \cap D_{f\circ g}$, and define a new function $F^{\leq N}$ together with its approximate degree as follows.
\begin{equation}
\label{eq:<=N}
    F^{\leq N}(x_1,\dots,x_R) = (f\circ g)(x_1,\dots,x_R), \text{ for all } x\in H^{NR}_{\leq N} \cap D_{f\circ g}.
\end{equation}

\begin{definition}[Double-promise approximate degree]
\label{def:dpdeg}
    Let $f:D_f\subseteq\bits^R\to\bits$ be a partial Boolean function, $g:D_g\subseteq\bits^N\to\bits$ be a partial symmetric Boolean function. For $F^{\leq N}$ defined above and any $\delta>0$,
    the \emph{double-promise $\delta$-approximate degree}, denoted by $\dpdeg_\delta(F^{\leq N})$, is the minimum degree of a real polynomial $p:\{-1,1\}^n\to\mathbb{R}$ satisfying
    \begin{align*}
        &|p(x)-F^{\leq N}(x)|\leq \delta \text{ for all } x\in H^{NR}_{\leq N} \cap D_{f\circ g},\\
        &|p(x)|\leq 1+\delta  \text{ for all } x\in H^{NR}_{\leq N} \setminus D_{f\circ g}.
    \end{align*}
\end{definition}

Bun and Thaler’s analysis from \cite{bun2019nearly} applies to partial functions $f$ and $g$ in this setting. In particular, they showed that approximating $F$ is at least as hard as approximating $F^{\le N}$.

\begin{lemma}[{\cite[Theorem 21]{bun2019nearly}}]
\label{lem:dp-encode}
    Let $f$ and $g$ be (partial) Boolean functions and let $F$ and $F^{\leq N}$ be defined above, then for any $\delta>0$,
    \[
    \adeg_\delta(F)\geq \dpdeg_\delta(F^{\leq N}).
    \]
\end{lemma}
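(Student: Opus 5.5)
The plan is to prove the contrapositive by converting any low-degree polynomial for $F$ into one for $F^{\leq N}$ of the same degree. The natural device is a change of variables on the one-hot encoding.

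\textbf{Setup.} Fix a polynomial $p$ of degree $t$ that $\delta$-approximates $F$ in the sense of \Cref{def:adeg}. Here the input $s\in[R]^N$ is given in binary with $\log R$ bits per coordinate. We build a polynomial $q$ on $x\in\bits^{NR}$ of degree at most $t$ satisfying the two conditions of \Cref{def:dpdeg}.

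\textbf{Step 1: reading the one-hot encoding as a string.} A vector $x\in\bits^{NR}$ is indexed by pairs $(i,j)$. Group its $-1$ entries by $j$, i.e.\ by block $x_j$ of the composition. When $|x|\le N$, the $-1$ entries form at most $N$ ``balls'' labeled by values $j\in[R]$. Following Bun--Thaler, I would not map $x$ to a single string. Instead I would map it to a distribution over strings $s\in[R]_0^N$ (allowing dummies, or padding the range as in \Cref{lem:dummy_augment}): place the $|x|$ balls into uniformly random distinct positions among $[N]$ and fill the remaining positions with dummy values. Then define
\[
q(x)=\mathbb{E}_{\pi}\bigl[p(s(x,\pi))\bigr].
\]

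\textbf{Step 2: preserving the function value.} Because $g$ is symmetric and $f\circ g$ depends only on the multiset of occurrence counts per value $j$, every string in the support of this distribution has the same value under $F$ whenever $x\in H^{NR}_{\le N}\cap D_{f\circ g}$. Moreover, that value equals $F^{\le N}(x)$, so $|q(x)-F^{\le N}(x)|\le\delta$. If instead $x\notin D_{f\circ g}$, all strings are still valid inputs to $p$, and $q$ is an average of values bounded by $1+\delta$, giving $|q(x)|\le 1+\delta$.

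\textbf{Step 3: degree bound (the main obstacle).} We must show $\deg q\le t$. The approach is to expand $p$ in the natural basis: each monomial of $p$ of degree at most $t$ is a function of at most $t$ coordinates $s_{i_1},\dots,s_{i_t}$, so $p$ is a linear combination of indicators $\mathbb{I}[s_{i_1}=j_1,\dots,s_{i_t}=j_t]$. Under the random placement, by symmetry over positions, the expectation of such an indicator is the probability that specified $t$ distinct positions receive balls with labels $j_1,\dots,j_r$ (and dummies elsewhere). This probability is a ratio of the form
\[
\frac{\prod_r \#\text{balls labeled } j_r}{N(N-1)\cdots(N-t+1)}\times(\text{dummy terms}).
\]
Here the number of balls labeled $j$ is $\sum_i(1-x_{ij})/2$, which is linear in $x$. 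The worry is the ``dummy'' factors, which involve $N-|x|$; these are also linear in $x$. Overall, the expectation of each indicator is a polynomial of degree at most $t$ in $x$, being a falling-factorial product of at most $t$ linear forms. Hence $\deg q\le t$, which yields $\adeg_\delta(F)\ge\dpdeg_\delta(F^{\le N})$.

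The careful point here is making sure that the falling-factorial counts (for repeated labels) remain degree $\le t$ and that the dummy padding is legitimate. The latter is exactly why $|x|\le N$ is required and why the dummy-augmented setting is natural.
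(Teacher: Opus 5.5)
Your proposal is correct and is essentially the argument behind the cited result; the paper gives no proof of its own and defers to Bun--Thaler, whose proof is this same symmetrization over positions. It uses dummy-augmented inputs so that every $x\in H^{NR}_{\le N}$ is realized, and bounds the degree by showing that the expectation of each position-indicator is a falling-factorial product of at most $t$ affine forms in the block weights $|x_j|$. One point to state precisely: the indicator constrains only the $r\le t$ listed positions, so a dummy label contributes the affine factor $N-|x|$ as one of those $r$ factors rather than as an extra term, and $f\circ g$ depends on the labeled count vector $(|x_1|,\dots,|x_R|)$, not on an unlabeled multiset of counts.
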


\subsubsection{Dual polynomial method}
\label{sec:dual-poly}
To bound the approximate degree of a function, it suffices to bound the maximum pure high degree of the dual witness of this function (see {\cite[Proposition 2.5 and Proposition 6.6]{bun2018polynomial}}). We need the following definitions and tools. 

For a function $\psi:\bits^n\to \R$, we say it is normalized if $\|\psi\|_1:=\sum_{x \in \bits^n} |\psi(x)| = 1$.

\begin{definition}[Dual witness for the (double-promised) Boolean function]
\label{def:dual_witness}
    Let $\delta>0$, the Boolean function $F : D_F\subseteq \bits^n \to \{-1,1\}$ and the double-promised Boolean function $F^{\le N}$ be defined above.
    Then
    \begin{itemize}
        \item a normalized function $\psi \colon \bits^n \to \R$ is called a \emph{dual witness} for $F$ if
        \begin{align} &\sum_{x \in D_F}\psi(x) \cdot F(x) - \sum_{x \in \bits^n \setminus D_F} |\psi(x)| > \delta,
        \end{align}
        \item and a normalized function $\psi \colon \bits^{NR} \to \R$ is called a \emph{dual witness} for $F^{\le N}$ if
        \begin{align} 
        \label{eq:zero-out}\forall x\in\{-1,1\}^{NR}\setminus H_{\le N}^{NR},\quad \psi(x)=0,\\
        \label{eq:corr}\text{and } \sum_{x\in H^{NR}_{\leq N} \cap D_{f\circ g}}\psi(x)\cdot F^{\le N}(x)-\sum_{x\in H^{NR}_{\leq N} \setminus D_{f\circ g}}|\psi(x)|>\delta.
        \end{align}
    \end{itemize}
\end{definition}

The pure high degree of a function $\psi$ is defined as follows and it is known that the maximum pure high degree of the dual witness is a lower bound on the approximate degree of the original function. 
\begin{definition}[Pure high degree]
\label{def:phd}
    A function $\psi:\{-1,1\}^n\to\mathbb{R}$ has pure high degree at least $\Delta$ if for every polynomial $p:\{-1,1\}^n\to\mathbb{R}$ with $\deg(p)<\Delta$ it satisfies $\sum_{x\in\{-1,1\}^n}p(x)\psi(x)=0$.
    We denote this as $\phd(\psi)\ge \Delta$.
\end{definition}

\begin{proposition}[{\cite[Proposition 6.6]{bun2018polynomial}}] \label{prop:dual_dpdeg}
    Let $\delta>0$, $f$ and $g$ be (partial) Boolean functions and let $F^{\le N}$ be defined from \Cref{eq:<=N} and $\psi$ be a dual witness for $F^{\le N}$, then $\dpdeg_{\delta}(F^{\le N})\ge \phd(\psi)$. 
\end{proposition}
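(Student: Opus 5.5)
The plan is to use LP duality between approximating polynomials and dual witnesses. Suppose, for contradiction, that $\phd(\psi)\ge\Delta$ but there is a polynomial $p$ of degree $<\Delta$ satisfying the two conditions of \Cref{def:dpdeg}. We will show that $\sum_x p(x)\psi(x)$ is simultaneously $0$ and strictly positive.

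\emph{Step 1 (orthogonality).} Since $\deg(p)<\Delta\le\phd(\psi)$, \Cref{def:phd} gives $\sum_{x\in\bits^{NR}}p(x)\psi(x)=0$.

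\emph{Step 2 (splitting the sum).} By \Cref{eq:zero-out}, $\psi$ vanishes outside $H^{NR}_{\le N}$. The sum therefore splits into a part over $A:=H^{NR}_{\le N}\cap D_{f\circ g}$ and a part over $B:=H^{NR}_{\le N}\setminus D_{f\circ g}$. This is the only place where the zeroing-out condition is needed: outside $H^{NR}_{\le N}$ we have no control over $p$, so $\psi$ must have no mass there.

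\emph{Step 3 (bounding each part).} On $A$, write $p=F^{\le N}+(p-F^{\le N})$ with $|p-F^{\le N}|\le\delta$. This gives
\[
\sum_{A}p\psi\ \ge\ \sum_A F^{\le N}\psi-\delta\sum_A|\psi|.
\]
On $B$, we have $|p|\le 1+\delta$, so
\[
\sum_B p\psi\ \ge\ -(1+\delta)\sum_B|\psi|.
\]
Adding the two bounds and using $\|\psi\|_1=1$ yields
\[
0=\sum_x p\psi\ \ge\ \Big(\sum_A F^{\le N}\psi-\sum_B|\psi|\Big)-\delta\big(\textstyle\sum_A|\psi|+\sum_B|\psi|\big)=\Big(\sum_A F^{\le N}\psi-\sum_B|\psi|\Big)-\delta.
\]

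\emph{Step 4 (contradiction).} By the correlation condition \Cref{eq:corr}, $\sum_A F^{\le N}\psi-\sum_B|\psi|>\delta$, so the right-hand side above is strictly positive. This contradicts Step 1, and hence $\dpdeg_\delta(F^{\le N})\ge\phd(\psi)$.

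\emph{Main obstacle.} The delicate point is matching the error parameters. The correlation threshold in \Cref{def:dual_witness} must be exactly $\delta$, and the out-of-domain bound must be $1+\delta$ rather than $1$, so that the $\delta$ lost on $A$ and the extra $\delta$ lost on $B$ together come out to exactly $\delta\|\psi\|_1=\delta$. The bound then holds with the same $\delta$ on both sides of the inequality; if the constants were off, we would only get the statement for a slightly smaller $\delta$.
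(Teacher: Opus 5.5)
Your proof is correct and is the standard weak-LP-duality argument behind the cited result; the paper gives no proof of its own and only cites \cite{bun2018polynomial}. The steps are the same: pure high degree forces $\sum_x p(x)\psi(x)=0$, the zeroing-out condition restricts this sum to $H^{NR}_{\le N}$, and the loss of $\delta\sum|\psi|=\delta$ (using $\|\psi\|_1=1$) is beaten by the strict correlation bound, which gives the contradiction.
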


\subsubsection{Dual block composition}
\label{sec:dual-BC}
There is a standard approach (see \cite{shi2009quantum, lee2009note, sherstov2013intersection}) to construct the dual witness for a block composed function $f\circ g$ by combing dual witnesses for $f$ and $g$ respectively.

\begin{definition}[Dual block composition] \label{def:dual_block_comp}
    The dual block composition of two functions $\phi:\{-1,1\}^R\to\mathbb{R}$ and $\psi:\{-1,1\}^N\to\mathbb{R}$ is a function $\phi\star \psi:\{-1,1\}^{NR}\to\mathbb{R}$ defined as 
    $$(\phi\star \psi)(x_1,\dots,x_R)=2^R \,\phi(\sgn(\psi(x_1)),\dots,\sgn(\psi(x_R)))\prod_{r\in[R]}|\psi(x_r)|$$
    where $x=(x_1,\dots,x_R)$, $x_i\in\{-1,1\}^N$, for $i\in [R]$; and $\sgn(z)$ equals $-1$ if $z<0$ and equals $1$ otherwise.
\end{definition}

\begin{proposition}[\cite{sherstov2013intersection,bun2019nearly}]
\label{prop:dual_block_composition}
    The dual block composition satisfies the following properties:
    \begin{itemize}
        \item \emph{Preservation of $\ell_1$-norm:} If $\|\phi\|_1=\|\psi\|_1 = 1$ and $\phd(\psi)\geq 1$, then $\|\phi\star \psi\|_1=1$.
        \item \emph{Multiplicativity of high degree:} $\phd(\phi\star \psi)\geq \phd(\phi)\cdot \phd(\psi)$.
    \end{itemize}
\end{proposition}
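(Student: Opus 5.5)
The statement to prove is \Cref{prop:dual_block_composition}: $\ell_1$-norm preservation and multiplicativity of pure high degree for $\phi\star\psi$. Both follow from one Fourier-style decomposition of $|\psi|$ into its sign classes.

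\textbf{Setup.} For each $b\in\bits$ define the mass of $\psi$ on a sign class,
\[
\mu_b=\sum_{y:\sgn(\psi(y))=b}|\psi(y)|.
\]
The hypothesis $\phd(\psi)\ge 1$ says $\sum_y\psi(y)=0$, since the constant polynomial has degree $0<1$. Positive and negative mass therefore cancel, so $\mu_1=\mu_{-1}$. Together with $\|\psi\|_1=1$ this gives $\mu_1=\mu_{-1}=1/2$.

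\textbf{Norm preservation.} Group the inputs $x=(x_1,\dots,x_R)$ by the sign pattern $z=(\sgn\psi(x_1),\dots,\sgn\psi(x_R))$. By definition of $\phi\star\psi$,
\[
\|\phi\star\psi\|_1=2^R\sum_{z}|\phi(z)|\prod_r \mu_{z_r}.
\]
Each product equals $2^{-R}$ by the setup, so $\|\phi\star\psi\|_1=\sum_z|\phi(z)|=\|\phi\|_1=1$.

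\textbf{Multiplicativity.} By linearity it suffices to test monomials $\prod_r m_r(x_r)$ of total degree below $\phd(\phi)\phd(\psi)$. Here $m_r$ is a monomial on block $r$, of degree $d_r$. Expand $\phi=\sum_S\hat\phi(S)\chi_S$. Then
\[
\sum_x(\phi\star\psi)(x)\prod_r m_r(x_r)=2^R\sum_S\hat\phi(S)\prod_{r}\Big(\sum_{y}|\psi(y)|\,\sgn(\psi(y))^{[r\in S]}m_r(y)\Big).
\]
For $r\in S$ the inner factor is $\sum_y\psi(y)m_r(y)$. This vanishes whenever $d_r<\phd(\psi)$.

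So a term indexed by $S$ can survive only if $d_r\ge\phd(\psi)$ for every $r\in S$. That requires $|S|\cdot\phd(\psi)\le\sum_r d_r<\phd(\phi)\phd(\psi)$, hence $|S|<\phd(\phi)$. But $\hat\phi(S)=0$ for all such $S$, because $\phi$ is orthogonal to $\chi_S$ whenever $|S|<\phd(\phi)$. Hence every term vanishes.

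The only delicate point is that blocks outside $S$ contribute $\sum_y|\psi(y)|m_r(y)$. This quantity is not zero in general, but it does not matter, because the factor $\hat\phi(S)$ already kills the term. The degree counting therefore only needs to involve the blocks in $S$.
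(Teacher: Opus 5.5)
Your argument is correct, and it is the standard proof from the works the paper cites (the paper gives no proof of its own, only the citation to Sherstov and Bun--Thaler). Balancedness of $\psi$ gives sign-class masses $1/2$ and hence $\|\phi\star\psi\|_1=\|\phi\|_1$; Fourier-expanding $\phi$, so that blocks in $S$ see $\psi$ and blocks outside $S$ see $|\psi|$, gives multiplicativity by exactly your degree counting. The step dividing by $\phd(\psi)$ needs $\phd(\psi)\ge 1$, but when $\phd(\psi)=0$ the claimed bound holds trivially.
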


\cite{bun2018polynomial} has constructed dual witnesses for functions $\GapOR_R^{\eps N}$ and $\THR_N^k$ respectively.
We restate them in \Cref{prop:phi} and \Cref{prop:psi} with our new observations.
\begin{proposition}[{\cite[Definition 6.7]{bun2018polynomial}}] \label{prop:phi}
Let $\phi:\{-1,1\}^R\to\mathbb{R}$ be such that
$\phi(-\vecone^R)=-1/2$, $\phi(\vecone^R)=1/2$, and $\phi(x)=0$ for all $x\in\{-1,1\}^R\setminus\{-\vecone^R, \vecone^R\}$.
Then
$\|\phi\|_1=1$, $\textnormal{phd}(\phi)\geq 1$,
and  $$\sum_{x\in\{-1,1\}^R}
\phi(x)\GapOR_R^{\eps N}(x)=1.$$ 
\end{proposition}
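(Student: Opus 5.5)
The three claims about $\phi$ can each be checked directly from the definition, since $\phi$ is supported on only the two points $\vecone^R$ and $-\vecone^R$.

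\emph{Normalization.} We have $\|\phi\|_1=|\phi(\vecone^R)|+|\phi(-\vecone^R)|=1/2+1/2=1$, because every other point contributes $0$.

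\emph{Pure high degree.} By \Cref{def:phd}, it suffices to show that $\phi$ is orthogonal to every polynomial of degree $0$, i.e.\ to constants. Indeed,
\[
\sum_{x}\phi(x)\cdot c \;=\; c\left(\tfrac12-\tfrac12\right) \;=\; 0 .
\]
So $\phd(\phi)\geq 1$. Nothing more is claimed here. Note that $\phi$ is not orthogonal to the linear monomial $x_1$, since $\sum_x \phi(x)x_1=\tfrac12\cdot1+(-\tfrac12)(-1)=1\neq 0$. Hence the statement is tight in the sense that only degree $0$ is annihilated.

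\emph{Correlation.} Both support points lie in the promised domain $H^R_{\geq\eps N}\cup\{\vecone^R\}$ of $\GapOR_R^{\eps N}$:
\begin{itemize}
\item $\vecone^R$ lies there by definition.
\item $-\vecone^R$ has Hamming weight $R$, so it lies in $H^R_{\geq \eps N}$ provided $R\geq \eps N$. This holds in the parameter regime of \Cref{thm:lowerbound}, where $N=\lceil 20(2k)^{k/2}\rceil R$ and $\eps<1/(4^{k-1}\lceil 20(2k)^{k/2}\rceil)$, so that $\eps N< R$. I would state this side condition explicitly.
\end{itemize}
Recall that $\OR_R(\vecone^R)=1$ and $\OR_R(-\vecone^R)=-1$. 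Therefore
\[
\sum_x\phi(x)\GapOR_R^{\eps N}(x)\;=\;\tfrac12\cdot 1+\left(-\tfrac12\right)(-1)\;=\;1 .
\]
Because $\phi$ vanishes outside the promised domain, the out-of-domain mass is $0$. Consequently $\phi$ is a dual witness for $\GapOR_R^{\eps N}$ with correlation $1$.

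The only step needing any care is verifying that $-\vecone^R$ is in the promised domain, i.e.\ that $R\ge \eps N$. I expect no real obstacle; the proof is a direct computation.
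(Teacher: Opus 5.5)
Your proof is correct, and it is exactly the direct verification behind this statement; the paper gives no proof of its own and simply cites \cite{bun2018polynomial}. Your explicit side condition $R\ge \eps N$, which ensures that $-\vecone^R$ lies in the promised domain of $\GapOR_R^{\eps N}$, is a worthwhile addition: the paper leaves it implicit, and it is guaranteed by the parameter choices in \Cref{thm:lowerbound}.
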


\begin{definition}[Error regions]
\label{def:err_reg}
    For a (partial) Boolean function $f:D_f\subseteq \bits^n \to \bits$ and a function $\psi: \bits^n \to\R$, we define $D_+$ and $D_-$ are false positive and false negative regions respectively by 
    \begin{align*}
        D_+(\psi, f) := \{x \in D_f : \psi(x) > 0, f(x) = -1\}; \\
        D_-(\psi, f) := \{x \in D_f : \psi(x) < 0, f(x) = +1\}.
    \end{align*}
\end{definition}

We have the following proposition regarding the existence of function $\psi$ that correlates with $\THR^{k}_N$ and $\THR^{k+1}_N$. %
\begin{proposition}[Enhancement of {\cite[Proposition 5.4]{bun2018polynomial}}]
\label{prop:psi}
    Let $k\leq T\leq N$. Then there exists  a   constant $c$ and a function $\psi : \bits^N \to \R$ such that
$\|\psi\|_1 = 1$, $\phd(\psi) \geq c \sqrt{k^{-1} \cdot  N^{1-1/k}}$, and 
        \begin{align}
            \label{eq:errk}&\sum_{x \in D_+(\psi, \THR^k_N)} |\psi(x)| \le \frac{1}{48  N},\quad\sum_{x \in D_-(\psi, \THR^k_N)} |\psi(x)| \le \frac{1}{2} - \frac{2}{4^k}, \\
            \label{eq:errk+1}&\sum_{x \in D_+(\psi, \THR^{k+1}_N)} |\psi(x)| \le \frac{1}{48  N}, \quad\sum_{x \in D_-(\psi, \THR^{k+1}_N)} |\psi(x)| \ge \frac{1}{2} - \frac{1}{48N}.
        \end{align}
\end{proposition}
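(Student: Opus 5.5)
We will reuse the dual polynomial that Bun, Kothari and Thaler build for $\THR^k_N$ in \cite[Proposition 5.4]{bun2018polynomial}, keeping their construction unchanged. That construction already gives $\|\psi\|_1=1$, $\phd(\psi)\ge c\sqrt{k^{-1}N^{1-1/k}}$, and both bounds in \Cref{eq:errk}. So what remains is \Cref{eq:errk+1}. Recall the structure of $\psi$: it is symmetric, and $\psi(x)=\omega(|x|)/\binom{N}{|x|}$ for a univariate function $\omega$ on $[N]_0$. Thus every error-region mass is a sum of $|\omega(t)|$ over a set of Hamming weights. The plan is to read off the sign pattern of $\omega$ near weights $k$ and $k+1$ and compare it with the regions defined by $\THR^{k+1}_N$.

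First we recall the shape of $\omega$ in \cite{bun2018polynomial}. It is built from a dual for $\OR$/threshold on a set of nodes $T$ that are chosen in a Chebyshev-like way, with squared spacing. It has the following properties.
\begin{enumerate}
\item $\omega(0)>0$, and $\omega(t)$ for $1\le t\le k-1$ has tiny total mass. This is where the bound $\tfrac12-2/4^k$ on $D_-$ comes from: the positive mass below $k$ is at least about $2/4^k$ short of $\tfrac12$.
\item Every positive value at weight $\ge k$ has total mass at most $1/(48N)$. This is exactly the first bound in \Cref{eq:errk}.
\end{enumerate}

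For $\THR^{k+1}_N$ the false-positive region consists of the inputs of weight $\ge k+1$ on which $\psi>0$. This is a subset of the $\THR^k_N$ false-positive region, so
\[
\sum_{x\in D_+(\psi,\THR^{k+1}_N)}|\psi(x)| \;\le\; \sum_{x\in D_+(\psi,\THR^{k}_N)}|\psi(x)| \;\le\; \frac{1}{48N},
\]
and the first part of \Cref{eq:errk+1} follows immediately.

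For the second part we use two facts. First, $\phd(\psi)\ge1$, so $\sum_x\psi(x)=0$. Combined with $\|\psi\|_1=1$, this shows that the total negative mass is exactly $\tfrac12$. Second, the false-negative region for $\THR^{k+1}_N$ consists of the inputs of weight $\le k$ on which $\psi<0$. Its mass therefore equals $\tfrac12$ minus the negative mass sitting at weights $\ge k+1$. So it suffices to show that $\psi$ has at most $1/(48N)$ negative mass at weights $\ge k+1$. Equivalently, the negative mass of $\omega$ must be concentrated on weights $\le k$.

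This is the main obstacle. We expect it to need the explicit construction rather than a black-box property. Our first attempt will go through the proof of \cite[Proposition 5.4]{bun2018polynomial}. There, $\omega$ is obtained by starting from a low-weight dual and then adding a correction; the correction is bounded in $\ell_1$ by $1/(48N)$ and is supported on large weights. The uncorrected dual is supported on $\{0,\dots,k\}\cup T$ with $T\subseteq[2k,N]$, and we will check that its negative values occur only at weights $\le k$. The correction has $\ell_1$ norm at most $1/(48N)$, so it moves at most that much mass. Hence the negative mass above $k$ is at most $1/(48N)$, which gives the bound $\ge\tfrac12-1/(48N)$.

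If the sign alternation of the original dual does place some negative nodes in $T$, the fallback is to change the construction. We would symmetrize it using the freedom in the choice of $T$. Alternatively, we would multiply by a low-degree factor that vanishes on the problematic nodes, then renormalize, and absorb the constant-factor loss into $c$.
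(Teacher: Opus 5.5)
\paragraph{Verdict: gap in the second bound of \Cref{eq:errk+1}.} Your proof of the first bound in \Cref{eq:errk+1} is correct. It goes through the inclusion $D_+(\psi,\THR^{k+1}_N)\subseteq D_+(\psi,\THR^{k}_N)$; the paper uses $\omega(k)<0$, but the inclusion suffices. Your reduction of the second bound is also exactly the paper's argument: $\sum_x\psi(x)=0$ and $\|\psi\|_1=1$ make the total negative mass $\tfrac12$, so it suffices to bound the negative mass on weights $\ge k+1$ by $1/(48N)$. The gap is that you never establish this bound, and the route you propose would fail.

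\paragraph{Why your planned check fails.} You plan to check that the uncorrected dual is negative only at weights $\le k$. That is false. Take a univariate dual supported on a node set $S$ with pure high degree $|S|-1$. It is determined up to scaling, with $\omega(t)\propto 1/\prod_{r\in S\setminus\{t\}}(t-r)$ on $S$. So its signs alternate along $S$. Since almost all of the support lies above $k$ (that is where the pure high degree comes from), roughly half of those high-weight nodes carry negative values. Your fallbacks do not repair this. Multiplying by $\prod_z(t-z)$ over the offending nodes just yields a dual of the same form on a smaller node set, which again alternates. It also distorts the masses that give the other bounds.

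\paragraph{The missing ingredient.} What you need is a tail bound, not a sign property, and it is already available as a black box: \Cref{prop:omega_18}, i.e.\ \cite[Proposition 5.3]{bun2018polynomial}, gives $\sum_{t>k}|\omega(t)|\le 1/(48N)$. This controls the mass of \emph{both} signs above weight $k$; you recalled it only in its one-signed form (``positive values at weight $\ge k$''). Since $\psi(x)=\omega(|x|)/\binom{N}{|x|}$, it gives $\sum_{|x|\ge k+1}|\psi(x)|\le 1/(48N)$. The negative mass above $k$ is therefore at most $1/(48N)$, and your reduction finishes with $\sum_{x\in D_-(\psi,\THR^{k+1}_N)}|\psi(x)|\ge \tfrac12-\tfrac{1}{48N}$. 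This is precisely the paper's proof; no inspection or modification of the construction is needed.
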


Although the dual witness $\psi$ from \Cref{prop:psi} is the same as the one in {\cite[Proposition 55]{bun2018polynomial}}, the bounds stated in \Cref{eq:errk+1} are our new observations. Therefore, instead of omitting the proof, we defer it to \Cref{sec:pfpsi}. %

By \Cref{prop:dual_block_composition}, to prove that the dual block composition $\phi\star \psi$ is a dual witness, it suffices to verify \Cref{eq:corr}.
\cite{apers2024quantumpropertytestingsparse} proved that $\phi\star \psi$ satisfies \Cref{eq:corr} %
with respect to the block composed function $\GapOR_{R}^{\eps N} \circ \THR_N^k$, and thus they can use the pure high degree of $\phi\star \psi$ to bound the approximate degree of $\GapOR_{R}^{\eps N} \circ \THR_N^k$.

In the following, we extend their result to $\GapOR_{R}^{\eps N} \circ \BTHR_N^k$ based on our new observations, that is, $\phi\star \psi$ also satisfies \Cref{eq:corr} with respect to $\GapOR_{R}^{\eps N} \circ \BTHR_N^k$, so that we can also use the pure high degree of $\phi\star \psi$ to bound the approximate degree of $\GapOR_{R}^{\eps N} \circ \BTHR_N^k$.%

\begin{lemma}
\label{lem:corr}
Let $f=\GapOR_{R}^{\eps N}$, $g= \BTHR_N^k$, $N=\lceil 20 (2k)^{k/2}\rceil R$ for large enough $R$ and $0<\eps<1/(4^{k-1}\lceil 20 (2k)^{k/2}\rceil)$. The functions $\phi$ from \Cref{prop:phi} and $\psi$ from \Cref{prop:psi} satisfy
$$\sum_{x\in D_{f\circ g}}(\phi\star\psi)(x)\cdot (f\circ g)(x)-\sum_{x\in\{-1,1\}^{NR}\setminus D_{f\circ g}}|(\phi\star\psi)(x)|\ge 9/10.$$
\end{lemma}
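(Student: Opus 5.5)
We reduce the left-hand side to two pieces: the correlation of $\phi\star\psi$ with $\OR_R\circ\THR_N^k$ over the whole cube, and the $\ell_1$-mass of $\phi\star\psi$ off $D_{f\circ g}$. The starting point is the interpretation of $\phi\star\psi$ as a signed product measure. Since $\phi$ is supported on $\{\pm\vecone^R\}$ with values $\pm 1/2$, we have
\[
(\phi\star\psi)(x)=\tfrac12\Bigl(2^R\prod_{r}|\psi(x_r)|\,\I[\sgn\psi(x_r)=1\ \forall r]\;-\;2^R\prod_r|\psi(x_r)|\,\I[\sgn\psi(x_r)=-1\ \forall r]\Bigr).
\]
Let $\mu_+,\mu_-$ be the probability distributions on $\bits^N$ proportional to $|\psi|$ on $\{\psi\ge0\}$ and on $\{\psi<0\}$. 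Because $\phd(\psi)\ge1$, we have $\sum\psi=0$, so each part has mass $1/2$. Hence $\phi\star\psi=\tfrac12(\mu_+^{\otimes R}-\mu_-^{\otimes R})$, and every claimed bound becomes a probability statement about $R$ i.i.d.\ blocks drawn from $\mu_+$ or from $\mu_-$.

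\textbf{Correlation term.} We write the sum over $D_{f\circ g}$ as the correlation over the whole cube minus the contribution outside $D_{f\circ g}$. The latter is absorbed into the mass term at the cost of a factor $2$. So it suffices to show two things:
\[
\langle\phi\star\psi,\OR_R\circ\THR_N^k\rangle\ge 1-o(1)\quad\text{(Claim \texttt{first\_term}, already known)},
\]
and that $2\sum_{x\notin D_{f\circ g}}|\phi\star\psi(x)|\le 1/10-o(1)$. The first statement follows from \Cref{eq:errk}. Under $\mu_+$, a block has Hamming weight $\ge k$ with probability at most $2/(48N)$, so by a union bound over $R$ blocks $\OR\circ\THR$ outputs $1$ except with probability $R/(24N)$. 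Under $\mu_-$, each block has weight $\ge k$ with probability at least $4/4^k$, so almost surely some block does.

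\textbf{Mass off the domain.} We split the complement as $D_1\cup D_2$.

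$D_2$ consists of the inputs where some block has Hamming weight $\ge k+1$, which violates the bounded promise. Under $\mu_+$, \Cref{eq:errk+1} gives per-block probability at most $1/(24N)$, so the total is at most $R/(24N)$, which is small because $N\gg R$.

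Under $\mu_-$ there is a problem: \Cref{eq:errk+1} says almost all of $\mu_-$ has weight $\ge k+1$. The mass of $\mu_-^{\otimes R}$ on $D_2$ is therefore essentially $1$, and this cannot be bounded directly. This is the main obstacle.

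We would resolve it by restricting to the relaxed domain $H^{NR}_{\le N}$ of \Cref{def:dpdeg}, or alternatively by reorganizing the argument around it. Concretely, we restrict attention to the total Hamming weight $\le N$. If almost every $\mu_-$-block has weight $\ge k+1$, then $R$ such blocks have total weight about $(k+1)R$, which is less than $N=\lceil 20(2k)^{k/2}\rceil R$. That alone does not help, so we would re-examine the definition.

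The cleaner route is to reinterpret $D_-(\psi,\THR^{k+1})$ as the region that is \emph{not} an error region for the composition. On $\mu_-$-samples, the outer $\OR$ already evaluates to $-1$ and $\phi$ has sign $-1$, so those points are counted with the correct sign. The plan is then to use the bound $\ge\frac12-\frac1{48N}$ to show that $\mu_-$ places mass at most $1/(48N)$ per block on weights in $[k,k+1)$ with $\psi<0$. In other words, the $\mu_-$-mass that lands in $D_1$ (blocks of weight exactly $k$ numbering fewer than $\eps N$ but at least one) is tiny. Meanwhile, mass on $D_2$ with the correct sign is charged against the correlation term rather than against the error.

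\textbf{Remaining $D_1$ terms and the final bound.} For $D_1$ under $\mu_+$ we reuse the argument of \cite{apers2024quantumpropertytestingsparse}: having between $1$ and $\eps N$ blocks with $\THR^k=-1$ occurs with probability at most $R/(24N)$. For $D_1$ under $\mu_-$ we use a Chernoff bound. The number of weight-$\ge k$ blocks is $\mathrm{Bin}(R,p)$ with $p\ge 4^{1-k}$, and the condition $\eps<1/(4^{k-1}\lceil20(2k)^{k/2}\rceil)$ gives $\eps N<pR/2$. Hence having fewer than $\eps N$ such blocks has probability $e^{-\Omega(R)}$.

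Summing the pieces gives at most $O(R/N)+e^{-\Omega(R)}\le 1/20$, since $N/R\ge 20(2k)^{k/2}$ is large. Therefore the left-hand side is at least $1-2\cdot\tfrac1{20}-o(1)\ge 9/10$.
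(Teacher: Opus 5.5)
There is a genuine gap, and it is at the one new ingredient of this lemma: bounding the $\mu_-^{\otimes R}$-mass of $D_2$ (some block of Hamming weight $\ge k+1$), which is the paper's term $(\mathrm{IV})$. You read the second inequality of \Cref{eq:errk+1} as saying that $\mu_-$ lives on weight $\ge k+1$. It says the opposite. $D_-(\psi,\THR^{k+1}_N)$ consists of the points with $\psi(x)<0$ and $\THR^{k+1}_N(x)=+1$, i.e.\ $|x|\le k$. Hence $\Pr_{\mu_-}[|x_i|\le k]\ge 1-\frac{1}{24N}$. A union bound over the $R$ blocks then gives $\mu_-^{\otimes R}(D_2)\le \frac{R}{24N}$, which is just one more $O(R/N)$ term. (Equivalently, \Cref{prop:omega_18} gives $\sum_{t>k}|\omega(t)|\le\frac{1}{48N}$. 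So $\psi$ has total mass at most $\frac{1}{48N}$ on weight $>k$ regardless of sign, and this controls $D_2$ under $\mu_+$ and $\mu_-$ at once.) Your reading is also inconsistent with your own use of the first inequality of \Cref{eq:errk+1} for $\mu_+$. There you correctly took $\THR^{k+1}_N=-1$ to mean weight $\ge k+1$.

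The workaround you propose in place of this bound cannot work. Points outside $D_{f\circ g}$ enter the left-hand side as $-|(\phi\star\psi)(x)|$. This holds regardless of the sign of $\phi\star\psi$ or the value of $\OR_R\circ\THR^k_N$ there, so out-of-domain mass cannot be ``charged against the correlation term.'' In fact, if $\mu_-^{\otimes R}$ really put mass close to $1$ on $D_2$, the $-\vecone^R$ half of $\phi\star\psi$ would contribute about $-\frac12$. The left-hand side would then be at most about $0$, so the lemma itself would be false.

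Your intermediate claim that $\mu_-$ puts mass at most $\frac{1}{48N}$ on weight exactly $k$ is also wrong. It contradicts your own correct bound $\Pr_{\mu_-}[|x_i|\ge k]\ge 4^{1-k}$, which is what drives the correlation term. As written, your final sum simply omits $(\mathrm{IV})$.

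Once $(\mathrm{IV})\le\frac{R}{24N}$ is inserted, the rest of your outline coincides with the paper's proof. That includes the product-measure view, the factor-$2$ reduction to \Cref{clm:first_term}, the union bounds under $\mu_+$, and Hoeffding for $D_1$ under $\mu_-$. One minor slip there: the hypothesis on $\eps$ gives $\eps N<4^{1-k}R\le pR$, not $\eps N<pR/2$. Hoeffding still yields $e^{-2R(4^{1-k}-\lceil 20(2k)^{k/2}\rceil\eps)^2}=e^{-\Omega(R)}$, which suffices.
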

\begin{proof}
We rewrite the left hand side by manipulating the sets we consider in the sums, and then we will bound separately the terms we get.

Firstly, notice that $(f\circ g)(x)=(\OR_R\circ \THR_N^k)(x)$ for all $x\in D_{f\circ g}$, then
    \begin{align*}
    &\sum_{x\in D_{f\circ g}}(\phi\star\psi)(x)\cdot (f\circ g)(x)-\sum_{x\in\{-1,1\}^{NR}\setminus D_{f\circ g}}|(\phi\star\psi)(x)| \\
    =&\sum_{x\in \{-1,1\}^{NR}}(\phi\star\psi)(x)\cdot (\textnormal{OR}_R\circ\THR_N^k)(x) \\
    &-\left(\sum_{x\in \{-1,1\}^{NR}\setminus D_{f\circ g}}(\phi\star\psi)(x)\cdot (\textnormal{OR}_R\circ\THR_N^k)(x)+\sum_{x\in\{-1,1\}^{NR}\setminus D_{f\circ g}}|(\phi\star\psi)(x)|\right) \\
    \ge& \sum_{x\in \{-1,1\}^{NR}}(\phi\star\psi)(x)\cdot (\textnormal{OR}_R\circ\THR_N^k)(x) -2 \sum_{x\in\{-1,1\}^{NR}\setminus D_{f\circ g}}|(\phi\star\psi)(x)|.
    \end{align*}

Since functions $\phi$ from \Cref{prop:phi} and $\psi$ from \Cref{prop:psi} are same as thoses used in {\cite[Proposition 4.15 and Lemma 4.16]{apers2024quantumpropertytestingsparse}}, the lower bound of the first term has been given and we omit its proof.
\begin{claim}[{\cite[Claim 4.19]{apers2024quantumpropertytestingsparse}}] It holds that 
\label{clm:first_term}
    $$ \sum_{x\in \{-1,1\}^{NR}}(\phi\star\psi)(x)\cdot (\textnormal{OR}_R\circ\THR_N^k)(x)\geq 1-e^{-\frac{R}{4^{k-1}}}-\frac{R}{48N}.$$
\end{claim}

As for the second term, we first need to introduce the following proposition, which was implicitly used in the proof of \cite[Proposition 5.5 and 5.6]{bun2018polynomial}, and explicitly stated in \cite[Proposition 4.17]{apers2024quantumpropertytestingsparse}.

\begin{proposition}[{\cite[Proposition 4.17]{apers2024quantumpropertytestingsparse}}]
\label{pro:like_5.5_5.6}
Let $S\subseteq\{-1,1\}^{NR}$. Let 
$\phi:\{-1,1\}^R\to\mathbb{R}$ and
$\psi:\{-1,1\}^N\to\mathbb{R}$ be functions such that
$\|\psi\|_1=1$ and
$\phd(\psi)\geq 1$.
When $\lambda$ denotes the probability mass function $\lambda(u)=|\psi(u)|$ and $\lambda^{\otimes R}$ denotes the product distribution on $\bits^{NR}$ given by $\lambda^{\otimes R}(x_1,\dots,x_R)=\prod_{i=1}^R \lambda(x_i)$, then
    $$\sum_{x\in S}|(\phi\star\psi)(x)|= 
\sum_{z\in\{-1,1\}^R} |\phi(z)| \cdot \Pr_{x\sim\lambda^{\otimes R}}[x\in S|(\dots,\sgn(\psi(x_i)),\dots)=z].$$
\end{proposition}

Then the second term can be upper bounded via the same proof line of \cite[Claim 4.20]{apers2024quantumpropertytestingsparse} but employing our new observation in \Cref{eq:errk+1}.
In comparison of their result, we sum over more positive terms but we show that the summation can be still upper bounded as following.

\begin{claim} It holds that 
\label{clm:second_term}
$$2\sum_{x\in\{-1,1\}^{NR}\setminus D_{f\circ g}}|(\phi\star\psi)(x)| <e^{-2R\left(\frac{1}{4^{k-1}}- \lceil20 (2k)^{k/2}\rceil\eps\right)^2} + \frac{R}{8N}.$$
\end{claim}
\begin{claimproof}
By \Cref{pro:like_5.5_5.6} with $S=\{-1,1\}^{NR}\setminus D_{f\circ g}$, the term can be written as follows, %
    \begin{align*}
        &2\sum_{x\in\{-1,1\}^{NR}\setminus D_{f\circ g}}|(\phi\star\psi)(x)|
        =2\sum_{z\in\{-1,1\}^R}|\phi(z)|\cdot\Pr_{x\sim\lambda^{\otimes R}}[x\notin D_{f\circ g} \mid (\dots,\sgn(\psi(x_i)),\dots)=z]\\
   =&\Pr_{x\sim\lambda^{\otimes R}}[x\notin D_{f\circ g} \mid (\dots,\sgn(\psi(x_i)),\dots)=\vecone^R] 
        + \Pr_{x\sim\lambda^{\otimes R}}[x\notin D_{f\circ g} \mid (\dots,\sgn(\psi(x_i)),\dots)=-\vecone^R], 
        \end{align*} 
where the second equation follows from the fact that $|\phi(z)|=1/2$ when $z$ is $-\vecone^R$ or $\vecone^R$ and $0$ otherwise. %

We write the the promised domain complement $\bits^{NR}\setminus D_{f\circ g}$ as the union of two parts $D_1$ and $D_2$, which contain inputs that obey \textit{gap-promise} (\Cref{eq:rule_gap}) and \textit{bounded-promise} (\Cref{eq:rule_bounded}) respectively:
    \[
    \begin{aligned}
        &D_1:=\{x\in\bits^{NR}: (\THR_N^k(x_1),\dots, \THR_N^k(x_R))\in H^R_{[1,\eps N)}\}; \\
        &D_2:=\{x\in\bits^{NR}: (\THR_N^{k+1}(x_1),\dots, \THR_N^{k+1}(x_R))\in H^R_{\ge 1}\}.
    \end{aligned}
    \]
Therefore, 
\begin{align*}
&\Pr_{x\sim\lambda^{\otimes R}}[x\notin D_{f\circ g} \mid (\dots,\sgn(\psi(x_i)),\dots)=\vecone^R] 
        + \Pr_{x\sim\lambda^{\otimes R}}[x\notin D_{f\circ g} \mid (\dots,\sgn(\psi(x_i)),\dots)=-\vecone^R]\\
\leq & \Pr\Big[x\in D_1 \mid \forall i\in[R]\ \sgn(\psi(x_i))=1\Big] + \Pr\Big[x\in D_2 \mid \forall i\in[R]\ \sgn(\psi(x_i))=1\Big]\\
&+ \Pr\Big[x\in D_1 \mid \forall i\in[R]\ \sgn(\psi(x_i))=-1\Big]+\Pr\Big[x\in D_2 \mid \forall i\in[R]\ \sgn(\psi(x_i))=-1\Big]\\
:=& (\mathrm{I}) + (\mathrm{II}) + (\mathrm{III}) + (\mathrm{IV}).
        \end{align*} 
    
In order to bound the above, we introduce four $0/1$-variables $r_i, \tilde{r}_i, q_i$ and $\tilde{q}_i$, for $i\in[R]$, related to the false positive and false negative inputs for $\THR_N^k$ and $\THR_N^{k+1}$.
Define  
\begin{itemize}
    \item $r_i=1$ if $\THR_N^k(x_i)=-1$ and $\sgn(\psi(x_i))=1$, and otherwise $r_i=0$;
    \item $\tilde{r}_i=1$ if $\THR_N^{k+1}(x_i)=-1$ and $\sgn(\psi(x_i))=1$, and otherwise $\tilde{r}_i=0$;
    \item $q_i=1$ if $\THR_N^k(x_i)=1$ and $\sgn(\psi(x_i))=-1$, and otherwise $q_i=0$;
    \item $\tilde{q}_i=1$ if $\THR_N^{k+1}(x_i)=1$ and $\sgn(\psi(x_i))=-1$, and otherwise $\tilde{q}_i=0$.
\end{itemize}

Note that if we sample $x_i$ from the conditional distribution $(\lambda|\sgn(\psi(x_i))=1)$, then by \Cref{eq:errk},
    \[
    \begin{aligned}
        &\Pr[r_i=1]=\Pr[\THR_N^k(x_i)=-1| \sgn(\psi(x_i))=1]=2\sum_{x_i\in D_+(\psi, \THR^k_N)}|\psi(x_i)|\le \frac{1}{24N}.
    \end{aligned}
    \]
    Thus we can upper bound the probability of the event: an input belongs to $D_1$ knowing that all $\psi(x_i)$ are positive. It means that it contains at least $1$ but less than $\eps N$ many $-1$s, so this event can be expressed by $r_i$ variables. In the last step below we use the union bound. We have the following:
 \[      (\mathrm{I}) = \Pr\left[1\le\sum_{i\in[R]}r_i<\eps N\right] 
        \le \Pr\left[1\le\sum_{i\in[R]}r_i\right]\le \frac{R}{24N}.
   \]
   
    By \Cref{eq:errk} again, we also obtain that
    \[
    \Pr[\tilde{r}_i=1]=\Pr[\THR_N^{k+1}(x_i)=-1| \sgn(\psi(x_i))=1]=2\sum_{x_i\in D_+(\psi, \THR^{k+1}_N)}|\psi(x_i)|\le \frac{1}{24N}.
    \]
    
    And similarly, we can upper bound the probability that an input belongs to $D_2$.
    \[
    (\mathrm{II}) = \Pr\left[1\le\sum_{i\in[R]}\tilde{r}_i\right] 
        \le \Pr\left[1\le\sum_{i\in[R]}r_i\right]\le \frac{R}{24N}.
    \]

Now if we sample $x_i$ from the conditional distribution $(\lambda|\sgn(\psi(x_i))=-1)$, then by \Cref{eq:errk+1} 
\[
\begin{aligned}
    &\Pr[q_i=1]=\Pr[\THR_N^k(x_i)=1| \sgn(\psi(x_i))=-1]=2\sum_{x_i\in D_-(\psi, \THR^k_N)}|\psi(x_i)|\le 1-\frac{1}{4^{k-1}}, \\
    & \Pr[\tilde{q}_i=1]=\Pr[\THR_N^{k+1}(x_i)=1| \sgn(\psi(x_i))=-1]=2\sum_{x_i\in D_-(\psi, \THR^{k+1}_N)}|\psi(x_i)|\ge 1-\frac{1}{24N}.
\end{aligned}
\]

    Then, similarly as above, we can upper bound the probability that  an input belongs to $D_1$ knowing that all $\psi(x_i)$ are negative. Now in the last step, we use the Chernoff bound, which introduces the constraint $\eps<{1}/({4^{k-1}\lceil20 (2k)^{k/2}\rceil})$.
    \begin{equation*}
    (\mathrm{III}) 
    \le \Pr\left[R-\eps N<\sum_{i\in[R]}q_i\right]< e^{-2R\left(\frac{1}{4^{k-1}}- \lceil20 (2k)^{k/2}\rceil\eps\right)^2}. 
    \end{equation*}
    
    And we can upper bound the probability that an input belongs to $D_2$.
    In the last step below, we use the union bound.
    \begin{equation*}
    (\mathrm{IV}) 
    \le \Pr\left[\sum_{i\in[R]}\tilde{q}_i\leq R-1\right] = \Pr\left[\exists i\in[R]:\tilde{q}_i=0\right] \leq \frac{R}{24N}. 
    \end{equation*}

    Putting together the four bounds completes the proof of the claim.
\end{claimproof}

Now putting \Cref{clm:first_term} and \Cref{clm:second_term} together, we obtain
$$ \sum_{x\in D_{f\circ g}}(\phi\star\psi)(x)\cdot (f\circ g)(x)-\sum_{x\in\{-1,1\}^{NR}\setminus D_{f\circ g}}|(\phi\star\psi)(x)|
        \ge 1-\frac{7R}{48N}-e^{-\frac{R}{4^{k-1}}}-e^{-2R\left(\frac{1}{4^{k-1}}- \lceil20 (2k)^{k/2}\rceil\eps\right)^2}.$$
    Since $0<\eps<1/(4^{k-1}\lceil 20 (2k)^{k/2}\rceil)$ and $N = \lceil20 (2k)^{k/2}\rceil R$ for large enough $R$, the right hand side is at least $9/10$. This completes the proof of the lemma.
\end{proof}

\subsubsection{Zeroing out the mass out of \texorpdfstring{$H_{\leq N}^{NR}$}{H<=N}}
\label{sec:zero-out}
Let $f=\GapOR_{R}^{\eps N}$, $g= \BTHR_N^k$ and $F^{\le N}$ be defined by \Cref{eq:<=N}.
Recall \Cref{eq:zero-out}, the mass of a dual witness for $F^{\leq N}$ must be zero on inputs out of $H_{\leq N}^{NR}$.
The following lemma guarantees the existence of a modified function similar to $\phi\star\psi$, except that its mass outside $H_{\leq N}^{NR}$ is zeroed out, while the pure high degree decreases by at most a polylogarithmic factor.
\begin{lemma}[{\cite[Proposition 2.22]{bun2018polynomial}}]
\label{lem:zeroing}
    Let $N = \lceil20 (2k)^{k/2}\rceil R$, and let $\phi$ and $\psi$ be functions from \Cref{prop:phi} and \Cref{prop:psi}, then there exists a function $\zeta:\bits^{NR}\to\mathbb{R}$ such that
    \begin{align*}
    \forall x\in\{-1,1\}^{NR}\setminus H_{\le N}^{NR},\quad \zeta(x)=0;& \\
    \|\zeta-\phi\star\psi\|_1\le 2/9;& \\
    \|\zeta\|_1= 1  \quad \text{ and } \quad
      \phd(\zeta)=\Omega(\sqrt{N^{1-1/k}}/\ln^2N).
    \end{align*}
\end{lemma}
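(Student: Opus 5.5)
This lemma is the zeroing-out step of \cite[Proposition 2.22]{bun2018polynomial}, so the plan is to check that its hypotheses hold for $\phi\star\psi$ and then invoke it. That proposition says: if a normalized function $\Psi$ on $\bits^{NR}$ has pure high degree $D$ and places very little mass on inputs of Hamming weight far above $N$, then $\Psi$ can be corrected to a function $\zeta$ that vanishes outside $H^{NR}_{\le N}$. The correction has $\ell_1$-distance $o(1)$ from $\Psi$ and pure high degree $\Omega(\min\{D,\sqrt N\}/\ln N)$ up to logarithmic factors. Renormalizing then gives $\|\zeta\|_1=1$. We apply it with $\Psi=\phi\star\psi$.

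\textbf{Step 1 (degree and norm).} By \Cref{prop:dual_block_composition}, together with $\phd(\phi)\ge1$ from \Cref{prop:phi} and $\phd(\psi)\ge c\sqrt{k^{-1}N^{1-1/k}}$ from \Cref{prop:psi}, we have $\|\phi\star\psi\|_1=1$ and $\phd(\phi\star\psi)=\Omega(\sqrt{N^{1-1/k}})$, treating $k$ as a constant.

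\textbf{Step 2 (tail of the Hamming weight).} By \Cref{pro:like_5.5_5.6}, the mass of $\phi\star\psi$ on inputs of weight at least $t$ equals the average, over $z\in\{\vecone^R,-\vecone^R\}$, of $\Pr_{x\sim\lambda^{\otimes R}}[|x|\ge t\mid \sgn=z]$, where $|x|$ is the sum of the $R$ independent block weights $|x_i|$.
\begin{itemize}
\item From the explicit construction of $\psi$ in \cite{bun2018polynomial} (deferred to \Cref{sec:pfpsi}), the conditional law of each $|x_i|$ has a polynomially decaying tail, roughly $\Pr[|x_i|=j]\lesssim j^{-2}$ up to $k$-dependent factors, and a constant mean.
\item Since $N=\lceil 20(2k)^{k/2}\rceil R$, the expected total weight is at most $N/2$.
\item A Chernoff-type bound for sums of independent variables with such tails then shows that the mass on weight at least $2^{\ell}N$ decays like $2^{-\Omega(\ell)}$, or faster. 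This is exactly the tail condition required by \cite[Proposition 2.22]{bun2018polynomial}.
\end{itemize}

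\textbf{Step 3 (apply the proposition).} With the tail condition in hand, \cite[Proposition 2.22]{bun2018polynomial} supplies a correction $\Delta$ with $\phd(\Delta)=\Omega(\sqrt{N^{1-1/k}}/\ln^2N)$ and $\|\Delta\|_1\le 2/9$, such that $\zeta=\phi\star\psi+\Delta$ vanishes outside $H^{NR}_{\le N}$. The $\ln^2N$ loss comes from the $\sqrt{D}\log$ choice of the polynomial degrees used to kill the high-weight layers. Normalizing $\zeta$ changes the $\ell_1$-distance only by a constant factor, which is absorbed by choosing constants so the bound stays at most $2/9$.

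The main obstacle is Step 2: extracting from the explicit $\psi$ a sufficiently strong tail bound on $|x_i|$ under both conditionings. It is harder under the conditioning $\sgn(\psi(x_i))=-1$, where the mass sits at larger weights. My first approach is to reuse the tail estimates from \cite[Proof of Proposition 5.4]{bun2018polynomial}, where $|\psi(x_i)|$ on weight layer $j$ is bounded by $O(j^{-2}\cdot\text{const})$. I would then check that the constant $\lceil20(2k)^{k/2}\rceil$ makes $R\cdot\mathbb{E}|x_i|\le N/2$.
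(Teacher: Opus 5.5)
Your top-level plan matches the paper. The paper gives no argument of its own: it quotes \cite[Proposition 2.22]{bun2018polynomial}, which is formulated for compositions built from exactly this kind of symmetric dual $\psi$, and the constant $\lceil 20(2k)^{k/2}\rceil$ is inherited from it. The trouble is the verification you substitute for the citation: Steps 2--3 rest on a tail condition that is far too weak, so Step 3 does not follow from Step 2.

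To see why, take any correction $\nu_y$ with $\nu_y(y)=1$ for an input $y$ with $|y|=t>N$, which vanishes at every other input of weight $>N$ and has $\phd(\nu_y)\ge D$. Test it against the degree-$(D-1)$ polynomial $p(x)=T_{D-1}(2|x|/N-1)$, where $T_{D-1}$ is the Chebyshev polynomial and $|p|\le 1$ on $H^{NR}_{\le N}$. This gives $\|\nu_y\|_1\ge |T_{D-1}(2t/N-1)|$. At $t=2N$ that is already at least $\tfrac12(3+2\sqrt2)^{D-1}$, with $D\approx\sqrt{N^{1-1/k}}/\ln^2N$. So before the corrections can total at most $2/9$, the mass of $\phi\star\psi$ at weight $t>N$ must be exponentially small in $D$, with the bound improving as $t$ grows. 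A bound of $2^{-\Omega(\ell)}$ on the mass beyond $2^\ell N$ gives nothing of the kind. Nor can a Chernoff-type bound be used for block weights whose law is only known to decay like $j^{-2}$, since such variables have no exponential moments.

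The missing ingredient is structural information about the univariate dual $\omega$ behind $\psi(x)=\omega(|x|)/\binom{N}{|x|}$.
\begin{enumerate}
\item $\sum_{t>k}|\omega(t)|\le 1/(48N)$ by \Cref{prop:omega_18}. Hence, under either sign conditioning, each block has weight at most $k$ except with probability $1/(24N)$. All but an $R/(24N)$ fraction of the mass therefore already lies on total weight at most $kR\le N$. It is this bound, not a $j^{-2}$ tail, that controls the bulk, and the conditioning $\sgn=-1$ is no harder than $\sgn=+1$.
\item The construction in \cite{bun2018polynomial} also makes $|\omega(t)|$ decay exponentially in $t$, beyond the $t^{-2}$ factor. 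That decay is what pushes the mass at weight $t>N$ below the correction cost computed above. Balancing the two is where the $\ln^2N$ loss comes from.
\end{enumerate}
Without this property, your Step 2 cannot be completed. The alternative is simply to cite the proposition for these specific $\phi,\psi$, as the paper does.
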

Then we prove $\zeta$, the function modified from $\phi\star\psi$, is exactly the dual witness for $F^{\leq N}$.
\begin{lemma}
\label{lem:zeta}
    Let $f=\GapOR_{R}^{\eps N}$, $g= \BTHR_N^k$ and $F^{\le N}$ be defined by \Cref{eq:<=N}, then the function $\zeta:\{-1,1\}^{NR}\to\mathbb{R}$ from \Cref{lem:zeroing} satisfies that
\begin{align}
    \label{eq:zeta1}\forall x\in\{-1,1\}^{NR}\setminus H_{\le N}^{NR},\quad \zeta(x)=0;\\
    \label{eq:zeta2}\sum_{x\in H^{NR}_{\leq N} \cap D_{f\circ g}}\zeta(x)F^{\le N}(x)-\sum_{x\in H^{NR}_{\leq N} \setminus D_{f\circ g}}|\zeta(x)|>2/3;\\
    \label{eq:zeta3} \|\zeta\|_1= 1  \quad \text{ and } \quad
      \phd(\zeta)=\Omega(\sqrt{N^{1-1/k}}/\ln^2N).
\end{align}
\end{lemma}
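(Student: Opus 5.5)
Items \eqref{eq:zeta1} and \eqref{eq:zeta3} are already given by \Cref{lem:zeroing}, so the work is in the correlation inequality \eqref{eq:zeta2}. The plan is to transfer the correlation of $\phi\star\psi$ from \Cref{lem:corr} to $\zeta$, paying only for the $\ell_1$ distance $\|\zeta-\phi\star\psi\|_1\le 2/9$.

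First I reduce the left side of \eqref{eq:zeta2} to a statement over the whole cube. Since $\zeta$ vanishes outside $H^{NR}_{\le N}$, and $F^{\le N}$ agrees with $f\circ g$ on $H^{NR}_{\le N}\cap D_{f\circ g}$, the left-hand side equals
\[
\Lambda(\zeta):=\sum_{x\in D_{f\circ g}}\zeta(x)(f\circ g)(x)-\sum_{x\in\bits^{NR}\setminus D_{f\circ g}}|\zeta(x)|.
\]
The terms dropped or added are exactly the points outside $H^{NR}_{\le N}$, and there $\zeta$ contributes zero.

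Next I use that $\Lambda$ is $1$-Lipschitz in $\ell_1$. For any two functions $\zeta,\xi$,
\[
|\Lambda(\zeta)-\Lambda(\xi)|\le \sum_{x\in D_{f\circ g}}|\zeta(x)-\xi(x)|+\sum_{x\notin D_{f\circ g}}\bigl||\zeta(x)|-|\xi(x)|\bigr|\le\|\zeta-\xi\|_1,
\]
because $|(f\circ g)(x)|=1$ and $||a|-|b||\le|a-b|$. Taking $\xi=\phi\star\psi$, \Cref{lem:corr} gives $\Lambda(\phi\star\psi)\ge 9/10$. Combined with \Cref{lem:zeroing}, this yields
\[
\Lambda(\zeta)\ge 9/10-2/9=61/90>2/3 .
\]

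The only delicate point is bookkeeping: the domain $D_{f\circ g}$ in \Cref{lem:corr} must be the same one used in \eqref{eq:<=N}, namely $f=\GapOR_R^{\eps N}$ and $g=\BTHR_N^k$ with the same parameters. Once that is checked, the margin $61/90>2/3$ closes the argument. Apart from that, everything is routine and no step is a real obstacle.
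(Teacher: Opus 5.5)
Your proposal is correct and matches the paper's proof: \eqref{eq:zeta1} and \eqref{eq:zeta3} are taken directly from \Cref{lem:zeroing}. For \eqref{eq:zeta2}, the $9/10$ correlation of $\phi\star\psi$ from \Cref{lem:corr} is transferred to $\zeta$ at a cost of $\|\zeta-\phi\star\psi\|_1\le 2/9$, and then the fact that $\zeta$ vanishes off $H^{NR}_{\le N}$ is used, with your explicit $1$-Lipschitz bound simply spelling out the step the paper leaves implicit.
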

\begin{proof}
    \Cref{eq:zeta1} and \Cref{eq:zeta3} hold directly from \Cref{lem:zeroing}.
    For \Cref{eq:zeta2}, by \Cref{lem:corr},
    $$\sum_{x\in D_{f\circ g}}(\phi\star\psi)(x)\cdot F(x)-\sum_{x\in\{-1,1\}^{NR}\setminus D_{f\circ g}}|(\phi\star\psi)(x)|\ge 9/10.$$

By combining it with $\|\zeta-\phi\star\psi\|_1\le 2/9$ from \Cref{lem:zeroing}, we obtain
    \[
    \sum_{x\in D_{f\circ g}}\zeta(x)\cdot F(x)-\sum_{x\in\{-1,1\}^{NR}\setminus D_{f\circ g}}|\zeta(x)|\ge 9/10 -2/9 >2/3.
    \]

    Since $\zeta(x)=0$ for $x\notin H^{NR}_{\leq N}$ and $F^{\leq N}(x) = F(x)$ for $x\in H^{NR}_{\leq N} \cap D_{f\circ g}$, the left hand side equals
    \[
    \sum_{x\in H^{NR}_{\leq N} \cap D_{f\circ g}}\zeta(x)F^{\le N}(x)-\sum_{x\in H^{NR}_{\leq N} \setminus D_{f\circ g}}|\zeta(x)|.
    \]
    Hence the lemma holds.
\end{proof}

\subsubsection{Completing the proof of \texorpdfstring{\Cref{thm:lowerbound}}{lower bound theorem}}
\label{sec:pfLB}
Let $f=\GapOR_{R}^{\eps N}$ and $g= \BTHR_N^k$. Note that $F=\dOCCU_{N,R}^{k,\eps}$ satisfies \Cref{eq:bc}, i.e.,
\[
\dOCCU_{N,R}^{k,\eps}(s_1,\dots,s_N) = (\GapOR_{R}^{\eps N}\circ \BTHR_N^k)(x_1,\dots,x_R), \text{ for all } s\in D_{\dOCCU_{N,R+N}^{k,\eps}},
\]
where $x$ is the binary encoding of $s$.
Therefore, by combining \Cref{lem:poly_method}, \Cref{lem:dummy_augment}, \Cref{lem:dp-encode}, \Cref{prop:dual_dpdeg} and \Cref{lem:zeta}, we obtain that for any $k\geq 2$, $0<\eps<{1}/({4^{k-1}\lceil20 (2k)^{k/2}\rceil})$, $N = \lceil20 (2k)^{k/2}\rceil R$ and large enough $R$,
\[
Q_{1/3}(\OCCU_{N,R+N}^{k,\eps}) = \Omega(\sqrt{N^{1-1/k}}/\ln^3N),
\]
which completes the proof of \Cref{thm:lowerbound}.

\subsubsection{Proof of \texorpdfstring{\Cref{prop:psi}}{key proposition}}
\label{sec:pfpsi}

Now we give the proof of \Cref{prop:psi}. We begin by constructing a univariate version of our dual witness for $\THR_N^k$, which is firstly proposed in {\cite[Proposition 4.3]{bun2018polynomial}}.

\begin{proposition}[{\cite[Proposition 5.3]{bun2018polynomial}}]
\label{prop:omega_18}
    Let $k,T,N\in\mathbb{N}$ with $k\leq T$. There exists a function $\omega:[T]_0\to \mathbb{R}$ such that
    \[
    \begin{aligned}
         &\omega(k)<0, \quad \sum_{t:t>k}|\omega(t)| \leq \frac{1}{48N}, \\
         & \sum_{t:\omega(t)<0,t< k}|\omega(t)| \leq \frac{1}{2}-\frac{2}{4^k}, \\
         &\|\omega\|_1 = 1, \quad\phd(\omega) = \Omega(k^{-1}\cdot T\cdot N^{-1/k}).
    \end{aligned}
    \]
\end{proposition}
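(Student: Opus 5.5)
The plan is to build $\omega$ explicitly from the standard description of univariate functions orthogonal to low-degree polynomials, and then to fix the parameters so that all four conditions of \Cref{prop:omega_18} hold. The description I will use is the following. A function $\omega:[T]_0\to\R$ satisfies $\sum_t \omega(t)p(t)=0$ for every polynomial $p$ with $\deg p<d$ exactly when $\omega$ is supported on a node set $S$ and is a combination of divided-difference functionals over $S$. In particular, if $S\subseteq[T]_0$ has $|S|=d+1$, then the function
\[
\omega(t)=C\cdot\prod_{s\in S\setminus\{t\}}\frac{1}{t-s}\quad(t\in S),\qquad \omega(t)=0\ (t\notin S),
\]
has $\phd(\omega)\ge d$, and $C$ can be chosen to make $\|\omega\|_1=1$. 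I will commit to $S=\{0,1,\dots,k\}\cup\{k+\lceil c\rceil i^2: 1\le i\le m\}$, with $m$ as large as the constraint $k+\lceil c\rceil m^2\le T$ allows and $c=\Theta(k N^{1/k})$. Since $\omega$ alternates in sign along the increasing order of $S$, I will fix the sign of $C$ so that $\omega(k)<0$.

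The key steps, in order, are as follows.
\begin{enumerate}
\item \emph{Tail bound.} For each far node $t=k+ci^2$, I will compute the ratio $|\omega(t)|/|\omega(k)|$. The factors coming from the small nodes $\{0,\dots,k-1\}$ contribute at most $k!/(ci^2)^k$. The factors coming from the other far nodes contribute $\prod_{j\neq i} j^2/|j^2-i^2|$, and I will bound this by a constant times $i$ using $j^2-i^2=(j-i)(j+i)$ and the usual binomial estimate.
\item \emph{Choice of $c$.} With the bound from step 1, summing over $i\ge1$ gives a tail mass of order $k!\,c^{-k}\,|\omega(k)|$. I will take $c$ just large enough that this is at most $1/(48N)$, which forces $c\approx kN^{1/k}$.
\item \emph{Negative mass below $k$.} On $\{0,\dots,k\}$ the factors from the far nodes are all close to $1$. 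Hence $\omega$ restricted to $\{0,\dots,k\}$ is, up to a $(1\pm o(1))$ factor, proportional to $(-1)^{k-t}\binom{k}{t}$. The negative part of this restriction, excluding the point $t=k$ itself, therefore carries a fraction close to $\tfrac12-2^{-k}$ of the total mass. After normalization this gives a bound of at most $\tfrac12-2/4^k$, provided the correction factors are controlled to within $4^{-k}$, and I will check that they are.
\item \emph{Degree count.} The pure high degree is $|S|-1\ge m=\Theta(\sqrt{T/c})$.
\end{enumerate}

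The main obstacle is step 4, specifically the mismatch between what this construction yields and what \Cref{prop:omega_18} states. The quadratic spacing gives pure high degree $\Theta(\sqrt{T/c})$, which is of order $\sqrt{k^{-1}TN^{-1/k}}$. That is the square root of the stated bound $\Omega(k^{-1}TN^{-1/k})$. Notably, the square-root form is exactly the one that feeds into \Cref{prop:psi} at $T=N$, where the required bound is $\phd(\psi)\ge c\sqrt{k^{-1}N^{1-1/k}}$.

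To get a bound linear in $T$, the first thing I would try is linear spacing, $S=\{0,\dots,k\}\cup\{k+ci\}$. With this choice the far-node product becomes $\binom{m}{i}$-like, and the tail mass blows up. The next thing I would try is a damped form $\omega(t)=(-1)^t\binom{T}{t}p(t)$ with a Chebyshev-type $p$.

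I expect both attempts to fail, because of a known upper bound on the approximate degree of the threshold function. By duality, a univariate witness of this kind with tail error $1/(48N)$ cannot have pure high degree larger than $O(\sqrt{kT}+\sqrt{T\log N})$, the approximate degree of the threshold at $k$ on $[T]_0$ with error about $1/N$. For $k\ge3$ and $T$ near $N$, the stated linear bound exceeds this. So I expect the proof to go through only for the square-root version of the degree bound, and I would verify steps 1--3 with that parametrization.
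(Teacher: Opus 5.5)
Your diagnosis is right. The printed bound $\phd(\omega)=\Omega(k^{-1}TN^{-1/k})$ in \Cref{prop:omega_18} is a transcription slip, and it is repeated in \Cref{eq:omega3}. The cited Bun--Kothari--Thaler result gives $\phd(\omega)\ge c\sqrt{k^{-1}TN^{-1/k}}$. The paper offers no proof beyond the citation, and only the square-root form is ever used: at $T=N$ it is exactly the bound $c\sqrt{k^{-1}N^{1-1/k}}$ asserted in \Cref{prop:psi}, which is what produces $N^{1/2-1/(2k)}$.

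Your refutation of the linear form is valid. Let $h(t)=1$ for $t<k$ and $h(t)=-1$ for $t\ge k$. From $\sum_t\omega(t)=0$ and the stated conditions one gets $\sum_t\omega(t)h(t)\ge 4^{1-k}-\frac{1}{24N}$. Hence $\phd(\omega)$ cannot exceed the degree of any uniform approximation of $h$ on $[T]_0$ with smaller error. For fixed $k\ge 3$, $T=N$ and $N$ large, that degree is far below $k^{-1}N^{1-1/k}$.

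Your construction is essentially the one behind the cited result: nodes $\{0,\dots,k\}$ plus quadratically spaced nodes at scale $\Theta(kN^{1/k})$, with divided-difference weights on $S$. These weights are equivalently $(-1)^{T-t}\binom{T}{t}\prod_{r\in[T]_0\setminus S}(t-r)/T!$. Steps 1, 2 and 4 go through as you describe. In Step 1 the far-node factor is in fact exactly $\prod_{j\ne i}\frac{j^2}{|j^2-i^2|}=\frac{2(m!)^2}{(m-i)!\,(m+i)!}\le 2$, so you do not need the $O(i)$ bound.

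The one step to repair is Step 3. As phrased, it treats the far-node factors as a two-sided $(1\pm o(1))$ perturbation, which must be below roughly $2^{-k}$. That perturbation is $\Theta(N^{-1/k})$, so the argument only works once $N^{1/k}$ is at least a constant times $2^k$, while the statement is for all $N$. No asymptotics are needed, because the perturbation has a favorable sign. For $t\le k$,
\[
\frac{|\omega(t)|}{|\omega(k)|}=\binom{k}{t}\prod_{j=1}^{m}\frac{\lceil c\rceil j^2}{\lceil c\rceil j^2+k-t}\le\binom{k}{t},
\]
so
\[
|\omega(k)|\ge 2^{-k}\sum_{t\le k}|\omega(t)|\ge 2^{-k}\Bigl(1-\frac{1}{48N}\Bigr).
\]
Since $\|\omega\|_1=1$ and $\sum_t\omega(t)=0$, the total negative mass is $\frac12$, and it contains $|\omega(k)|$. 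Therefore the negative mass on $t<k$ is at most $\frac12-2^{-k}\bigl(1-\frac{1}{48N}\bigr)\le\frac12-\frac{2}{4^k}$, for every $N$ and every $k\ge 2$.
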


Then we observe further that the dual witness $\omega$ from \Cref{prop:omega_18} is also related to $\THR_N^{k+1}$ by an additional proposition \Cref{eq:omega2}.%
\begin{proposition}
\label{prop:omega}
    Let $k,T,N\in\mathbb{N}$ with $k\leq T$. The function $\omega$ from \Cref{prop:omega_18} satisfies the following propositions:
    \begin{align}
         \label{eq:omega1}&\sum_{t:\omega(t)>0,t\geq k}|\omega(t)| =  \sum_{t:\omega(t)>0,t\geq k+1}|\omega(t)| \leq \frac{1}{48N}, \\
         \label{eq:omega2}& \sum_{t:\omega(t)<0,t< k}|\omega(t)| \leq \frac{1}{2}-\frac{2}{4^k}, \quad \sum_{t:\omega(t)<0,t< k+1}|\omega(t)| \geq \frac{1}{2}-\frac{1}{48N},\\
         \label{eq:omega3}&\|\omega\|_1 = 1, \quad \phd(\omega) = \Omega(k^{-1}\cdot T\cdot N^{-1/k}).
    \end{align}
\end{proposition}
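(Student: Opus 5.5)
Since $\omega$ is the same function as in \Cref{prop:omega_18}, I take \Cref{eq:omega3} and the first inequality of \Cref{eq:omega2} directly from there. What remains is to derive \Cref{eq:omega1} and the second inequality of \Cref{eq:omega2} from the listed properties of $\omega$. The additional facts I need are $\omega(k)<0$ and $\sum_t \omega(t)=0$. The latter follows from $\phd(\omega)\ge 1$: orthogonality to the constant polynomial gives $\sum_t\omega(t)=0$. I will treat pure high degree as inherited from the univariate setting, where $\omega$ is orthogonal to all low-degree univariate polynomials, and in particular to the constant $1$.

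For \Cref{eq:omega1}, I split the positive mass at $t\ge k$ into the term $t=k$ and the terms $t\ge k+1$. Because $\omega(k)<0$, the point $t=k$ contributes nothing to $\sum_{t:\omega(t)>0,t\ge k}|\omega(t)|$, which gives the claimed equality. The upper bound then follows because this sum is at most $\sum_{t>k}|\omega(t)|\le \frac{1}{48N}$.

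For the second inequality of \Cref{eq:omega2}, I use the decomposition coming from $\|\omega\|_1=1$ and $\sum_t\omega(t)=0$. These two facts together mean the total positive mass and the total negative mass are each exactly $1/2$. The negative mass splits into the part at $t<k+1$ and the part at $t\ge k+1$. The latter is bounded by $\sum_{t>k}|\omega(t)|\le \frac1{48N}$. Hence $\sum_{t:\omega(t)<0,t\le k}|\omega(t)|\ge \frac12-\frac1{48N}$, which is the claim.

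The only step needing care is confirming that the pure-high-degree statement for $\omega$ yields $\sum_t\omega(t)=0$ in the univariate sense. This holds as long as $\phd(\omega)\ge 1$, which is true once $T$ is large relative to $kN^{1/k}$, matching how \Cref{prop:psi} uses $T=N$. If the degree bound is vacuous, I would instead appeal directly to the construction in \cite{bun2018polynomial}, where $\omega$ is built orthogonal to constants. Everything else is bookkeeping.
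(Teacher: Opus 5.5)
Your proposal is correct and matches the paper's proof essentially step for step. The equality and bound in \Cref{eq:omega1} come from $\omega(k)<0$ together with $\sum_{t>k}|\omega(t)|\le \frac{1}{48N}$. The second inequality of \Cref{eq:omega2} comes from $\phd(\omega)\ge 1$, which forces the negative mass to equal $\|\omega\|_1/2=1/2$, of which at most $\frac{1}{48N}$ lies at $t\ge k+1$. Your caveat that $\phd(\omega)\ge 1$ (orthogonality to constants) must actually hold is exactly the assumption the paper invokes without comment, so it is extra care rather than a deviation.
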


\begin{proof}
    For \Cref{eq:omega1}, since $\omega(k)<0$ and $\sum_{t:t>k}|\omega(t)| \leq \frac{1}{48N}$, we obtain
    \[
    \sum_{t:\omega(t)>0,t\geq k}|\omega(t)| = \sum_{t:\omega(t)>0,t\geq k+1}|\omega(t)| \leq \sum_{t:t>k}|\omega(t)| \leq \frac{1}{48N}.
    \]

     For \Cref{eq:omega2}, the first inequality is inherited  directly from the third inequality in \Cref{prop:omega_18}.
     As for the second inequality, notice that 
     \begin{align*}
     \frac{\|\omega\|_1}{2}-\sum_{t:\omega(t)<0,t<k+1}|\omega(t)| &= \sum_{t:\omega(t)<0}|\omega(t)| - \sum_{t:\omega(t)<0,t<k+1}|\omega(t)| &&\text{since $\phd(\omega)\geq 1$} \\
     &= \sum_{t:\omega(t)<0,t\geq k+1}|\omega(t)| \leq \sum_{t:t>k}|\omega(t)| \leq \frac{1}{48N}.
     \end{align*}
     This statement is also implicit in the proof of \cite{bun2018polynomial}. Hence we obtain that
     \[
     \sum_{t:\omega(t)<0,t<k+1}|\omega(t)|\geq \frac{1}{2}-\frac{1}{48N}.
     \]

     For \Cref{eq:omega3}, it is inherited from the last two equalities in \Cref{prop:omega_18}.
\end{proof}

After applying Minsky-Papert symmetrization (see \cite{minsky1969introduction}), %
\Cref{prop:psi} holds %
by letting 
\[
\psi(x)=\omega(|x|)/\binom{N}{|x|}.
\]\qed

\bibliographystyle{alpha}
\bibliography{refs}

\clearpage

\appendix
\begin{center}\huge\bf Appendix \end{center}
\section{Useful tools}\label{sec:useful_tools}%

We list some lemmas and propositions that can be used in the main paper in this section.

\begin{lemma}[Median trick]
\label{lem:mt}
    If an algorithm $\cA$ using $M$ queries to return a value bounded in $[a,b]$ with probability at least $0.6$, there also exists an algorithm $\cA'$ that uses $500M\ln(2/\eta)$ queries  to return a value bounded in $[a,b]$ with probability at least $1-\eta$ for any $\eta\in(0,1)$.
\end{lemma}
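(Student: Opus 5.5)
The plan is the standard median-of-repetitions argument. Construct $\cA'$ as follows: run $\cA$ independently $r = \lceil 500\ln(2/\eta)\rceil$ times (or any $r$ of order $\ln(1/\eta)$; the constant $500$ leaves ample slack), obtaining outputs $Y_1,\dots,Y_r$, and return their median $Y$. The query cost is $rM \le 500M\ln(2/\eta)$ up to the rounding of $r$, which can be absorbed by taking $r=\lfloor 500\ln(2/\eta)\rfloor$ since $500\ln 2 > 1$. For quantum algorithms each run is a fresh execution followed by measurement, so the runs are independent classical random variables.

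For correctness, let $Z_j = \1[Y_j \in [a,b]]$. The $Z_j$ are independent Bernoulli variables with mean at least $0.6$. Suppose the median lies outside $[a,b]$, say $Y < a$. Then at least $r/2$ of the $Y_j$ are $\le Y < a$, so $\sum_j Z_j \le r/2$. The case $Y>b$ is symmetric. Hence
\[
\Pr[Y\notin[a,b]] \le \Pr\Big[\textstyle\sum_j Z_j \le r/2\Big].
\]
Apply Hoeffding's inequality to the sum. Since $\E[\sum_j Z_j] \ge 0.6r$, the sum must deviate by at least $0.1r$, so
\[
\Pr\Big[\textstyle\sum_j Z_j \le r/2\Big] \le \exp(-2(0.1)^2 r) = \exp(-r/50).
\]
With $r \ge 500\ln(2/\eta)$ this is at most $(\eta/2)^{10} \le \eta$.

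The only point needing care is the ordering argument that a median outside $[a,b]$ forces at least half the samples outside. This requires defining the median as an actual sample, for instance the $\lceil r/2\rceil$-th order statistic, so that it is a single well-defined value. The rest is routine, and I expect no real obstacle; the generous constant $500$ makes even a crude Chernoff bound suffice.
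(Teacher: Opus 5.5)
Your proposal is correct and follows the same route as the paper. Both repeat $\cA$ about $500\ln(2/\eta)$ times, output the median, and use a concentration bound to show that more than half of the outputs land in $[a,b]$ with probability at least $1-\eta$. The differences are cosmetic: you use a one-sided Hoeffding bound $e^{-r/50}\le(\eta/2)^{10}$, whose slack also absorbs rounding $r$ down to an integer, where the paper uses a two-sided multiplicative Chernoff bound tuned to give exactly $\eta$; and you spell out the order-statistic argument that the paper leaves implicit.
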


\begin{proof}
    Construct algorithm $\cA'$ by repeating algorithm $\cA$ for $k = 500\ln(2/\eta)$ times independently, and output the median value of these outputs. 
    Let $X_i$ be the random variable to indicate event: $i$-th output is in $[a,b]$, then $\E[X_i]\geq 0.6$.

    Using the Chernoff bound,
    \begin{align*}
        \Pr\left[\left| \sum_{i=1}^k X_i -0.6k \right| > 0.1 \cdot 0.6k\right] \leq 2\exp{(-0.1^2\cdot0.6k/3)} \leq \eta.
    \end{align*}

    Thus, with probability at least $1-\eta$, $\sum_{i=1}^kX_i \geq 0.6k-0.06k>0.5k$, i.e., more than $k/2$ outputs are in $[a,b]$, so the output of $\cA'$ is also in $[a,b]$ as desired.
\end{proof}

\begin{proposition}
\label{lem:B}
    Let $X\sim\cB(n,p)$ be a random variable, where $\cB$ is the binomial distribution. Then \begin{align*}
        &\Pr\left[X\notin(1\pm \eps)np\right] \leq 2\exp{(-\eps^2np/3)}; \\
        &\Pr\left[X\notin np\pm a\right] \leq 2\exp{(-2a^2/n)}.
    \end{align*}
\end{proposition}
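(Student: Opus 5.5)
The statement is the pair of binomial tail bounds in \Cref{lem:B}: a multiplicative bound and an additive bound. Both follow from the standard exponential-moment method applied to $X=\sum_{j=1}^n Y_j$, where the $Y_j$ are i.i.d.\ Bernoulli$(p)$. I will treat the two tails separately and combine them with a union bound, which is where the factor $2$ comes from.

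\textbf{Multiplicative bound.} For the upper tail, apply Markov's inequality to $e^{\lambda X}$ with $\lambda>0$. Using $1+x\le e^x$,
\[
\E[e^{\lambda X}]=\bigl(1+p(e^\lambda-1)\bigr)^n\le e^{np(e^\lambda-1)}.
\]
Choosing $\lambda=\ln(1+\eps)$ gives
\[
\Pr[X\ge(1+\eps)np]\le\left(\frac{e^{\eps}}{(1+\eps)^{1+\eps}}\right)^{np}.
\]
For the lower tail, apply the same argument to $e^{-\lambda X}$ with $\lambda=-\ln(1-\eps)$. This gives
\[
\Pr[X\le(1-\eps)np]\le\left(\frac{e^{-\eps}}{(1-\eps)^{1-\eps}}\right)^{np}.
\]
It then remains to verify two elementary inequalities for $\eps\in(0,1)$:
\[
(1+\eps)\ln(1+\eps)-\eps\ \ge\ \eps^2/3,\qquad (1-\eps)\ln(1-\eps)+\eps\ \ge\ \eps^2/2.
\]
Both follow by checking that the difference of the two sides is zero at $\eps=0$ and has nonnegative derivative on $(0,1)$. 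For the first inequality this step is slightly delicate: the derivative of the difference is $\ln(1+\eps)-2\eps/3$, which is nonnegative on $[0,1]$ because it is concave and nonnegative at both endpoints. If $\eps\ge 1$ is allowed, the lower tail is trivial, and the upper tail would need the weaker exponent $\eps^2/(2+\eps)$. Since the paper only uses small $\eps$, I will state the restriction $\eps\in(0,1)$.

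\textbf{Additive bound.} This is Hoeffding's inequality for $n$ independent variables with range $[0,1]$. The key input is Hoeffding's lemma: for a variable $Y$ with values in $[0,1]$ and mean $\mu$,
\[
\E[e^{\lambda(Y-\mu)}]\le e^{\lambda^2/8}.
\]
I will prove it by bounding $e^{\lambda y}$ above by its chord on $[0,1]$, writing the log of the resulting expression as $L(\lambda)$, and showing $L(0)=L'(0)=0$ and $L''\le 1/4$. Markov's inequality then gives
\[
\Pr[X-np\ge a]\le e^{-\lambda a+n\lambda^2/8}.
\]
Optimizing at $\lambda=4a/n$ yields $e^{-2a^2/n}$. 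The lower tail follows by symmetry, applying the same argument to $1-Y_j$.

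\textbf{Main obstacle.} There is no deep difficulty here. The only points needing care are Hoeffding's lemma, specifically the bound $L''\le 1/4$, and the calculus inequality giving the $1/3$ constant in the multiplicative bound.
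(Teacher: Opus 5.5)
Your proof is correct and follows the same Chernoff--Hoeffding route as the paper; the paper simply cites these bounds, while you derive them via the moment-generating-function method and Hoeffding's lemma. Your restriction to $\eps\in(0,1)$ is a genuine and appropriate caveat, because the multiplicative bound with exponent $\eps^2/3$ fails for large $\eps$ and the paper's statement omits this condition. The paper's uses of the bound only need small $\eps$, so the restriction costs nothing there.
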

\begin{proof}
    By the well-known properties of the binomial distribution, $\E[X] = np$. 
    By the Chernoff-Hoeffding bound and the lemma holds.
\end{proof}

\begin{lemma}
\label{lem:squeeze}
    Let $A\in\R^{n*n}$ be an reversible matrix and $\boldsymbol{b_1},\boldsymbol{b_2}\in\R^n$ be two vectors, and let $\cX=\{\boldsymbol{x}\in\R^n:\boldsymbol{b_1} \geq A\boldsymbol{x} \geq \boldsymbol{b_2}\}$, then 
    $    \max_{\boldsymbol{x_1}, \boldsymbol{x_2}\in\cX} ||\boldsymbol{x_1}-\boldsymbol{x_2}||_1 \leq ||A^{-1}||_1 ||\boldsymbol{b_2}-\boldsymbol{b_1}||_1    $.
\end{lemma}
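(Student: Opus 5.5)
The set $\cX=\{\boldsymbol{x}\in\R^n:\boldsymbol{b_1}\geq A\boldsymbol{x}\geq\boldsymbol{b_2}\}$ is the preimage under the invertible map $\boldsymbol{x}\mapsto A\boldsymbol{x}$ of the coordinatewise box $B=\{\boldsymbol{y}:\boldsymbol{b_2}\leq\boldsymbol{y}\leq\boldsymbol{b_1}\}$. So I will first bound the diameter of $B$ in $\ell_1$, and then pull the bound back through $A^{-1}$ using the induced operator norm.

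\textbf{Step 1: reduce to the box.} Take any $\boldsymbol{x_1},\boldsymbol{x_2}\in\cX$ and set $\boldsymbol{y_j}=A\boldsymbol{x_j}$. Since $A$ is invertible, $\boldsymbol{x_j}=A^{-1}\boldsymbol{y_j}$, hence $\boldsymbol{x_1}-\boldsymbol{x_2}=A^{-1}(\boldsymbol{y_1}-\boldsymbol{y_2})$.

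\textbf{Step 2: bound the box diameter.} Both $\boldsymbol{y_1}$ and $\boldsymbol{y_2}$ lie in $B$. For each coordinate $r$, both $(\boldsymbol{y_1})_r$ and $(\boldsymbol{y_2})_r$ lie in the interval $[(\boldsymbol{b_2})_r,(\boldsymbol{b_1})_r]$, so
\[
\abs{(\boldsymbol{y_1})_r-(\boldsymbol{y_2})_r}\leq(\boldsymbol{b_1})_r-(\boldsymbol{b_2})_r=\abs{(\boldsymbol{b_1})_r-(\boldsymbol{b_2})_r}.
\]
Summing over $r$ gives $\norm{\boldsymbol{y_1}-\boldsymbol{y_2}}_1\leq\norm{\boldsymbol{b_1}-\boldsymbol{b_2}}_1$.

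\textbf{Step 3: apply the operator norm.} I interpret $\norm{A^{-1}}_1$ as the operator norm induced by $\ell_1$, which equals the maximum absolute column sum of $A^{-1}$. By definition it satisfies $\norm{A^{-1}\boldsymbol{z}}_1\leq\norm{A^{-1}}_1\norm{\boldsymbol{z}}_1$. Applying this with $\boldsymbol{z}=\boldsymbol{y_1}-\boldsymbol{y_2}$ and combining with Step 2 yields
\[
\norm{\boldsymbol{x_1}-\boldsymbol{x_2}}_1\leq\norm{A^{-1}}_1\norm{\boldsymbol{b_1}-\boldsymbol{b_2}}_1.
\]
Taking the supremum over $\boldsymbol{x_1},\boldsymbol{x_2}\in\cX$ gives the claim. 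The set $\cX$ is closed and bounded by this same estimate, so the maximum is attained whenever $\cX\neq\emptyset$; if $\cX$ is empty the statement is vacuous.

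The main subtlety is only the meaning of $\norm{A^{-1}}_1$. If the paper instead means the entrywise $\ell_1$ norm, that norm dominates the induced one, so the inequality still holds.
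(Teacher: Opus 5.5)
Your proposal is correct and follows essentially the same argument as the paper: set $\boldsymbol{c_j}=A\boldsymbol{x_j}$, bound $\boldsymbol{c_1}-\boldsymbol{c_2}$ coordinatewise between $\boldsymbol{b_2}-\boldsymbol{b_1}$ and $\boldsymbol{b_1}-\boldsymbol{b_2}$, then apply $\norm{A^{-1}(\boldsymbol{c_1}-\boldsymbol{c_2})}_1\leq\norm{A^{-1}}_1\norm{\boldsymbol{c_1}-\boldsymbol{c_2}}_1$. Your added remarks are also sound. The empty case is vacuous, and nonemptiness is what gives $\boldsymbol{b_1}\geq\boldsymbol{b_2}$. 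The maximum is attained. The entrywise norm dominates the induced one.
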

\begin{proof}
    For two vectors $\boldsymbol{x} = (x_1,\dots,x_n)^T$ and $\boldsymbol{y} = (y_1,\dots,y_n)^T$, we say $\boldsymbol{x}\geq \boldsymbol{y}$ if $\boldsymbol{x}$ is entrywise no less than $\boldsymbol{y}$, i.e., for each $i\in[n]$, $x_i\geq y_i$.
    
    For each $\boldsymbol{x_1},\boldsymbol{x_2} \in\cX$, let $A\boldsymbol{x_1}=\boldsymbol{c_1}$ and $A\boldsymbol{x_2}=\boldsymbol{c_2}$, then $\boldsymbol{b_1} \geq \boldsymbol{c_1} \geq \boldsymbol{b_2}$ and $\boldsymbol{b_1} \geq \boldsymbol{c_2} \geq \boldsymbol{b_2}$.
    That means $\boldsymbol{b_1} - \boldsymbol{b_2} \geq \boldsymbol{c_1} - \boldsymbol{c_2} \geq \boldsymbol{b_2} - \boldsymbol{b_1}$.
    Then we have,
    \[
        ||\boldsymbol{x_1}-\boldsymbol{x_2}||_1 = ||A^{-1}(\boldsymbol{c_1}-\boldsymbol{c_2}) ||_1 \leq ||A^{-1}||_1 ||\boldsymbol{c_1}-\boldsymbol{c_2}||_1 \leq ||A^{-1}||_1 ||\boldsymbol{b_1}-\boldsymbol{b_2}||_1.
    \]
\end{proof}

\begin{proposition}
\label{lem:inverse}
    Let $M\in\R^{n*n}$ be an upper triangular matrix, where $[M]_{ij} = \binom{j}{i}$, then $M^{-1}$ is also an upper triangular matrix with $[M^{-1}]_{ij} = (-1)^{j-i}\binom{j}{i}$.
\end{proposition}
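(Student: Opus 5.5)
We prove the claimed inverse formula by checking directly that the candidate matrix $N$ with $[N]_{ij} = (-1)^{j-i}\binom{j}{i}$ satisfies $MN = I$. Since $M$ is square, this gives $N = M^{-1}$.

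First, $N$ is upper triangular, because $\binom{j}{i}=0$ whenever $i>j$. The product of two upper triangular matrices is upper triangular, so $[MN]_{il}=0$ for $i>l$. The diagonal entries are $[MN]_{ii}=\binom{i}{i}\cdot\binom{i}{i}=1$.

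For $i<l$ we compute
\[
[MN]_{il}=\sum_{j=i}^{l}\binom{j}{i}(-1)^{l-j}\binom{l}{j}.
\]
We use the trinomial revision identity $\binom{l}{j}\binom{j}{i}=\binom{l}{i}\binom{l-i}{j-i}$. Substituting $s=j-i$, the sum becomes
\[
\binom{l}{i}\sum_{s=0}^{l-i}(-1)^{l-i-s}\binom{l-i}{s} = \binom{l}{i}(1-1)^{l-i} = 0,
\]
where the middle step is the binomial theorem and the result vanishes because $l-i\ge 1$. Hence $MN=I$.

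The only nontrivial step is recognizing the trinomial revision identity; everything else is bookkeeping. The indexing convention (indices starting at $0$ or $1$) does not matter, since every entry depends only on the binomial coefficients $\binom{j}{i}$, which are defined either way.
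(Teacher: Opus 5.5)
Your proof is correct and follows the same route as the paper: verify $MN=I$ through the identity $\sum_{j=i}^{l}(-1)^{l-j}\binom{l}{j}\binom{j}{i}=1$ if $i=l$ and $0$ otherwise. The paper only asserts this identity, while you also prove it using trinomial revision and the binomial theorem.
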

\begin{proof}
    Notice that
    \[
    \sum_{j=i}^k (-1)^{k-j}\binom{k}{j}\binom{j}{i} = 
    \begin{cases}
        1, &i=k;\\
        0, &i\neq k,
    \end{cases}
    \]
    and the lemma holds.
\end{proof}

\begin{proposition}
\label{lem:delta}
For a positive integer $d$ and a constant $\delta\in(0,1/2^{2d})$, let $\eps = \delta /24d$ and $\eta = \delta /16 d$, then
\[
p(\eps,\eta) = \frac{(1+\eps)^d}{(1-\eps)^{d}(1-\eta)^{d} } - \frac{1}{(1+\eta)^{d} } \leq \delta.
\]
\end{proposition}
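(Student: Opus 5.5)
The plan is to bound each factor of $p(\eps,\eta)$ by an elementary first-order estimate in $d\eps$ and $d\eta$. Since $d\eps=\delta/24$ and $d\eta=\delta/16$, every linear term is a small fixed multiple of $\delta$, and all the multiples should add up to less than $\delta$. The hypothesis $\delta<1/2^{2d}\le 1/4$ is used only to guarantee that $d\eps$ and $d\eta$ are tiny, so the higher-order terms can be absorbed.

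First, for the numerator, use $(1+\eps)^d\le e^{d\eps}\le 1+2d\eps$, valid since $d\eps\le 1$. For the two denominators, use Bernoulli's inequality $(1-x)^d\ge 1-dx$. This gives $1/(1-\eps)^d\le 1/(1-d\eps)\le 1+2d\eps$ and $1/(1-\eta)^d\le 1+2d\eta$, both valid because $d\eps,d\eta\le 1/2$. Hence
\[
\frac{(1+\eps)^d}{(1-\eps)^d(1-\eta)^d}\le (1+2d\eps)^2(1+2d\eta)=\Bigl(1+\tfrac{\delta}{12}\Bigr)^2\Bigl(1+\tfrac{\delta}{8}\Bigr).
\]
Second, for the subtracted term, Bernoulli's inequality again gives $(1+\eta)^{-d}\ge 1-d\eta=1-\delta/16$.

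Combining the two estimates, $p(\eps,\eta)\le (1+\delta/12)^2(1+\delta/8)-1+\delta/16$. Expanding the product, the linear part equals $\delta/12+\delta/12+\delta/8=7\delta/24$. Every remaining term has the form $\delta^2$ times a small coefficient, possibly times a further factor of $\delta$. Since $\delta<1/4$, these remaining terms are at most $\delta\cdot(1/144+1/48+1/48+\dots)\cdot\delta$, which is well below $\delta/10$.

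The total is therefore at most $7\delta/24+\delta/16+\delta/10<\delta$. The only step that needs any care is checking that the crude validity conditions ($d\eps\le 1/2$, and so on) hold, and that the quadratic terms stay small. Both follow at once from $\delta<1/4$, so there is no real obstacle.
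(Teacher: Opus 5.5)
Your proof is correct. It takes a different route from the paper's, mainly in the order in which the two linearizations are done.

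\textbf{The paper's route.} The paper first applies $1/(1-x)\le 1+2x$ and $1/(1+x)\ge 1-x$ factor by factor, which gives
\[
p(\eps,\eta)\le(1+2\eps)^{2d}(1+2\eta)^d-(1-\eta)^d .
\]
Only then does it linearize the $d$-th and $2d$-th powers, using the crude binomial bound $(1+x)^m\le 1+mx+2^m x^2\le 1+2mx$. That bound requires $x\le m/2^m$, and this is exactly where the hypothesis $\delta<1/2^{2d}$ is used. The paper ends with $(1+\delta/3)(1+\delta/4)-1+\delta/16<\delta$.

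\textbf{Your route.} You linearize the powers first, with estimates that do not depend on $d$: $(1+\eps)^d\le e^{d\eps}$, and Bernoulli for $(1-\eps)^d$, $(1-\eta)^d$ and $(1+\eta)^{-d}$. After that, everything is expressed through $d\eps=\delta/24$ and $d\eta=\delta/16$, which are independent of $d$. This costs nothing and buys two things:
\begin{itemize}
\item a sharper bound, with linear part $7\delta/24+\delta/16=17\delta/48$ instead of the paper's $31\delta/48$;
\item an argument valid for every $\delta\in(0,1]$, which shows that the exponentially small hypothesis on $\delta$ comes only from the paper's binomial estimate.
\end{itemize}

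\textbf{Two small points.}
\begin{itemize}
\item The inequality $(1+\eta)^{-d}\ge 1-d\eta$ uses the negative-exponent form of Bernoulli's inequality. That form is valid, but you could instead write $(1+\eta)^{-d}\ge(1-\eta)^d\ge 1-d\eta$, as the paper does.
\item In expanding $(1+\delta/12)^2(1+\delta/8)$, the cross term $\delta^2/48$ appears only once, so your list of higher-order coefficients overcounts slightly. This is harmless, since it only weakens the bound you prove.
\end{itemize}
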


\begin{proof}
    Use the fact $\frac{1}{1-x} \leq 1+2x$ and $\frac{1}{1+x} \geq 1-x$ for $x\in(0,1/2)$, we have,
    \[
        p(\eps,\eta)  \leq (1+2\eps)^{2d}(1+2\eta)^d - (1-\eta)^d.
    \]

    Use the fact that $(1+x)^d \geq 1+dx$ for $x\geq -1$, we have,
    \[
        p(\eps,\eta)  \leq (1+2\eps)^{2d}(1+2\eta)^d - 1 + \eta d.
    \]

    Then use the fact that $(1+x)^d = \sum_{i=1}^d \binom{d}{i} x^i \leq 1+ dx + 2^d x^2 \leq 1+2dx$ for $x\leq d/2^d$.
    Notice that $2\eps = \delta /12d < 2d/2^{2d}$, $2\eta = \delta /8d < d/2^d$ and $\delta^2 <\delta$, we have,
    \begin{align*}
        p(\eps,\eta) & \leq (1+8\eps d)(1+4\eta d) - 1 + \eta d \\
        & = (1+\delta /3)(1+ \delta /4) - 1 +  \delta / 16 < \delta.
    \end{align*}
\end{proof}

\section{Deferred Proofs from \texorpdfstring{\Cref{sec:star}}{Warm-up Section}}
\label{sec:Appen_claims}
We first prove an additional claim here, which will be used later.
\begin{claim}
\label{clm:unique}
For each  $i \in [d]$, when the event $ \cE_1^T \cap \dots \cap \cE_i^T$ occurs, then $v \in  R_i$ if and only if the edge multisets  $T_1, \dots, T_i$  contain exactly one of its unique incoming edges respectively.
\end{claim}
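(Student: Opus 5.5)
We argue by induction on $i$, proving both directions at each level from the construction in \Cref{alg:ApproxSd}: $R_i$ is the multiset of heads of the edges in $T_i$, and $f_i$ marks only edges whose head lies in $R_{i-1}$ and which have not been sampled in any earlier $T_j$. Here ``its unique incoming edges'' means that each $T_j$, $j\le i$, contains exactly one edge entering $v$, and these $i$ edges are pairwise distinct.

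\textbf{Base case $i=1$.} Since $R_0=V$, $f_1$ marks every edge. A vertex $v$ lies in $R_1$ if and only if some edge of $T_1$ has head $v$. Under $\cE^T_1$ we have $|\Supp{R_1}|=\ell_1$, so no two samples in $T_1$ share a head. Hence $v\in R_1$ exactly when $T_1$ contains exactly one incoming edge of $v$.

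\textbf{Inductive step ($\Rightarrow$).} Suppose $v\in R_i$. Then some $e\in T_i$ has head $v$, and $f_i(e)=1$. This gives two facts: $v\in R_{i-1}$, and $e\notin\bigcup_{j<i}T_j$. By the induction hypothesis under $\cE^T_1\cap\dots\cap\cE^T_{i-1}$, each of $T_1,\dots,T_{i-1}$ contains exactly one incoming edge of $v$, and these edges are distinct. The edge $e$ is new, and by $\cE^T_i$ it is the only edge of $T_i$ with head $v$.

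\textbf{Inductive step ($\Leftarrow$).} Suppose each of $T_1,\dots,T_i$ contains exactly one incoming edge of $v$, and these edges are distinct. Applying the induction hypothesis to the first $i-1$ sets gives $v\in R_{i-1}$. Since the edge of $T_i$ entering $v$ has head $v$, the definition of $R_i$ gives $v\in R_i$.

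\textbf{Main subtlety.} The delicate point is the interpretation of $\cE^T_i$. The paper phrases it in terms of tails, while the map $\mathcal{O}^{\out}_G(v,s)$ produces heads. The condition $|\Supp{T_i}|=|\Supp{R_i}|=\ell_i$ forces all $\ell_i$ samples to be distinct edges with distinct endpoints in $R_i$, so no vertex receives two sampled edges in one round. I would state this convention explicitly before running the induction.
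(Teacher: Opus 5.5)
Your proof is correct and is essentially the paper's argument: the paper also uses $\cE^T_j$ to rule out two sampled edges entering $v$ in round $j$, uses the condition $\mathcal{O}^{\out}_G(v,s)\in R_{i-1}$ in $f_i$ to get $R_i\subseteq R_j$ for $j\le i$, and uses the exclusion of previously sampled edges in $f_i$ for distinctness. The only difference is packaging: the paper argues by contradiction directly for each $j\le i$ rather than by induction, and your ($\Leftarrow$) direction in fact needs no induction hypothesis at all.
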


\begin{claimproof}
The sufficiency holds straightly according to the definition of $ R_i$.

Now we prove the necessity.
For any  $j \in [i]$, consider the following two cases:

Case 1: If  $T_j$  contains multiple incoming edges of $v$, then these edges share the same tail, which implies  $|\Supp{ R_j}| < t_j$. This contradicts the event $\cE^T_j$.
	
Case 2: If $T_j$ contains no incoming edge of $v$, then $v \notin  R_j$, yet $ R_j \supseteq  R_i$, which contradicts the assumption that $v \in  R_i$.

In both cases, we encounter a contradiction.
By definition of function $f_i$, there is no intersection between two different edge sets, thereby proving the claim.
\end{claimproof}

\subsection{Deferred proofs of the claims form \texorpdfstring{\Cref{sec:star}}{warm-up section}}
\label{sec:pfClm}
Now we will provide the proofs of claims in \Cref{sec:star}.

\EventTildeXi*
\begin{claimproof}
By \Cref{cor:count}, when
\[
M_{i} = c^M_i\sqrt{\frac{n}{t_{i-1}}} \geq \max\left\{\frac{4\pi \sqrt{\overline{c^X_i} d}}{\eps \underline{c^X_i}}\sqrt{\frac{n}{t_{i-1}}}, \sqrt{\frac{2\pi^2 d}{\eps\underline{c^X_i}}}\sqrt{\frac{n}{t_{i-1}}}\right\},
\]
it follows that, with probability at least  $1 - \frac{1}{100d}$, 
\[
\abs{\tilde{X}_{i} - X_{i}} \leq 2\pi \frac{\sqrt{ \overline{c^X_i} t_{i-1} \cdot d n}}{M_{i}}+\pi^2\frac{dn}{M_{i}^2} \leq \eps\underline{c^X_i} t_{i-1}.
\]

Thus, the claim follows.
\end{claimproof}

\EventTi*
\begin{claimproof}
The claim follows from the well-known birthday paradox and the fact that $t_i = o(\sqrt{t_{i-1}})$.
Formally, consider the $s$-th call of $ \Grover(f_i) $, which returns an edge $e_s$, i.e., $T_i = \{e_1,\dots,e_{\ell_i}\}$. 
Event $\cE^T_i$ can be expressed equivalently as
\(\forall s<t\in[\ell_i], \tail(e_s) \neq \tail(e_t)\).
\begin{align*}
\Pr\left[\overline{\cE^T_i}\right] &= \Pr[\exists s<t\in[\ell_i], \tail(e_s) = \tail(e_t)] \\
&\leq \sum_{s<t\in[\ell_i]}\Pr[\tail(e_s) = \tail(e_t)] \\
&=\sum_{s<t\in[\ell_i]}\sum_{v\in R_{i-1}}\Pr[\tail(e_s) = v]\Pr[\tail(e_t) = v] \\
&\leq \ell_i^2\cdot \abs{R_{i-1}} \cdot \left(\frac{d}{X_i}\right)^2 = \left(\frac{t_i\tilde{X}_i}{ t_{i-1}}\right)^2 \abs{R_{i-1}} \cdot \left(\frac{d}{X_i}\right)^2.
\end{align*} 

When $i=1$, $|R_0|=|V|=n$. 
Conditioned on $\cE^X_{i} \cap \cE^{\tilde{X}}_{i}$, $\tilde{X}_i \leq (1+\eps) X_i$. Hence,
\[
\Pr\left[\overline{\cE^T_i}\right] \leq  (1+\eps)^2d^2 \frac{t_1^2n}{t_0^2} = (1+\eps)^2d^2 \cdot n^{-1/(2^d-1)}.
\]

When $i\geq 2$, further conditioning on $\cE^X_{i-1} \cap \cE^{\tilde{X}}_{i-1} \cap \cE^T_{i-1}$, we have 
\[\abs{R_{i-1}} = \ell_{i-1} = t_{i-1}\tilde{X}_{i-1}/t_{i-2} \leq (1+\eps) t_{i-1}{X}_{i-1}/t_{i-2} \leq (1+\eps)\overline{c^X_{i-1}} t_{i-1}, 
\]
thus,
\[
\Pr\left[\overline{\cE^T_i}\right] \leq \left(\frac{t_i\tilde{X}_i}{ t_{i-1}}\right)^2 (1+\eps)\overline{c^X_{i-1}} t_{i-1}  \left(\frac{d}{ X_i}\right)^2 \leq  (1+\eps)^3 d^2 \overline{c^X_{i-1}} \frac{t_i^2}{t_{i-1}} =  (1+\eps)^3 d^2 \overline{c^X_{i-1}}\cdot n^{-1/(2^d-1)}.
\]

Let constants $c^T_1 = (1+\eps)^2 d^2$ and $c^T_i = (1+\eps)^3 d^2 \overline{c^X_{i-1}}$ for $i\geq 2$, then the claim holds.
\end{claimproof}

\EventEqi*
\begin{claimproof}
    Define $E_k$ same as in the proof of \Cref{clm:E^S_i}, then
    
    \[X_i = |f_i^{-1}(1)| = \sum_{k\geq i}  \abs{E_k} = \sum_{k\geq i}(k-i+1)S_{i-1,k}.
    \]

    We have already bounded $S_{i,k}$ in \Cref{eq:S}, thus,
    \begin{align*}
        X_i \geq (1-\eta)\sum_{k\geq i}(k-i+1)(k-i+2)\frac{\ell_{i-1} }{X_{i-1}} S_{i-2,k} \geq \dots \geq \sum_{k\geq i} \frac{(1-\eta)^{i-1} k!}{(k-i)!}\prod_{j=1}^{i-1}\frac{\ell_j}{X_j} S_{0,k}. 
    \end{align*}
    
    The other side holds similarly.
\end{claimproof}

\EventXi*
\begin{claimproof}
    When $\cE^{\tilde{X}}_j\cap\cE^X_j$ holds for each $j\in[i-1]$,
    \[
    \frac{\ell_j}{X_j} = \frac{\tilde{X}_jt_j}{X_j t_{j-1}} \geq (1-\eps) \frac{t_j}{t_{j-1}}.
    \]

    By \Cref{asmp:Omega(n)}, $S_{0,k} = \cnt_k \geq \delta n$, thus conditioned on event $\cE^\eq_i$,
    \[
    X_i \geq \sum_{k=i}^d\frac{(1-\eta)^{i-1} k!}{(k-i)!}(1-\eps)^{i-1}\frac{t_{i-1}}{t_0} S_{0,k} \geq \sum_{k=i}^d\frac{(1-\eta)^{i-1} k!}{(k-i)!}(1-\eps)^{i-1}\delta {t_{i-1}}.
    \]

    The other side holds similarly by the fact $S_{0,k} = \cnt_k \leq n$,
    \[
    X_i \leq \sum_{k=i}^d\frac{(1+\eta)^{i-1} k!}{(k-i)!}(1+\eps)^{i-1}\frac{t_{i-1}}{t_0} S_{0,k} \leq \sum_{k=i}^d\frac{(1+\eta)^{i-1} k!}{(k-i)!}(1+\eps)^{i-1} {t_{i-1}}.
    \]

    Setting $\underline{c^X_i} = \sum_{k=i}^d\frac{(1-\eta)^{i-1} k!}{(k-i)!}(1-\eps)^{i-1}\delta$ and $\overline{c^X_i} = \sum_{k=i}^d\frac{(1+\eta)^{i-1} k!}{(k-i)!}(1+\eps)^{i-1}$, completing the proof.
\end{claimproof}

\subsection{Deferred proofs of the key lemmas from \texorpdfstring{\Cref{sec:star}}{warm-up section}}
\label{sec:pfLem}
\Norm*
\begin{proof}
For each $i\in[d]$, conditioned on event $\cE^\eq_i$ and $\cap_{j=1}^i(\cE^{\tilde{X}}_1 \cap \cE^{\tilde{X}}_{i})$,
\[
    \sum_{k=i}^d\frac{(1+\eps)^{i-1}(1+\eta)^{i-1} k!}{(k-i)!}\prod_{j=1}^{i-1}\frac{l_j}{\tilde{X}_j} S_{0,k} \geq \sum_{k=i}^d\frac{(1+\eta)^{i-1} k!}{(k-i)!}\prod_{j=1}^{i-1}\frac{l_j}{X_j} S_{0,k} \geq X_i \geq \frac{1}{1+\eps} \tilde{X}_i.
\]

Substitute with $l_j = t_j \tilde{X}_j / t_{j-1}$, then
\[
    \sum_{k=i}^d\frac{(1+\eps)^{i-1}(1+\eta)^{i-1} k!}{(k-i)!}\cdot\frac{ t_{i-1} }{n} \cnt_k \geq \frac{1}{1+\eps} \tilde{X}_i.
\]

The other side holds similarly.

Let $A\in\R^{d*d}$ be an upper triangular matrix with $[A]_{ij} = \binom{j}{i}$, let $\boldsymbol{\overline{x}}, \boldsymbol{\underline{x}}\in\R^d$ be two vectors with $\boldsymbol{\overline{x}}_i = \frac{i!}{(1-\eps)^{i}(1-\eta)^{i-1} }\frac{n}{ t_{i-1} } \tilde{X}_i$ and $\boldsymbol{\underline{x}}_i = \frac{i!}{(1+\eps)^{i}(1+\eta)^{i-1} }\frac{n}{ t_{i-1} } \tilde{X}_i$.

Then $\boldsymbol{s} = (\cnt_k)_{k\in[d]}$ is a feasible solution of the system of linear inequalities: $\boldsymbol{\overline{x}}\geq A\boldsymbol{s}\geq \boldsymbol{\underline{x}}$.
Let $\boldsymbol{x}\in\R^n$ be the vector with $\boldsymbol{x}_i = \frac{i!n}{ t_{i-1} } \tilde{X}_i$, and $\boldsymbol{\tilde{s}} = (\widetilde{\cnt}_k)_{k\in[d]}$ be the vector such that $A\boldsymbol{\tilde{s}} = \boldsymbol{x}$.
Notice that $\boldsymbol{\overline{x}} \geq A\boldsymbol{\tilde{s}} = \boldsymbol{x} \geq \boldsymbol{\underline{x}}$, thus $\boldsymbol{\tilde{s}}$ is also a feasible solution.
\begin{align*}
        ||\boldsymbol{\overline{x}} - \boldsymbol{\underline{x}}||_1 &=  \sum_{i=1}^d \left(\frac{1}{(1-\eps)^{i}(1-\eta)^{i-1} } - \frac{1}{(1+\eps)^{i}(1+\eta)^{i-1} }\right)\frac{i! n}{ t_{i-1} } \tilde{X}_i \\
        & \leq \sum_{i=1}^d \left(\frac{1}{(1-\eps)^{i}(1-\eta)^{i-1} } - \frac{1}{(1+\eps)^{i}(1+\eta)^{i-1} }\right)\frac{i! n}{ t_{i-1} } (1+\eps){X}_i \\
        &\leq \sum_{i=1}^d \left(\frac{1+\eps}{(1-\eps)^{i}(1-\eta)^{i-1} } - \frac{1}{(1+\eps)^{i-1}(1+\eta)^{i-1} }\right){i!} \overline{c^X_i} \cdot n.
\end{align*}

Notice that $\overline{c^X_1} = d$ and $\overline{c^X_i} = d(1+\eps) \overline{c^X_{i-1}}$, we have $\overline{c^X_i} = d^i(1+\eps)^{i-1}$ for each $i\in[d]$, thus
\begin{align*}
     ||\boldsymbol{\overline{x}} - \boldsymbol{\underline{x}}||_1 &\leq  \sum_{i=1}^d \left(\frac{(1+\eps)^i}{(1-\eps)^{i}(1-\eta)^{i-1} } - \frac{1}{(1+\eta)^{i-1} }\right){i!} d^i \cdot n \\
     &\leq \left(\frac{(1+\eps)^d}{(1-\eps)^{d}(1-\eta)^{d} } - \frac{1}{(1+\eta)^{d} }\right){d!} d^{d+1} \cdot n.
\end{align*}

Let $\delta' = \frac{\delta}{d!2^d d^{d+1}}$, $\eps = \frac{\delta'}{24d}$ and $\eta =\frac{\delta'}{16d}$. 
By \Cref{lem:delta}, $||\boldsymbol{\overline{x}} - \boldsymbol{\underline{x}}||_1 \leq \delta n/ 2^d$.
By \Cref{lem:inverse}, $||A^{-1}||_1 = 2^d$. 
Then by \Cref{lem:squeeze}, we have
\[
    \sum_{k=1}^d\abs{\cnt_k- \widetilde{\cnt}_k} = ||\boldsymbol{\tilde{s}-\boldsymbol{s}}||_1 \leq ||A^{-1}||_1 ||\boldsymbol{\overline{x}} - \boldsymbol{\underline{x}}||_1 \leq \delta n.
\]

Notice that $\widetilde{\cnt}_k = (A^{-1}\boldsymbol{x})_k = \sum_{i=k}^d(-1)^{i-k} \binom{i}{k} \frac{i!n}{ t_{i-1} } \tilde{X}_i$, completing the proof.
\end{proof}

\pr*
\begin{proof}
    Define event $\cE_i : \cE^T_i \cap \cE^{\tilde{X}}_i \cap \cE^X_i \cap \cE^S_i \cap \cE^\eq_i$ and event $\cF_i : \cap_{j=1}^{i}\cE_j$ for $i\in[d-1]$.
    \begin{align*}
        \Pr\left[\cE_1 \right] &= \Pr\left[\cE^\eq_1 \right] \Pr\left[\cE^X_1 \mid \cE^\eq_1\right] \Pr\left[\cE^{\tilde{X}}_1 \mid \cE^X_1\right] \Pr\left[\cE^S_1\cap \cE^T_1 \mid \cE^{\tilde{X}}_1 \right] \\
        &\geq (1-1/100d)(1 - c_1^S n^{-(2^{k-1}-1)/(2^k-1)}-c^T_1  n^{-1/(2^k-1)}).
    \end{align*}
    
    For each $i=2,\dots, d-1$,
    \begin{align*}
        \Pr\left[\cE_i \mid \cF_{i-1}\right] &= \Pr\left[\cE^\eq_i \mid \cF_{i-1} \right] \Pr\left[\cE^X_i \mid \cE^\eq_i \cap \cF_{i-1}\right] \Pr\left[\cE^{\tilde{X}}_i \mid \cE^X_i\right] \Pr\left[\cE^S_i\cap \cE^T_i \mid \cE^{\tilde{X}}_i\cap \cF_{i-1} \right]\\
        & \geq (1-1/100d)(1 - c_i^S n^{-(2^{k-i}-1)/(2^k-1)}-c^T_i  n^{-1/(2^k-1)}).
    \end{align*}
    
    And at last,
    \begin{align*}
        &\Pr[\cE^\eq_k \mid \cF_{d-1}] = 1, \\
        &\Pr[\cE^X_k \mid \cF_{d-1} \cap \cE^\eq_k] = 1, \\
        &\Pr[\cE^{\tilde{X}}_k \mid \cF_{d-1} \cap \cE^X_k] \geq 1-1/100d.
    \end{align*}

    Use Bernoulli's inequality that $\prod_{i=1}^n(1+x_i) \geq 1+\sum_{i=1}^n x_i $ holds if $x_i \geq -1$, we have,
    \begin{align*}
        \Pr\left[\cap_{i=1}^d (\cE^\eq_i\cap\cE^X_i\cap  \cE^{\tilde{X}}_i) \right] &\geq
         \Pr\left[\cF_{d-1}\cap \cE^\eq_i \cap \cE^{\tilde{X}}_i \cap \cE^X_i \right]\\
        &\geq 1 - 1/100 - \sum_{i=1}^{d-1}c^S_i n^{-(2^{d-i}-1)/(2^d-1)} - \sum_{i=1}^{d}c^T_i n^{-1/(2^d-1)} \\
        &\geq 1 - 1/100 - \left(\sum_{i=1}^{d-1} c^S_i+\sum_{i=1}^d c^T_i\right) n^{-1/(2^d-1)} \geq 2/3,
    \end{align*}
 where the last inequality holds when $n$ is sufficiently large.
\end{proof}

\section{Deferred proofs from \texorpdfstring{\Cref{sec:disc}}{q-Disc Section}}
In this section, we prove some claims that are used in the proofs of \Cref{lem:_norm} and \Cref{thm:disc*}.
\begin{claim}
\label{clm:para}
    For each $i\in[m_{d,q}]$ and $\Gamma\in\cD_{d,q}^i$, we have
    \begin{itemize}
        \item $m_{d,q}^{i-1}d\leq \overline{c^X_\Gamma} \leq m_{d,q}^{i-1}(1+\eps)^{i-1}d$
        \item ${4id\eps}m_{d,q}^{i-1} / ({ 2^i m_{d,q}!})\leq \underline{c^X_\Gamma} \leq {4id\eps}m_{d,q}^{i-1} / { m_{d,q}!}$
        \item $\mu_{\Gamma,\Gamma'}\leq m_{d,q}!$ and $\mu_\Gamma \leq D_{d,q}m_{d,q}!$
    \end{itemize}
\end{claim}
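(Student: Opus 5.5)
We prove the three items of \Cref{clm:para} separately. The first two follow by induction on $i$ from the recursive definitions of $\overline{c^X_\Gamma}$ and $\underline{c^X_\Gamma}$ in \Cref{alg:ApproxP}. The third is a direct counting bound on $\cW_{\Gamma,\Gamma'}$.

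\emph{Bounds on $\overline{c^X_\Gamma}$.} For $\Gamma\in\cD_{d,q}^1$ we have $\overline{c^X_\Gamma}=d=m_{d,q}^0 d$, so both bounds hold with equality. For $i\geq 2$, the recursion is $\overline{c^X_\Gamma}=m_{d,q}\bigl(\overline{c^X_{\Gamma_{[i-1]}}}+\eps\,\underline{c^X_{\Gamma_{[i-1]}}}\bigr)$.
\begin{itemize}
\item The lower bound follows from $\underline{c^X}\geq 0$ and the inductive hypothesis.
\item For the upper bound, use the definition $\underline{c^X_{\Gamma_{[i-1]}}}=\overline{c^X_{\Gamma_{[i-1]}}}\cdot\frac{4(i-1)\eps(1+\eps)}{(1+2\eps)^{i-1}m_{d,q}!}$. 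Since $\eps$ is small and $m_{d,q}!\geq 4m_{d,q}$ (as $m_{d,q}\geq 2d\geq 2$), this factor is at most $1$. Hence $\eps\,\underline{c^X_{\Gamma_{[i-1]}}}\leq \eps\,\overline{c^X_{\Gamma_{[i-1]}}}$, which gives $\overline{c^X_\Gamma}\leq m_{d,q}(1+\eps)\,\overline{c^X_{\Gamma_{[i-1]}}}$. The inductive hypothesis then yields the claimed bound.
\end{itemize}

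\emph{Bounds on $\underline{c^X_\Gamma}$.} Substitute the bounds on $\overline{c^X_\Gamma}$ into the formula $\underline{c^X_\Gamma}=\overline{c^X_\Gamma}\cdot 4i\eps(1+\eps)/((1+2\eps)^i m_{d,q}!)$ and simplify the $\eps$-factors.
\begin{itemize}
\item Upper bound: $(1+\eps)^{i-1}(1+\eps)/(1+2\eps)^i\leq 1$.
\item Lower bound: $(1+\eps)/(1+2\eps)^i\geq 2^{-i}$, which holds because $1+2\eps\leq 2$.
\end{itemize}
This step is routine; the only care needed is that $\eps\leq 1/2$, which holds since $\eps=\eps_{d,q}\delta$ is tiny.

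\emph{Bound on $\mu$.} Recall that $\mu_{\Gamma,\Gamma'}=|\cW_{\Gamma,\Gamma'}|$. Each tuple in $\cW_{\Gamma,\Gamma'}$ consists of the fixed root $\rt(\Gamma')$ followed by an ordered sequence of $i$ distinct edges of $\Gamma'$. Since $\Gamma'$ has at most $m_{d,q}$ edges, the number of such sequences is at most $m_{d,q}!/(m_{d,q}-i)!\leq m_{d,q}!$. The second bound follows because $\mu_\Gamma=\sum_{\Gamma'\succeq\Gamma}\mu_{\Gamma,\Gamma'}$ is a sum of at most $D_{d,q}$ terms.

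The main (mild) obstacle is the upper bound on $\overline{c^X_\Gamma}$, since it couples with the definition of $\underline{c^X}$ at the previous level. That coupling is handled by the factor-$\leq 1$ observation above.
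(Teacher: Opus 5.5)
Your proposal is correct and follows essentially the same route as the paper: you unroll the recursion for $\overline{c^X_\Gamma}$ using $0\le\underline{c^X}\le\overline{c^X}$, substitute those bounds into the formula for $\underline{c^X_\Gamma}$, and bound $|\cW_{\Gamma,\Gamma'}|$ by the number of ordered sequences of distinct edges of $\Gamma'$. You justify $\underline{c^X_{\Gamma_{[i-1]}}}\le\overline{c^X_{\Gamma_{[i-1]}}}$ more explicitly than the paper, which calls it obvious. One small correction: the reason you give, $m_{d,q}\ge 2$, does not by itself imply $m_{d,q}!\ge 4m_{d,q}$, since that inequality fails for $m_{d,q}=2,3$. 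This is harmless, because in fact $m_{d,q}=2d\,s_{d,q}\ge 6$. In any case, $i-1\le m_{d,q}\le m_{d,q}!$ together with small $\eps$ already makes the factor at most $1$.
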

\begin{proof}
We prove these propositions sequentially.
\begin{itemize}
    \item Recall that $\overline{c^X_\Gamma} = d$ for $\Gamma\in\cD_{d,q}^1$ and $\overline{c^X_\Gamma} = m_{d,q}\left(\overline{c^X_{{\Gamma}_{[i-1]}}}+\eps \underline{c^X_{{\Gamma}_{[i-1]}}}\right)$ for $\Gamma\in \cup_{i> 1}\cD_{d,q}^i$, and obviously $0<\underline{c^X_\Gamma}<\overline{c^X_\Gamma}$, thus for any $\Gamma\in\cD_{d,q}^i$, on one hand
    \[
    \overline{c^X_\Gamma} \leq m_{d,q}(1+\eps) \overline{c^X_{{\Gamma}_{[i-1]}}} \leq \dots \leq m_{d,q}^{i-1}(1+\eps)^{i-1}d.
    \]
    On the other hand,
    \[
    \overline{c^X_\Gamma} \geq m_{d,q} \overline{c^X_{{\Gamma}_{[i-1]}}} \geq \dots \geq m_{d,q}^{i-1}d.
    \]
    \item Recall that $\underline{c^X_\Gamma} = \frac{\overline{c^X_{\Gamma}}4i\eps(1+\eps)}{(1+2\eps)^i m_{d,q}!}$, thus on one hand,
    \[
    \underline{c^X_\Gamma} \leq m_{d,q}^{i-1}(1+\eps)^{i-1}d \cdot \frac{4i\eps(1+\eps)}{(1+2\eps)^i m_{d,q}!} \leq m_{d,q}^{i-1}d \cdot \frac{4i\eps}{ m_{d,q}!}.
    \]
    On the other hand,
    \[
    \underline{c^X_\Gamma} \geq m_{d,q}^{i-1}d \cdot \frac{4i\eps(1+\eps)}{(1+2\eps)^i m_{d,q}!} \geq m_{d,q}^{i-1}d \cdot \frac{4i\eps}{2^i m_{d,q}!}.
    \]
    \item Recall that $\mu_\Gamma = \sum_{\Gamma'\succeq \Gamma}\mu_{\Gamma,\Gamma'}$, and
    \[\mu_{\Gamma,\Gamma'} = \abs{\cW_{\Gamma,\Gamma'}} \leq \abs{\perm{E(\Gamma')}{\abs{E(\Gamma)}}}\leq \abs{E(\Gamma')}!\leq m_{d,q}!.
    \]
    Thus, $\mu_\Gamma \leq D_{d,q}m_{d,q}!$.
\end{itemize}

\end{proof}

\begin{claim}
\label{clm:alpha&beta}
    There exist two constants $\alpha,\beta$ only dependent on $d$ and $q$, but not dependent on $\delta$, such that for each $\Gamma\in\cD_{d,q}^*$,
    \[
    \alpha\delta\geq \alpha_\Gamma\geq \beta_\Gamma\geq \beta\delta.
    \]
\end{claim}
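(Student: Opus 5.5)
The plan is to make the dependence of every quantity entering $\alpha_\Gamma$ and $\beta_\Gamma$ on $\delta$ explicit, and then read off linear-in-$\delta$ sandwich bounds. By the choice in \Cref{alg:ApproxP}, $\eps=\eps_{d,q}\delta$ and $\eta=\eta_{d,q}\delta$. Here $\eps_{d,q},\eta_{d,q}$ are given by \Cref{eq:eps&eta}. I will treat them as constants depending only on $d,q$; \Cref{eq:eps&eta} writes a $\delta$ inside them, which I would read as absorbed into the outer factor (or bounded by $1$). The matrix $M$, $D_{d,q}$, $m_{d,q}$ and $\mu_\Gamma$ depend only on $d,q$. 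In particular \Cref{clm:para} gives $\mu_\Gamma\le D_{d,q}m_{d,q}!$, and $\mu_\Gamma\ge \mu_{\Gamma,\Gamma}\ge 1$. I will also assume $\delta<1$ is small enough, or at least that $\eps\le 1/(2m_{d,q})$, so that factors like $(1\pm\eps)^{i}$ lie in $[1/2,2]$.

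\textbf{Step 1: bound $\underline{c^X_\Gamma}$.} By \Cref{clm:para},
\[
\frac{4id\eps\, m_{d,q}^{i-1}}{2^i m_{d,q}!}\le \underline{c^X_\Gamma}\le \frac{4id\eps\, m_{d,q}^{i-1}}{m_{d,q}!}.
\]
Substituting $\eps=\eps_{d,q}\delta$ gives $a_\Gamma\delta\le \underline{c^X_\Gamma}\le b_\Gamma\delta$, where $a_\Gamma,b_\Gamma>0$ depend only on $d,q$ and on $i\le m_{d,q}$.

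\textbf{Step 2: the upper bound on $\alpha_\Gamma$.} By \Cref{def:type},
\[
\alpha_\Gamma=\underline{c^X_\Gamma}/(1-\eps)^{i-1}+i\mu_\Gamma\eta \le 2b_\Gamma\delta+m_{d,q}D_{d,q}m_{d,q}!\,\eta_{d,q}\delta .
\]
I take $\alpha$ to be the maximum of this coefficient over the finitely many $\Gamma\in\cD_{d,q}^*$. The inequality $\alpha_\Gamma\ge\beta_\Gamma$ is immediate, since $\alpha_\Gamma$ is obtained from $\beta_\Gamma$ by enlarging the first term and flipping the sign of the nonnegative $\eta$-term.

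\textbf{Step 3: the lower bound on $\beta_\Gamma$, the main obstacle.} Here
\[
\beta_\Gamma=\frac{(1-2\eps)\underline{c^X_\Gamma}}{(1+\eps)^{i-1}}-i\mu_\Gamma\eta \ge \tfrac14 a_\Gamma\delta-i\mu_\Gamma\eta_{d,q}\delta ,
\]
and positivity requires $\eta_{d,q}$ to be small relative to $\eps_{d,q}$, i.e.\ $\eta_{d,q}\,m_{d,q}D_{d,q}m_{d,q}!<a_\Gamma/8$. With \Cref{eq:eps&eta} as literally stated this comparison is not automatic: $\eta_{d,q}$ is of order $1/(D^2 (m!)^2 m)$ while $\eps_{d,q}$ is of order $1/(D(2m)^m)$. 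So I would first check whether the specific constants already satisfy the comparison. If they do not, I would shrink $\eta_{d,q}$ by a further factor depending only on $d,q$ (say to $\eps_{d,q}/(16\,m_{d,q}^2D_{d,q}(m_{d,q}!)^2 2^{m_{d,q}})$). This is harmless for \Cref{lem:_norm}, where a smaller $\eta$ only helps, and it only worsens the constant in the query bound, as in \Cref{clm:_E^S_i}.

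With that choice, $\beta_\Gamma\ge \tfrac18 a_\Gamma\delta$. Taking $\beta=\min_\Gamma a_\Gamma/8>0$, which depends only on $d,q$, gives $\beta_\Gamma\ge\beta\delta$ and completes the sandwich $\alpha\delta\ge\alpha_\Gamma\ge\beta_\Gamma\ge\beta\delta$. For the use in the proof of \Cref{thm:disc*}, I also note that $\beta<\alpha$, so $\beta/\alpha<1$.
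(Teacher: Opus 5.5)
Your proof is correct and follows the paper's route: you bound $\underline{c^X_\Gamma}$ and $\mu_\Gamma$ via \Cref{clm:para} and read off coefficients linear in $\delta$. Your reading of the stray $\delta$ inside \Cref{eq:eps&eta} as a typo is also right, since \Cref{lem:_norm} needs $\eps$ and $\eta$ exactly linear in $\delta$.

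The one real difference is Step~3, and there your caution is justified: the paper's constants do \emph{not} satisfy the comparison. The paper ends with $\beta = d\eps_{d,q}/m_{d,q}! - D_{d,q}m_{d,q}!\,\eta_{d,q}$. Plugging in \Cref{eq:eps&eta} gives
\[
\frac{d\eps_{d,q}/m_{d,q}!}{D_{d,q}\,m_{d,q}!\,\eta_{d,q}} \;=\; \frac{m_{d,q}}{4(2m_{d,q})^{m_{d,q}}} \;<\; 1,
\]
so the paper's $\beta$ is negative. Its last step also replaces $-i\mu_\Gamma\eta$ by $-D_{d,q}m_{d,q}!\,\eta$ ``because $i\ge 1$'', which goes in the wrong direction for $i\ge 2$; you correctly keep the factor $i\le m_{d,q}$.

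A negative $\beta$ makes $\delta_i=(\beta/\alpha)^{i-1}\delta$ meaningless and breaks the disjointness of the gray zones in the proof of \Cref{thm:disc*}. So your rescaling of $\eta_{d,q}$ is genuinely needed, not optional. Any choice making $m_{d,q}D_{d,q}m_{d,q}!\,\eta_{d,q}$ a small fraction of $d\eps_{d,q}/m_{d,q}!$ works, and your explicit choice does. The rescaling is also harmless. In \Cref{lem:_norm} a smaller $\eta$ only shrinks $\norm{\boldsymbol{\overline{x}}-\boldsymbol{\underline{x}}}_1$. Moreover, $\eta$ does not enter the query count at all, only the constant $c^S_\Gamma\propto \eta^{-2}$ in the failure probability of \Cref{clm:_E^S_i}, which is absorbed by taking $n$ large. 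Your remark that it ``worsens the constant in the query bound'' should therefore refer to the failure probability instead.
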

\begin{proof}
    \begin{align*}
        \alpha_\Gamma &= \frac{\underline{c^X_\Gamma}}{(1-\eps)^{i-1}} + i\mu_\Gamma \eta   \leq \frac{{4id\eps}m_{d,q}^{i-1} }{(1-\eps)^{i-1}m_{d,q}!} + iD_{d,q}m_{d,q}! \eta  \leq \frac{{4id\eps}(2m_{d,q})^{i-1} }{m_{d,q}!} + m_{d,q}D_{d,q}m_{d,q}! \eta \\
        & \leq \frac{{4dm_{d,q}\eps}(2m_{d,q})^{m_{d,q}-1} }{m_{d,q}!} + iD_{d,q}m_{d,q}! \eta = \left(\frac{{4dm_{d,q}\eps_{d,q}}(2m_{d,q})^{m_{d,q}-1} }{m_{d,q}!} + m_{d,q}D_{d,q}m_{d,q}! \eta_{d,q}\right) \delta.
    \end{align*}

    The first equation is based on \Cref{def:type}, the first inequality is based on \Cref{clm:para}, the second inequality holds because of $\eps<1/2$, the last inequality holds because of $i\leq m_{d,q}$ and the last equation is based on \Cref{eq:eps&eta}.
    
    \begin{align*}
        \quad \beta_\Gamma &= \frac{\underline{(1-2\eps)c^X_\Gamma}}{(1+\eps)^{i-1}} - i\mu_\Gamma \eta \geq \frac{{4id\eps}(1-2\eps)m_{d,q}^{i-1} }{2^i(1-\eps)^{i-1}m_{d,q}!}  - iD_{d,q}m_{d,q}! \eta \geq \frac{{2id\eps}m_{d,q}^{i-1} }{2^im_{d,q}!} - iD_{d,q}m_{d,q}! \eta \\
        &\geq \frac{{d\eps}}{m_{d,q}!} - D_{d,q}m_{d,q}! \eta = \left(\frac{{d\eps_{d,q}}}{m_{d,q}!} - D_{d,q}m_{d,q}! \eta_{d,q}\right) \delta.
    \end{align*}

    The first equation is based on \Cref{def:type}, the first inequality is based on \Cref{clm:para}, the second inequality holds because of $0<\eps<1/4$, the last inequality holds because of $i\geq 1$ and the last equation is based on \Cref{eq:eps&eta}.

The claim then follows by setting $\alpha = \left(\frac{{4dm_{d,q}\eps_{d,q}}(2m_{d,q})^{m_{d,q}-1} }{m_{d,q}!} + m_{d,q}D_{d,q}m_{d,q}! \eta_{d,q}\right)$ and $\beta = \left(\frac{{d\eps_{d,q}}}{m_{d,q}!} - D_{d,q}m_{d,q}! \eta_{d,q}\right)$.
\end{proof}

\section{A Simple \texorpdfstring{$\Omega(n^{1/3})$}{} Lower Bound}
\label{sec:BKLB}
The previously best-known lower bound on the query complexity for testing a property that is constant-query testable in the bidirectional model, within the bounded-degree unidirectional model, can be established via the quantum hardness of testing 2-star-freeness in the unidirectional model, which in turn is derived from the support-size estimation problem studied in \cite{li2018quantum}.

Specifically, \cite{li2018quantum} constructs two distributions that are indistinguishable with $o(n^{1/3})$ quantum queries. 
From these distributions, one can construct digraphs with bounded maximum in- and out-degree such that the graph derived from one distribution contains no $2$-stars, whereas the graph derived from the other distribution contains $\Theta(\varepsilon n)$ many $2$-stars. Hence, any tester for $2$-star freeness in the quantum unidirectional model on bounded-degree digraphs can distinguish these two distributions, which implies an $\Omega(n^{1/3})$ lower bound.

\begin{theorem}
\label{thm:BKLB}
    Given an $n$-vertex $2$-bounded-degree digraph $G$, $2$-star freeness testing
    requires at least $\Omega_{d,\eps}(n^{1/3})$ quantum queries in the unidirectional model.
\end{theorem}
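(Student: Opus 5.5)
We plan to reduce from the support-size / distribution-distinguishing lower bound of \cite{li2018quantum}, in the spirit of \Cref{obs:reduction}. Take the two hard families from \cite{li2018quantum}: functions $f:[N]\to[R]$, $R=\Theta(N)$, given as oracles. In the first family every value has at most one preimage (a random injection). In the second family $\Theta(\eps N)$ values have exactly two preimages and the rest have at most one. We need the quantum indistinguishability statement: any algorithm making $o(N^{1/3})$ queries to $f$ has acceptance probabilities on the two families that differ by $o(1)$. This is the collision-type bound of Aaronson--Shi / Kutin, in the form used for support-size estimation in \cite{li2018quantum}. If the constructions there allow values with three or more preimages, we first truncate or modify them so that every value has at most two preimages. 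Since such values are rare or absent in those constructions, we expect this to cost only $o(1)$ in distinguishing advantage.

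The next step is the graph construction. Given $f$, build $G_f$ exactly as in \Cref{obs:reduction}, with vertices $v_1,\dots,v_{N+R}$ and edges $(v_i,v_{N+f(i)})$. Every out-degree is at most $1$, and every in-degree is at most $2$ by the truncation, so $G_f$ is $2$-bounded-degree. Each outgoing-neighbor query to $G_f$ costs at most one query to $f$; this carries over to quantum queries because the oracle map in \Cref{obs:reduction} is a fixed classical reversible computation of $f$.

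Under the first family $G_f$ has no $2$-star. Under the second family $G_f$ contains $\Theta(\eps N)$ edge-disjoint $2$-stars, one for each colliding value. Removing an edge from each is necessary, so $G_f$ is $\Omega(\eps)$-far from $2$-star-freeness, with $n=N+R=\Theta(N)$ and $d=2$, by the same counting as in \Cref{obs:reduction}. A tester for $2$-star-freeness with $o(n^{1/3})$ quantum queries would therefore distinguish the two families with constant advantage, which contradicts \cite{li2018quantum}.

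The main obstacle is extracting the right hard pair from \cite{li2018quantum}. We need both the bounded-multiplicity condition (at most two preimages, which keeps the in-degree bounded) and a linear number $\Theta(\eps N)$ of collisions, so that the graph is far from $2$-star-freeness. The first thing to try is the ``$1$-to-$1$ versus mixed $1$-to-$1$/$2$-to-$1$'' pair, for which the polynomial-method argument gives an $\Omega(N^{1/3})$ bound for constant collision fraction. We would then verify that fixing the collision fraction to a small constant $\eps$ does not degrade the bound beyond constants depending on $\eps$.
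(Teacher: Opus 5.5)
Your proposal is correct and follows essentially the same route as the paper. The paper takes from \cite{li2018quantum} the pair consisting of a bijection $[n]\to[n]$ (uniform $\mathbf{p_1}$) versus a function with $\ell=\Theta(\eps n)$ values having two preimages, $n-2\ell$ having one and $\ell$ having none ($\mathbf{p_2}$), which is exactly your ``1-to-1 versus mixed'' pair, so no truncation step is needed; it then builds the same graph $G_f$ and simulates out-neighbor queries with one oracle call, citing the bound as $\Omega(n^{1/3}/\eps^{1/6})$, which, as you hoped, does not degrade as $\eps$ shrinks.
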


\subsection{Proof of \texorpdfstring{\Cref{thm:BKLB}}{Theorem D.1}}
\paragraph{Construction of two similar distributions} Firstly, we need the help of two similar distributions: $\mathbf{p_1}$ is uniform over $[n]$, while $\mathbf{p_2}$ assigns probability $2/n$ to $\ell$ elements, $1/n$ to $n-2\ell$ elements, and zero to the remaining $\ell$ elements, where $\ell = \lceil\varepsilon n/\log2\rceil$ with respect to some constant $\eps\leq 1/12$.
Let $\cO_\mathbf{p}:[n]\to[n]$ be the oracle that generates $\mathbf{p}$, i.e.,
\[
\mathbf{p}(i)=\abs{\{s\in[n]:\cO_p(s)=i\}} / n, \quad \forall i\in[n].    
\]
If one samples $s$ uniformly from $[S]$, then the output $\cO_\mathbf{p}(s)$ is from distribution $\mathbf{p}$.
\cite{li2018quantum} proves that $\cO_\mathbf{p_1}$ and $\cO_\mathbf{p_2}$ are hard to distinguish within $o(n^{1/3})$ quantum queries.
\begin{lemma}[Implicit in {\cite[Proposition 3]{li2018quantum}}]
\label{lem:dis}
    It takes $\Omega(n^{1/3}/\eps^{1/6})$ quantum queries to distinguish between $\cO_\mathbf{p_1}$ and $\cO_\mathbf{p_2}$ with respect to some constant $\eps\leq 1/12$.
\end{lemma}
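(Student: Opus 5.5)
The statement to prove is \Cref{lem:dis}: distinguishing $\cO_{\mathbf{p_1}}$ from $\cO_{\mathbf{p_2}}$ requires $\Omega(n^{1/3}/\eps^{1/6})$ quantum queries. The plan is to reduce to the lower bound of \cite{li2018quantum} for support-size estimation (or, equivalently, uniformity-type testing) in the oracle model, where a distribution on $[n]$ is presented by an oracle $\cO:[S]\to[n]$ whose preimage counts encode the probabilities.

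First, I will fix the encoding concretely with $S=n$. Under this encoding $\mathbf{p_1}$ corresponds to a permutation, so the oracle is one-to-one. For $\mathbf{p_2}$, the oracle maps two preimages to each of $\ell$ ``heavy'' values, one preimage to each of $n-2\ell$ values, and none to the remaining $\ell$ values. A tester must work for every labeling, so I will pass to the symmetrized problem: precompose and postcompose with uniformly random permutations of $[S]$ and $[n]$. This does not change the distinguishing advantage of an optimal algorithm. Distinguishing then amounts to telling a random permutation apart from a random function that is $\Theta(\eps n)$-far from injective and has exactly $\ell$ collision pairs. This is a promise version of collision/element-distinctness testing.

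Second, I will invoke the known hardness of this problem. \cite[Proposition 3]{li2018quantum} proves, via the polynomial method, that distinguishing the uniform distribution on $[n]$ from a distribution whose support size differs by $\Theta(\eps n)$ needs $\Omega(n^{1/3}/\eps^{1/6})$ queries. Their instance pair can be taken to be exactly $\mathbf{p_1}$ versus a $\mathbf{p_2}$ with multiplicities in $\{0,1,2\}$. I will therefore verify that their construction matches ours: the number $\ell$ of doubled elements is $\Theta(\eps n)$, and the constant $\eps\le 1/12$ keeps $n-2\ell\ge 0$. I will then read off the bound.

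Third, as a fallback in case the parameters do not match exactly, I will argue the bound directly by the symmetrization method of Aaronson--Shi/Ambainis. The acceptance probability of a $T$-query algorithm is a polynomial of degree $2T$. Averaged over the random permutations, it becomes a bivariate polynomial in the parameters $(n-2\ell,\ell)$, i.e.\ in the numbers of single and double preimages. On the family of instances with $a$ doubled elements, this polynomial is bounded in $[0,1]$ for all feasible $a$, and it must jump by $1/3$ between $a=0$ and $a=\ell$. Paturi/Markov-type inequalities on this one-parameter slice then force degree $\Omega((n/\ell)^{1/3}\cdot \ell^{1/3}\cdots)$, which matches the collision lower bound $n^{1/3}$ with the $\eps^{-1/6}$ dependence coming from the gap size.

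The main obstacle is the third step: controlling the symmetrized polynomial at non-integer or infeasible parameter points. This is the same difficulty as in the collision lower bound. I expect to handle it by citing Kutin's / Ambainis's technique rather than reproving it, so in practice the argument rests on the parameter check in the second step.
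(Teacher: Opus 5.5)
Your main route, in Step 2, is to check that the hard instance pair in \cite[Proposition 3]{li2018quantum} is exactly uniform versus the $\{0,1,2\}$-multiplicity distribution with $\ell=\lceil \eps n/\log 2\rceil$ and then read off their bound; this is precisely what the paper does, since it gives no proof of its own and only cites that result. Your Step~3 fallback is not needed, and as sketched it would not work: a univariate Paturi/Markov argument on the slice $a\mapsto P(a)$, bounded on $\{0,\dots,\lfloor n/2\rfloor\}$ with a $1/3$ jump between $a=0$ and $a=\ell$, forces only degree $\Omega(\sqrt{n/\ell})=\Omega(\eps^{-1/2})$, which is independent of $n$ (Chebyshev-type polynomials of that degree achieve such a jump), so obtaining $n^{1/3}$ requires the multi-parameter Aaronson--Shi/Kutin argument.
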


\paragraph{Reduction from testing \texorpdfstring{$2$}{2}-star freeness to distinguishing between \texorpdfstring{p1}{$\mathbf{p_1}$} and \texorpdfstring{p2}{$\mathbf{p_2}$}} 
Given an oracle $\cO_\mathbf{p}:[n]\to[n]$ for $\mathbf{p}\in\{\mathbf{p_1},\mathbf{p_2}\}$, we accordingly construct a directed graph $G(V,E)$ of $2n$ vertices and $n$ edges where $V=\{v_1,\dots,v_{2n}\}$ and $E=\{(v_s,v_{n+{\cO_\mathbf{p}(s)}}):s\in[n]\}$.%

It is clear that the graph (denoted as $G_1$) constructed from oracle $\cO_\mathbf{p_1}$ is $2$-star free, while the graph (denoted as $G_2$) constructed from oracle $\cO_\mathbf{p_2}$ contains $\ell$ disjoint $2$-stars.
In other words, there are at least $\ell = \lceil\varepsilon n/\log2\rceil$ edges need to be modified in $G_2$ to make it $2$-star free, i.e., $G_2$ is $\left(\frac{\eps}{4\log2}\right)$-far from $2$-star freeness.%
Moreover, $G_1$ and $G_2$ are both $2$-bounded-degree digraphs.

Then we claim that the oracle $\cO_G^{\out}$ can be simulated by invoking the oracle $\cO_\mathbf{p}$ at most once.
Recall that $\cO_G^{\out}(v,i)$ returns the $i$-th out-neighbor of $v$ and in both cases, the first $n$ vertices have only one out-neighbor and the last $n$ vertices do not have any out-neighbors.
Therefore, 
\[
\cO_G^\out(v_s,i)=
\begin{cases}
    v_{n+\cO_\mathbf{p}(s)}, & s\leq n \text{ and } i=1; \\
    \perp, & \text{otherwise}.
\end{cases}
\]

Let $\eps' = \frac{\eps}{4\log2 }$.
Suppose there exists an \(\varepsilon'\)-tester for \(2\)-star-freeness. Then we can use it to construct a tester that distinguishes between the distributions \(\mathbf{p_1}\) and \(\mathbf{p_2}\).
By \Cref{lem:dis}, it holds that any $\eps'$-tester requires at least $\Omega(n^{1/3}/\eps^{1/6})$ quantum queries, which completes the proof of \Cref{thm:BKLB}. 
 
\end{document}